\theoremstyle{definition}
\newtheorem{remark}{Remark}
\newtheorem{theorem}{Theorem}
\newtheorem{lemma}{Lemma}
\newtheorem{corollary}{Corollary}
\providecommand{\url}[1]{#1}
\begin{document}

\title{Near-Field Communications: A Tutorial Review}

\author{Yuanwei Liu, Zhaolin Wang, Jiaqi Xu, Chongjun Ouyang,  Xidong Mu, and Robert Schober\\
\vspace{0.4cm}
\emph{(Invited Paper)
\vspace{-0.5cm}
}
% \IEEEspecialpapernotice{(Invited Paper)}

\thanks{Yuanwei Liu, Zhaolin Wang, Jiaqi Xu, and Xidong Mu are with the School of Electronic Engineering and Computer Science, Queen Mary University of London, London E1 4NS, UK, (email: {yuanwei.liu, zhaolin.wang, jiaqi.xu, xidong.mu}@qmul.ac.uk).}
\thanks{Chongjun Ouyang is with the School of Information and Communication Engineering, Beijing University of Posts and Telecommunications, Beijing, 100876, China (e-mail: dragonaim@bupt.edu.cn).}
\thanks{Robert Schober is with the Institute for Digital Communications, Friedrich-Alexander-Universität Erlangen-Nürnberg, 91054 Erlangen, Germany (e-mail: robert.schober@fau.de).}

}

\twocolumn[
\begin{@twocolumnfalse}

\maketitle

{
\small
\hspace{10pt}\textbf{\emph{Abstract}---Extremely large-scale antenna arrays, tremendously high frequencies, and new types of antennas are three clear trends in multi-antenna technology for supporting the sixth-generation (6G) networks. To properly account for the new characteristics introduced by these three trends in communication system design, the near-field spherical-wave propagation model needs to be used, which differs from the classical far-field planar-wave one. As such, near-field communication (NFC) will become essential in 6G networks. In this tutorial, we cover three key aspects of NFC. 1) \textit{Channel Modelling}: We commence by reviewing near-field spherical-wave-based channel models for spatially-discrete (SPD) antennas. Then, uniform spherical wave (USW) and non-uniform spherical wave (NUSW) models are discussed. Subsequently, we introduce a general near-field channel model for SPD antennas and a Green's function-based channel model for continuous-aperture (CAP) antennas. 2) \textit{Beamfocusing and Antenna Architectures}: We highlight the properties of near-field beamfocusing and discuss NFC antenna architectures for both SPD and CAP antennas. Moreover, the basic principles of near-field beam training are introduced. 3) \textit{Performance Analysis}: Finally, we provide a comprehensive performance analysis framework for NFC. For near-field line-of-sight channels, the received signal-to-noise ratio and power-scaling law are derived. For statistical near-field multipath channels, a general analytical framework is proposed, based on which analytical expressions for the outage probability, ergodic channel capacity, and ergodic mutual information are obtained. Finally, for each aspect, topics for future research are discussed.}

\vspace{0.2cm}

\hspace{10pt}\textbf{\emph{Index Terms}---Antenna architecture, beamforcusing, channel modelling, near-field communications, performance analysis.}
}

\vspace{1cm}
\end{@twocolumnfalse}

]
\saythanks

\section{Introduction}

As the fifth-generation (5G) wireless network continues to be commercialized, research and development efforts are already underway on a global scale to investigate the possibilities for the sixth-generation (6G) wireless network. Compared to the previous generations of wireless networks, 6G is expected to be data-driven, instantaneous, ubiquitous, and intelligent, thereby facilitating new applications and services, such as extended reality (XR), holographic communication, pervasive intelligence, digital twin, and Metaverse \cite{Ericsson, Huawei, Qualcomm}.  Therefore, 6G has significantly higher performance targets compared to the past generations, such as a 100 times higher peak data rate of at least $1$ terabit per second (Tb/s), an air interface latency of $0.01\sim0.1$  milliseconds, and a 10 times larger connectivity density of up to $10^7$ devices per square kilometer \cite{saad2019vision, zhang20196g, dang2020should}. To reach these stringent targets, 6G is expected to integrate the following technical trends:

\begin{itemize}
    \item \textbf{Extremely large-scale antenna arrays}: Extremely large-scale antenna arrays (ELAAs) are essential for many candidate techniques for 6G. On the one hand, by exploiting ELAAs, supermassive multiple-input multiple-output (MIMO) \cite{zhang20196g} and cell-free massive MIMO \cite{bjornson2019massive} are capable of providing exceptionally high system capacity through their vast array gain and spatial resolution. On the other hand, reconfigurable intelligent surfaces (RISs) another revolutionary 6G technique, embody an ELAA with passive elements \cite{liu2021reconfigurable}. By manipulating the wireless propagation environment, RISs offer new opportunities for augmenting the coverage and capacity of the 6G network. 

    \item \textbf{Tremendously high frequencies}: 
    The terahertz (THz) band, spanning from $0.1$ THz to $10$ THz, holds great potential as a promising frequency band for 6G \cite{dang2020should}. Compared to the millimeter-wave (mmWave) band utilized in 5G, the THz band provides significantly more bandwidth resources on the order of tens of gigahertz (GHz) \cite{akyildiz2014terahertz}. Furthermore, due to the very small wavelengths in the THz band, an enormous number of antenna elements can be integrated into THz base stations (BSs), thus facilitating the implementation of ELAA. Due to these benefits, THz communication is expected to support very high data rates on the order of Tb/s \cite{zhang20196g}.
    
    \item \textbf{New types of antennas}: Metamaterials are powerful artificial materials that exhibit various desired electromagnetic (EM) characteristics for wireless communications \cite{cui2014coding}. In recent years, metamaterials have also been exploited in realizing (approximately) continuous transmitting and receiving apertures, and thus, facilitating holographic beamforming \cite{ shlezinger2021dynamic, deng2021reconfigurable}. Compared with conventional beamforming techniques, holographic beamforming realized by continuous-aperture (CAP) antennas has a super high spatial resolution while avoiding undesirable side lobes \cite{bjornson2019massive}.
    
\end{itemize}

The significant increase in the size of antenna arrays, extremely high frequencies, and the emerging new metamaterial-based antennas cause a paradigm shift for the EM characteristics in 6G. Generally, the EM field radiated from antennas can be divided into two regions: near-field region and far-field region \cite{kraus2002antennas, 7942128}. The EM waves in these regions exhibit different propagation properties. Specifically, the wavefront of EM waves in the far-field region can be reasonably well approximated as being planar, as shown in Fig. \ref{fig:far_vs_near}(a). Conversely, more complex wave models, such as spherical waves, see Fig. \ref{fig:far_vs_near}(b), are required to accurately depict propagation in the near-field region. The Rayleigh distance is one of the most common figures of merit to distinguish between the near-field and far-field regions and is given by $2D^2/ \lambda$ \cite{kraus2002antennas}. Here, $D$ and $\lambda$ denote the antenna aperture and the wavelength, respectively. From the first-generation (1G) to 5G wireless networks, the near-field was generally limited to a few meters or even centimeters because of the low-dimensional antenna arrays and low frequencies. Therefore, communication systems could be efficiently designed based on far-field approximation. However, given the large aperture of ELAAs and tremendously high frequencies, 6G networks exhibit a large near-field region on the order of hundreds of meters. For instance, a transmitter of size $D = 0.5$ meters operating at frequency $60$ GHz has a near-field region of $100$ meters. Furthermore, for the investigation of metamaterial-based CAP antennas, the conventional far-field plane wave assumption is also not suitable \cite{slepian1961prolate, slepian1976bandwidth}. Therefore, in 6G networks, the near-field region is not negligible, which motivates the investigation of the new near-field communication (NFC)\footnote{We note that the term “NFC” is also used to describe a technique rooted in radio-frequency identification (RFID), which enables communication between two electronic devices over a short distance \cite{coskun2013survey}, e.g., on the order of several centimetres. However, in this paper, the term “NFC” refers to wireless communications under the near-field electromagnetic effect caused by the usage of large antenna arrays and high-frequency bands.} paradigm. To shed light on the benefits of NFC, we first distinguish between near-field and far-field in terms of their EM-radiation properties.

\begin{figure}[!t]
    \centering
    \includegraphics[width=0.49\textwidth]{./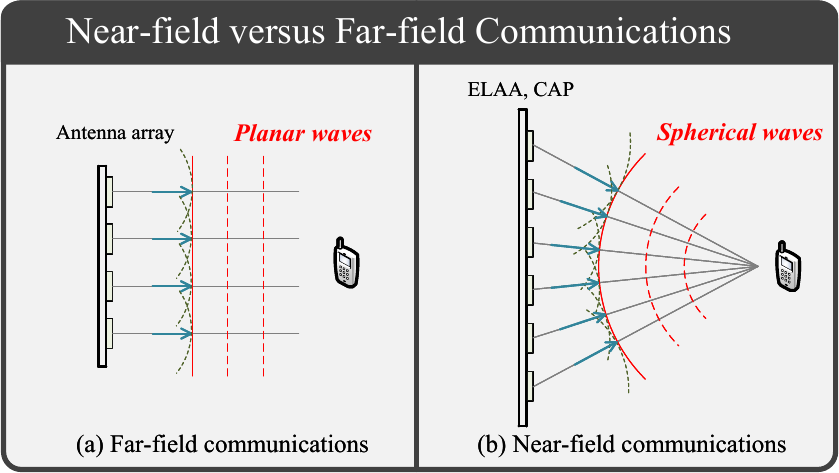}
    \caption{Near-field communications versus far-field communications.}
    \label{fig:far_vs_near}
\end{figure}

\subsection{Distinction between Near-Field and Far-Field Regions}
\subsubsection{General Field Regions}
According to EM and antenna theory, the field surrounding a transmitter can be divided into near-field and far-field regions. The near-field region can be further divided into the reactive and radiating near-field regions. The three regions are explained below \cite{yaghjian1986overview}:
\begin{itemize}
\item \textbf{{The reactive near-field region}} is limited to the space close to the antenna, where the EM fields are predominantly reactive, meaning that they store and release energy rather than propagating away from the antenna as radiating waves. Within the reactive near-field region, evanescent waves are dominant.
\item \textbf{{The radiating near-field region}} is located at a distance from the antenna that is greater than a few wavelengths. In this region, the fields have not fully developed into the planar waves that are characteristic of the far-field region. Within the radiating near-field region, the propagating waves have spherical wavefront, i.e., spherical waves are dominant.
\item \textbf{The far-field region} surrounds the radiating near-field region. In the far-field, the angular field distribution is essentially independent of the distance between the receiver and the transmitter, i.e., planar waves are dominant.
\end{itemize}
Since the reactive near-field region is typically small and the evanescent waves fall off exponentially fast with distance, in the remainder of this paper, we mainly focus on wireless communications within the radiating near-field region. For simplicity, we use the term \textit{near-field} to refer to the radiating near-field region.

\subsubsection{Boundary between Near-Field and Far-Field Regions}
The transition between the near-field and far-field regions happens gradually and there is no strict boundary between the two regions.
As a result, different works have proposed different metrics for characterising the field boundary.
For defining these metrics, two different perspectives are used, namely, the phase error perspective and the channel gain error perspective. 
\begin{itemize}
\item From the \emph{phase error} perspective, several commonly used rules of thumb, including the Rayleigh distance~\cite{kraus2002antennas}, the Fraunhofer condition~\cite{7942128}, and the extended Rayleigh distance for MIMO transceivers and RISs~\cite{danufane2020path}, have emerged. These distances mainly apply to the field boundary close to the main axis of the antenna aperture.
    \item From the \emph{channel gain error} perspective, a more accurate description of the field boundary can be given for off-axis regions. Specifically, according to the Friis formula~\cite{Lozano2018}, the channel gain falls off with the inverse of the distance squared. However, this does not hold in the near-field region. Therefore, we can define the far-field region as the region where the actual channel gain can be approximated by the Friis formula subject to a tolerable error. Exploiting this perspective, the field boundary depends not only on the aperture size and wavelength, but also on the angle of departure, angle of arrival, and shape of the transmit antenna aperture.
\end{itemize}

\begin{table}[!t]
\caption{List of Acronyms}
\label{tab:LIST OF ACRONYMS}
\centering
\resizebox{0.35\textwidth}{!}{
\begin{tabular}{!{\vrule width1pt}c!{\vrule width1pt}!{\vrule width1pt}c!{\vrule width1pt}}
\Xhline{1pt} 
AWGN & Additive Gaussian White Noise\\ \Xhline{1pt} 
BCD  & Block Coordinate Descent \\ \Xhline{1pt} 
BS   & Base Station\\ \Xhline{1pt} 
CAP  & Continuous-Aperture\\ \Xhline{1pt} 
CDF  & Cumulative Distribution Function\\ \Xhline{1pt} 
CSI  & Channel State Information\\ \Xhline{1pt} 
DFT  & Discrete Fourier Transform \\ \Xhline{1pt} 
DoF  & Degree of Freedom\\ \Xhline{1pt} 
ECC  & Ergodic Channel Capacity \\ \Xhline{1pt} 
EE   & Energy Efficiency \\ \Xhline{1pt} 
ELAA & Extremely Large-Scale Antenna Array \\ \Xhline{1pt} 
EM   & Electromagnetic\\ \Xhline{1pt} 
EMI  & Ergodic Mutual Information \\ \Xhline{1pt} 
FFC  & Far-Field Communication \\ \Xhline{1pt} 
FP   & Fractional Programming\\ \Xhline{1pt} 
GHz  & Gigahertz \\ \Xhline{1pt} 
i.i.d. & Independent and Identically Distributed \\ \Xhline{1pt} 
LoS  & Line-of-Sight \\ \Xhline{1pt} 
M-EDCF & Multi-Exponential Decay Curve Fitting\\ \Xhline{1pt} 
MI   & Mutual Information \\ \Xhline{1pt} 
MIMO & Multiple-Input Multiple-Output \\ \Xhline{1pt} 
MISO & Multiple-Input Single-Output \\ \Xhline{1pt} 
MMSE & Minimum Mean Square Error \\ \Xhline{1pt}  
mmWave & Millimeter-Wave \\ \Xhline{1pt} 
MRT  & Maximum Ratio Transmission\\ \Xhline{1pt} 
NFC  & Near-Field Communication \\ \Xhline{1pt} 
NLoS & Non-Line-of-Sight \\ \Xhline{1pt} 
NUSW & Non-Uniform Spherical Wave \\ \Xhline{1pt} 
OFDM & Orthogonal Frequency Division Multiplexing \\ \Xhline{1pt} 
OMP  & Orthogonal Matching Pursuit \\ \Xhline{1pt} 
OP   & Outage Probability \\ \Xhline{1pt} 
PDF  & Probability Density Function\\ \Xhline{1pt} 
PS   & Phase Shifter\\ \Xhline{1pt} 
QAM  & Quadrature Amplitude Modulation \\ \Xhline{1pt} 
QPSK & Quadrature Phase Shift Keying \\ \Xhline{1pt} 
RF   & Radio-Frequency\\ \Xhline{1pt} 
RIS  & Reconfigurable Intelligent Surface \\ \Xhline{1pt} 
ROC  & Rate of Convergence\\ \Xhline{1pt} 
SCA  & Successive Convex Approximation \\ \Xhline{1pt} 
SE   & Spectral Efficiency \\ \Xhline{1pt} 
SG   & Stochastic Geometry \\ \Xhline{1pt}  
SIMO & Single-Input Multiple-Output \\ \Xhline{1pt} 
SINR & Signal-to-Interference-Plus-Noise Ratio\\ \Xhline{1pt} 
SNR  & Signal-to-Noise Ratio\\ \Xhline{1pt} 
SPD  & Spatially-Discrete\\ \Xhline{1pt} 
THz  & Terahertz\\ \Xhline{1pt} 
TTD  & True Time Delayer\\ \Xhline{1pt} 
ULA  & Uniform Linear Array\\ \Xhline{1pt} 
UPA  & Uniform Planar Array\\ \Xhline{1pt} 
USW  &  Uniform Spherical Wave\\ \Xhline{1pt} 
WMMSE & Weighted Minimum Mean Square Error\\ \Xhline{1pt} 
XR   &  Extended Reality\\ \Xhline{1pt} 
1G   &  First-Generation\\ \Xhline{1pt} 
5G   &  Fifth-Generation\\ \Xhline{1pt} 
6G   &  Sixth-Generation\\ \Xhline{1pt} 
\end{tabular}}
\end{table}

\begin{table}[!t]
\caption{List of Variables}
\label{tab:LIST OF VARIABLES}
\centering
\resizebox{0.35\textwidth}{!}{
\begin{tabular}{!{\vrule width1pt}c!{\vrule width1pt}!{\vrule width1pt}c!{\vrule width1pt}}
\Xhline{1pt} 
$\mathbf{a}_{R/T}$ & Array response vector (receiver/transmitter)\\ \Xhline{1pt} 
$\mathbf{h}$ & Near/far-field MISO channel\\ \Xhline{1pt} 
$\mathbf{H}$ & Near/far-field MIMO channel\\ \Xhline{1pt} 
$\beta$ & Channel gain\\ \Xhline{1pt} 
$\mathbf{s}$& Cartesian coordinates of the transmitter\\ \Xhline{1pt} 
$\mathbf{r}$& Cartesian coordinates of the receiver\\ \Xhline{1pt} 
$\lambda$ & Wavelength of the carrier signal\\ \Xhline{1pt} 
$f_c$ & Frequency of the carrier signal\\ \Xhline{1pt} 
$k$ & Wave number of the carrier signal\\ \Xhline{1pt} 
$f_m$ & Frequency of subcarriers\\ \Xhline{1pt} 
$d$ & Antenna spacing \\ \Xhline{1pt} 
$\theta$ & Azimuth angle \\ \Xhline{1pt} 
$\phi$ & Elevation angle \\ \Xhline{1pt} 
$r$ & Propagation distance \\ \Xhline{1pt} 
$G_1(\mathbf{s},\mathbf{r})$ & Effective aperture loss\\ \Xhline{1pt} 
$G_2(\mathbf{s},\mathbf{r})$ & Polarization loss\\ \Xhline{1pt} 
$\mathbf{J}(\mathbf{s})$ & Electric (source) current\\ \Xhline{1pt} 
$\mathbf{G}(\mathbf{s},\mathbf{r})$ & Green's function \\ \Xhline{1pt} 
$K(\mathbf{s}_1,\mathbf{s}_2)$ & Kernel of the Green's function\\ \Xhline{1pt} 
$\mathbf{F}$ & Analog/digital beamformer\\ \Xhline{1pt} 
$P_{\text{max}}$ & Maximum transmit power\\ \Xhline{1pt} 
$\mathbf{T}$ & TTD analog beamformer\\ \Xhline{1pt} 
$\mathbf{\Psi}$ & Metasurface analog beamformer \\ \Xhline{1pt} 
$\mathbf{Q}$ & Waveguide propagation matrix of metasurface \\ \Xhline{1pt} 
$\gamma$ & Received SNR\\ \Xhline{1pt} 
$F_{\lVert{\mathbf{h}}\rVert^2}(\cdot)$ & CDF of MISO channel gain\\ \Xhline{1pt} 
$f_{\lVert{\mathbf{h}}\rVert^2}(\cdot)$ & PDF of MISO channel gain\\ \Xhline{1pt} 
$\mathcal{P}$ & Outage probability\\ \Xhline{1pt} 
$\bar{\mathcal{C}}$ & Ergodic channel capacity\\ \Xhline{1pt} 
$\mathcal{X}$ & Finite constellation alphabet\\ \Xhline{1pt} 
$\bar{\mathcal{I}}_{\mathcal{X}}$ & Ergodic mutual information\\ \Xhline{1pt} 
$I_{\mathcal{X}}(\cdot)$ & Mutual information of Gaussian channels\\ \Xhline{1pt} 
$\rm{MMSE}_{\mathcal{X}}(\cdot)$ & MMSE of Gaussian channels\\ \Xhline{1pt} 
${\mathcal{M}}(\cdot;\cdot)$ & Mellin transform\\ \Xhline{1pt} 
\end{tabular}}
\end{table}

\subsection{Related Overview Articles}
As discussed before, because of the quite small near-field region due to the use of small-scale antenna arrays and low operating frequencies, NFC has not been relevant for 1G-5G wireless networks, and hence, the related literature is very sparse. So far, only a few magazine papers \cite{nepa2017near, cui2023near, 10068140, han2023cross} that provide an introduction to NFC have been published. The authors of \cite{nepa2017near} presented the basic working principle and applications of near-field-focused (NFF) microwave antennas for short-range wireless systems. They introduced various metrics for NFF performance evaluation, including the 3 dB focal spot, focal shift, focusing gain, and side lobe level. The authors of \cite{cui2023near} provided an overview of NFC, covering aspects such as field boundaries, challenges, potential applications, and future research directions. They offered a high-level introduction to NFC. Furthermore, the authors of \cite{10068140} studied the difference between far-field beamsteering and near-field beamfocusing. They emphasized the significant power gain and novel application opportunities that arise from near-field beamfocusing. Finally, the authors of \cite{han2023cross} addressed the cross-far-field and near-field issues in THz communications. They discussed the relevant channel model, channel estimation techniques, and hybrid beamforming approaches for cross-field communications. While \cite{nepa2017near, cui2023near, 10068140, han2023cross} review the general concepts of NFC, fundamental aspects of NFC, including basic channel models, antenna structures, and analytical foundations are not covered. Moreover, a comprehensive tutorial on NFC that specifically caters to the needs of graduate students and researchers seeking to gain a fundamental understanding of NFC is not available in the literature.

\subsection{Motivation and Contributions}
NFC will play a significant role in 6G and fundamental knowledge gaps have to be closed to fully exploit the new opportunities and to address the new challenges arising for NFC. 
However, a comprehensive tutorial review on NFC is missing in the literature.
This is the motivation for this paper and its main contributions can be summarized as follows:
\begin{itemize}
    \item We start by reviewing the basic near-field channel models for both SPD and CAP antennas. \emph{For SPD antennas}, near-field spherical-wave-based channel models are introduced for both multiple-input single-output (MISO) and MIMO systems, where the specific characteristics of near-field channels compared with far-field channels are highlighted. Furthermore, we discuss uniform spherical wave (USW) and non-uniform spherical wave (NUSW) near-field channel models. 
    \emph{For CAP antennas}, we introduce a Green’s function-based near-field channel model.
    \item We study the properties of near-field beamfocusing and antenna architectures for NFC. We commence with the MISO case, for which we discuss hybrid beamforming architectures based on phase shifts and true time delayers for narrowband and wideband systems, respectively. Then, we propose to exploit practical metasurface-based antennas to approximate CAP antennas. As a further advance, we consider the MIMO case, where the dynamic degrees of freedom (DoFs) of near-field channels are addressed. Finally, near-field beam training is discussed, which can help to significantly decrease the complexity of channel estimation and analog beamforming design.
    \item We provide a comprehensive performance analysis framework for NFC for both deterministic line-of-sight (LoS) and statistical multipath channels. \emph{For near-field LoS channels}, we derive new expressions for the received signal-to-noise ratio (SNR) and power scaling laws for both SPD and CAP antennas. \emph{For near-field statistical channels}, we propose a general theoretical framework for analyzing the outage probability (OP), ergodic channel capacity (ECC), and ergodic mutual information (EMI). Important insights for NFC are unveiled, including the diversity order, array gain, high-SNR slope, and high-SNR power offset. 
\end{itemize}

\begin{figure}[!t]
    \centering
    \includegraphics[width=0.48\textwidth]{./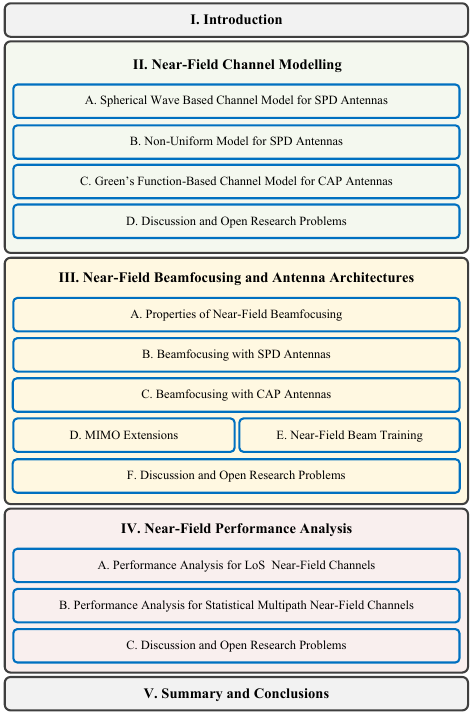}
    \caption{Condensed overview of this tutorial and outlook on NFC.}
    \label{fig:organisation}
\end{figure}

\subsection{Organization}
The remainder of this paper is structured as follows. 
Section II presents the fundamental near-field channel models for both SPD and CAP antennas. In Section III, the basic principles of near-field beamfocusing and beam training are introduced and the related antenna architectures for MISO and MIMO systems are provided. The performance of NFC in respectively LoS channels and statistical channels is analysed in Section IV. Finally, Section V concludes this paper. Lists with the most important abbreviations and variables are provided in Tables \ref{tab:LIST OF ACRONYMS} and \ref{tab:LIST OF VARIABLES}, respectively. Fig. \ref{fig:organisation} illustrates the organization of this tutorial.

\subsection{Notations}
Throughout this paper, for any matrix $\mathbf A$, $[\mathbf A]_{m,n}$, ${\mathbf{A}}^{\mathsf T}$, ${\mathbf{A}}^{*}$, ${\mathbf{A}}^{\mathsf H}$, and $\lVert{\mathbf{A}}\rVert_{\rm{F}}$ denote the $(m,n)$-th entry, transpose, conjugate, conjugate transpose, and Frobenius norm of $\mathbf A$, respectively. The matrix inequality ${\mathbf A}\succeq{\mathbf 0}$ indicates positive semi-definiteness of $\mathbf{A}$. $[\mathbf a]_{i}$ denotes the $i$-th entry of vector $\mathbf a$, and $\mathrm{diag}\{\mathbf a\}$ returns a diagonal matrix whose main diagonal elements are the entries of $\mathbf a$. Also, $\mathrm{blkdiag}\{{\cdot}\}$ represents a block diagonal matrix, $\mathbf{I}$ is the identity matrix, $\mathbf{0}$ is the zero matrix, $\lVert\cdot\rVert$ denotes the Euclidean norm of a vector, $\lvert\cdot\rvert$ denotes the norm of a scalar, $\mathbb{C}$ stands for the complex plane, $\mathbb{R}$ stands for the real plane, and ${\mathbb{E}}\{\cdot\}$ represents mathematical expectation.
The big-$\mathcal{O}$ notation, $f(x)={\mathcal{O}}\left(g(x)\right)$, means that $\limsup_{x\rightarrow\infty}\frac{\left|f(x)\right|}{g(x)}<\infty$. ${\mathcal{CN}}\left({\bm\mu},{\bm\Sigma}\right)$ is used to denote the complex Gaussian distribution with mean ${\bm\mu}$ and covariance matrix ${\bm\Sigma}$. Finally, $\otimes$ and $\odot$ denote the Kronecker product and Hadamard product, respectively.

\section{Near-Field Channel Modelling}\label{NFC_Tutorial_Channel_Model}

\begin{figure*}[!t]
    \centering
    \includegraphics[width=0.8\textwidth]{./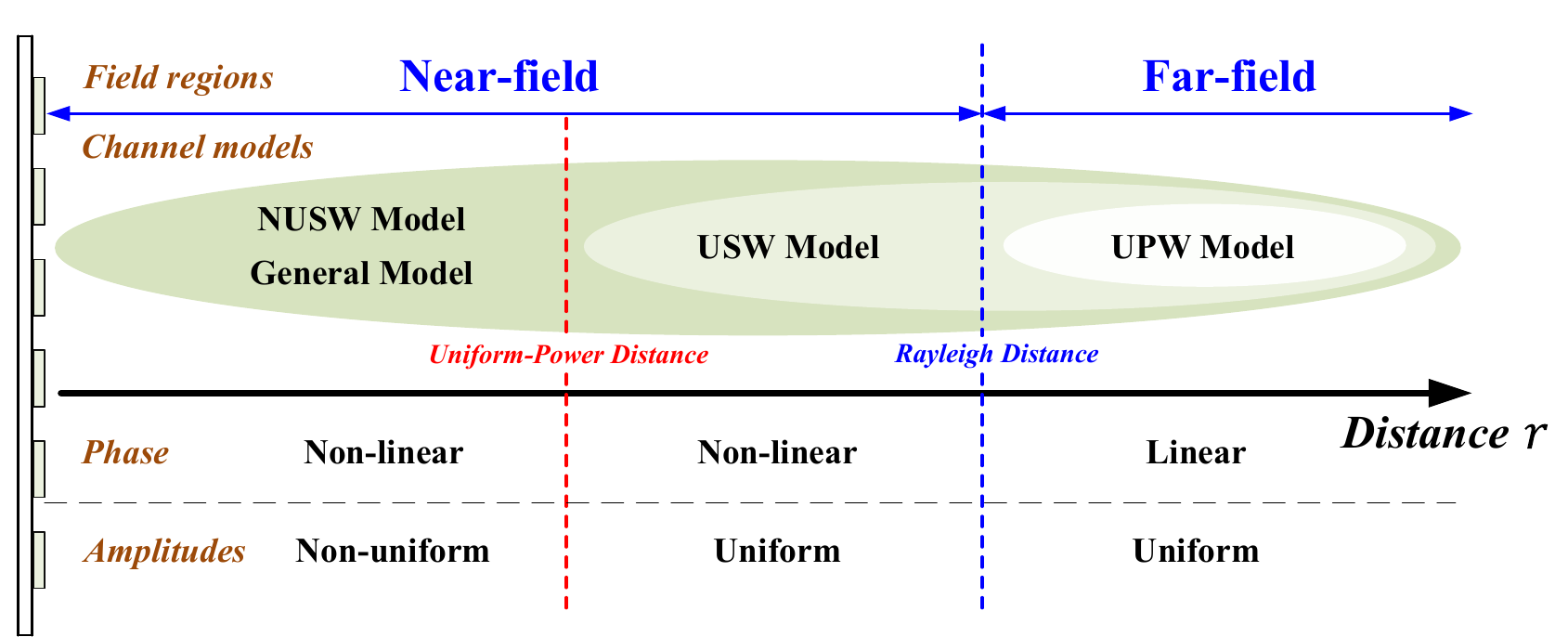}
    \caption{Field regions with respect to an antenna array.}
    \label{regions}
    % \vspace{-0.4cm}
\end{figure*}

In this section, we introduce the fundamental models for near-field channels. As illustrated in Fig. \ref{regions}, the space surrounding an antenna array can be divided into three regions, with two distances, namely \emph{Rayleigh distance} \cite{sherman1962properties, kraus2002antennas} and \emph{uniform power distance} \cite{sherman1962properties, 9617121}. As previously discussed, the Rayleigh distance can be used to separate the near-field and far-field regions, and it is mostly relevant for characterizing the behaviour of the phase of an emitted signal. Generally speaking, if the propagation distance of the signal is larger than the Rayleigh distance, the far-field planar-wave-based channel model can be employed, leading to a \emph{linear phase} of the signals. Conversely, if the signal's propagation distance is less than the Rayleigh distance, the near-field spherical-wave-based channel model has to be utilized, resulting in a \emph{non-linear phase} of the signal. Moreover, within the near-field region, the uniform power distance can be used to differentiate between regions where the signal amplitude is \emph{uniform} and \emph{non-uniform}, respectively. 
Based on the Rayleigh and uniform power distances, the channel models suitable for characterizing signal propagation can be classified loosely into three categories: 1) uniform planar wave (UPW) model, 2) uniform spherical wave (USW) model, and 3) non-uniform spherical wave (NUSW) model, as shown in Fig. \ref{regions}.
As we will explain in the following sections, due to the different assumptions made, these near-field channel models have different levels of accuracy.
Regarding scatterers and small-scale fading, near-field channel models can be categorized into deterministic and statistical models, with deterministic models utilizing ray tracing, geometric optics, or electromagnetic wave propagation theories for precise channel gain determination, primarily for line-of-sight (LoS) or near-field channels with a limited number of paths. Statistical models, on the other hand, capture the average behavior, fading effects, and time-varying characteristics of the channel, making them suitable for characterizing rich-scattering environments.
Concerning transceiver types, near-field channel models can be classified into models for spatially-discrete (SPD) antennas and CAP antennas.
Given the intricate taxonomy of channel models, a comprehensive and systematic overview of the various models is needed.
In the following, we introduce the USW and NUSW channel models for NFC systems equipped with SPD antennas. Furthermore, we introduce a general model to accurately capture the impact of the free-space path loss, effective aperture, and polarization mismatch. Moreover, we present a near-field channel model for CAP antennas. Finally, important open research problems in near-field modelling are discussed.

\subsection{Spherical Wave Based Channel Model for SPD Antennas}
Fig. \ref{fig:far_vs_near} highlights the primary distinction between near-field and far-field channels for SPD antennas. Specifically, far-field channels are characterized by planar waves, whereas near-field channels are characterized by spherical waves. Consequently, for far-field channels, the angles of the links between each antenna element and the receiver are approximated to be identical. In this case, the propagation distance of each link linearly increases with the antenna index, resulting in a linear phase for far-field channels. However, for near-field channels, each link has a different angle, leading to a non-linear phase. In the following, we review the near-field spherical-wave-based channel models for MISO and MIMO systems, and highlight the primary differences compared to the far-field planar-wave-based channel model.

\begin{figure}[!t]
    \centering
    \includegraphics[width=0.4\textwidth]{./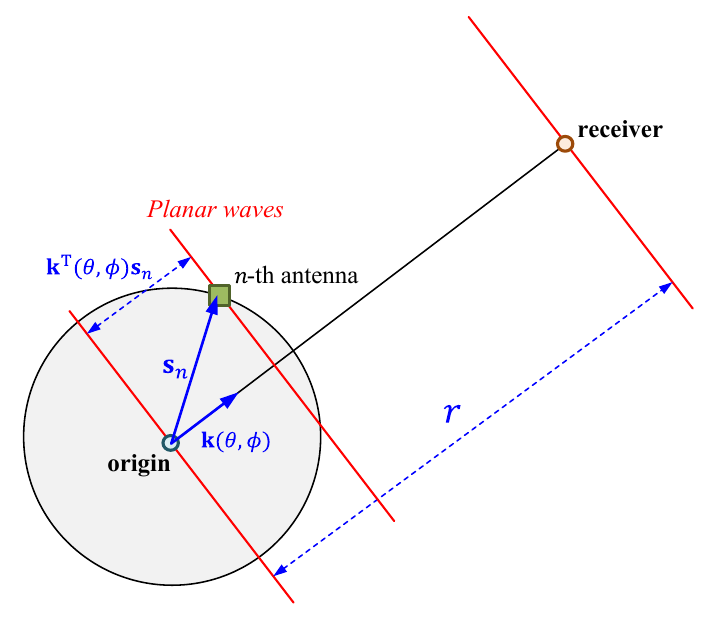}
    \caption{Far-field channel model.}
    \label{fig:far_field_model}
\end{figure}

\subsubsection{MISO Channel Model}
Let us consider a MISO system that comprises an $N$-antenna transmitter, where $N = 2\tilde{N} + 1$, and a single-antenna receiver. The antenna index of the transmit antenna array is given by $ n \in \{ -\tilde{N},\dots,\tilde{N}\}$. Let $\mathbf{r} = [r_x, r_y, r_z]^{\mathsf{T}}$ and $\mathbf{s}_n = [s_x^n, s_y^n, s_z^n]^{\mathsf{T}}$ denote Cartesian coordinates of the receive antenna and the $n$-th element of the transmit antenna array, respectively. Here, we set $\mathbf{s}_0 = [0,0,0]^T$ as the origin of the coordinate system. Accordingly, $r = \|\mathbf{r} - \mathbf{s}_0\|$ denotes the distance between the receiver and the central element of the transmit antenna array.

As shown in Fig. \ref{fig:far_field_model}, for the planar-wave-based far-field channel, the links between $\mathbf{s}_n$ and $\mathbf{r}$ are assumed to have identical angles. Let $\theta$ and $\phi$ denote the azimuth and elevation angles of the receiver with respect to the $x \textendash z$ plane, respectively. Then, the propagation direction vector of the signals from the transmitter to the receiver is given by \cite{manikas2004differential}
\begin{equation}
    \mathbf{k}(\theta, \phi) = [\cos \theta \sin \phi, \sin \theta \sin \phi, \cos \phi]^{\mathsf{T}}.
\end{equation} 
Then, according to the planar-wave assumption of far-field channels, the propagation distance for the link between the $n$-th transmit antenna and the receiver can be calculated as $r_n = r - \mathbf{k}^{\mathsf{T}}(\theta, \phi) \mathbf{s}_n$, resulting in the following channel coefficient \cite{tse2005fundamentals}:
\begin{equation}
    h_{\mathrm{far}}^n(\theta, \phi, r) = \beta_n e^{-j \frac{2\pi}{\lambda} r_n} = \beta_n e^{-j \frac{2\pi}{\lambda} r} e^{j \frac{2\pi}{\lambda} \mathbf{k}^{\mathsf{T}}(\theta, \phi) \mathbf{s}_n},
\end{equation}  
where $\beta_n$ denotes the channel gain (amplitude) for the $n$-th link and $\lambda$ denotes the wavelength of the carrier signal. In the far-field region, the propagation distance $r$ is typically beyond the \emph{uniform-power distance}\footnote{The formal definition of the uniform-power distance is given in Section~\ref{sub:UPD}}, where the channel gain disparity of each link is negligible \cite{sherman1962properties, 9617121}. In this case, we have $\beta_{-\tilde{N}} \approx \beta_{-\tilde{N}+1} \approx \dots \approx \beta_{\tilde{N}} = \beta$. 
Therefore, the far-field LoS MISO channel can be modelled in the following simplified form:
\begin{equation} \label{channel:upw_MISO}
    \mathbf{h}_{\mathrm{far}}^{\mathrm{LoS}} = \beta e^{-j\frac{2\pi}{\lambda} r} \big[ e^{j \frac{2\pi}{\lambda} \mathbf{k}^{\mathsf{T}}(\theta, \phi) \mathbf{s}_{-\tilde{N}}}, \dots, e^{j \frac{2\pi}{\lambda} \mathbf{k}^{\mathsf{T}}(\theta, \phi) \mathbf{s}_{\tilde{N}}} \big]^{\mathsf{T}}.
\end{equation}
The above far-field channel model is referred to as the \emph{UPW} model \cite{ertel1998overview, manikas2004differential}. According to \eqref{channel:upw_MISO}, the array response vector for far-field channels is given by 
\begin{center}
    \begin{tcolorbox}[title = Far-Field Array Response Vector]
    {\setlength\abovedisplayskip{2pt}
    \setlength\belowdisplayskip{2pt}
    \begin{equation} \label{far_field_vector}
        \mathbf{a}_{\mathrm{far}}(\theta, \phi) = \big[ e^{j \frac{2\pi}{\lambda} \mathbf{k}^{\mathsf{T}}(\theta, \phi) \mathbf{s}_{-\tilde{N}}}, \dots, e^{j \frac{2\pi}{\lambda} \mathbf{k}^{\mathsf{T}}(\theta, \phi) \mathbf{s}_{\tilde{N}}} \big]^{\mathsf{T}}.
    \end{equation}
    }\end{tcolorbox}
\end{center}
It can be observed that for the elements of the far-field array response vector, the phases are \emph{linear functions} of the positions, $\mathbf{s}_n$.

On the other hand, for the near-field spherical-wave-based channel, the propagation distances of the links between the transmit and receive antennas cannot be calculated assuming identical azimuth and elevation angles, since different links have different angles. Therefore, the propagation distance of the link between the $n$-th transmit antenna element and the receiver needs to be calculated as $r_n = \| \mathbf{r} - \mathbf{s}_n \|$, resulting in the following channel coefficient \cite{tse2005fundamentals}:
\begin{equation}
    h_{\mathrm{near}}^n(\mathbf{s}_n, \mathbf{r}) = \beta_n e^{-j \frac{2\pi}{\lambda} \|\mathbf{r} - \mathbf{s}_n\|},
\end{equation}
Then, by assuming that propagation distance $r$ is larger than the \emph{uniform-power distance}, we have $\beta_{-\tilde{N}} \approx \beta_{-\tilde{N}+1} \approx \dots \approx \beta_{\tilde{N}} = \beta$. In this case, the near-field LoS MISO channel can be modelled as follows:
\begin{equation}
    \mathbf{h}_{\mathrm{near}}^{\mathrm{LoS}} = \beta e^{-j\frac{2\pi}{\lambda} r} \big[ e^{-j \frac{2\pi}{\lambda} (\|\mathbf{r} - \mathbf{s}_{-\tilde{N}}\|- r)},\dots, e^{-j \frac{2\pi}{\lambda} (\|\mathbf{r} - \mathbf{s}_{\tilde{N}}\|-r)} \big]^{\mathsf{T}}.
\end{equation}  
The above near-field channel model is referred to as the \textbf{\emph{USW}} model \cite{starer1994passive, bjornson2021primer}. The corresponding array response vector is given by 
\begin{center}
    \begin{tcolorbox}[title = Near-Field Array Response Vector]
    {\setlength\abovedisplayskip{2pt}
    \setlength\belowdisplayskip{2pt}
    \begin{equation} \label{near_field_vector}
        \mathbf{a}(\mathbf{r}) = \big[ e^{-j \frac{2\pi}{\lambda} (\|\mathbf{r} - \mathbf{s}_{-\tilde{N}}\|- r)},\dots, e^{-j \frac{2\pi}{\lambda} (\|\mathbf{r} - \mathbf{s}_{\tilde{N}}\|-r)} \big]^{\mathsf{T}}.
    \end{equation}
    }\end{tcolorbox}
\end{center}
In contrast to the far-field array response vector, the phase of the $n$-th entry of the above near-field array response vector is a \emph{non-linear function} of $\mathbf{s}_n$. 

\begin{figure}[t!]
    \centering
    \includegraphics[width=0.3\textwidth]{./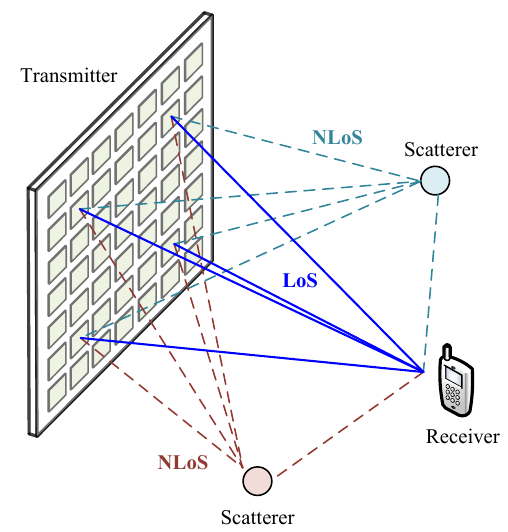}
    \caption{Near-field multipath MISO channel model.}
    \label{fig:near_MISO_multipath}
\end{figure}

\begin{remark}
    \textbf{(Rayleigh Distance and Fresnel Distance)} It is worth noting that the far-field UPW channel model is an approximation of the near-field USW channel model. More specifically, the coordinate of the receiver is given by $\mathbf{r} = [r\cos \theta \sin \phi, r\sin \theta \sin \phi, r\cos \phi]^{\mathsf{T}} = r \mathbf{k}(\theta, \phi)$. Therefore, the near-field propagation distance for the $n$-th antenna array can be calculated as 
    \begin{align} \label{taylor}
        r_n =& \|\mathbf{r}-\mathbf{s}_n \| = \|r \mathbf{k}(\theta, \phi) - \mathbf{s}_n \| \nonumber \\
        = &\sqrt{ r^2  - 2 r \mathbf{k}^{\mathsf{T}}(\theta, \phi) \mathbf{s}_n + \|\mathbf{s}_n\|^2 } \nonumber \\
        = &r - \mathbf{k}^{\mathsf{T}}(\theta,\phi)\mathbf{s}_n + \frac{\|\mathbf{s}_n\|^2-(\mathbf{k}^{\mathsf{T}}(\theta, \phi)\mathbf{s}_n)^2}{2r} \nonumber\\
        &+ \frac{\mathbf{k}^{\mathsf{T}}(\theta, \phi)\mathbf{s}_n \|\mathbf{s}_n\|^2}{2r^2} - \frac{\|\mathbf{s}_n\|^4}{8r^3} + \cdots,
    \end{align}
    where the last step is obtained by the Taylor expansion $\sqrt{1+x} = 1 + \frac{1}{2}x - \frac{1}{8}x^2+\cdots$ for $x = (-2 r \mathbf{k}^{\mathsf{T}}(\theta, \phi)\mathbf{s}_n + \|\mathbf{s}_n\|^2)/r^2$. For the \emph{far-field approximation}, only the first two terms in \eqref{taylor} are considered, which leads to
    \begin{align}
        r_n \approx r - \mathbf{k}^{\mathsf{T}}(\theta,\phi)\mathbf{s}_n.
    \end{align}
    The \emph{Rayleigh distance} is defined as the distance required such that the phase error of the channel caused by the far-field approximation does not exceed $\pi/8$ \cite{kraus2002antennas}. When $\theta = \phi = \frac{\pi}{2}$, it is given by $r_{\mathrm{R}} = \frac{2D^2}{\lambda}$. Moreover, if the first three terms in \eqref{taylor} are considered, the following \emph{Fresnel approximation} can be obtained \cite{7942128}
    \begin{align} \label{fresnel}
        r_n &\approx r - \mathbf{k}^{\mathsf{T}}(\theta,\phi)\mathbf{s}_n + \frac{\|\mathbf{s}_n\|^2-(\mathbf{k}^{\mathsf{T}}(\theta, \phi)\mathbf{s}_n)^2}{2r}.
    \end{align}
    The \emph{Fresnel distance} is defined as the distance required such that the phase error of the channel caused by the Fresnel approximation does not exceed $\pi/8$ \cite{7942128}. When $\theta = \phi = \frac{\pi}{2}$, it is given by $r_{\mathrm{F}} = 0.5\sqrt{\frac{D^3}{\lambda}}$.
\end{remark}

As shown in Fig. \ref{fig:near_MISO_multipath}, scatterers in the environment can cause multipath propagation in near-field channels, where the receiver also receives signals reflected by scatterers via non-line-of-sight (NLoS) paths. The randomness of these multipath NLoS components results in the stochastic behaviour of channels and consequently requires a statistical channel model.
Specifically, the channel between the transmitter and the scatterers can be regarded as a MISO channel. Let $L$ denote the total number of scatterers, $\tilde{\mathbf{r}}_{\ell}$ denote the coordinate of the $\ell$-th scatterer, and $\tilde{\beta}_{\ell}$ denote the channel gain, which also includes the impact of the random reflection coefficient of the $\ell$-th scatterer. Then, the near-field multipath channel can be modelled as follows:
\begin{center}
    \begin{tcolorbox}[title = Near-Field Multipath MISO Channel (LoS + NLoS)]
    {\setlength\abovedisplayskip{2pt}
    \setlength\belowdisplayskip{2pt}
    \begin{equation} \label{H_NFC_MISO}
        \mathbf{h}_{\mathrm{near}} = \underbrace{\beta \mathbf{a}(\mathbf{r})}_{\mathrm{LoS}} + \sum_{\ell=1}^L \underbrace{\tilde{\beta}_{\ell} \mathbf{a}(\tilde{\mathbf{r}}_{\ell})}_{\mathrm{NLoS}}. 
    \end{equation}
    }\end{tcolorbox}
\end{center}
In \eqref{H_NFC_MISO}, the random phase of $\tilde{\beta}_{\ell}$ 
is assumed to be independent and identically distributed (i.i.d.) and uniformly distributed in $(-\pi,\pi]$. The performance of NFC systems in near-field multipath channels will be discussed in detail in Section~\ref{Sec:per}.

The above analysis reveals that near-field MISO channels are characterized by the array response vector in \eqref{near_field_vector}. However, it is non-trivial to capture the characteristics of near-field channels based on the general expression \eqref{near_field_vector}, which entails mathematical difficulties for performance analysis and system design. To obtain more insights, we discuss the near-field array response vectors of two popular antenna array geometries, namely uniform linear array (ULA) and uniform planar array (UPA).

\begin{figure}[!t]
    \centering
    \includegraphics[width=0.4\textwidth]{./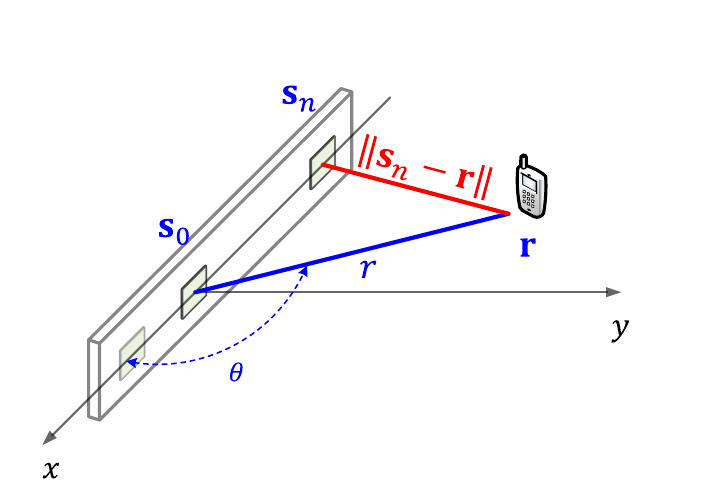}
    \caption{System layout of near-field MISO system with ULA.}
    \label{fig:ULA}
\end{figure}

$\bullet$ \textbf{\emph{Uniform Linear Array:}}
A ULA is a one-dimensional antenna array arranged linearly with equal antenna spacing. To derive the simplified array response vector, we consider a MISO system with $N$ antennas, where $N = 2\tilde{N} + 1$, at the transmitter. The spacing between adjacent antenna elements is denoted by $d$. For the ULA, we can always create a coordinate system such that all transmit antenna elements and the receiver are located in the $x \textendash y$ plane as shown in Fig. \ref{fig:ULA}. Therefore, we can ignore the $z$ axis and set $\phi = 90^\circ$. Then, by putting the origin of the coordinate system into the center of the ULA, the coordinates of the receiver and the $n$-th element of the ULA are given by $\mathbf{r} = [r \cos \theta, r \sin \theta]^{\mathsf{T}}$ and $\mathbf{s}_n = [nd, 0]^{\mathsf{T}}, \forall n \in \{ -\tilde{N},\dots,\tilde{N} \}, $ respectively. 
In this case, the propagation distance $\|\mathbf{r} - \mathbf{s}_n\|$ can be approximated as follows:
\begin{align} \label{taylor2}
    \lVert\mathbf{r} - \mathbf{s}_n\rVert &= \sqrt{r^2 + n^2d^2 - 2rnd\cos\theta}\nonumber\\
    &\approx r- nd \cos\theta + \frac{n^2 d^2 \sin^2\theta }{2r}. 
\end{align}
Here, we exploit the Fresnel approximation \eqref{fresnel} in the last step.
Then, we obtain the following simplified near-field array response vector for ULAs by substituting \eqref{taylor2} into \eqref{near_field_vector}:
\begin{center}
\begin{tcolorbox}[title = Near-Field Array Response Vector for ULAs]
{
\begin{equation}\label{eq_a_ula}
[\mathbf{a}_{\mathrm{ULA}}(\theta,r)]_n = e^{-j\frac{2\pi}{\lambda}(-nd\cos\theta + \frac{n^2d^2 \sin^2 \theta}{2r})}.
\end{equation}
}\end{tcolorbox}
\end{center}

\begin{figure}[!t]
    \centering
    \includegraphics[width=0.4\textwidth]{./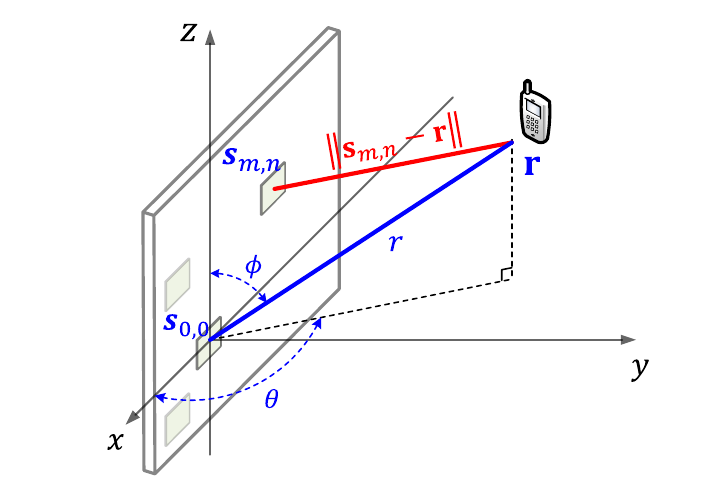}
    \caption{System layout of near-field MISO system with UPA.}
    \label{fig:UPA}
\end{figure}

$\bullet$ \textbf{\emph{Uniform Planar Array:}}
A UPA is a two-dimensional array of antennas uniformly arranged in a rectangular grid. We consider a MISO system with a UPA deployed in the $xz$-plane as illustrated in Fig. \ref{fig:UPA}. Assuming the UPA is located in the $x \textendash z$ plane and is composed of $N = N_x \times N_z$ antenna elements, where $N_x = 2\tilde{N}_x + 1$ and $N_z = 2\tilde{N}_z + 1$. The antenna spacings along the two directions are denoted by $d_x$ and $d_z$, respectively. 
Then, the Cartesian coordinates of the receiver and the $(m,n)$-th element of the transmit antenna array are given by $\mathbf{r} = (r \cos\theta \sin\phi, r \sin\theta \sin\phi, r \cos \phi)$ and $\mathbf{s}_{m,n} = (nd_x, 0, md_z), \forall n \in \{-\tilde{N}_x,\dots,\tilde{N}_x\}, m \in \{-\tilde{N}_z,\dots,\tilde{N}_z\},$ respectively. 
Assuming $d_x/r \ll 1$ and $d_z/r \ll 1$,
the distance $\|\mathbf{r} - \mathbf{s}_{m,n}\|$ can be approximated as follows:
\begin{align}
    &\lVert\mathbf{r} - \mathbf{s}_{m,n}\rVert \nonumber \\ 
    & \hspace{-0.1cm} = \sqrt{r^2 + n^2d_x^2 + m^2d_z^2  - 2r nd_x \cos\theta \sin \phi - 2 r md_z \cos \phi} \nonumber \\
    & \hspace{-0.1cm} \approx r - nd_x \cos \theta \sin \phi + \frac{n^2 d_x^2 (1 - \cos^2 \theta \sin^2 \phi) }{2r} \nonumber \\
    & \hspace{0.6cm} - md_z \cos \phi + \frac{m^2 d_z^2 \sin^2 \phi}{2r},
\end{align}
where the last step is obtained by exploiting Fresnel approximation \eqref{fresnel} and omitting the bilinear term. This approximation is sufficiently accurate for the USW model. After removing the constant phase $e^{-j \frac{2\pi}{\lambda} r}$, the phase of the array response vector can be divided into two components, namely $- nd_x \cos \theta \sin \phi + \frac{n^2 d_x^2(1 - \cos^2 \theta \sin^2 \phi)}{2r}$ and $- md_z \cos \phi + \frac{m^2 d_z^2 \sin^2 \phi}{2r}$, which only depend on $m$ and $n$, respectively. Then, we obtain the following result:
\begin{center}
\begin{tcolorbox}[title = Near-Field Array Response Vector for UPAs]
{
\begin{subequations} \label{eq_a_upa}
\begin{align}
&\mathbf{a}_{\mathrm{UPA}}(\theta, \phi, r) = \mathbf{a}_x(\theta, \phi, r) \otimes \mathbf{a}_z(\phi, r), \\
&[\mathbf{a}_x(\theta, \phi, r)]_n = e^{ -j \frac{2\pi}{\lambda} (- nd_x \cos \theta \sin \phi + \frac{n^2 d_x^2(1 - \cos^2 \theta \sin^2 \phi)}{2r})}, \\
&[\mathbf{a}_z(\phi, r)]_m = e^{-j \frac{2\pi}{\lambda} (- md_z \cos \phi + \frac{m^2 d_z^2 \sin^2 \phi}{2r})}.
\end{align}
\end{subequations}
}\end{tcolorbox}
\end{center}

Furthermore, according to \eqref{far_field_vector}, the well-known far-field array response vectors of ULAs and UPAs can be calculated, which are respectively given by:
\begin{align}
    \label{far_ULA}
    &\mathbf{a}_{\mathrm{ULA}}^{\mathrm{far}}(\theta) =  \big[ e^{-j\frac{2\pi}{\lambda} \tilde{N} d \cos\theta},\dots,e^{j\frac{2\pi}{\lambda} \tilde{N} d \cos\theta}   \big]^\mathsf{T}, \\
    \label{far_UPA}
    &\mathbf{a}_{\mathrm{UPA}}^{\mathrm{far}}(\theta, \phi) = \big[ e^{-j\frac{2\pi}{\lambda} \tilde{N}_x d_x \cos\theta \sin \phi },\dots,e^{j\frac{2\pi}{\lambda} \tilde{N}_x d_x \cos\theta \sin \phi}   \big]^\mathsf{T} \nonumber \\
    & \hspace{1.8cm} \otimes \big[ e^{-j\frac{2\pi}{\lambda} \tilde{N}_z d_z \cos\phi},\dots,e^{j\frac{2\pi}{\lambda} \tilde{N}_z d_z \cos\phi}   \big]^\mathsf{T}.
\end{align}
By comparing the far-field array response vectors in \eqref{far_ULA}, \eqref{far_UPA} with the near-field array response vectors in \eqref{eq_a_ula} and \eqref{eq_a_upa}, we obtain the following insights:
\begin{itemize}
    \item \textbf{Near-field channels dependent on both angle and distance.} This is the major difference between near-field and far-field channels. Compared with far-field channels, the additional distance dependence provides additional DoFs for NFC system design.
    
    \item \textbf{Far-field channels are special cases of near-field channels.} It can be observed that the far-field array response vectors in \eqref{far_ULA} and \eqref{far_UPA} can be obtained by omitting the phase terms that include $\frac{d^2}{r}$ in \eqref{eq_a_ula} and \eqref{eq_a_upa}. This implies that when $r$ is sufficiently large such that these terms become negligible, the near-field channel reduces to a far-field channel.
\end{itemize}

\subsubsection{MIMO Channel Model}
We continue to discuss MIMO systems. Let us consider a MIMO system consisting of an $N_T$-antenna transmitter and an $N_R$-antenna receiver, where $N_R = 2\tilde{N}_R+1$ and $N_T = 2\tilde{N}_T + 1$. The antenna indices of the transmit and receive antenna arrays are given by $n \in \{ -\tilde{N}_T,\dots,\tilde{N}_T\}$ and $m \in \{ -\tilde{N}_R,\dots,\tilde{N}_R\}$. Let $\mathbf{r}_m = [r_x^m, r_y^m, r_z^m]^{\mathsf{T}}$ and $\mathbf{s}_n = [s_x^n, s_y^n, s_z^n]^{\mathsf{T}}$ denote the Cartesian coordinates of the $m$-th element of the receive antenna array and the $n$-th element of the transmit antenna array, respectively, and let $r = \|\mathbf{r}_0 - \mathbf{s}_0\|$ denote the distance between the central elements of the receive and the transmit antenna array. Furthermore, $\mathbf{s}_0 = [0,0,0]^T$ is again the origin of the coordinate system.   

For the planar-wave-based far-field channel, let $\theta$ and $\phi$ denote the azimuth and elevation angles of the receiver with respect to the $x \textendash z$ plane, respectively. Then, for the far-field link between the $m$-th receive antenna and the $n$-th transmit antenna, the propagation distance is given by $r_{m,n} = r + \mathbf{k}^{\mathsf{T}}(\theta, \phi) \mathbf{s}_n + \mathbf{k}^{\mathsf{T}}(\theta, \phi) (\mathbf{r}_m - \mathbf{r}_0)$. By defining $\tilde{\mathbf{r}}_m = \mathbf{r}_m - \mathbf{r}_0$, the resulting channel coefficient is given by    
\begin{equation} \label{far_MIMO_link}
    h_{\mathrm{far}}^{m,n}(\theta, \phi, r) = \beta_{m,n} e^{-j \frac{2\pi}{\lambda} r} e^{j \frac{2\pi}{\lambda} \mathbf{k}^{\mathsf{T}}(\theta, \phi) \mathbf{s}_n } e^{j \frac{2\pi}{\lambda} \mathbf{k}^{\mathsf{T}}(\theta, \phi) \tilde{\mathbf{r}}_n},
\end{equation}  
Similar to the MISO case, the channel gains $\beta_{m,n}$ are assumed to have identical values of $\beta$ due to the large propagation distance in the far-field region. According to \eqref{far_MIMO_link}, the far-field LoS MIMO channel can be divided into the transmit-side component $e^{-j \frac{2\pi}{\lambda} \mathbf{k}^{\mathsf{T}}(\theta, \phi) \mathbf{s}_n }$ and the receive-side component $e^{-j \frac{2\pi}{\lambda} \mathbf{k}^{\mathsf{T}}(\theta, \phi) \tilde{\mathbf{r}}_n}$. Therefore, it can be modelled as the multiplication of transmit and receive array response vectors as follows:
\begin{center}
    \begin{tcolorbox}[title = Far-Field LoS MIMO Channel]
    {\setlength\abovedisplayskip{2pt}
    \setlength\belowdisplayskip{2pt}
    \begin{equation} \label{far_MIMO_LoS}
        \mathbf{H}_{\mathrm{far}}^{\mathrm{LoS}} = \tilde{\beta} \mathbf{a}_{\mathrm{far}}^R(\theta, \phi)  (\mathbf{a}_{\mathrm{far}}^T(\theta, \phi))^{\mathsf{T}}.
    \end{equation} 
    }\end{tcolorbox}
\end{center} 
Here, $\tilde{\beta} = \beta e^{-j \frac{2\pi}{\lambda} r}$ denotes the complex channel gain, and $\mathbf{a}_{\mathrm{far}}^T(\theta, \phi)$ and $\mathbf{a}_{\mathrm{far}}^R(\theta, \phi)$ denote the transmit and receive array response vectors, respectively, which are given as follows: 
\begin{align}
    &\mathbf{a}_{\mathrm{far}}^T(\theta, \phi) = [e^{j \frac{2\pi}{\lambda} \mathbf{k}^{\mathsf{T}}(\theta, \phi) \mathbf{s}_{-\tilde{N}_T}}, \dots, e^{j \frac{2\pi}{\lambda} \mathbf{k}^{\mathsf{T}}(\theta, \phi) \mathbf{s}_{\tilde{N}_T}} ]^{\mathsf{T}}, \nonumber \\ 
    &\mathbf{a}_{\mathrm{far}}^R(\theta, \phi) = [e^{j \frac{2\pi}{\lambda} \mathbf{k}^{\mathsf{T}}(\theta, \phi) \tilde{\mathbf{r}}_{-\tilde{N}_R}}, \dots, e^{j \frac{2\pi}{\lambda} \mathbf{k}^{\mathsf{T}}(\theta, \phi) \tilde{\mathbf{r}}_{\tilde{N}_R}} ]^{\mathsf{T}}.
\end{align}  
It is worth noting that the far-field LoS MIMO channel in \eqref{far_MIMO_LoS} always has rank one, leading to \emph{low DoFs}, i.e.,
\begin{equation}
    \mathrm{DoF}^{\mathrm{LoS}}_{\mathrm{far}} = \mathrm{rank}\{\mathbf{H}_{\mathrm{far}}^{\mathrm{LoS}}\} = 1.
\end{equation}

\begin{figure}[!t]
    \centering
    \includegraphics[width=0.35\textwidth]{./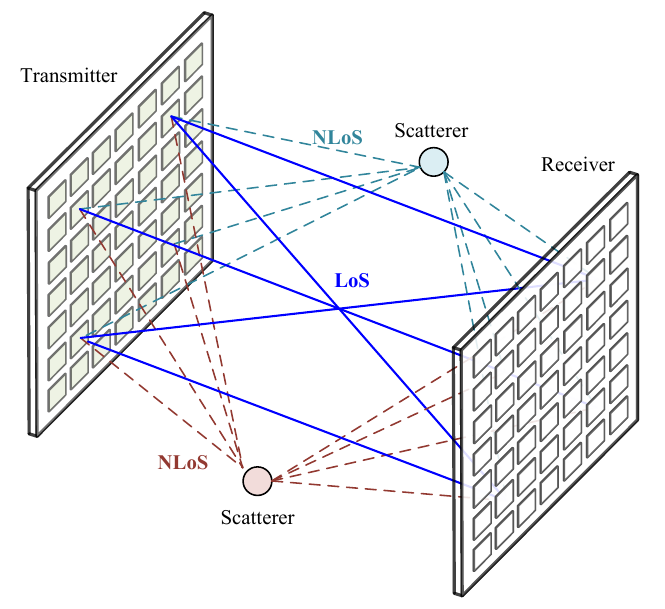}
    \caption{Near-field multipath MIMO channel model.}
    \label{fig:near_MIMO_multipath}
\end{figure}

For the near-field spherical-wave-based channel, similar to the MISO case, the propagation distance of each link in the LoS channel has to be calculated more accurately as $r_{m,n} = \|\mathbf{r}_m - \mathbf{s}_n\|$, resulting in the following channel coefficient:  
\begin{equation}
    h_{\mathrm{near}}^{m,n}(\mathbf{s}_n, \mathbf{r}_m) = \beta_{m,n} e^{-j \frac{2\pi}{\lambda} \|\mathbf{r}_m - \mathbf{s}_n\|}.
\end{equation}
If distance $r$ is larger than the \emph{uniform-power distance}, the channel gains $\beta_{m,n}$ are approximated to have the same value $\beta$. In contrast to the far-field LoS MIMO channel, the near-field LoS MIMO channel cannot be decomposed into transmit-side and receive-side components. Therefore, the near-field LoS MIMO channel needs to be modelled as 
\begin{center}
    \begin{tcolorbox}[title = Near-Field LoS MIMO Channel]
    {\setlength\abovedisplayskip{2pt}
    \setlength\belowdisplayskip{2pt}
    \begin{equation} \label{near_MIMO_LoS}
        [\mathbf{H}_{\mathrm{near}}^{\mathrm{LoS}}]_{m,n} = \beta e^{-j \frac{2\pi}{\lambda} \|\mathbf{r}_m - \mathbf{s}_n\|}.
    \end{equation} 
    }\end{tcolorbox}
\end{center} 
In particular, the near-field LoS MIMO channel matrix typically has high rank due to the non-linear phase \cite{miller2000communicating}, resulting in \emph{high DoFs}, i.e., 
\begin{equation}
    \mathrm{DoF}^{\mathrm{LoS}}_{\mathrm{near}} = \mathrm{rank}\{\mathbf{H}_{\mathrm{near}}^{\mathrm{LoS}}\} \ge 1.
\end{equation} 

Furthermore, multipath propagation will also occur in near-field MIMO channels if scatterers are present in the environment. Near-field multipath propagation is illustrated in Fig. \ref{fig:near_MIMO_multipath}. As can be observed, the NLoS MIMO channel can be regarded as the combination of two MISO channels with respect to the transmitter and receiver, respectively. Therefore, it can be written as the multiplication of the near-field array response vectors at the transmitter and receiver. Therefore, the near-field multipath channel can be modelled as follows:
\begin{center}
    \begin{tcolorbox}[title = Near-Field Multipath MIMO Channel (LoS + NLoS)]
    {\setlength\abovedisplayskip{2pt}
    \setlength\belowdisplayskip{2pt}
    \begin{equation}\label{NLoS_MIMO}
        \mathbf{H}_{\mathrm{near}} = \mathbf{H}_{\mathrm{near}}^{\mathrm{LoS}} + \sum_{\ell=1}^L \underbrace{\tilde{\beta}_{\ell} \mathbf{a}_R(\tilde{\mathbf{r}}_{\ell}) \mathbf{a}_T^{\mathsf{T}}(\tilde{\mathbf{r}}_{\ell})}_{\mathrm{NLoS}}. 
    \end{equation}
    }\end{tcolorbox}
\end{center}
Here, the near-field transmit array response vectors $\mathbf{a}_T(\tilde{\mathbf{r}}_{\ell})$ and receive array response vector $\mathbf{a}_R(\tilde{\mathbf{r}}_{\ell})$ are defined as follows:
\begin{align}
    &\mathbf{a}_T(\tilde{\mathbf{r}}_\ell) = \big[ e^{-j \frac{2\pi}{\lambda} \|\tilde{\mathbf{r}}_\ell - \mathbf{s}_{-\tilde{N}_T}\|},\dots, e^{-j \frac{2\pi}{\lambda} \|\tilde{\mathbf{r}}_\ell - \mathbf{s}_{\tilde{N}_T}\|} \big]^{\mathsf{T}}, \\
    &\mathbf{a}_R(\tilde{\mathbf{r}}_\ell) = \big[ e^{-j \frac{2\pi}{\lambda} \|\tilde{\mathbf{r}}_\ell - \mathbf{r}_{-\tilde{N}_R}\|},\dots, e^{-j \frac{2\pi}{\lambda} \|\tilde{\mathbf{r}}_\ell - \mathbf{r}_{\tilde{N}_R}\|} \big]^{\mathsf{T}}.
\end{align}
In a rich scattering environment, i.e., $L \gg 1$, the MIMO channel in \eqref{NLoS_MIMO} can achieve full rank due to the random phase-shifts imposed by scatterers, i.e., for the case of SDP antennas, the DoFs are given by:
\begin{equation}\label{Dof_s}
    \mathrm{DoF}^{\mathrm{NLoS}}_{\mathrm{near}} = \mathrm{rank}\{\mathbf{H}_{\mathrm{near}}^{\mathrm{NLoS}}\} = \min \{N_T, N_R\}.
\end{equation}

It can be observed that near-field NLoS MIMO channels have a similar structure as far-field MIMO channels, which can be written as the multiplication of transmit and receive array response vectors. However, near-field LoS MIMO channels have a significantly different structure. In the following, to obtain more insights into near-field LoS MIMO channels, we discuss two specific antenna array geometries, namely parallel ULAs and parallel UPAs.

\begin{figure}[!t]
    \centering
    \includegraphics[width=0.4\textwidth]{./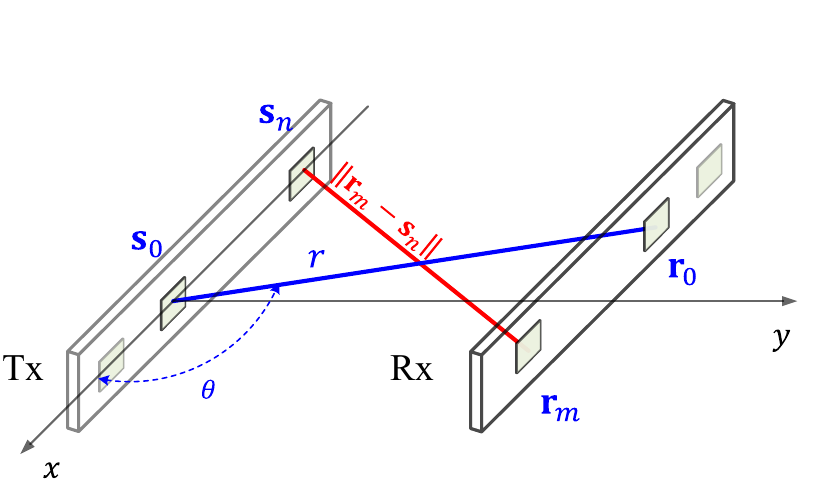}
    \caption{System layout of near-field MIMO system with two parallel ULAs.}
    \label{fig:ULA_MIMO}
\end{figure}

$\bullet$ \textbf{\emph{Parallel ULAs:}}
We consider the MIMO system shown in Fig. \ref{fig:ULA_MIMO}, with an $N_T$-antenna ULA at the transmitter and an $N_R$-antenna ULA at the receiver, where $N_T = 2\tilde{N}_T + 1$ and $N_R = 2 \tilde{N}_R + 1$. In particular, the two ULAs are parallel to each other. The spacing between adjacent transmit and receive antennas is denoted by $d_T$ and $d_R$, respectively. The angle and distance of the center of the receive ULA with respect to the center of the transmit ULA are denoted by $\theta$ and $r$, respectively. According to the system layout in Fig. \ref{fig:ULA_MIMO}, the Cartesian coordinates of the $n$-th elements at the transmitter and the $m$-th elements at the receiver are $\mathbf{s}_n = (n d_T, 0), \forall 
n \in \{-\tilde{N}_T \dots \tilde{N}_T\},$ and $\mathbf{r}_m = (r \cos \theta - m d_R, r \sin \theta), \forall 
m \in \{-\tilde{N}_R \dots \tilde{N}_R\}$, respectively. Therefore, the distance $\| \mathbf{r}_m - \mathbf{s}_n \|$ can be approximated as follows:
\begin{align}\label{eq:three_com}
    &\|\mathbf{r}_m - \mathbf{s}_n \| \nonumber\\
    &= \sqrt{r^2 + (nd_T + md_R)^2 - 2 r (nd_T + md_R) \cos \theta}   \nonumber \\
    &\overset{(a)}{\approx} r - (nd_T + md_R) \cos \theta + \frac{(nd_T + md_R)^2 \sin^2 \theta}{2 r} \nonumber \\
    &= r - nd_T \cos \theta + \frac{n^2 d_T^2 \sin^2 \theta}{2r} - md_R \cos \theta + \frac{m^2 d_R^2 \sin^2 \theta}{2r} \nonumber \\
    &\hspace{0.6cm}+ \frac{n d_T m d_R \sin^2 \theta}{r},
\end{align}
where the Fresnel approximation \eqref{fresnel} is exploited in step $(a)$. It can be observed that \eqref{eq:three_com} involves three components, namely $- nd_T \cos \theta + \frac{n^2 d_T^2 \sin^2 \theta}{2r}$, $- md_R \cos \theta + \frac{m^2 d_R^2 \sin^2 \theta}{2r}$, and $\frac{n d_T m d_R \sin^2 \theta}{r}$, where the first two components depend only on $n$ and $m$, respectively, while the last one involves both $n$ and $m$. Therefore, the near-field LoS MIMO channel matrix for parallel ULAs can be expressed via the ULA array response vectors in \eqref{eq_a_ula} and an additional coupled component as follows:
\begin{center}
\begin{tcolorbox}[title = Near-Field LoS MIMO Channel for Parallel ULAs]
{
\begin{subequations} \label{18a}
\begin{align}
&\mathbf{H}_{\mathrm{ULA}}^{\mathrm{LoS}} = \tilde{\beta} \mathbf{a}_{\mathrm{ULA}}^R(\theta, r) (\mathbf{a}_{\mathrm{ULA}}^T(\theta, r))^{\mathsf{T}} \odot \mathbf{H}_c, \\
&[\mathbf{H}_c]_{m,n} = e^{-j \frac{2\pi}{\lambda r}nd_T md_R \sin^2 \theta},
\end{align}
\end{subequations}
}\end{tcolorbox}
\end{center}
where $\tilde{\beta} = \beta e^{-j \frac{2\pi}{\lambda} r}$ denotes the complex channel gain.

\begin{remark}
    As can be observed in \eqref{18a}, the near-field LoS MIMO channel matrix between parallel ULAs includes an additional coupled component, i.e., $\mathbf{H}_c$, which cannot be decomposed into the multiplication of transmitter-side and receiver-side array response vectors. Due to the presence of this coupled component, near-field LoS MIMO channels exhibit higher DoFs than far-field LoS MIMO channels.
\end{remark}

For two parallel ULAs, the DoFs of the near-field LoS MIMO channel can be calculated through diffraction theory or eigenfunction analysis, which leads to \cite{miller2000communicating}:
\begin{equation}\label{dof_ula}
\mathrm{DoF}_{\mathrm{near}}^{\mathrm{ULA}} = \min\Big\{\frac{(N_T-1)d_T(N_R-1)d_R}{\lambda r}, N_T, N_R \Big\}.
\end{equation}
As can be observed, if the numbers of transmit and receive antennas are large enough, the DoFs between parallel ULAs are given by $\frac{(N_T-1)d_T(N_R-1)d_R}{\lambda r}$.

\begin{table*}[!h]
    \caption{Comparison between near-field and far-field channel models for SPD antennas.}
    \label{table:far_vs_near}
    \small
    \centering
    \begin{tabular}{!{\vrule width1pt}c!{\vrule width1pt}c!{\vrule width1pt}c!{\vrule width1pt}c!{\vrule width1pt}c!{\vrule width1pt}l!{\vrule width1pt}}
    \Xhline{1pt} 
    \textbf{System}         & \textbf{Channel Model}                          & \textbf{Category}             & \textbf{Main factors}               & \textbf{DoFs}    &  \textbf{Characterisitic}                      \\ \Xhline{1pt} 
    \multirow{3}{*}{MISO}   & Far-field                                       & LoS/NLoS                      & angle                               & $1$              &  \makecell[l]{Linear phase}                         \\ \cline{2-6}
                            & Near-field                                      & LoS/NLoS                      & angle, distance                     & $1$              &  \makecell[l]{Non-linear phase\\Involve the far-field model as a special case} \\ \Xhline{1pt} 
    \multirow{4}{*}{MIMO}   & \multirow{2}{*}{\makecell[c]{Far-field}}         & LoS                          & \multirow{2}{*}{angle}              & $1$                &\multirow{2}{*}{\makecell[l]{Linear phase\\Low DoFs of LoS channels}} \\ \cline{3} \cline{5}
                            &                                                  & NLoS                         &                                     & $\ge 1$          &              \\ \cline{2-6}
                            & \multirow{2}{*}{\makecell[c]{Near-field}}        & LoS                          & \multirow{2}{*}{angle, distance}    & $\ge 1$          & \multirow{2}{*}{\makecell[l]{Non-Linear phase\\High DoFs of LoS channels}}  \\ \cline{3} \cline{5}
                            &                                                   & NLoS                        &                                     & $\ge 1$          &                       \\ \Xhline{1pt} 
    \end{tabular}
\end{table*}

$\bullet$ \textbf{\emph{Parallel UPAs:}}
The near-field LoS MIMO channel matrix for two parallel UPAs can be calculated in a similar manner as that for two parallel ULAs. Assume that the transmit UPA is deployed in the $xz$-plane and is composed of $N_T = N^x_T \times N^z_T$ antenna elements with spacing $d^x_T$ and $d^z_T$ along the $x$- and $z$- directions, and the receive UPA is parallel to the transmit UPA and is composed of $N_R = N^x_R \times N^z_R$ antenna elements with spacing $d^x_R$ and $d^z_R$ along the two directions. More particularly, $N^x_{i} = 2\tilde{N}^x_{i} + 1$ and $N^z_{i} = 2\tilde{N}^z_{i} + 1, \forall i \in \{T,R\}$. The antenna element indices of transmitter and receiver are denoted as $(m,n), \forall m \in \{-\tilde{N}^x_T,\dots \tilde{N}^x_T\}, n \in \{-\tilde{N}^z_T,\dots \tilde{N}^z_T\},$ and $(p,q), \forall p  \in \{-\tilde{N}^x_R,\dots \tilde{N}^x_R\}, p  \in \{-\tilde{N}^z_R,\dots \tilde{N}^z_R\}$, respectively. Then, we have the following results:
\begin{center}
\begin{tcolorbox}[title = Near-Field LoS MIMO Channel for Parallel UPAs]
{
\begin{subequations} \label{19a}
\begin{align}
&\mathbf{H}_{\mathrm{UPA}}^{\mathrm{LoS}} = \tilde{\beta} \mathbf{a}_{\mathrm{UPA}}^R(\theta, \phi, r) (\mathbf{a}_{\mathrm{UPA}}^T(\theta, \phi, r))^{\mathsf{T}} \odot (\mathbf{H}_c^x \otimes \mathbf{H}_c^z), \\
&[\mathbf{H}_c^x]_{q,n} = e^{-j \frac{ 2 \pi }{\lambda r} n d^x_T q d^x_R (1 - \cos^2 \theta \sin^2 \phi)}, \\
&[\mathbf{H}_c^z]_{p,m} = e^{-j \frac{ 2 \pi}{\lambda r} m d^z_T p d^z_R \sin^2 \phi}.
\end{align}
\end{subequations}
}\end{tcolorbox}
\end{center}
Similar to the case of parallel ULAs, the above near-field LoS MIMO channel matrix for parallel UPAs also involves a coupled component, i.e., $\mathbf{H}_c^x \otimes \mathbf{H}_c^z$, thus resulting in high DoFs. 
For two parallel UPAs, the DoFs are given by \cite{miller2000communicating}:
\begin{equation}\label{dof_upa}
    \mathrm{DoF}_{\mathrm{near}}^{\mathrm{UPA}} = \min\Big\{\frac{2A_TA_R}{(\lambda r)^2}, N_T, N_R \Big\},
\end{equation}
where $A_T = (N_T^x-1)d_T^x(N_T^z-1)d_T^z$ and $A_R = (N_R^x-1)d_T^x(N_R^z-1)d_T^z$ are the apertures of the transmitting and receiving UPAs, respectively. Similar to the case of the parallel ULAs, if $N_T$ and $N_R$ are sufficiently large, the DoFs are given by $\frac{2A_TA_R}{(\lambda r)^2}$. 

The main differences between near-field and far-field channels are summarized in Table \ref{table:far_vs_near}.

\subsection{Non-Uniform Channel Model for SPD Antennas} \label{sec_non_uniform}
In the previous subsection, we reviewed the near-field channel model based on spherical waves and highlighted its major differences compared with the far-field channel model. Recall that the near-field channel coefficient between a transmit antenna $\mathbf{s}_n$ and a receive antenna $\mathbf{r}_m$ is given by 
\begin{equation}
    h(\mathbf{s}_n, \mathbf{r}_m) = \beta_{m,n} e^{-j \frac{2\pi}{\lambda} \|\mathbf{r}_m - \mathbf{s}_n\|},
\end{equation}  
In the previous subsection, we assumed that the propagation distance $r = \|\mathbf{r}_0 - \mathbf{s}_0\|$ with respect to the central elements of the antenna arrays is larger than the \emph{uniform-power distance}, resulting in negligible variations of the channel gains of different links. However, when $r$ is smaller than then \emph{uniform-power distance} or the antenna aperture is extremely large, the channel gain variations can no longer be neglected. In this case, more accurate channel gain models of near-field channels are required. In the following, we discuss different channel gain models valid under different assumptions.

\subsubsection{USW Model \cite{starer1994passive, bjornson2021primer}} We first briefly review the USW model defined in the previous section. In this model, the propagation distance $r$ is larger than the \emph{uniform-power distance}, resulting in \emph{uniform} channel gains, i.e., $\beta_{m,n} \approx \beta_{0,0}, \forall m,n$. The channel gain $\beta_{0,0}$ is mainly determined by the free-space path loss, which is given by $\beta_{0,0} = \frac{1}{\sqrt{4 \pi r^2}}$. Therefore, the USW model for near-field channels is given by 
\begin{center}
    \begin{tcolorbox}[title = USW Model of Near-Field Channels]
    {\setlength\abovedisplayskip{2pt}
    \setlength\belowdisplayskip{2pt}
    \begin{equation}\label{USW_Model_Channel_Coefficient}
        h_{\mathrm{U}}(\mathbf{s}_n, \mathbf{r}_m) = \frac{1}{\sqrt{4 \pi r^2}} e^{-j \frac{2\pi}{\lambda} \|\mathbf{r}_m - \mathbf{s}_n\|}.
    \end{equation}  
    }\end{tcolorbox}
\end{center}

\subsubsection{NUSW Model \cite{zhou2015spherical, friedlander2019localization}} In this model, the propagation distance $r$ is smaller than the \emph{uniform-power distance}, where the channel gain variations are not negligible. As a result, the channel gains of different links are \emph{non-uniform} and have to be calculated separately. More specifically, the channel gain can also be calculated according to the free-space path loss, which leads to $\beta_{m,n} = \frac{1}{\sqrt{4\pi \| \mathbf{r}_m - \mathbf{s}_n\|^2}}$. The NUSW model for near-field channels is given as follows:
\begin{center}
    \begin{tcolorbox}[title = NUSW Model of Near-Field Channels]
    {\setlength\abovedisplayskip{2pt}
    \setlength\belowdisplayskip{2pt}
    \begin{equation}\label{NUSW_Model_Channel_Coefficient}
        h_{\mathrm{N}}(\mathbf{s}_n, \mathbf{r}_m) = \frac{1}{\sqrt{4 \pi \| \mathbf{r}_m - \mathbf{s}_n\|^2}} e^{-j \frac{2\pi}{\lambda} \|\mathbf{r}_m - \mathbf{s}_n\|}.
    \end{equation}  
    }\end{tcolorbox}
\end{center}

\subsubsection{A General Model} Although the NUSW model is more accurate than the USW model within the \emph{uniform-power distance}, it still fails to capture the loss in channel gain caused by the effective antenna aperture and polarization mismatch, especially when the antenna arrays are of considerable size. The effective antenna aperture characterizes how much power is captured from an incident wave\footnote{Effective aperture or effective area characterizes the received power of an antenna \cite{roederer2014ieee,Balanis2012,Balanis2015,bevelacqua2019antenna}. Assume that the incident wave has the same polarization as the receive antenna and is travelling towards the antenna in the antenna's direction of maximum radiation (the direction from which the most power would be received). Then, the effective aperture describes how much power is captured from a given incident wave. Let $p_0$ be the power density of the incident wave (in ${\text{Watt}}/{\text{m}}^2$). Then, the antenna's received power (in Watts) is given by $p_0A_e$. An antenna's effective area or aperture is defined for reception. However, due to reciprocity, an antenna's directivities for reception and transmission are identical, so the power transmitted by an antenna in different directions is also proportional to the effective area \cite{roederer2014ieee,Balanis2012,Balanis2015,bevelacqua2019antenna}.}, and the polarization mismatch means the angular difference in polarization between the incident wave and receiving antenna \cite{Balanis2012,Balanis2015}. To this end, we introduce a general model for near-field channels, which has three components for the channel gain, namely \emph{free-space path loss}, \emph{effective aperture loss} $G_1$, and \emph{polarization loss} $G_2$. In the following, we explain how to calculate $G_1$ and $G_2$. Moreover, we assume that the transmit antenna has a unit effective area and the receive antenna is a hypothetical isotropic antenna element whose effective area is given by $\frac{\lambda^2}{4\pi}$ \cite{roederer2014ieee,Balanis2012,Balanis2015}.
\begin{itemize}
\item \textbf{Effective Aperture Loss}: As the signals sent by different array elements are observed by the receiver from different angles, the resulting effective antenna area varies over the array. The effective antenna area equals the product of the maximal value of the effective area and the projection of the array normal to the signal direction. Let ${\hat{\mathbf{u}}}_{\mathbf{s}}$ denote the normalized normal vector of the transmitting array at point $\mathbf{s}_n$. For example, when the transmitting array is placed in the $x \textendash z$ plane, we have ${\hat{\mathbf{u}}}_{\mathbf{s}}=[0,1,0]^{\mathsf{T}}$. Then, the power gain due to the effective antenna aperture is given as follows~\cite{Lu2022communicating}:
\begin{align}\label{p_proj_general}
    G_1(\mathbf{s}_n, \mathbf{r}_m) =\frac{(\mathbf{r}_m-\mathbf{s}_n)^{\mathsf{T}}{\hat{\mathbf{u}}}_{\mathbf{s}}}{\| \mathbf{r}_m-\mathbf{s}_n \|}.
\end{align}

\item \textbf{Polarization Loss}: The polarization loss is also caused by the fact that the receiver sees the signals sent by different array elements from different angles. The power loss due to polarization is defined as the squared norm of the inner product between the receiving-mode polarization vector at the receive antenna and the transmitting-mode polarization vector of the transmit antenna.

\begin{lemma} \label{lemma:polar}
    The polarization gain factor for transmit antenna $\mathbf{s}$ and receive antenna $\mathbf{r}$ can be expressed as follows:
    \begin{align}\label{eq_proj}
        G_2(\mathbf{s}, \mathbf{r}) &= \frac{\left\lvert{\bm\rho}_{\mathrm w}^{\mathsf T}({\mathbf{r}})\mathbf{e}( \mathbf{s}, \mathbf{r} ) \right\rvert^2}{\left\lVert\mathbf{e}( \mathbf{s}, \mathbf{r} )\right\rVert^2}, \\
        \mathbf{e}( \mathbf{s}, \mathbf{r} ) &= \left({\mathbf{I}}-\frac{({\mathbf r}-{\mathbf s})({\mathbf r}-{\mathbf s})^{\mathsf T}}{\lVert{\mathbf r}-{\mathbf s}\rVert^2}\right)\hat{\mathbf J}({\mathbf{s}}),
    \end{align}
    where ${\bm\rho}_{\mathrm w}^{\mathsf T}({\mathbf{r}})$ denotes the normalized receiving-mode polarization vector of the receive antenna and $\hat{\mathbf J}(\mathbf{s})$ denotes the normalized electric current vector at the transmit antenna.
\end{lemma}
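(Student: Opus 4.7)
The plan is to derive the formula from first principles of radiating electromagnetic fields, using the fact that the polarization loss is defined, per the lemma's preceding paragraph, as the squared magnitude of the inner product between the normalized receiving-mode polarization vector of the receive antenna and the normalized transmitting-mode polarization vector of the transmitter. Since ${\bm\rho}_{\mathrm w}({\mathbf{r}})$ is already given as a normalized quantity, the only substantive step is to identify the transmitting-mode polarization vector in the direction of $\mathbf{r}$ and to show that, up to normalization, it equals $\mathbf{e}(\mathbf{s},\mathbf{r})$ as defined.

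First I would invoke the standard result that in the radiating region of an elementary current source located at $\mathbf{s}$ with direction $\hat{\mathbf J}(\mathbf{s})$, the radiated electric field at an observation point $\mathbf{r}$ is transverse to the propagation direction $\hat{\mathbf k}=(\mathbf{r}-\mathbf{s})/\lVert\mathbf{r}-\mathbf{s}\rVert$. This can be derived either from the dyadic Green's function for the free-space Helmholtz equation, or equivalently from the observation that in the far zone of the elementary dipole the longitudinal component of $\mathbf{E}$ decays faster than $1/r$ and is therefore absent from the radiating part. The transverse component of $\hat{\mathbf J}(\mathbf{s})$ relative to $\hat{\mathbf k}$ is obtained by applying the projector $\mathbf{P}_{\hat{\mathbf k}}^{\perp}=\mathbf{I}-\hat{\mathbf k}\hat{\mathbf k}^{\mathsf T}$, and substituting $\hat{\mathbf k}=(\mathbf{r}-\mathbf{s})/\lVert\mathbf{r}-\mathbf{s}\rVert$ yields exactly $\mathbf{e}(\mathbf{s},\mathbf{r})=\bigl(\mathbf{I}-(\mathbf{r}-\mathbf{s})(\mathbf{r}-\mathbf{s})^{\mathsf T}/\lVert\mathbf{r}-\mathbf{s}\rVert^2\bigr)\hat{\mathbf J}(\mathbf{s})$. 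Thus $\mathbf{e}(\mathbf{s},\mathbf{r})$ is proportional to the radiated electric field direction at $\mathbf{r}$, so the (unit) transmitting-mode polarization vector is $\hat{\bm\rho}_t=\mathbf{e}(\mathbf{s},\mathbf{r})/\lVert\mathbf{e}(\mathbf{s},\mathbf{r})\rVert$.

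Finally, inserting $\hat{\bm\rho}_t$ into the polarization-loss definition gives
\[
G_2(\mathbf{s},\mathbf{r})=\bigl\lvert{\bm\rho}_{\mathrm w}^{\mathsf T}(\mathbf{r})\,\hat{\bm\rho}_t\bigr\rvert^{2}=\frac{\bigl\lvert{\bm\rho}_{\mathrm w}^{\mathsf T}(\mathbf{r})\,\mathbf{e}(\mathbf{s},\mathbf{r})\bigr\rvert^{2}}{\lVert\mathbf{e}(\mathbf{s},\mathbf{r})\rVert^{2}},
\]
which is exactly the claimed expression. A small sanity check I would include is that $\mathbf{P}_{\hat{\mathbf k}}^{\perp}$ is idempotent and symmetric, so $\lVert\mathbf{e}(\mathbf{s},\mathbf{r})\rVert^{2}=\hat{\mathbf J}(\mathbf{s})^{\mathsf T}\mathbf{P}_{\hat{\mathbf k}}^{\perp}\hat{\mathbf J}(\mathbf{s})$, confirming that $G_2\in[0,1]$ by the Cauchy–Schwarz inequality; and that $G_2=1$ whenever ${\bm\rho}_{\mathrm w}(\mathbf{r})$ is parallel to $\mathbf{e}(\mathbf{s},\mathbf{r})$ (perfect polarization match) while $G_2=0$ if $\hat{\mathbf J}(\mathbf{s})$ is purely longitudinal so that $\mathbf{e}(\mathbf{s},\mathbf{r})=\mathbf{0}$ (no radiated wave in that direction, a degenerate case that should be excluded from the denominator).

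The main obstacle is the electromagnetic step, namely rigorously justifying that the radiating-zone electric field due to the source current $\hat{\mathbf J}(\mathbf{s})$ is captured by the transverse projection $(\mathbf I-\hat{\mathbf k}\hat{\mathbf k}^{\mathsf T})\hat{\mathbf J}(\mathbf{s})$. The remainder is essentially a one-line manipulation of norms. To keep the argument self-contained without reproducing the full Green's function derivation, I would cite the standard radiation formula for a Hertzian dipole (or the dyadic Green's function result that will reappear later in the CAP-antenna section) and then proceed directly to the algebraic conclusion above.
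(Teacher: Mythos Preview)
Your proposal is correct and follows essentially the same route as the paper's proof: the paper writes out the dyadic Green's function explicitly, drops the higher-order terms under the assumption $\lVert\mathbf{r}-\mathbf{s}\rVert\gg\lambda$ to obtain the transverse projector $\mathbf{I}-\hat{\mathbf{k}}\hat{\mathbf{k}}^{\mathsf T}$, and then normalizes and takes the squared inner product with ${\bm\rho}_{\mathrm w}(\mathbf{r})$ exactly as you describe. The only difference is that the paper carries the full Green's-function expansion in-line rather than citing it, so your plan to reference the dyadic Green's function (which indeed reappears in the CAP section) is entirely in keeping with the paper's own argument.
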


\begin{proof}
    Please refer to Appendix \ref{proof_polarization}.
\end{proof}
\begin{remark}
It is worth noting that the influence of polarization loss was also considered in \cite{9139337,Bjornson2020power}. Yet, the derived results apply when the receiving-mode polarization vector of the receive antenna and the normalized electric current vector is along the $y$-axis. By contrast, \eqref{eq_proj} applies to arbitrary ${\bm\rho}_{\mathrm w}({\mathbf{r}})$ and $\hat{\mathbf J}(\mathbf{s})$, which yields more generality.
\end{remark}
\end{itemize}

\begin{figure}[!t]
    \centering
    \includegraphics[width=0.4\textwidth]{./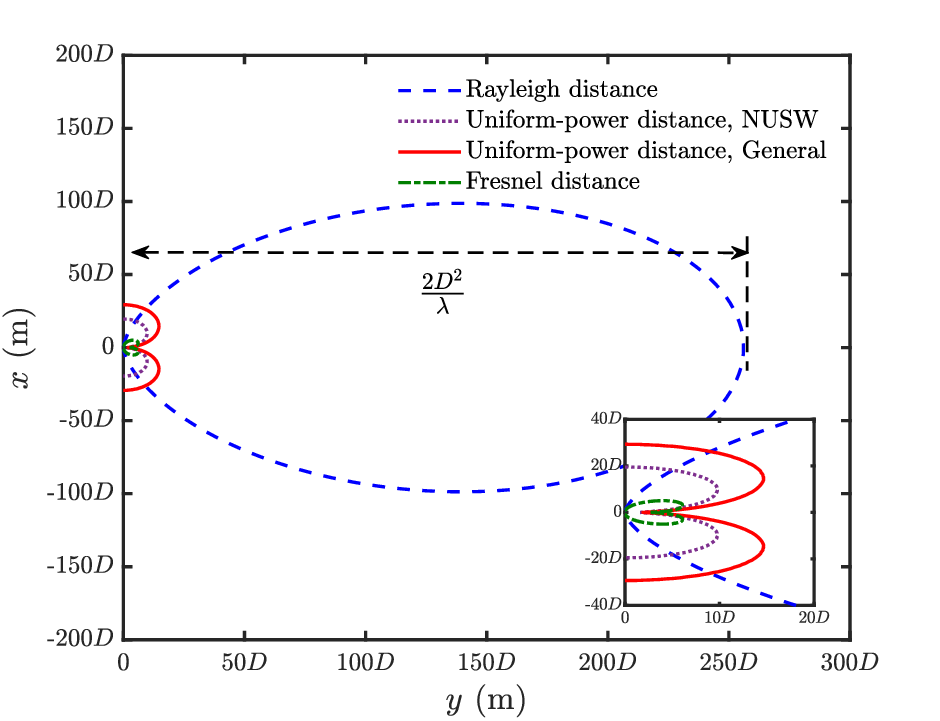}
    \caption{Illustration of Rayleigh distance, uniform-power distances, and Fresnel distance, where the BS is equipped with a ULA with $N=257$ antennas and operates at a frequency of $28$ GHz. The antenna spacing is set to $d = \frac{\lambda}{2} = 0.54$ cm. The aperture of the ULA is $D = (N-1)d = 1.37$ m}
    \label{fig:distance}
\end{figure}

Taking into account the \emph{free-space path loss}, \emph{effective aperture loss} $G_1$, and \emph{polarization loss} $G_2$, the general model for the near-field channel coefficients is given by 

\begin{center}
\begin{tcolorbox}[title = A General Model of Near-Field Channels]
{
\begin{equation}\label{proposed_h}
h_{\mathrm{G}}(\mathbf{s}_n, \mathbf{r}_m)= \sqrt{ \frac{ G_1(\mathbf{s}_n, \mathbf{r}_m) G_2(\mathbf{s}_n, \mathbf{r}_m) }{4 \pi \| \mathbf{r}_m - \mathbf{s}_n\|^2} } e^{-j\frac{2\pi}{\lambda}\lVert{\mathbf{r}_m - \mathbf{s}_n}\rVert}.
\end{equation}
}\end{tcolorbox}
\end{center}
Note that all three loss components are functions of the position of the transmit antenna, $\mathbf{s}_n$, and the position of the receive antenna, $\mathbf{r}_m$. In fact, the above-mentioned USW and NUSW channel models are special cases of the general model given in \eqref{proposed_h}, which is further explained in detail in Section~\ref{Sec:per}.

\subsubsection{Uniform-Power Distance}\label{sub:UPD}
According to the previous discussion, the \textbf{\emph{uniform-power distance}} is an important figure of merit distinguishing the region where the USW model is sufficiently accurate compared with the NUSW model and general model. In particular, the uniform-power distance can be defined based on the ratio of the weakest and strongest channel gains of the NUSW model or the general model. Let us take the general model as an example, where the channel gains of the links are given by
\begin{equation}
    \beta_{m,n}^{\mathrm{G}} = \sqrt{\frac{ G_1(\mathbf{s}_n, \mathbf{r}_m) G_2(\mathbf{s}_n, \mathbf{r}_m) }{4 \pi \| \mathbf{r}_m - \mathbf{s}_n\|^2} }.
\end{equation}
Then, the uniform-power distance $r_{\mathrm{UPD}}$ can be defined as follows \cite{sherman1962properties, 9617121}:
\begin{subequations}
    \begin{align}
        r_{\mathrm{UPD}} = &\arg\min_{r} \quad r \\
        &\mathrm{s.t.} \quad \frac{\min_{m,n} \beta_{m,n}^{\mathrm{G}} }{\max_{m,n} \beta_{m,n}^{\mathrm{G}}} \ge \Gamma,
    \end{align}
\end{subequations}
where $\Gamma$ is the minimum threshold for the ratio. Generally, the value of $\Gamma$ is selected to be slightly smaller than $1$. In this case, when $r \ge r_{\mathrm{UPD}}$, all channel gains $\beta_{m,n},\forall m,n,$ have comparable values, indicating that the USW model is sufficiently accurate.

Additionally, the uniform-power distance $r_{\mathrm{UPD}}$ can also be approximately calculated based on the NUSW model, for which the channel gain is given as follows:
\begin{equation}
    \beta_{m,n}^{\mathrm{N}} = \frac{ 1 }{\sqrt{4 \pi \| \mathbf{r}_m - \mathbf{s}_n\|^2}}.
\end{equation} 
In Fig. \ref{fig:distance}, we illustrate the Rayleigh distance, uniform-power distance, and Fresnel distance with respect to a BS equipped with a ULA. In particular, the uniform-power distance is calculated by setting $\Gamma=0.95$. It can be observed that the Fresnel distance is much smaller than the Rayleigh distance, which indicates that the \emph{Fresnel approximation} is sufficiently accurate in most of the near-field region. Additionally, it is important to note that the uniform-power distance is not always smaller than the Rayleigh distance since they are defined based on different criteria: channel gain variations and phase errors, respectively. By further comparing the conventional NUSW model and the general model, the uniform-power distance obtained by the general model, which is more accurate, is slightly larger. Finally, it can be concluded that the USW model, which is based on Fresnel approximation, can adequately address most scenarios occurring in communication networks, as both the uniform-power distance and Fresnel distance are in close proximity to the BS.

\subsection{Green's Function-Based Channel Model for CAP Antennas}
\begin{comment}

\subsubsection{Far-Field Channel Models for CAP antennas}
In these models, transmitters and receivers are considered as spatially-continuous surfaces or volumes. As a result, the channel needs to be evaluated between a pair of points on the transmitting and receiving apertures. For far-field communication, the channel between point $\mathbf{p}$ and $\mathbf{q}$ can be formulated as follows:
\begin{equation}
    \mathbf{H}(\mathbf{p},\mathbf{q}) = \iint \mathbf{A}_r(\mathbf{q},\hat{\mathbf{\kappa}}) \mathbf{H}(\hat{\mathbf{\kappa}},\hat{\mathbf{k}})\mathbf{A}_t(\hat{\mathbf{k}},\mathbf{p})\mathrm{d}\hat{\mathbf{\kappa}}\mathrm{d}\hat{\mathbf{k}},
\end{equation}
where $\mathbf{A}_{t/r}$ is the $3\times 3$ transmit/receive array response, $\mathbf{H}(\hat{\mathbf{\kappa}},\hat{\mathbf{k}})$ is the scattering response of the channel, $\hat{\mathbf{k}}$ is the transmit direction, and $\hat{\mathbf{\kappa}}$ is the receive direction. For free-space transmission in the radiating far-field region, the three-dimensional array response can be expressed as follows:
\begin{align}
    \mathbf{A}_t(\hat{\mathbf{k}},\mathbf{p}) = \frac{j\eta e^{2\pi j d_0}}{2\lambda^2d_0}(\mathbf{I}-\mathbf{k}\mathbf{k}^\dagger)e^{-2\pi j \mathbf{k}^\dagger \mathbf{p}},\\
    \mathbf{A}_r(\hat{\mathbf{\kappa}},\mathbf{q}) = \frac{-j\eta e^{-2\pi j d_0}}{2\lambda^2d_0}(\mathbf{I}-\mathbf{\kappa}\mathbf{\kappa}^\dagger)e^{2\pi j \mathbf{\kappa}^\dagger \mathbf{q}},
\end{align}
where $d_0$ is the reference distance which is normalized by wavelength $\lambda$.
\end{comment}

Near-field channel modelling for CAP antennas is much more challenging than that for SPD antennas.
In this subsection, we consider the scenario where both transmitter and receiver are equipped with CAP antennas, which is an analogy to the MIMO scenario for SPD antennas.
In contrast to the case of SPD antennas, CAP antennas support a continuous distribution of source currents, denoted by $\mathbf{J}(\mathbf{s})$, where $\mathbf{s}$ is the source point within the transmitting volume $V_T$.
The electric radiation field $\mathbf{E}(\mathbf{r})$ can be formulated as follows \cite{9433568}:
\begin{equation}\label{erg}
    \mathbf{E}(\mathbf{r}) = \int_{V_T} \mathbf{G}(\mathbf{s},\mathbf{r}) \mathbf{J}(\mathbf{s}) \mathrm{d}\mathbf{s},
\end{equation}
where $\mathbf{G}(\mathbf{s},\mathbf{r})$ is the tensor Green's function and $\mathbf{r}$ denotes the field point (location of the receiver).
For simplicity, we consider the case where the wireless signal is vertically polarized.
As a result, the equivalent electric currents induced within $V_T$ are in $y$-direction, i.e., $\mathbf{J}(\mathbf{s})=\hat{\mathbf{u}}_yJ_y(\mathbf{s})$ and the electric field they generate at the receiver is $\mathbf{E}(\mathbf{r}) = \hat{\mathbf{y}}E_y(\mathbf{r})$. According to \eqref{erg}, this received field is given as follows~\cite{xu2022modeling}:
\begin{equation}\label{ey}
    E_y(\mathbf{r}) = \int_{V_T} G_{yy}(\mathbf{s},\mathbf{r}) J_y(\mathbf{s}) \mathrm{d}\mathbf{s},
\end{equation}
where $G_{yy}$ is the $(y,y)$-th element of the Green's tensor. For free-space transmission, we have:
\begin{equation}\label{gyy}
    G_{yy}(\mathbf{s},\mathbf{r}) = -\left(j\omega\mu_0 + \frac{k_0^2}{j\omega\epsilon_0}\right)\frac{e^{-j \frac{2\pi}{\lambda}\lVert\mathbf{r}-\mathbf{s}\rVert}}{4\pi\lVert\mathbf{r}-\mathbf{s}\rVert},
\end{equation}
where $\mu_0$ and $\epsilon_0$ are the free space permeability and permittivity, respectively, $\omega$ is the angular frequency of the signal, and $k_0$ is the wave number.
%%%%%
It is worth noting that the Green's function in \eqref{gyy} can be further simplified using different approximations. In fact, the approximations used to arrive at the UPW, USW, and NUSW models for SPD antennas can also be applied to the Green's function for CAP antennas. We will further elaborate on this in Section~\ref{Section_SNR Analysis for CAP-Antennas}.
To obtain the channel gain between CAP antennas, we evaluate the overall received signal power over the receiving volume ($V_R$), i.e.,
\begin{equation}\label{A}
    |h_{\text{CAP}}|^2 = \int_{V_R} E_y^*(\mathbf{r})E_y(\mathbf{r})\mathrm{d}\mathbf{r}.
\end{equation}
Then, by substituting \eqref{ey} into \eqref{A} and denoting the normalized current distribution within $V_T$ by $\tilde{J}_y(\mathbf{s})$,
we obtain the end-to-end channel gain between two CAP antennas as follows~\cite{miller2000communicating}:
\begin{center}
\begin{tcolorbox}[title = Green's Function-Based Near-Field Channel Gain for CAP Antennas]
{
\begin{equation}\label{A_222}
    |h_{\text{CAP}}|^2 = \int_{V_T} \tilde{J}^*_y(\mathbf{s}_1) \int_{V_T} K(\mathbf{s}_1,\mathbf{s}_2)\tilde{J}_y(\mathbf{s}_2)\mathrm{d}\mathbf{s}_1 \mathrm{d}\mathbf{s}_2,
\end{equation}
}\end{tcolorbox}
\end{center}
where $<\tilde{J}_y(\mathbf{s}),\tilde{J}_y(\mathbf{s})> = \int_{V_t} \tilde{J}_y(\mathbf{s})\tilde{J}^*_y(\mathbf{s}) \mathrm{d}\mathbf{s} = 1$ holds, $V_R$ denotes the receiving volume, and
\begin{equation}\label{K_def}
    K(\mathbf{s}_1,\mathbf{s}_2) = \int_{V_R} G^*_{yy}(\mathbf{s}_1,\mathbf{r})G_{yy}(\mathbf{s}_2,\mathbf{r})\mathrm{d}\mathbf{r}.
\end{equation}

Due to the presence of multiple orthogonal current distributions across the transmit and receive apertures, we can encode data streams into these orthogonal currents to realize $D_{\text{CAP}}$ parallel channels~\cite{9139337}:
\begin{equation}
    y_n = h_n x_n + w_n,
\end{equation}
where $h_n$ is the channel coefficient of the $n$-th parallel channel, $x_n$ is the input symbol for the $n$-th channel which is associated with current distribution $\mathbf{J}_n(\mathbf{s})$, $w_n$ is the additive noise at the receiver, and $n\in \{1,\cdots,D_{\text{CAP}}\}$ is the index of the parallel channels.
Note that the relation between $|h_n|^2$ and $\mathbf{J}_n(\mathbf{s})$ is given by \eqref{A_222}. 
In order to determine the orthogonal currents $\mathbf{J}_n(\mathbf{s})$, which is necessary to exploit the corresponding DoF, we consider the following eigenvalue problem, where kernel function $K$ is a Hermitian operator:
\begin{equation}\label{eigen}
    |h_n|^2\cdot J_n(\mathbf{s}_1) =  \int_{V_T} K(\mathbf{s}_1,\mathbf{s}_2)J_n(\mathbf{s}_2)\mathrm{d}\mathbf{s}_2.
\end{equation}
Since $K(\mathbf{s}_1,\mathbf{s}_2)$ is a Hermitian operator, the eigenvalue problem in \eqref{eigen} has eigenfunctions $\{J_1, J_2, \cdots\}$, and the corresponding eigenvalues are $\{h_1, h_2, \cdots\}$.

The DoFs of the near-field channel model for CAP antennas are analyzed as follows.
The channel gains between two CAP antennas can be calculated as the eigenvalues of \eqref{eigen}. Thus, the DoFs are equal to the number of eigenfunctions ($J_n$) corresponding to non-negligible eigenvalues ($h_n$), so the resulting channel is useful. Particularly, for wireless communication between two rectangular prism CAP antennas, the available DoFs are given as follows~\cite{xu2022modeling}:
\begin{equation}\label{dof_cap}
    \mathrm{DoF}_{\mathrm{near}}^{\mathrm{CAP}} = \frac{2V_TV_R}{(\lambda r)^2 \Delta z_T \Delta z_R},
\end{equation}
where $r$ is the communication distance (defined as the distance between the centers of the two CAP antennas) and $\Delta z_{T/R}$ is the width of the transmitting/receiving volume of the CAP antennas. As can be observed from \eqref{dof_cap}, the DoFs of CAP antennas not only increase with the aperture size but also depend on the carrier frequency (inverse of wavelength) and the communication distance.

\subsection{Discussion and Open Research Problems}
In this section, we reviewed the most important near-field channel models and introduced general channel models for SPD and CAP antennas. 
The discussed near-field models are interrelated, specifically, the simplified USW models for ULAs and UPAs are derived by adopting the Fresnel approximation for the USW model. In addition, as shown in Fig.~\ref{regions}, the UPW and USW models are special cases of the NUSW model and the general model.
Thus, the UPW, USW, NUSW, and general models have increasing levels of accuracy and complexity.
Although accurate models for free-space deterministic near-field channels have been established in this section, 
see also \cite{starer1994passive, bjornson2021primer, zhou2015spherical, friedlander2019localization, miller2000communicating, 9433568, xu2022modeling}, statistical channel models for near-field multipath fading still remain an open problem.
Further research on statistical near-field channel modelling is required to fully describe the behaviour of near-field channels. Some of the key research directions are as follows:
\begin{itemize}
\item \textbf{Accurate and compact statistical channel models for SPD antennas}: New statistical models need to be developed which capture the complex dynamics of the near-field channel, such as the impact of obstacles, reflections, and diffractions. However, it is difficult to explicitly model each multipath component of a near-field NLoS channel. Compact statistical channel models are needed to capture both the multipath effect and the near-field effects. Another challenge is to develop accurate models for the reactive near-field region, where evanescent waves are dominant.
\item \textbf{Statistical channel models for CAP antennas}: For CAP antennas, existing channel models are based on the Green's function method~\cite{9433568,xu2022modeling}. However, the Green's function method is non-trivial to use in scattering environments. This is because the explicit modelling of the signal sources is difficult for the multipath components caused by scatterers. Developing statistical channel models for CAP antennas remains an open problem. 
\item \textbf{Validation of existing models by channel measurements}: Further validation of statistical channel models for NFC is required using empirical measurements~\cite{9206044}. Specifically, channel measurements involve measuring the received signal strength, time delay, and phase shift of the signals. Overall, validating and verifying channel models using channel measurements is an iterative process that requires careful experimentation, analysis, and comparison with the discussed models.
\end{itemize}

\section{Near-Field Beamfocusing and \\Antenna Architectures}
In wireless communications, beamforming is used to enhance signal strength and quality by directing the signal towards the intended receiver using an array of antennas. This requires adjusting the phase and amplitude of each antenna element to create a constructive radiation pattern, rather than radiating the signal uniformly. Compared with FFC beamforming, NFC introduces a new beamforming paradigm, referred to as \emph{beamfocusing}. In this section, we present the properties of near-field beamfocusing and then discuss various antenna architectures for narrowband and wideband communication systems, which is followed by an introduction to near-field beam training.

\begin{figure}[!t]
    \centering
    \subfigure[$N=65$]{
        \includegraphics[width=0.245\textwidth]{./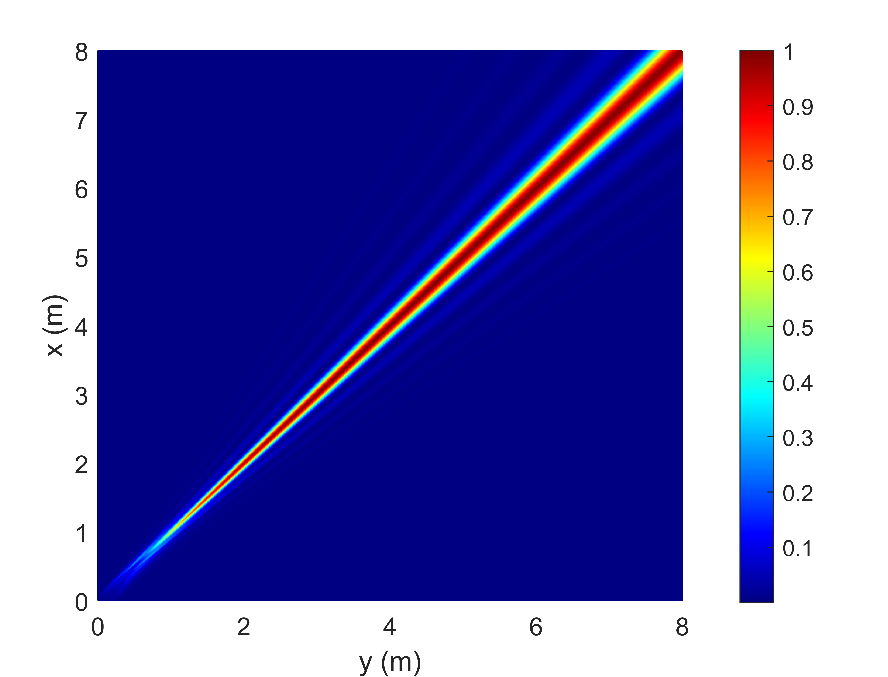}
    }%
    \subfigure[$N=129$]{
        \includegraphics[width=0.245\textwidth]{./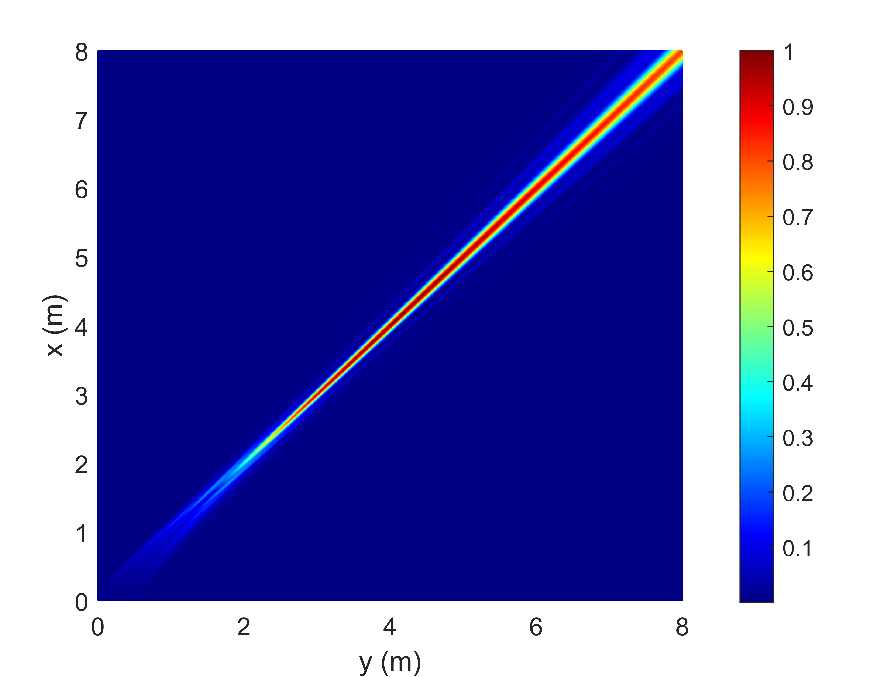}
    }
    \subfigure[$N=257$]{
        \includegraphics[width=0.245\textwidth]{./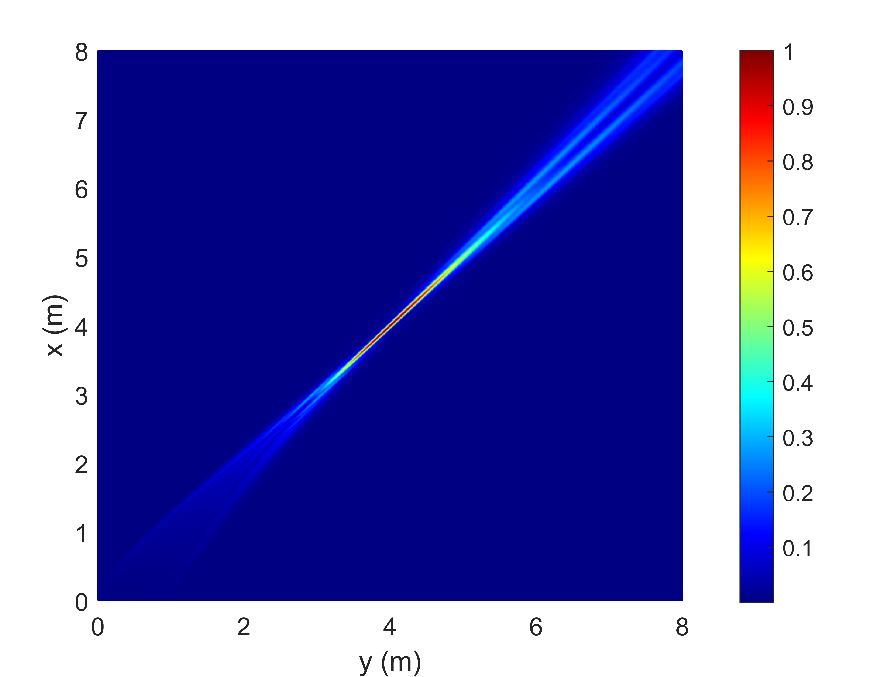}
    }%
    \subfigure[$N=513$]{
        \includegraphics[width=0.245\textwidth]{./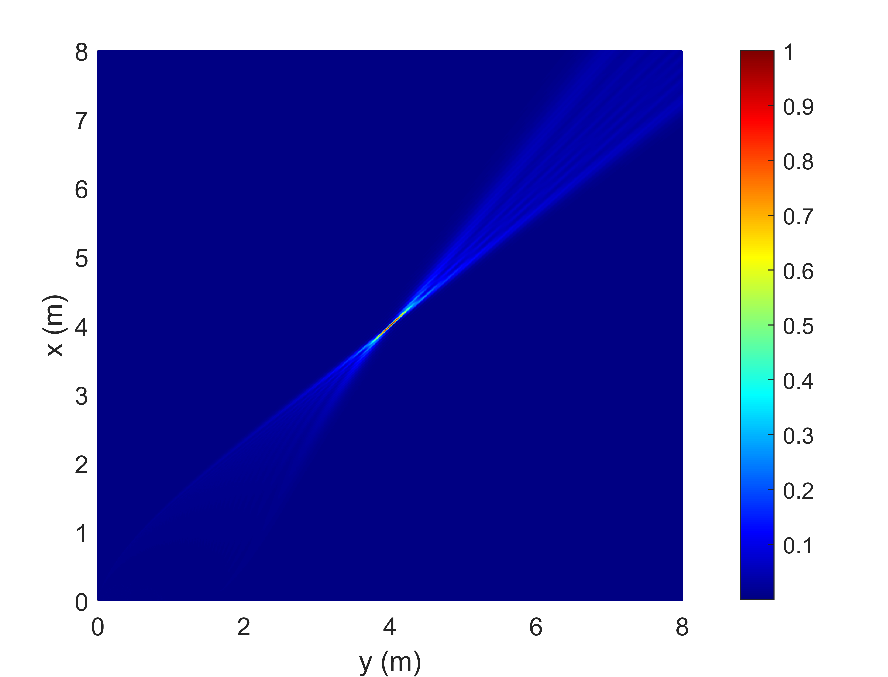}
    }
    \caption{Correlation of array response vectors for different sizes of the antenna array with half-wavelength antenna spacing in a system operating at frequency $28$ GHz.}
    \label{fig:beamfocusing}
\end{figure}

\subsection{Near-Field Beamfocusing}
The near-field array response vector under the spherical wave assumption depends on both the angle and distance between transmitter and receiver, c.f., \eqref{near_field_vector}, \eqref{eq_a_ula}, and \eqref{eq_a_upa}. 
By taking advantage of this property, NFC beamforming can be designed to act like a spotlight, allowing focusing on a specific location in the polar domain. This is known as \emph{beamfocusing}, and is different from FFC beamforming. In FFC, beamforming can only be used to steer the transmitted signal in a specific direction in the angular domain, similar to a flashlight, which is known as \emph{beamsteering}. To further elaborate, we consider a ULA with $N$ transmit antenna element, where $N = 2 \tilde{N}+1$. For illustration, we take the USW model in \eqref{eq_a_ula} as an example. In this case, the $n$-th element of the near-field array response vector, where $n \in \{-\tilde{N}, \dots, \tilde{N}\}$, can be written as 
\begin{equation} \label{eqn:beamfocusing_vector}
    [\mathbf{a}(\theta, r)]_n = e^{-j \frac{2 \pi}{\lambda} (- n d \cos \theta + \frac{n^2 d^2 \sin^2 \theta }{2r})}.
\end{equation}
By taking the above array response vector model as an example, in the following, we introduce two important properties of near-field beamfocusing, namely \emph{asymptotic orthogonality} and \emph{depth of focus}.

\subsubsection{Asymptotic Orthogonality}
The near-field array response vector demonstrates an asymptotic orthogonality \cite{wu2022multiple}, which implies that the correlation between two array response vectors tends to zero when the number of antenna elements $N$ is sufficiently large. Mathematically, this can be expressed as follows:
\begin{align} \label{eqn:orthogonality}
    \lim_{N \rightarrow +\infty} \frac{1}{N} |\mathbf{a}^\mathsf{T}(\theta_1, r_1) \mathbf{a}^*(\theta_2, r_2)| = 0, \nonumber \\ \text{for } \theta_1 \neq \theta_2 \text { or } r_1 \neq r_2.
\end{align}
In Fig. \ref{fig:beamfocusing}, we depict the correlation of array response vector $\mathbf{a}(\theta, r)$ with all other possible array response vectors in the two-dimensional space when $\theta=\frac{\pi}{4}$ and $r = 5.5$ m. As observed, as $N$ increases, vector $\mathbf{a}(\theta, r)$ gradually becomes orthogonal to all other array response vectors, particularly in the distance domain. This asymptotic orthogonality is fundamental to near-field beamfocusing. For instance, consider a two-user communication system, where two single-antenna users are located at $(\theta_1, r_1)$ and $(\theta_2, r_2)$, respectively. By assuming an LoS-only communication channel, the received signal at user $1$ can be expressed as 
\begin{equation}
    y_1 = \beta_1 \mathbf{a}^\mathsf{T}(\theta_1, r_1) \mathbf{f}_1 s_1 + \beta_1 \mathbf{a}^\mathsf{T}(\theta_1, r_1) \mathbf{f}_2 s_2 + n_1,
\end{equation}
where $\beta_1$ is the channel gain, $\mathbf{f}_1$ and $\mathbf{f}_2$ represent the beamformers for the two users, $s_1$ and $s_2$ are the desired signals of the two users, and $n_1$ denotes the complex Gaussian noise at user $1$ with a power of $\sigma_1^2$. Then, the signal-to-interference-plus-noise ratio (SINR) for decoding $s_1$ at user $1$ is given by 
\begin{equation}
    \gamma_1 = \frac{ |\beta_1 \mathbf{a}^\mathsf{T}(\theta_1, r_1) \mathbf{f}_1|^2}{ |\beta_1 \mathbf{a}^\mathsf{T}(\theta_1, r_1) \mathbf{f}_2|^2 + \sigma_1^2 }.
\end{equation}
According to the asymptotic orthogonality in \eqref{eqn:orthogonality}, the beamformers can be designed as $\mathbf{f}_1 = \sqrt{\frac{P_1}{N}}\mathbf{a}^*(\theta_1, r_1)$ and $\mathbf{f}_2 =  \sqrt{\frac{P_2}{N}} \mathbf{a}^*(\theta_2, r_2)$, where $P_1$ and $P_2$ denotes transmit powers allocated to user $1$ and user $2$, respectively. Consequently, for a sufficiently large $N$, the SINR can be calculated as 
\begin{align} \label{eqn:SINR}
    \gamma_1  = & \frac{ \frac{|\beta_1|^2 P_1}{N} |\mathbf{a}^\mathsf{T}(\theta_1, r_1) \mathbf{a}^*(\theta_1, r_1)|^2}{ \frac{|\beta_1|^2 P_2}{N} |\mathbf{a}^\mathsf{T}(\theta_1, r_1) \mathbf{a}^*(\theta_2, r_2)|^2 + \sigma_1^2 } \nonumber \\
    = & \frac{|\beta_1|^2 P_1 }{ |\beta_1|^2 P_2 \frac{1}{N^2} |\mathbf{a}^\mathsf{T}(\theta_1, r_1) \mathbf{a}^*(\theta_2, r_2)|^2 + \frac{1}{N}\sigma_1^2 } \approx \frac{|\beta_1|^2 P_1 N }{\sigma_1^2 },
\end{align}
where the second equality stems from the fact that $|\mathbf{a}^\mathsf{T}(\theta_1, r_1) \mathbf{a}^*(\theta_1, r_1)| = N$ and the approximation in the last step is due to the asymptotic orthogonality in \eqref{eqn:orthogonality}. Drawing from the aforementioned analysis, we can deduce that in NFC, it is possible to focus the desired signal of a specific user precisely at the intended location without introducing interference to other users situated elsewhere. This implies that \emph{beamfocusing} can be achieved. Compared to far-field beamsteering, for near-field beamfocusing, the distance can also contribute to the asymptotic orthogonality of array response vectors. Thus, inter-user interference can also be effectively mitigated even if the users are located in the same direction.

\begin{figure}[!t]
    \centering
    \includegraphics[width=0.4\textwidth]{./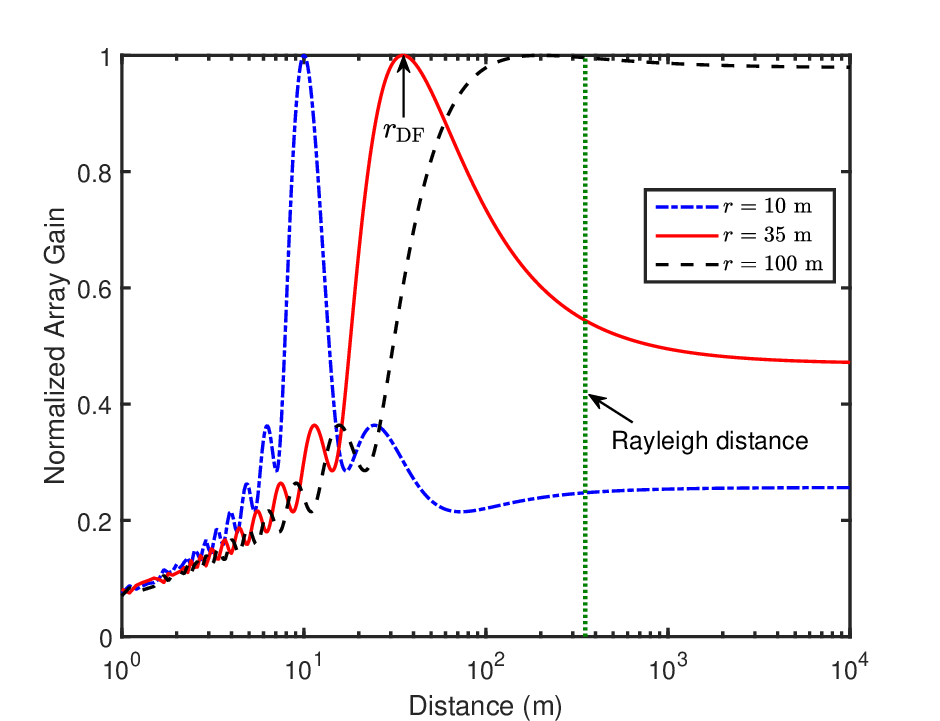}
    \caption{Depth of focus of a beamformer focused at difference distance $r$ in direction $\theta = \frac{\pi}{2}$. Here, we assume the BS is equipped with an ULA with $N = 257$ antennas and operates at a frequency of $28$ GHz. The antenna spacing is set to half-wavelength. Therefore, we have $r_{\mathrm{R}} \approx 350$ m and $r_{\mathrm{DF}} \approx 35$ m.}     \label{fig:depth}
\end{figure}

\subsubsection{Depth of Focus}
In the previous section, we discussed the asymptotic orthogonality when the number of antennas tends to infinity. However, in practice, the number of antennas is limited, which implies that the orthogonality between two different near-field array response vectors cannot be strictly achieved. Depth of focus is an important metric for evaluating the attainability of the orthogonality of near-field array response vectors in the distance domain \cite{nepa2017near, bjornson2021primer}. This characteristic sets near-field beamfocusing apart from far-field beamsteering.

Let us take the beamformer $\mathbf{f} = \sqrt{\frac{P}{N}}\mathbf{a}^*(\theta, r)$ as an example, where $P$ denotes the transmit power. We aim to find an interval $[r_{\min}, r_{\max}]$ of maximum length, such that for $r_0 \in [r_{\min}, r_{\max}]$, the following condition holds:
\begin{equation}
    \frac{1}{N} | \mathbf{a}^\mathsf{T}(\theta, r_0) \mathbf{a}^*(\theta, r) | \ge \Gamma_{\mathrm{DF}},
\end{equation}
where $\Gamma_{\mathrm{DF}}$ is a desired threshold. Then, the depth of focus is defined as the length of this interval, i.e., $\mathrm{DF} = r_{\max} - r_{\min}$. From the SINR perspective, when a user is located in the same direction but out of the depth of focus, the interference generated by beamformer $\mathbf{f}$ is relatively small. Consequently, a smaller depth of focus indicates better beamfocusing performance. Typically, the depth of focus is calculated based on a 3 dB criterion, i.e., $\Gamma_{\mathrm{DF}} = \frac{1}{2}$. The 3 dB depth of focus of beamformer $\mathbf{f}$ in different directions is given in the following lemma.

\begin{lemma} \label{lemma_focus}
    \textbf{(Depth of focus)} The 3 dB depth of focus of beamformer $\mathbf{f} = \sqrt{\frac{P}{N}}\mathbf{a}^*(\theta, r)$ in direction $\theta$ is given by
    \begin{equation}
        \mathrm{DF}_{3 \mathrm{dB}} = \begin{cases}
            \frac{2r^2 r_{\mathrm{DF}}}{r_{\mathrm{DF}}^2 - r^2 }, & r < r_{\mathrm{DF}}, \\
            \infty, & r \ge r_{\mathrm{DF}},
        \end{cases}
    \end{equation}
    where $r_{\mathrm{DF}} \approx \frac{N^2 d^2 \sin^2 \theta}{2 \lambda \eta^2_{3 \mathrm{dB}}}$ and $\eta_{3 \mathrm{dB}} = 1.6$. 
\end{lemma}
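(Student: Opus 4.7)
The plan is to evaluate the correlation $\frac{1}{N}|\mathbf{a}^{\mathsf T}(\theta,r_0)\mathbf{a}^{*}(\theta,r)|$ in closed (Fresnel) form, set it equal to $\frac{1}{2}$, and solve the resulting equation for $r_0$. Writing out the inner product using the ULA array response vector in \eqref{eq_a_ula}, the linear-in-$n$ phase terms $-nd\cos\theta$ cancel because both vectors point in the same azimuthal direction $\theta$. What survives is a pure quadratic-phase sum
\begin{equation}
\frac{1}{N}\bigl|\mathbf{a}^{\mathsf T}(\theta,r_0)\mathbf{a}^{*}(\theta,r)\bigr|
=\frac{1}{N}\left|\sum_{n=-\tilde N}^{\tilde N} e^{-j\frac{\pi n^{2} d^{2}\sin^{2}\theta}{\lambda}\left(\frac{1}{r_0}-\frac{1}{r}\right)}\right|.
\end{equation}

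Next I would replace the Riemann sum by an integral, which is accurate when the antenna spacing $d$ is much smaller than the aperture $D=Nd$. Using $D\approx Nd$ and the substitution $t=2nd/D$, the sum converts into
\begin{equation}
\frac{1}{2}\left|\int_{-1}^{1}e^{-j\frac{\pi \alpha}{2}t^{2}}\,dt\right|,\qquad
\alpha \;=\; \frac{D^{2}\sin^{2}\theta}{2\lambda}\left|\frac{1}{r_0}-\frac{1}{r}\right|.
\end{equation}
The remaining integral is a standard Fresnel integral and can be written as $\frac{1}{\sqrt{\alpha}}\bigl(C(\sqrt{\alpha})-jS(\sqrt{\alpha})\bigr)$, where $C(\cdot)$ and $S(\cdot)$ are the Fresnel cosine and sine integrals. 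Imposing the $3$~dB condition
\begin{equation}
\frac{\sqrt{C^{2}(\sqrt{\alpha})+S^{2}(\sqrt{\alpha})}}{\sqrt{\alpha}} \;=\; \frac{1}{2}
\end{equation}
yields a transcendental equation whose smallest positive root is $\sqrt{\alpha}=\eta_{3\mathrm{dB}}\approx 1.6$; this is the step that needs a one-dimensional numerical search but gives exactly the constant quoted in the lemma.

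Substituting $\alpha=\eta_{3\mathrm{dB}}^{2}$ back and using $D\approx Nd$ defines $r_{\mathrm{DF}}$ via $\left|\frac{1}{r_0}-\frac{1}{r}\right|\le \frac{1}{r_{\mathrm{DF}}}$, where $r_{\mathrm{DF}}=\frac{N^{2}d^{2}\sin^{2}\theta}{2\lambda\eta_{3\mathrm{dB}}^{2}}$. Solving this inequality separates naturally into two cases. If $r<r_{\mathrm{DF}}$ then $\frac{1}{r}>\frac{1}{r_{\mathrm{DF}}}$ and both bounds are finite, giving $r_{\min}=\frac{rr_{\mathrm{DF}}}{r+r_{\mathrm{DF}}}$ and $r_{\max}=\frac{rr_{\mathrm{DF}}}{r_{\mathrm{DF}}-r}$, so the depth of focus becomes
\begin{equation}
\mathrm{DF}_{3\mathrm{dB}}=r_{\max}-r_{\min}=\frac{2r^{2}r_{\mathrm{DF}}}{r_{\mathrm{DF}}^{2}-r^{2}}.
\end{equation}
If $r\ge r_{\mathrm{DF}}$ then $\frac{1}{r}-\frac{1}{r_{\mathrm{DF}}}\le 0$, so the upper bound constraint is vacuous and $r_{\max}=\infty$, giving $\mathrm{DF}_{3\mathrm{dB}}=\infty$.

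The conceptually delicate step is the Riemann-sum-to-Fresnel-integral reduction: it is quantitatively accurate only under the Fresnel regime (the quadratic Taylor truncation used in \eqref{taylor2}) and when $N$ is moderately large so that the stationary-phase/sampling density arguments apply. I would justify it by bounding the pointwise deviation of the summand and invoking the Euler--Maclaurin formula. The remaining algebra (cancellation of linear phases, solving for $r_{\min}$, $r_{\max}$, and combining into the quotient) is routine.
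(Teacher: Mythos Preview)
Your proposal is correct and follows essentially the same route as the paper's proof: reduce the same-direction correlation to a quadratic-phase sum, approximate it by a Fresnel integral (your $\sqrt{\alpha}$ is exactly the paper's $\eta$), identify the $3$\,dB crossing numerically at $\eta_{3\mathrm{dB}}\approx 1.6$, and then solve $\left|\tfrac{1}{r_0}-\tfrac{1}{r}\right|\le\tfrac{1}{r_{\mathrm{DF}}}$ in the two cases $r\lessgtr r_{\mathrm{DF}}$. The only cosmetic difference is that the paper outsources the sum-to-Fresnel-integral step to a citation, whereas you sketch it explicitly via the Riemann-sum/Euler--Maclaurin argument.
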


\begin{proof}
    Please refer to Appendix \ref{proof_focus}.
\end{proof}

From Lemma \ref{lemma_focus}, we observe that the depth of focus tends to infinity if the focus distance $r$ is larger than the threshold $r_{\mathrm{DF}}$, as shown in Fig. \ref{fig:depth}. This implies that beamfocusing degenerates to beamsteering, since the orthogonality in the distance domain is almost lost. Therefore, the region within distance $r_{\mathrm{DF}}$ is referred to as the \emph{focusing region}, where beamfocusing is achievable. 
To obtain more insights, we compare $r_{\mathrm{DF}}$ with the Rayleigh distance. Recall that for $\theta = \frac{\pi}{2}$, the Rayleigh distance is given by
\begin{equation}
    r_{\mathrm{R}} = \frac{2D^2}{\lambda},
\end{equation}
where $D = (N-1)d$ denotes the aperture of the ULA. When $\theta = \frac{\pi}{2}$, $r_{\mathrm{DF}}$ is given by 
\begin{equation}
    r_{\mathrm{DF}} = \frac{N^2 d^2}{2 \lambda \eta^2_{3 \mathrm{dB}}} \approx \frac{D^2}{2 \lambda \eta^2_{3 \mathrm{dB}}} \approx \frac{1}{10} r_{\mathrm{R}}.
\end{equation}
The above result suggests that beamfocusing is not universally achievable in the near-field region. Instead, it can only be achieved within a limited fraction, specifically within one-tenth, of the near-field region. For example, consider a BS with a Rayleigh distance of $350$ m and a focusing region confined to just $35$ m according to the 3 dB depth of focus.
Therefore, ELAAs are crucial for near-field beamfocusing as they can realize both a large focusing region and a small depth of focus. When the number of antennas tends to infinity, it can be easily shown that $\mathrm{DF}_{3\mathrm{dB}} \rightarrow 0$ and $r_{\mathrm{DF}} \rightarrow \infty$, which implies optimal near-field beamfocusing can be achieved in the full-space.
In the following, we discuss beamfocusing for two different ELAA architectures, namely SPD and CAP antennas.

\subsection{Beamfocusing with SPD Antennas}
Conventionally, the signal processing in multi-antenna systems is carried out in the baseband, i.e., fully-digital signal processing. However, this is not possible for near-field beamfocusing in practice since ELAAs and extremely high carrier frequencies are needed, where the number of power-hungry radio-frequency (RF) chains has to be kept at a minimum \cite{heath2016overview}. On the other hand, purely analog processing causes a loss in performance compared to digital processing. As a remedy, hybrid beamforming has been proposed as a practical solution to address this issue. In hybrid beamforming, a limited number of RF chains are utilized by a low-dimensional digital beamformer, followed by a high-dimensional analog beamformer \cite{el2014spatially, yu2016alternating, shi2018spectral, sohrabi2016hybrid}. Generally, the analog components are power-friendly and easy to implement. In the following, we discuss different hybrid beamforming architectures for narrowband and wideband systems to facilitate near-field beamfocusing.

\begin{figure}[!t]
    \centering
    \subfigure[Fully-connected.]{
        \includegraphics[height=0.25\textwidth]{./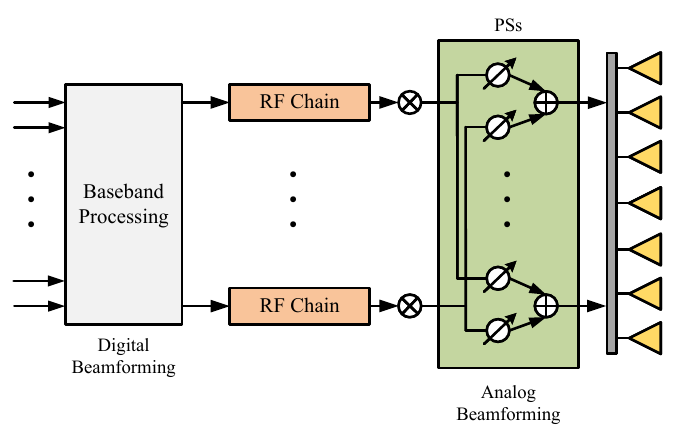}
        \label{fig:hybrid_beamforming_full}
    }
    \subfigure[Sub-connected.]{
        \includegraphics[height=0.25\textwidth]{./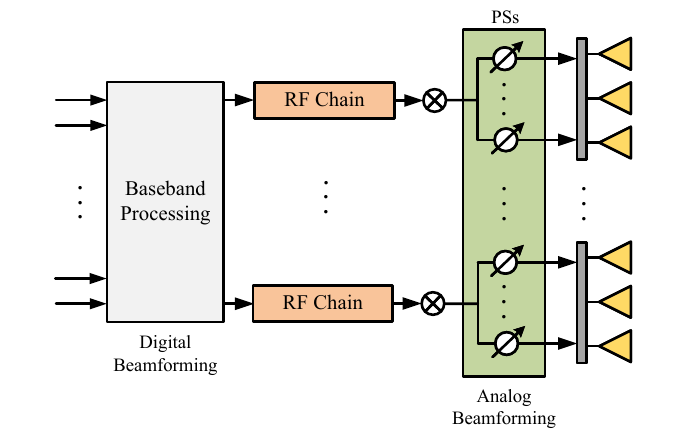}
        \label{fig:hybrid_beamforming_sub}
    }
    \caption{Architectures for PS-based hybrid beamforming.}
    \label{fig:hybrid_beamforming}
\end{figure}

\subsubsection{Narrowband Systems}
Let us consider a BS equipped with an $N$-antenna ULA, where $N = 2\tilde{N} + 1$, serving $K$ single-antenna communication users. There are only $N_{\mathrm{RF}}$ RF chains equipped at the BS, where $N_{\mathrm{RF}} \ll N$. In the context of narrowband flat fading, the discrete-time baseband model for the received signal at user $k$ is given by 
\begin{equation}
    y_k = \mathbf{h}_k^{\mathsf{T}} \mathbf{F}_{\mathrm{RF}} \mathbf{F}_{\mathrm{BB}} \mathbf{x} + n_k,
\end{equation}
where $\mathbf{x} \in \mathbb{C}^{K \times 1}$ contains the information symbols for the $K$ users and $n_k \sim \mathcal{CN}(0, \sigma_k^2)$ is additive Gaussian white noise (AWGN). $\mathbf{F}_{\mathrm{RF}} \in \mathbb{C}^{N \times N_{\mathrm{RF}}}$ and $\mathbf{F}_{\mathrm{BB}} \in \mathbb{C}^{N_{\mathrm{RF}} \times K}$ denote the analog beamformer and the baseband digital beamformer, respectively. $\mathbf{h}_k \in \mathbb{C}^{N \times 1}$ denotes the multipath NFC channel comprising one LoS path and $L_k$ resolvable NLoS paths for user $k$. According to \eqref{H_NFC_MISO} and \eqref{eqn:beamfocusing_vector}, adopting the USW model and ULAs, the multipath channel $\mathbf{h}_k$ is given as follows:
\begin{equation}
    \mathbf{h}_k = \beta_k \mathbf{a}(\theta_k, r_k) +  \sum_{\ell=1}^{L_k} \tilde{\beta}_{k,\ell} \mathbf{a}( \tilde{\theta}_{k,\ell}, \tilde{r}_{k,\ell}),
\end{equation} 
where $\beta_k$ and $\tilde{\beta}_{k,\ell}$ denote the complex channel gain of the LoS link and the $\ell$-th NLoS link, respectively, $(\theta_k, r_k)$ is the location of user $k$, and $(\tilde{\theta}_{k,\ell}, \tilde{r}_{k,\ell})$ is the location of the $\ell$-th scatterer for user $k$. In NFC, the analog beamformer realizes an array gain by generating beams focusing on specific locations, such as the locations of the users and scatterers, while the digital beamformer is designed to realize a multiplexing gain. The analog beamformer $\mathbf{F}_{\mathrm{RF}}$ is subject to specific constraints imposed by the hardware architecture of the analog beamforming network. In the following, we first briefly review the conventional phase-shifter (PS) based hybrid beamforming architecture.

In PS-based hybrid beamforming architectures, the RF chains can be either connected to all antennas (referred to as \emph{fully-connected} architectures) or a subset of antennas (referred to as \emph{sub-connected} architectures) via PSs, as shown in Fig. \ref{fig:hybrid_beamforming_full} and Fig. \ref{fig:hybrid_beamforming_sub}, respectively. The respective analog beamformers can be expressed as 
\begin{align}
    &\mathbf{F}_{\mathrm{RF}}^{(\mathrm{full})} = \big[ \mathbf{f}_{\mathrm{RF},1}^{(\mathrm{full})}, \dots, \mathbf{f}_{\mathrm{RF},N_\mathrm{RF}}^{(\mathrm{full})} \big], \\
    &\mathbf{F}_{\mathrm{RF}}^{(\mathrm{sub})} = \mathrm{blkdiag}\left( 
        \big[\mathbf{f}_{\mathrm{RF},1}^{(\mathrm{sub})},\dots,\mathbf{f}_{\mathrm{RF},N_\mathrm{RF}}^{(\mathrm{sub})}  \big]
     \right),
\end{align}
where $\mathbf{f}_{\mathrm{RF},n}^{(\mathrm{full})} \in \mathbb{C}^{N \times 1}$ and $\mathbf{f}_{\mathrm{RF},n}^{(\mathrm{sub})} \in \mathbb{C}^{\frac{N}{N_{\mathrm{RF}}} \times 1}$ denotes the analog beamformers for the $n$-th RF chain in the fully-connected and sub-connected architectures, respectively. The required number of PSs for these two architectures is $N_{\mathrm{RF}} N$ and $N$, respectively.
Recalling the fact that PSs can only adjust the phase of a signal, the unit-modulus constraint has to be satisfied, which implies
\begin{equation}
    \big| [\mathbf{f}_{\mathrm{RF},n}^{(\mathrm{full/sub})}]_m \big| = 1, \forall m,n.
\end{equation}
For the PS-based hybrid beamforming architecture, the achievable rate for each user $k$ is given by
\begin{align}
    R_k = \log_2 \left( 1 + \frac{ | \mathbf{h}_k^{\mathsf{T}} \mathbf{F}_{\mathrm{RF}} \mathbf{f}_{\mathrm{BB},k}|^2 }{ \sum_{i \neq k} | \mathbf{h}_k^{\mathsf{T}} \mathbf{F}_{\mathrm{RF}} \mathbf{f}_{\mathrm{BB},i}|^2 + \sigma_k^2} \right),
\end{align} 
where $\mathbf{f}_{\mathrm{BB},k}$ denotes the $k$-th column of $\mathbf{F}_{\mathrm{BB}}$. Then, the analog and digital beamformers can be designed to achieve different objectives with respect to $R_k$, such as maximizing spectral efficiency or energy efficiency. However, due to the coupling between $\mathbf{F}_{\mathrm{RF}}$ and $\mathbf{F}_{\mathrm{BB}}$ and the constant-modulus constraint on the elements of $\mathbf{F}_{\mathrm{RF}}$, it is not easy to directly obtain the optimal analog and digital beamformers. In the following, two alternative approaches are introduced to solve hybrid beamforming optimization problems, namely \emph{fully-digital approximation} \cite{heath2016overview, el2014spatially, yu2016alternating, shi2018spectral} and \emph{heuristic two-stage optimization} \cite{alkhateeb2015limited, zhu2016adaptive, zhang2020energy, wei2018multi}.
\begin{itemize}
    \item \textbf{Fully-digital Approximation}:
    This approach aims to minimize the distance between the hybrid beamformer and the unconstrained fully-digital beamformer $\mathbf{F}^{\mathrm{opt}}$, which can be effectively obtained via existing methods such as successive convex approximation (SCA) \cite{sun2016majorization}, weighted minimum mean square error (WMMSE) \cite{christensen2008weighted}, and fractional programming (FP) \cite{shen2018fractional}. Then, we only need to solve the following optimization problem, where we take the fully-connected hybrid beamforming architecture as an example with $P_{\max}$ denoting the maximum transmit power:
    \begin{center}
        \begin{tcolorbox}[title = {$\mathcal{P}_{\mathrm{HB}}$: Hybrid Beamforming Optimization}]
        \vspace{-5mm}
        \begin{align} \label{problem:hybrid beamforming}
        \underset{\begin{subarray}{c}
            \mathbf{F}_{\mathrm{RF}}, \mathbf{F}_{\mathrm{BB}}
            \end{subarray}}{\mathrm{min}} \quad &\big\| \mathbf{F}^{\mathrm{opt}} - \mathbf{F}_{\mathrm{RF}} \mathbf{F}_{\mathrm{BB}} \big\|_F^2 
        \\
        \mathrm{s. t.} \quad &\big|[\mathbf{F}_{\mathrm{RF}}]_{m,n} \big| = 1, \forall m,n, \nonumber \\
        &  \big\| \mathbf{F}_{\mathrm{RF}} \mathbf{F}_{\mathrm{BB}} \big\|_F^2 \le P_{\max}. \nonumber 
        \end{align}
        \vspace{-4mm}\par\noindent
    \end{tcolorbox}
    \end{center}
    The existing literature suggests that fully-digital approximation can achieve near-optimal performance \cite{heath2016overview}. In particular, several methods have been proposed to solve problem \eqref{problem:hybrid beamforming} through orthogonal matching pursuit (OMP) \cite{el2014spatially}, manifold optimization \cite{yu2016alternating}, and block coordinate descent (BCD) \cite{shi2018spectral}. Furthermore, it has been proved in \cite{sohrabi2016hybrid} that the Frobenius norm in problem \eqref{problem:hybrid beamforming} can be made exactly zero when the number of RF chains is not less than twice the number of information streams, i.e., $N_\mathrm{RF} \ge 2K$. However, it is important to note that the complexity of this approach can be exceedingly high, especially when the number of antennas is extremely large, e.g., $N = 512$. On the one hand, this approach requires the acquisition of the optimal $\mathbf{F}^{\mathrm{opt}}$, which has the potentially large dimension $N \times K$. On the other hand, the design of $\mathbf{F}_{\mathrm{RF}}$ by solving problem \eqref{problem:hybrid beamforming} also involves a large number of optimization variables. The resulting potentially high computational complexity can present a practical challenge in real-world applications.

    \item \textbf{Heuristic Two-stage Optimization}: This approach involves a two-step process for designing beamformers. Firstly, the analog beamformer $\mathbf{F}_{\mathrm{RF}}$ is implemented using a heuristic design to generate beams focused at specific locations. Following this, the digital beamformer $\mathbf{F}_{\mathrm{BB}}$ is optimized with reduced dimension $N_{\mathrm{RF}} \times K$. One popular solution for the analog beamformer is to design each column $\mathbf{F}_{\mathrm{RF}}$ such that the corresponding beam is focused at the location of one of the users through the LoS path. Hence, for $N_{\mathrm{RF}} = K$, the following closed-form solution can be obtained for the full-connected and sub-connected architectures, respectively:
    \begin{align}
        \mathbf{f}_{\mathrm{RF}, k}^{(\mathrm{full})} = &\mathbf{a}^*(\theta_k, r_k), \forall k, \\
        \mathbf{f}_{\mathrm{RF},k}^{(\mathrm{sub})} = &\left[ \mathbf{a}^*(\theta_k, r_k) \right]_{\frac{(k-1)N}{N_{\mathrm{RF}}} + 1 : \frac{k N}{N_{\mathrm{RF}}}}, \forall k.
    \end{align} 
    For the resulting $\mathbf{F}_{\mathrm{RF}}$, the equivalent channel $\mathbf{g}_k \in \mathbb{C}^{N_{\mathrm{RF}} \times 1}$ for baseband processing can be obtained as follows:
    \begin{equation}
        \mathbf{g}_k = \mathbf{F}_{\mathrm{RF}}^{\mathsf{T}} \mathbf{h}_k.
    \end{equation} 
    Then, the achievable rate of user $k$ is given as follows:
    \begin{equation}
        R_k(\mathbf{F}_{\mathrm{BB}}) = \log_2 \left( 1 + \frac{ | \mathbf{g}_k^{\mathsf{T}} \mathbf{f}_{\mathrm{BB},k}|^2 }{ \sum_{i \neq k} | \mathbf{g}_k^{\mathsf{T}} \mathbf{f}_{\mathrm{BB},i}|^2 + \sigma_k^2} \right).
    \end{equation} 
    As a result, the optimization of $\mathbf{F}_{\mathrm{BB}}$ can be regarded as a reduced-dimension fully-digital beamformer design problem. It can be solved with existing methods \cite{sun2016majorization, christensen2008weighted, shen2018fractional}. Compared to the fully-digital approximation approach, the heuristic two-stage approach exhibits much lower computational complexity due to the closed-form design of the analog beamformers and the low-dimensional optimization of the digital beamformers.
\end{itemize} 

\subsubsection{Wideband Systems}

\begin{figure}[!t]
    \centering
    \includegraphics[width=0.48\textwidth]{./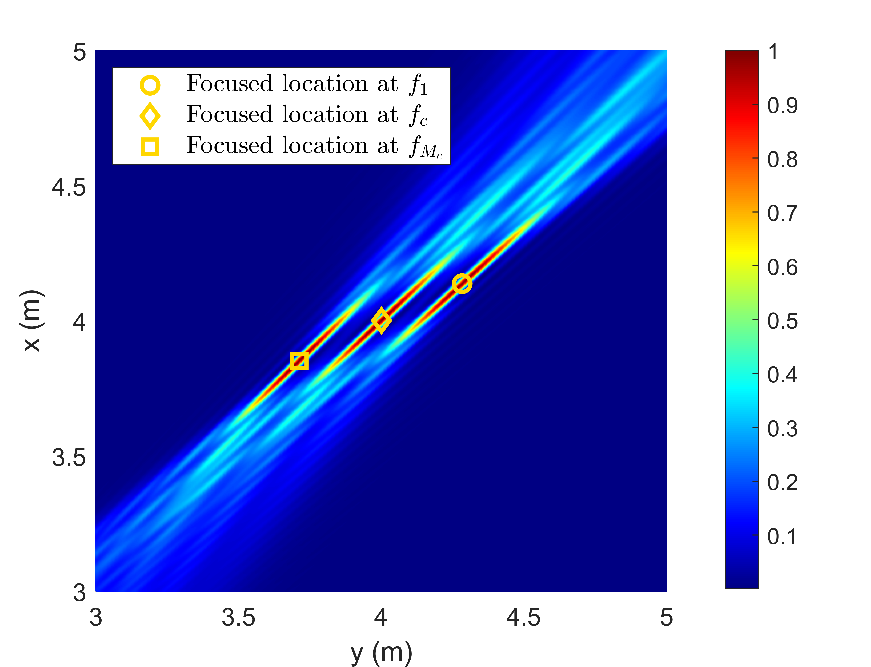}
    \caption{Near-field beam split in wideband OFDM system, where $N = 513$, $f_c = 28$ GHz, $W = 1$ GHz, and $M_c = 128$. The beam is generated such that it is focused on location $(\theta=\frac{\pi}{4}, r=5.5 \text{m})$ for the central frequency.}
    \label{fig:beamsplit}
\end{figure}

In wideband communication systems, orthogonal frequency division multiplexing (OFDM) is usually adopted to effectively exploit the large bandwidth resources and overcome frequency-selective fading. 
Let $W$, $f_c$, and $M_c$ denote the system bandwidth, central frequency, and the number of subcarriers of the OFDM system, respectively. The frequency for subcarrier $m$ is $f_m = f_c + \frac{W(2m-1-M_c)}{2M_c}$.
Adopting the USW model for ULAs, the frequency-domain channel for user $k$ at subcarrier $m$ after discrete Fourier transform (DFT) is given by \cite{tse2005fundamentals, cui2022near}
\begin{equation} \label{wideband_channel}
    \tilde{\mathbf{h}}_{m,k} = \beta_{m,k} \mathbf{a}(f_m, \theta_k, r_k) + \sum_{\ell=1}^{L_k} \tilde{\beta}_{m,k,\ell} \mathbf{a}(f_m, \tilde{\theta}_{k,\ell}, \tilde{r}_{k,\ell}),
\end{equation}
where $\beta_{m,k}$ and $\tilde{\beta}_{m,k,\ell}$ represent the complex channel gain for subcarrier $m$ of the LoS and NLoS paths, respectively.
We note that the in \eqref{wideband_channel}, near-field array response vector $\mathbf{a}(f_m, \theta, r)$ is rewritten as a function of the carrier frequency due to its frequency-dependence. More specifically, for the array response vector in \eqref{eqn:beamfocusing_vector}, the phase is inversely proportional to the signal wavelength $\lambda$, and thus it is proportional to the signal frequency $f$. In practice, the antenna spacing is usually fixed to half a wavelength of the central frequency $d = \frac{\lambda_c}{2} = \frac{c}{2 f_c}$, where $c$ denote the speed of light. Thus, the array response vector for subcarrier $m$ can be expressed as follows: 
\begin{align} \label{eqn:67}
    \mathbf{a}(f_m, \theta, r) = \left[e^{-j \pi \frac{f_m}{f_c} \delta_1(\theta, r) },\dots, e^{-j \pi \frac{f_m}{f_c} \delta_N(\theta, r)}\right]^{\mathsf{T}},
\end{align}
where $\delta_n(\theta, r) = -n \cos \theta + \frac{n^2d \sin^2 \theta }{2r}$. From \eqref{eqn:67}, it can be observed that the wideband near-field array response vector is \emph{frequency-dependent}, thus leading to \emph{frequency-dependent} communication channels for different OFDM subcarriers. Then, if the conventional PS-based hybrid beamforming architecture is used, the received signal at subcarrier $m$ of user $k$ is given by 
\begin{equation}
    \tilde{y}_{m,k} = \tilde{\mathbf{h}}_{m,k}^{\mathsf{T}} \mathbf{F}_{\mathrm{RF}} \mathbf{F}_{\mathrm{BB}}^m \tilde{\mathbf{x}}_{m,k} + \tilde{n}_{m,k},
\end{equation}  
where $\tilde{\mathbf{x}}_{m,k} \in \mathbb{C}^{K \times 1}$ and $\tilde{n}_{m,k} \sim \mathcal{CN}(0, \sigma_{m,k}^2)$ denote the vector containing the information symbols for the $K$ users and the AWGN at subcarrier $m$, respectively. $\mathbf{F}_{\mathrm{RF}} \in \mathbb{C}^{N \times N_{\mathrm{RF}}}$ and $\mathbf{F}_{\mathrm{BB}}^m \in \mathbb{C}^{N_{\mathrm{RF}} \times K}$ denote the analog beamformer and the low-dimensional digital beamformer for subcarrier $m$, respectively. It can be observed that the analog beamformer $\mathbf{F}_{\mathrm{RF}}$ realized by PSs is \emph{frequency-independent}, which causes a mismatch with respect to the \emph{frequency-dependent} wideband communication channel. This mismatch leads to the so-called \emph{near-field beam split} effect \cite{cui2023near, liu2023near}. In other words, a PS-based analog beamformer focusing on a specific location for one subcarrier will focus on different locations for other subcarriers. To elaborate on this phenomenon, we assume that one column $\mathbf{f}_{\mathrm{RF}}$ of $\mathbf{F}_{\mathrm{RF}}$ is designed such that the beam is focused on location $(\theta_c, r_c)$ at the central frequency $f_c$, i.e., 
\begin{equation}
    \mathbf{f}_{\mathrm{RF}} = \mathbf{a}^*(f_c, \theta_c, r_c).
\end{equation}  
Then, the impact of the near-field beam split effect is unveiled in the following lemma.
\begin{lemma} \label{lemma:beam spit}
    \textbf{(Near-field beam split \cite{cui2022near})} In wideband NFC, the beam generated by PS-based beamformer $\mathbf{f}_{\mathrm{RF}} = \mathbf{a}^*(f_c, \theta_c, r_c)$ is focused on location $(\theta_m, r_m)$ for subcarrier $m$, where
    \begin{equation}
        \theta_m = \arccos \left( \frac{f_c}{f_m} \cos \theta_c \right), \quad r_m = \frac{f_m \sin^2\theta_m}{f_c \sin^2\theta_c} r_c.
    \end{equation}
\end{lemma}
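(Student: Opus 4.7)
My plan is to locate the focal point of the wideband beam by characterizing where the array response at subcarrier $f_m$ is phase-aligned with the beamformer $\mathbf{f}_{\mathrm{RF}}=\mathbf{a}^*(f_c,\theta_c,r_c)$. Specifically, I would compute the normalized beam gain
\begin{equation}
g_m(\theta,r)=\frac{1}{N}\bigl|\mathbf{a}^{\mathsf T}(f_m,\theta,r)\,\mathbf{f}_{\mathrm{RF}}\bigr|
=\frac{1}{N}\left|\sum_{n=-\tilde N}^{\tilde N} e^{-j\pi\left[\tfrac{f_m}{f_c}\delta_n(\theta,r)-\delta_n(\theta_c,r_c)\right]}\right|,
\end{equation}
using the factored form of $\mathbf{a}(f_m,\theta,r)$ from \eqref{eqn:67}. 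By the triangle inequality, $g_m(\theta,r)\le 1$, with equality (and thus a focal point) if and only if every summand has the same phase, i.e.\
\begin{equation}
\tfrac{f_m}{f_c}\,\delta_n(\theta,r)-\delta_n(\theta_c,r_c)\;=\;\text{const.}\quad\text{for all } n\in\{-\tilde N,\dots,\tilde N\}.
\end{equation}

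Next I would exploit the fact that $\delta_n(\theta,r)=-n\cos\theta+\tfrac{n^2 d\sin^2\theta}{2r}$ is a polynomial in $n$ with no constant term, of degree exactly two, and the identity must hold for $2\tilde N+1\ge 3$ distinct values of $n$. Hence the constant is zero, and the coefficients of $n$ and $n^2$ must match separately:
\begin{align}
\text{(linear term):}\quad &\tfrac{f_m}{f_c}\cos\theta_m=\cos\theta_c,\\
\text{(quadratic term):}\quad &\tfrac{f_m}{f_c}\cdot\tfrac{\sin^2\theta_m}{r_m}=\tfrac{\sin^2\theta_c}{r_c}.
\end{align}
Solving the first equation directly yields $\theta_m=\arccos\bigl(\tfrac{f_c}{f_m}\cos\theta_c\bigr)$, and substituting into the second gives $r_m=\tfrac{f_m\sin^2\theta_m}{f_c\sin^2\theta_c}\,r_c$, which matches the claimed expressions.

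The main subtlety, rather than a real obstacle, is that the derivation is based on the Fresnel-approximated USW response vector \eqref{eq_a_ula}, so the polynomial identity in $n$ is exact only within that approximation; under the exact spherical-wave model the focal location would instead be the minimizer of a more complicated phase-mismatch functional. I would therefore state the lemma as being exact in the Fresnel regime, note that this is the regime in which \eqref{eqn:67} was defined, and verify that the obtained $\theta_m$ is well-defined (i.e.\ $|\tfrac{f_c}{f_m}\cos\theta_c|\le 1$, which holds for sufficiently small fractional bandwidth). I would also briefly remark that uniqueness of the focal point follows because the two coefficient equations uniquely determine $(\cos\theta_m,\sin^2\theta_m/r_m)$, which in turn determine $(\theta_m,r_m)$ in the physical range $\theta_m\in(0,\pi)$, $r_m>0$.
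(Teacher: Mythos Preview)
Your proposal is correct and follows essentially the same approach as the paper's proof: compute the normalized array gain $\frac{1}{N}\bigl|\mathbf{a}^{\mathsf T}(f_m,\theta,r)\,\mathbf{a}^*(f_c,\theta_c,r_c)\bigr|$, observe that it is maximized precisely when the linear-in-$n$ and quadratic-in-$n$ phase coefficients vanish, and solve those two equations for $(\theta_m,r_m)$. Your justification via the triangle inequality and the polynomial-identity argument is in fact slightly more explicit than the paper's, which simply packages the gain as a function $G(x,y)$ and asserts its maximum is at $(x,y)=(0,0)$; your additional remarks on the Fresnel regime and well-definedness of $\arccos$ are appropriate caveats that the paper leaves implicit.
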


\begin{proof}
    Please refer to Appendix \ref{proof_split}.
\end{proof}

Lemma \ref{lemma:beam spit} implies that the beam generated by $\mathbf{f}_{\mathrm{RF}} = \mathbf{a}^*(f_c, \theta_c, r_c)$ will focus on location $(\theta_m, r_m)$ for subcarrier $m$, rather than the desired location $(\theta_c, r_c)$. The resulting \emph{beam split} effect is illustrated in Fig. \ref{fig:beamsplit}. If subcarrier frequency $f_m$ deviates significantly from the central frequency $f_c$, a substantial loss of array gain occurs due to the beam split effect. 

\begin{figure}[!t]
    \centering
    \subfigure[Fully-connected.]{
        \includegraphics[height=0.25\textwidth]{./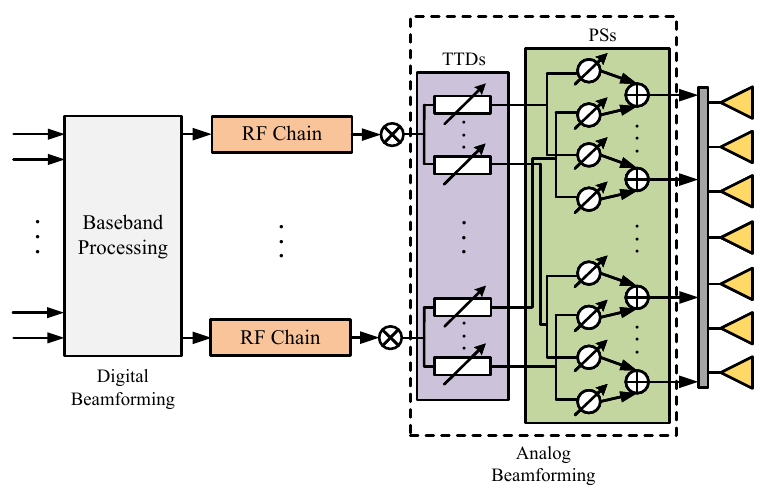}
        \label{fig:TTD_Hybrid_full}
    }
    \subfigure[Sub-connected.]{
        \includegraphics[height=0.25\textwidth]{./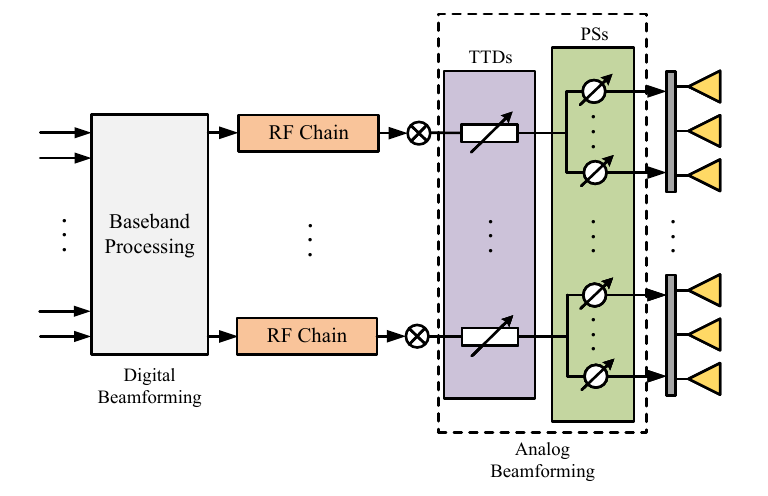}
        \label{fig:TTD_Hybrid_sub}
    }
    \caption{Architectures for TTD-based hybrid beamforming.}
    \label{fig:TTD_Hybrid}
\end{figure}

To mitigate the performance degradation induced by beam split, an efficient method is to exploit true time delayers (TTDs) \cite{carrasco2008bandwidth, rotman2014wideband, rotman2016true}, which, similar to PSs, are analog components. TTDs enable the realization of \emph{frequency-dependent} phase shifts by introducing variable time delays of the signals. Specifically, a time delay $t$ in the time domain corresponds to a \emph{frequency-dependent} phase shift of $e^{-j2\pi f_m t}$ at subcarrier $m$ in the frequency domain. Therefore, TTDs can be exploited to realize \emph{frequency-dependent} analog beamforming, which is a viable approach to mitigate the near-field beam split phenomenon. A straightforward way to implement such beamformers is to replace all PSs with TTDs in the conventional hybrid beamforming architecture. However, the cost and power consumption of TTDs is much higher than that of PSs, which makes this strategy impractical. As a remedy, several TTD-based hybrid beamforming architectures leveraging both PSs and TTDs have been proposed \cite{dovelos2021channel, gao2021wideband, dai2022delay, cui2021near}, as depicted in Fig. \ref{fig:TTD_Hybrid}\footnote{In Fig. \ref{fig:TTD_Hybrid}, we only show one possible architecture, where TTDs are arranged in a parallel manner. TTDs can also be arranged in a serial manner as shown in \cite{zhai2020thzprism}}. This architecture features an additional low-dimensional time delay network comprising a few TTDs, which is inserted between the digital and high-dimensional PS-based analog beamforming architectures. For the TTD-based hybrid beamforming architecture, the received signal at subcarrier $m$ of user $k$ is given by 
\begin{equation}
    \tilde{y}_{m,k} = \tilde{\mathbf{h}}_{m,k}^{\mathsf{T}} \mathbf{F}_{\mathrm{PS}} \mathbf{T}_m \mathbf{F}^m_{\mathrm{BB}} \tilde{\mathbf{x}}_m + \tilde{n}_{m,k},
\end{equation}  
where $\mathbf{F}_{\mathrm{PS}}$ denotes the \emph{frequency-independent} analog beamformer realized by PSs and $\mathbf{T}_m$ denotes the \emph{frequency-dependent} analog beamformer realized by TTDs.   

Similar to PS-based hybrid beamforming, TTD-based hybrid beamforming can be classified into two architectures:
\begin{itemize}
    \item \textbf{Fully-connected Architecture}: As illustrated in Fig. \ref{fig:TTD_Hybrid_full}, in the \emph{fully-connected} architecture, each RF chain is connected to all antenna elements through TTDs and PSs based on the following strategy. Each RF chain is first connected to $N_T$ TTDs, and then each TTD is connected to a subarray of $N/N_T$ antenna elements via $N/N_T$ PSs. Therefore, the number of PSs in the TTD-based hybrid beamforming architecture is the same as in the conventional PS-based hybrid beamforming architecture, i.e., $N_{\mathrm{PS}} = N_{\mathrm{RF}}N$. The number of TTDs in this architecture is given by $N_{\mathrm{TTD}} = N_{\mathrm{RF}} N_T$. The analog beamformers realized by the PSs and TTDs in the \emph{fully-connected} architecture can be expressed as, respectively, 
    \begin{align}
        &\mathbf{\mathbf{F}}_{\mathrm{PS}}^{(\mathrm{full})} = \big[\mathbf{F}_1^{(\mathrm{full})}, \dots, \mathbf{F}_{N_\mathrm{RF}}^{(\mathrm{full})}\big], \\
        &\mathbf{T}_m^{(\mathrm{full})} = \mathrm{blkdiag}\Big( [ e^{-j 2 \pi f_m \mathbf{t}_1^{(\mathrm{full})}}, \dots, e^{-j 2 \pi f_m \mathbf{t}_{N_\mathrm{RF}}^{(\mathrm{full})}} ] \Big).
    \end{align}
    Here, $\mathbf{F}_n^{(\mathrm{full})} \in \mathbb{C}^{N \times N_T}$ represents the PS-based analog beamformer connected to the $n$-th RF chain via TTDs and is given by 
    \begin{equation}
        \mathbf{F}_n^{(\mathrm{full})} = \mathrm{blkdiag}\left( \big[\mathbf{f}_{n,1}^{(\mathrm{full})}, \dots, \mathbf{f}_{n,N_\mathrm{T}}^{(\mathrm{full})} \big] \right),    
    \end{equation} 
    with $\mathbf{f}_{n,\ell}^{(\mathrm{full})} \in \mathbb{C}^{\frac{N}{N_T} \times 1}$ denoting the PS-based analog beamformer connecting the $\ell$-th TTD and the $n$-th RF chain. The constant-modulus constraint needs to be satisfied for each element of $\mathbf{f}_{n,\ell}^{(\mathrm{full})}$, which implies
    \begin{equation}
        \big|[\mathbf{f}_{n,\ell}^{(\mathrm{full})}]_i \big|=1, \forall n,\ell,i.
    \end{equation}
    Furthermore,  $\mathbf{t}_n^{(\mathrm{full})} \in \mathbb{R}^{N_T \times 1}$ denotes the time delays realized by the TTDs connected to the $n$-th RF chain. In practice, the maximum time delay that can be achieved by TTDs is limited, yielding the following constraint:
    \begin{equation}
        [\mathbf{t}_n^{(\mathrm{full})}]_\ell \in [0, t_{\max}], \forall n, \ell,
    \end{equation} 
    where $t_{\max}$ denotes the maximum delay that can be realized by the TTDs. For ideal TTDs, we have $t_{\max} = +\infty$.

    \item \textbf{Sub-connected Architecture}: As shown in Fig. \ref{fig:TTD_Hybrid_sub}, in the \emph{sub-connected architecture}, each RF chain is connected to a subarray of antenna elements via TTDs and PSs \cite{liu2023near}. The small number of antenna elements in each subarray reduces the beam split effect across the subarray. Consequently, the number of TTDs required for each RF chain can be significantly reduced to the point where only one TTD may suffice. In the following, we present the signal model for the simplest case, where each RF chain is connected to a single TTD. In particular, the analog beamformers realized by PSs and TTDs are, respectively, given by
    \begin{align}
        &\mathbf{F}_{\mathrm{PS}}^{(\mathrm{sub})} = \mathrm{blkdiag}\Big( 
        \big[\mathbf{f}_1^{(\mathrm{sub})},\dots,\mathbf{f}_{N_\mathrm{RF}}^{(\mathrm{sub})}  \big]
        \Big), \\
        &\mathbf{T}_m^{(\mathrm{sub})} = \mathrm{diag}\Big( [e^{-j2\pi t_1^{(\mathrm{sub})}}, \dots, e^{-j 2 \pi t_{\mathrm{RF}}^{(\mathrm{sub})}}]^{\mathsf{T}} \Big),
    \end{align}
    where $\mathbf{f}_n^{(\mathrm{sub})} \in \mathbb{C}^{\frac{N}{N_T} \times 1}$ and $t_n^{(\mathrm{sub})} \in [0, t_{\max}]$ denote the coefficients of the PSs and TTD connected to the $n$-th RF chain, respectively.

\end{itemize}

By exploiting TTD-based hybrid beamforming, for both fully-connected and sub-connected architectures, the achievable rate of user $k$ at subcarrier $m$ is given by 
\begin{align}
    R_{m,k} = \log_2 \left( 1 + \frac{ |\tilde{\mathbf{h}}_{m,k}^{\mathsf{T}} \mathbf{F}_{\mathrm{PS}} \mathbf{T}_m \mathbf{f}_{\mathrm{BB}}^{m,k} |^2 }{ \sum_{i \neq k} |\tilde{\mathbf{h}}_{m,k}^{\mathsf{T}} \mathbf{F}_{\mathrm{PS}} \mathbf{T}_m \mathbf{f}_{\mathrm{BB}}^{m,i} |^2 + \sigma_{m,k}^2 } \right).
\end{align} 
TTD-based hybrid beamforming can again be designed based on \emph{fully-digital approximation} and \emph{heuristic two-stage optimization}. These two approaches are detailed as follows.
\begin{itemize}
    \item \textbf{Fully-digital Approximation}:
    In this approach, the optimal unconstrained fully-digital beamformer $\mathbf{F}_m^{\mathrm{opt}}$ is designed for each subcarrier $m$. Then, the TTD-based hybrid beamformer is optimized to minimize the distance to $\mathbf{F}_m^{\mathrm{opt}}$. The resulting optimization problem is given as follows, where we take the fully-connected architecture as an example with $P_{\max}^m$ denoting the maximum transmit power available at subcarrier $m$:
    \begin{center}
        \begin{tcolorbox}[title = {$\mathcal{P}_{\mathrm{TTD}}$: TTD-based Hybrid Beamforming Optimization}]
        \vspace{-5mm}
        \begin{align} \label{problem:TTD hybrid beamforming}
        \underset{\begin{subarray}{c}
            \mathbf{F}_{\mathrm{BB}}, \mathbf{f}_{n,j}, \mathbf{t}_n
            \end{subarray}}{\mathrm{min}} \quad & \sum_{m=1}^M \big\| \mathbf{F}_m^{\mathrm{opt}} - \mathbf{F}_{\mathrm{PS}} \mathbf{T}_m \mathbf{F}_{\mathrm{BB}}^m \big\|_F^2 
        \\
        \mathrm{s. t.} \quad &\big|[\mathbf{f}_{n,\ell}]_i\big| = 1, \forall n,\ell,i, \nonumber \\
        & [\mathbf{t}_n]_\ell \in [0, t_{\max}], \forall n, \ell, \nonumber \\
        & \big\| \mathbf{F}_{\mathrm{RF}} \mathbf{T}_m \mathbf{F}_{\mathrm{BB}}^m \big\|_F^2 \le P_{\max}^m, \forall m. \nonumber 
        \end{align}
        \vspace{-4mm}\par\noindent
    \end{tcolorbox}
    \end{center}
    Compared to the conventional PS-based hybrid beamforming design problem for narrowband systems in \eqref{problem:hybrid beamforming}, problem \eqref{problem:TTD hybrid beamforming} is more challenging due to the following reasons. On the one hand, the three sets of optimization variables $\mathbf{F}_{\mathrm{BB}}$, $\mathbf{f}_{n,j}$, and $\mathbf{t}_n$ are deeply coupled. On the other hand, the exponential form of $\mathbf{T}_m$ with respect to $\mathbf{t}_n$ adds a further challenge to this problem. Furthermore, this approach may lead to high complexity because it requires the design of the large-dimensional $\mathbf{F}_m^{\mathrm{opt}}$ over a large number of subcarriers and optimizing the large-dimensional $\mathbf{F}_{\mathrm{PS}}$ and $\mathbf{T}_m$. The development of efficient algorithms for solving problem \eqref{problem:TTD hybrid beamforming} is still in its infancy.

    \item \textbf{Heuristic Two-stage Optimization}: In this approach, the complexity is significantly reduced by designing the analog beamformer heuristically in closed form. Then, the low-dimensional digital beamformer can be optimized with low complexity. In contrast to conventional hybrid beamforming, the analog beamformers realized by PSs and TTDs need to be jointly designed in TTD-based hybrid beamforming, such that the beams of all subcarriers are focused on the desired location. For wideband FFC systems, several heuristic designs for analog beamformers have been reported in recent studies \cite{gao2021wideband,dai2022delay} to address the issue of beam split effect. However, these designs are not applicable in near-field scenarios. As a further advance, the authors of \cite{cui2021near} developed a novel heuristic approach to mitigate the near-field beam split effect based on a far-field approximation for each antenna subarray. However, this approach only accounts for a single RF chain. Therefore, additional research is needed to develop general heuristic designs for TTD-based hybrid beamforming architectures in wideband NFC.
\end{itemize}

\begin{figure}[!t]
    \centering
    \subfigure[Spectral efficiency versus SNR.]{
        \includegraphics[width=0.4\textwidth]{./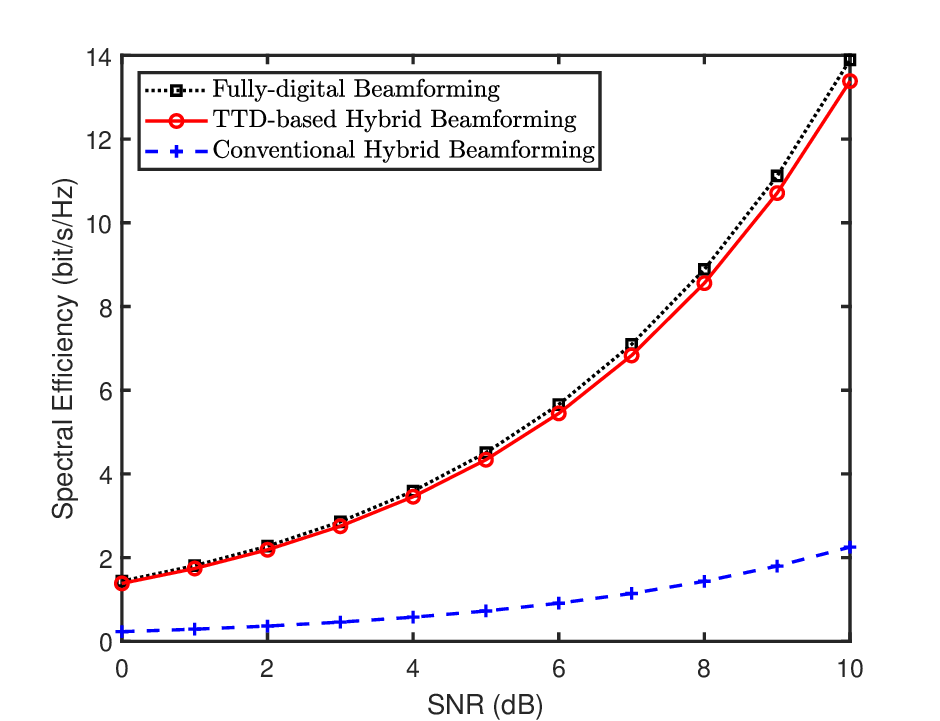}
        \label{fig:performance_TTD_1}
    }
    \subfigure[Spectral efficiency versus frequency of subcarriers.]{
        \includegraphics[width=0.4\textwidth]{./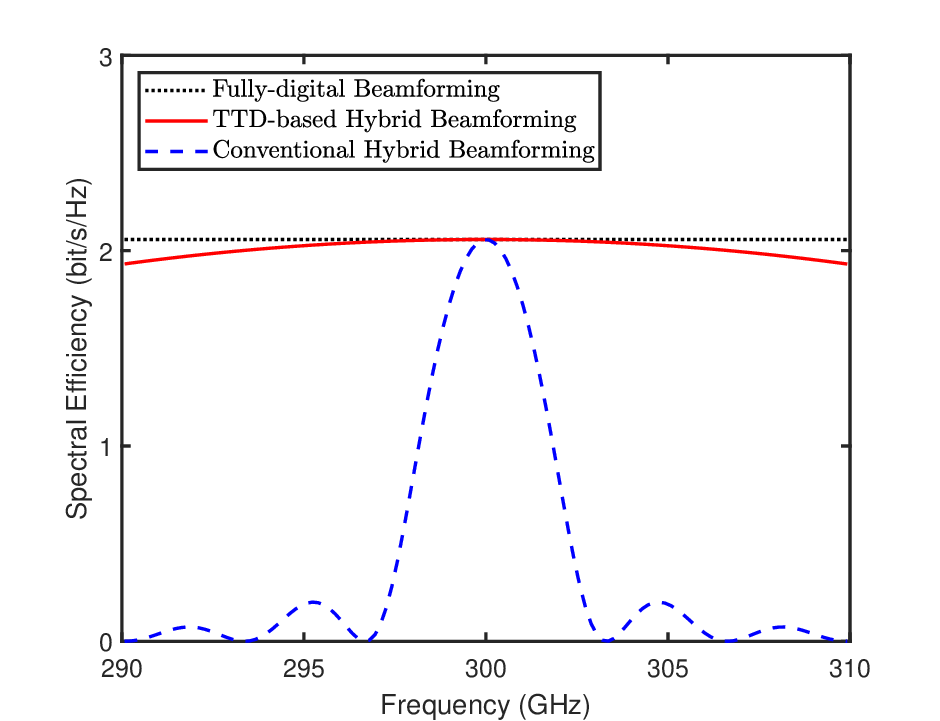}
        \label{fig:performance_TTD_2}
    }
    \caption{Performance of TTD-based hybrid beamforming architecture in near-field wideband OFDM systems, where $N = 256$, $N_{\mathrm{RF}} = 1$, $N_T = 16$, $f_c = 0.3$ THz, $W = 20$ GHz, and $M_c=128$. The location of the user is $(\theta=\frac{\pi}{4}, r=5.5 \text{m})$.}
    \label{fig:performance_TTD}
\end{figure}

In Fig. \ref{fig:performance_TTD_1}, we compare the spectral efficiency of TTD-based hybrid beamforming with fully-digital beamforming and conventional hybrid beamforming for a wideband OFDM system as a function of the SNR. In particular, the conventional hybrid beamformer is configured to generate a beam focused on the user's location at the central frequency. The TTD-based hybrid beamformer is designed based on the far-field approximation proposed in \cite{cui2021near}.
As can be observed, the fully-digital beamformer achieves the highest spectral efficiency due to its ability to generate a dedicated beam for each subcarrier. However, this is at the cost of extremely high power consumption. Thanks to its capability to realize \emph{frequency-dependent} analog beamforming, the TTD-based hybrid beamformer achieves a performance comparable to that of the fully-digital beamformer. In contrast, the conventional PS-based hybrid beamforming architecture has a low performance due to the significant beam split effect. The benefits of TTD-based hybrid beamforming are further illustrated in Fig. \ref{fig:performance_TTD_2}, where the spectral efficiencies of different subcarriers are illustrated. As can be observed, TTD-based hybrid beamforming achieves high spectral efficiency for all subcarriers. However, for conventional PS-based hybrid beamforming, only subcarriers in the proximity of the central frequency can achieve good performance. These results further underscore the importance of TTD-based hybrid beamforming architectures for wideband NFC systems.

\subsection{Beamfocusing with CAP Antennas}
In this subsection, we focus our attention on near-field beamfocusing with CAP antennas. Although CAP antennas generally consist of a continuous radiating surface, achieving independent and complete control of each radiating point on the surface proves to be an insurmountable task. Therefore, sub-wavelength sampling of the continuous surface becomes crucial \cite{smith2017analysis}. 

Metasurface antennas are a design approach to approximate CAP antennas and are implemented with metamaterials \cite{smith2017analysis, shlezinger2021dynamic, deng2021reconfigurable}. In the context of conventional antenna arrays, it is customary to employ an antenna spacing that is not less than half of the operating wavelength $\lambda$. This is mainly attributed to the practical constraints posed by the sub-wavelength size of conventional antennas and the associated mutual coupling effects caused by the closely spaced deployment. However, metamaterials with their unique electromagnetic properties can be utilized to overcome the aforementioned limitations. By exploiting metamaterials, a metasurface antenna with numerous metamaterial radiation elements can be realized allowing for reduced power consumption and ultra-close element spacing on the order of $\frac{1}{10}\lambda \sim \frac{1}{5} \lambda$ \cite{smith2017analysis}. The transmit signals are generally fed to the metasurface antenna by a waveguide \cite{shlezinger2021dynamic, deng2021reconfigurable}. The signal from each RF chain propagates in the waveguide to excite the metamaterial elements. Subsequently, the excited metamaterial elements tune the amplitude or phase of the signal before emitting it. Therefore, metasurface antennas can be regarded as a kind of hybrid beamforming architecture, which we refer to as \emph{metasurface-based hybrid beamforming architecture}.

\begin{figure}[!t]
    \centering
    \includegraphics[height=0.25\textwidth]{./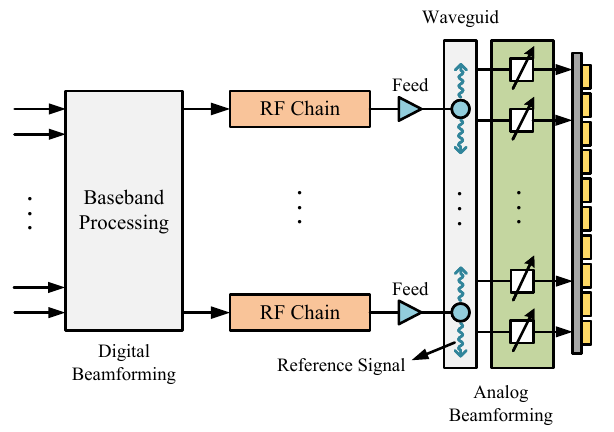}
    \caption{Architecture for metasurface-based hybrid beamforming.}
    \label{fig:metasurface}
\end{figure}

\subsubsection{Narrowband Systems}
We first focus on narrowband systems. 
As shown in Fig. \ref{fig:metasurface}, the metasurface-based hybrid beamforming architecture is generally composed of $N_{\mathrm{RF}}$ feeds connected to $N_{\mathrm{RF}}$ RF chains and $N$ reconfigurable metamaterial elements which can tune the signal. Specifically, the signal of each RF chain is fed into a waveguide as a \emph{reference signal}. Then, the metamaterial element is excited by the reference signal propagating in the waveguide and radiates the \emph{object signal}. Let $\tilde{x}_i$ denote the signal generated by the digital beamformer and emitted by the $i$-th feed. The reference signal propagated to the $n$-th metamaterial element through the waveguide is given by \cite{smith2017analysis}
\begin{equation}
    x_n^r = \sum_{i=1}^{N_{\mathrm{RF}}} \tilde{x}_i e^{-j \frac{2\pi n_r}{\lambda} d_{n,i}},
\end{equation}
where $n_r$ denotes the refractive index of the waveguide, and $d_{n,i}$ denotes the distance between the $i$-th feed and the $n$-th metamaterial element. Then, the object signal radiated by the $n$-th metamaterial element is given by 
\begin{equation}
    x_n^o = \psi_n x_n^r,
\end{equation}
where $\psi_n$ denotes the configurable weight of the $n$-th metamaterial antenna element. Therefore, the overall object signal $\mathbf{x}^o = [x_1^o, x_2^o,\dots, x_N^o]$ after the analog and digital beamformer can be modelled as follows
\begin{equation}
    \mathbf{x}^o = \mathbf{\Psi} \mathbf{Q} \mathbf{F}_{\mathrm{BB}} \mathbf{x},
\end{equation}
where 
\begin{align}
    &\mathbf{\Psi} = \mathrm{diag}\left( [ \psi_1,\dots,\psi_N ]^{\mathsf{T}} \right), \\
    &[\mathbf{Q}]_{n,i} = e^{-j \frac{2\pi n_r}{\lambda} d_{n,i}}.
\end{align}
In particular, the configurable weight $\psi_n$ represents the Lorentzian resonance response of metamaterial elements, which depending on the implementation can be controlled while meeting the following three sets of constraints \cite{smith2017analysis}:
\begin{itemize}
    \item \textbf{Continuous amplitude control}: In this case, the metamaterial element is assumed to be near resonance, which provides the modality to tune the amplitude of each element without significant phase shifts. Therefore, the configurable weight $\psi_i$ is constrained by
    \begin{equation} \label{meta:amplitude}
        \psi_n \in [0, 1].
    \end{equation}
    \item \textbf{Discrete amplitude control}: In this case, the amplitude of each metamaterial element can be adjusted based on a set of discrete values. The corresponding constraint of $\psi_i$ is given by 
    \begin{equation} \label{meta:amplitude_discrete}
        \psi_n \in \left\{0, \frac{1}{C-1}, \frac{2}{C-1}, \dots, 1 \right\},
    \end{equation}
    where $C$ denotes the number of candidate amplitudes.
    \item \textbf{Lorentzian-constrained phase-shift control}: In this case, phase shift tuning of each metamaterial element can be achieved. However, the phase shift and the amplitude of each element are coupled due to the Lorentzian resonance. Therefore, phase shifts and amplitudes need to be jointly controlled based on the following constraint:
    \begin{equation} \label{meta:phase}
        \psi_n = \frac{j + e^{j \vartheta_n}}{2}, \quad \vartheta_n \in [0, 2 \pi].
    \end{equation}
    It can be verified that the phase of $\psi_n$ is restricted to the range $[0, \pi]$ and the amplitude is coupled with the phase, i.e., $|\psi_n| = | \cos(\vartheta_n/2) |$.   
\end{itemize}
By exploiting metasurface-based hybrid beamforming in narrowband communication, the received signal at user $k$ is given by 
\begin{equation}
    y_k = \mathbf{h}_k^{\mathsf{T}} \mathbf{\Psi} \mathbf{Q} \mathbf{F}_{\mathrm{BB}} \mathbf{x} + n_k.
\end{equation} 
We note that since the CAP antenna is sampled by the discrete metamaterial elements of the metasurface, the communication channel $\mathbf{h}_k$ can be approximated by the near-field multipath MISO channel model developed for SPD antennas in \eqref{H_NFC_MISO}.
Then, the achievable rate at user $k$ is obtained as follows:
\begin{align}
    R_k = \log_2 \left( 1 + \frac{ | \mathbf{h}_k^{\mathsf{T}} \mathbf{\Psi} \mathbf{Q} \mathbf{f}_{\mathrm{BB},k} |^2 }{ \sum_{i \neq k} |\mathbf{h}_k^{\mathsf{T}} \mathbf{\Psi} \mathbf{Q} \mathbf{f}_{\mathrm{BB},i} |^2 + \sigma_k^2 } \right).
\end{align}
The analog beamformer $\mathbf{\Psi}$ and the digital beamformer $\mathbf{F}_{\mathrm{BB}}$ need to be designed properly to maximize system performance. However, unlike the hybrid beamforming for SPD antennas, an unconstrained fully-digital equivalent does not exist for metasurface-based hybrid beamforming in practice, which makes the fully-digital approximation approach for analog beamfocusing design almost infeasible. Therefore, the analog beamformer $\mathbf{\Psi}$ and the digital beamformer $\mathbf{F}_{\mathrm{BB}}$ need to be optimized directly. Let $f(\mathbf{\Psi}, \mathbf{F}_{\mathrm{BB}})$ denote some suitable performance metric, such as spectral efficiency or energy efficiency. Then, we have the following optimization problem, where $\mathcal{F}_{\psi}$ denotes the feasible set of $\psi_n$ given by \eqref{meta:amplitude}, \eqref{meta:amplitude_discrete}, or \eqref{meta:phase}:
\begin{center}
    \begin{tcolorbox}[title = {$\mathcal{P}_{\mathrm{Meta}}$: Metasurface-based Hybrid Beamforming Optimization}]
    \vspace{-5mm}
    \begin{align} \label{problem:Meta hybrid beamforming}
    \underset{\begin{subarray}{c}
        \mathbf{\Psi}, \mathbf{F}_{\mathrm{BB}}
        \end{subarray}}{\mathrm{max}} \quad & f(\mathbf{\Psi}, \mathbf{F}_{\mathrm{BB}})
    \\
    \mathrm{s. t.} \quad &\mathbf{\Psi} = \mathrm{diag}\left( [ \psi_1,\dots,\psi_N ]^T \right), \nonumber \\
    & \psi_n \in \mathcal{F}_{\psi}, \forall n, \nonumber \\
    & \big\| \mathbf{\Psi} \mathbf{Q} \mathbf{F}_{\mathrm{BB}} \big\|_F^2 \le P_{\max}. \nonumber 
    \end{align}
    \vspace{-4mm}\par\noindent
\end{tcolorbox}
\end{center}
The main challenges in solving the above problem stem from the high dimensionality of $\mathbf{\Psi}$ and the intractable constraints imposed on $\mathbf{\Psi}$, especially for the Lorentzian-constrained phase shift. How to develop efficient algorithms to solve \eqref{problem:Meta hybrid beamforming} is still an open problem, which requires additional research efforts. 

\subsubsection{Wideband Systems}

For wideband NFC systems, the challenges arising from the \emph{frequency-dependent} array response vector also exist for metasurface-based hybrid beamforming. However, the \emph{frequency-dependence} of the waveguide has the potential to mitigate the near-field beam split effect. To elaborate, in wideband systems, the received signal at subcarrier $m$ of user $k$ is given by 
\begin{equation}
    \tilde{y}_{m,k} = \tilde{\mathbf{h}}_{m,k}^\mathsf{T} \mathbf{\Psi} \mathbf{Q}_m \mathbf{F}_{\mathrm{BB}}^m \tilde{\mathbf{x}}_m + \tilde{n}_{m,k},
\end{equation} 
where $\mathbf{Q}_m$ is defined as
\begin{equation}
    [\mathbf{Q}_m]_{n,i} = e^{-j \frac{2 \pi f_m n_r}{c} d_{n,i}}.
\end{equation}    
Based on this \emph{frequency-dependent} waveguide propagation matrix, an interference-pattern-based design was proposed in \cite{di2021reconfigurable} to mitigate the far-field beam split effect by exploiting the metasurface-based hybrid beamforming architecture. How to use the frequency-dependence of the waveguide to mitigate the near-field beam split is an interesting direction for future research.

\begin{table*}[!h]
\caption{Summary of hybrid beamforming architectures for NFC.}
\label{table_arc}
\small
\centering
\begin{tabular}{!{\vrule width1pt}c!{\vrule width1pt}c!{\vrule width1pt}c!{\vrule width1pt}c!{\vrule width1pt}c!{\vrule width1pt}}
\Xhline{1pt} 
\textbf{Antenna}                                            & \textbf{Channel Category}         & \textbf{Bandwidth}        & \textbf{Analog Beamforming}               & \textbf{Digital Beamforming}                           \\ \Xhline{1pt} 
\multirow{4}{*}{\makecell[c]{SPD\\Antennas}}                & \multirow{2}{*}{MISO}             & Narrowband                & PS-based                                  & \multirow{2}{*}{ Fixed RF chains }            \\ \cline{3-4}
                                                            &                                   & Wideband                  & TTD-based                                 &                                               \\ \cline{2-5} 
                                                            & \multirow{2}{*}{MIMO}             & Narrowband                & PS-based                                  & \multirow{2}{*}{Dynamic RF chains}           \\ \cline{3-4}
                                                            &                                   & Wideband                  & TTD-based                                 &                                              \\ \Xhline{1pt} 
\multirow{2}{*}{\makecell[c]{CAP\\Antennas}}                & MISO                              & -                         &\multirow{2}{*}{Metasurface-based}         & Fixed RF chains                               \\ \cline{2-3} \cline{5}
                                                            & MIMO                              & -                         &                                           & Dynamic RF chains                             \\ \Xhline{1pt} 
\end{tabular}
\end{table*}

\subsection{MIMO Extensions}

\begin{figure}[t!]
    \centering
    \includegraphics[height=0.25\textwidth]{./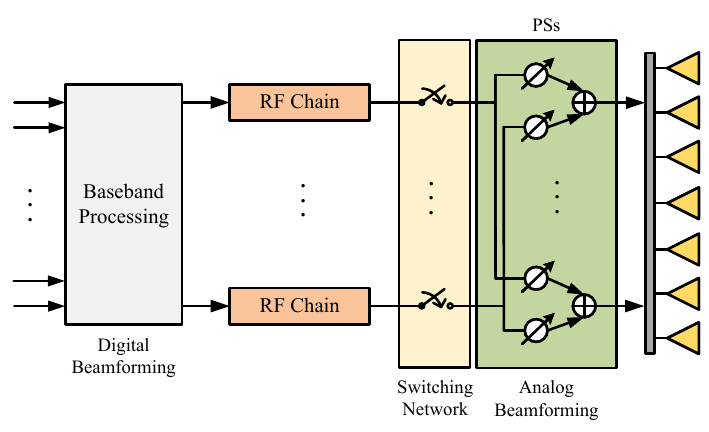}
    \caption{Architecture of hybrid beamforming with dynamic RF chains.}
    \label{fig:switch_hybrid}
\end{figure}

As discussed in Section II, near-field MIMO channels can provide more DoFs than far-field MIMO channels, especially in the LoS-dominant case. Therefore, MIMO systems can yield substantial performance gains compared to MISO systems in NFC.
In particular, in the far-field, the LoS MIMO channel matrix is always rank-one, c.f., \eqref{far_MIMO_LoS}, and thus can support only a single data stream and cannot fully exploit the benefits of MIMO. By contrast, in NFC, the high-rank LoS MIMO channel matrix is capable of supporting multiple data streams, thus providing more multiplexing gain and enhancing communication performance. Let us consider a near-field narrowband single-user MIMO system with two parallel ULAs with $N_T$ and $N_R$ antenna elements at the transmitter and receiver, respectively. Employing the hybrid beamforming architecture at the BS, the received signal at the user is given by 
\begin{equation}
    \mathbf{y} = \mathbf{H}_{\mathrm{ULA}}^{\mathrm{LoS}} \mathbf{F}_{\mathrm{RF}} \mathbf{F}_{\mathrm{BB}} \mathbf{x} + \mathbf{n},
\end{equation}
where $\mathbf{H}_{\mathrm{ULA}}^{\mathrm{LoS}} \in \mathbb{C}^{N_R \times N_T}$ denotes the near-field LoS channel matrix for parallel ULAs given in \eqref{18a} and $\mathbf{x} \in \mathbb{C}^{N_{\mathrm{s}} \times 1}$ collects the information symbols of the $N_{\mathrm{s}}$ data streams. The maximum number of data streams that can be supported is determined by the DoFs of channel matrix $\mathbf{H}_{\mathrm{ULA}}^{\mathrm{LoS}}$. However, the primary obstacle in such a system is that the rank of $\mathbf{H}_{\mathrm{ULA}}^{\mathrm{LoS}}$ is not a constant but rather is contingent upon the distance $r$ between the BS and the user. Specifically, as unveiled by \eqref{dof_ula}, when $N_T$ and $N_R$ are sufficiently large, the DoFs of $\mathbf{H}_{\mathrm{ULA}}^{\mathrm{LoS}}$ are given by $\frac{(N_T-1)d_T(N_R-1)d_R}{\lambda r}$, which decrease gradually as distance $r$ increases, reducing the number of data streams that can be supported. Additionally, as pointed out in \cite{sohrabi2016hybrid}, for the fully-connected hybrid beamforming architecture, $2N_s$ RF chains are sufficient to achieve the same performance as a fully-digital beamformer, which implies that deploying more than $2N_s$ RF chains will only result in more power consumption without enhancing communication performance. Therefore, in near-field MIMO systems, the number of RF chains needs to be dynamically changed based on distance $r$ to achieve the best possible performance \cite{wu2022distance}. This can be achieved by the hybrid beamforming architecture shown in Fig. \ref{fig:switch_hybrid}, where a switching network is introduced to control the number of active RF chains. Let $N_{\mathrm{RF}}^{\max}$ denote the maximum number of available RF chains. Then, the received signal at the user can be rewritten as follows:
\begin{equation}
    \mathbf{y} = \mathbf{H}_{\mathrm{ULA}}^{\mathrm{LoS}} \tilde{\mathbf{F}}_{\mathrm{RF}} \tilde{\mathbf{F}}_{\mathrm{S}} \tilde{\mathbf{F}}_{\mathrm{BB}} \tilde{\mathbf{x}} + \mathbf{n},
\end{equation}
where $\tilde{\mathbf{x}} \in \mathbb{C}^{N_{\mathrm{RF}}^{\max} \times 1}$, $\tilde{\mathbf{F}}_{\mathrm{RF}} \in \mathbb{C}^{N_T \times N_{\mathrm{RF}}^{\max}}$, $\tilde{\mathbf{F}}_{\mathrm{BB}} \in \mathbb{C}^{N_{\mathrm{RF}}^{\max} \times N_{\mathrm{RF}}^{\max}}$ denote the full-dimensional candidate data steams, analog beamformer, and digital beamformer, respectively. $\tilde{\mathbf{F}}_{\mathrm{S}} \in \mathbb{R}^{N_{\mathrm{RF}}^{\max} \times N_{\mathrm{RF}}^{\max}}$ represents the ON/OFF status of the switching network, and is given by 
\begin{equation}
    \tilde{\mathbf{F}}_{\mathrm{S}} = \mathrm{diag} \big( \big[ \alpha_1, \alpha_2, \dots, \alpha_{N_{\mathrm{RF}}^{\max}} \big]^{\mathsf{T}} \big).
\end{equation}
In particular, $\alpha_n = 1$ means that the $n$-th RF chain is active while $\alpha_n = 0$ indicates that the $n$-th RF chain is switched off. The achievable rate of the user is given by 
\begin{equation}
    R = \log_2 \det \Big(  \mathbf{I} + \frac{1}{\sigma^2} \mathbf{H}_{\mathrm{ULA}}^{\mathrm{LoS}} \mathbf{Q} (\mathbf{H}_{\mathrm{ULA}}^{\mathrm{LoS}})^{\mathsf{H}} \Big),
\end{equation}
where $\mathbf{Q} = \tilde{\mathbf{F}}_{\mathrm{RF}} \tilde{\mathbf{F}}_{\mathrm{S}} \tilde{\mathbf{F}}_{\mathrm{BB}} \tilde{\mathbf{F}}_{\mathrm{BB}}^{\mathsf{H}} \tilde{\mathbf{F}}_{\mathrm{S}}^{\mathsf{H}} \tilde{\mathbf{F}}_{\mathrm{RF}}^{\mathsf{H}}$ denotes the covariance matrix of the transmit signal. Since the purpose of the switching network is to avoid the waste of energy, the power consumption of the RF chains and the PSs needs to be considered. In particular, the total number of active RF chains and PSs is $\sum_{n=1}^{N_{\mathrm{RF}}^{\max}} \alpha_n$ and $\sum_{n=1}^{N_{\mathrm{RF}}^{\max}} \alpha_n N_T$, respectively. The corresponding power consumption is given by 
\begin{align}
    P = \sum_{n=1}^{N_{\mathrm{RF}}^{\max}} \alpha_n (P_{\mathrm{RF}} + N_T P_{\mathrm{PS}}),
\end{align}
where $P_{\mathrm{RF}}$ and $P_{\mathrm{PS}}$ denote the power consumed by one RF chain and one PS, respectively. Then, the optimal $\tilde{\mathbf{F}}_{\mathrm{RF}}$, $\tilde{\mathbf{F}}_{\mathrm{S}}$, and $\tilde{\mathbf{F}}_{\mathrm{BB}}$ can be obtained by solving the following multi-objective optimization problem:    
\begin{center}
    \begin{tcolorbox}[title = {$\mathcal{P}_{\mathrm{MIMO}}$: Dynamic RF Chain based Hybrid Beamforming Optimization}]
    \vspace{-5mm}
    \begin{align} \label{problem:NFC_MIMO}
    \underset{\begin{subarray}{c}
        \tilde{\mathbf{F}}_{\mathrm{RF}}, \tilde{\mathbf{F}}_{\mathrm{S}}, \tilde{\mathbf{F}}_{\mathrm{BB}}
        \end{subarray}}{\mathrm{max}} \quad & R - \mu P
    \\
    \mathrm{s. t.} \quad &\big|[\tilde{\mathbf{F}}_{\mathrm{RF}}]_{m,n}\big| = 1, \forall m,n, \nonumber \\
    & \tilde{\mathbf{F}}_{\mathrm{S}} = \mathrm{diag} \big( \big[ \alpha_1, \alpha_2, \dots, \alpha_{N_{\mathrm{RF}}^{\max}} \big]^{\mathsf{T}} \big), \nonumber \\
    & \alpha_n \in \{0, 1\}, \forall n, \nonumber \\
    & \big\| \tilde{\mathbf{F}}_{\mathrm{RF}} \tilde{\mathbf{F}}_{\mathrm{S}} \tilde{\mathbf{F}}_{\mathrm{BB}} \big\|_F^2 \le P_{\max}, \nonumber 
    \end{align}
    \vspace{-4mm}\par\noindent
\end{tcolorbox}
\end{center}
where $\mu \ge 0$ is a weight factor that can be adjusted to emphasize the importance of high and low power consumption. Furthermore, the dynamic RF chain concept can also be applied in the context of TTD-based and metasurface-based hybrid architectures. The corresponding optimization problems can be formulated in a similar manner as \eqref{problem:NFC_MIMO}. Note that problem \eqref{problem:NFC_MIMO} is a mixed-integer nonlinear programming (MINLP) problem, which is NP-hard. The branch-and-bound (BnB) algorithm \cite{tawarmalani2005polyhedral} can be used to obtain the globally optimal solution of MINLP problems, but has exponential complexity. Furthermore, some low-complexity algorithms, such alternating direction method of multipliers (ADMM) \cite{gabay1976dual} and machine-learning-based methods \cite{shen2019lorm}, can be employed to obtain a high-quality sub-optimal solution.

Based on the discussion in previous subsections, the hybrid beamforming architectures for NFC are summarized in Table \ref{table_arc}.

\begin{figure}[!t]
    \centering
    \includegraphics[width=0.48\textwidth]{./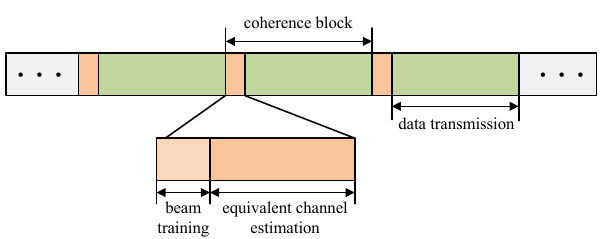}
    \caption{Communication protocol including beam training \cite{hur2013millimeter, zhang2016tracking, tan2021wideband}.}
    \label{fig:communication_block}
\end{figure}

\subsection{Near-Field Beam Training}

In the previous subsections, we have discussed several hybrid beamforming architectures for realizing near-field beamfocusing. However, to optimize the hybrid beamformer, channel state information (CSI) is required. Conventionally, CSI is obtained via channel estimation. However, in the case of NFC employing ELAAs, the complexity of conventional channel estimation techniques significantly increases. To address this challenge, beam training is proposed as a fast and efficient method for reducing the complexity of CSI acquisition and obtaining high-quality analog beamformers \cite{heath2016overview}. Rather than estimating the complete CSI of the high-dimensional near-field channel, beam training establishes a training procedure between the BS and the users to estimate the physical locations of the channel paths, where an optimal codeword yielding the largest received power at the user is selected from a pre-designed beam codebook. Then, the optimal codeword is exploited as analog beamformer for transmission. Once the analog beamformer is selected, conventional channel estimation methods can be employed to estimate the low-dimension equivalent channel comprising the original channel and the analog beamformer. The communication protocol including beam training is shown in Fig. \ref{fig:communication_block} \cite{hur2013millimeter, zhang2016tracking, tan2021wideband}.

\begin{figure}[!t]
    \centering
    \includegraphics[width=0.48\textwidth]{./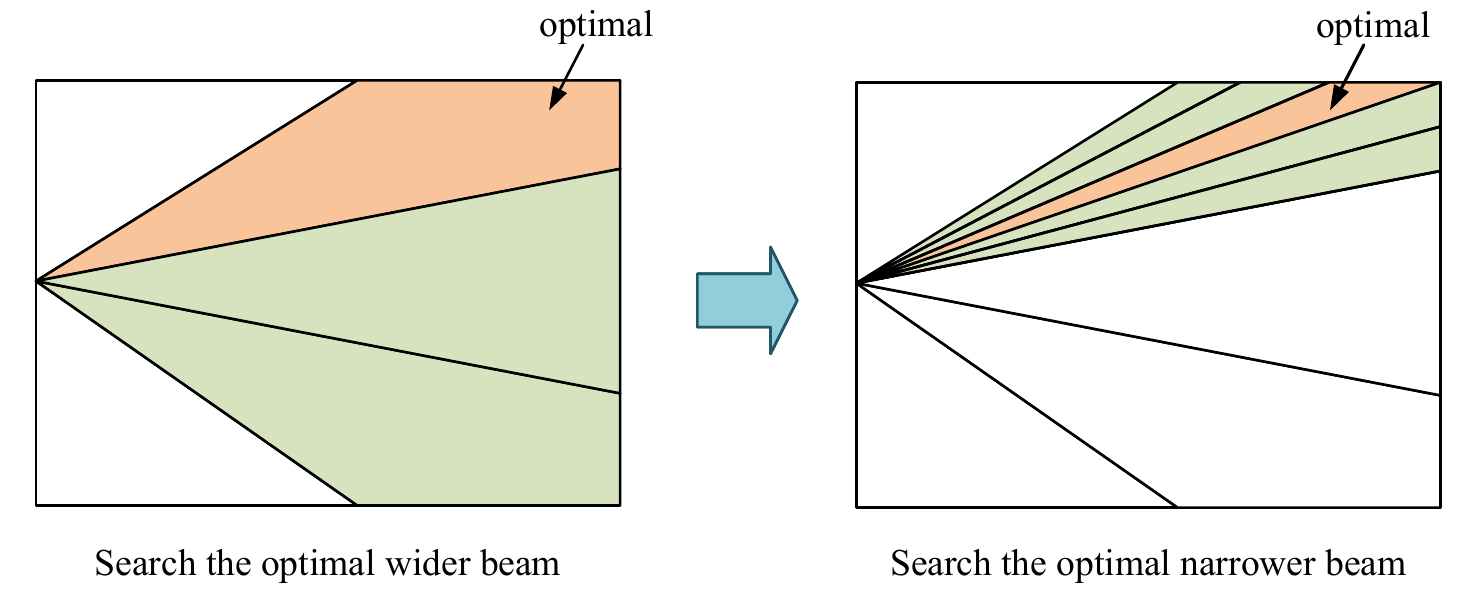}
    \caption{Far-field beam training process.}
    \label{fig:far_field_training}
\end{figure}

\begin{figure}[!t]
    \centering
    \includegraphics[width=0.48\textwidth]{./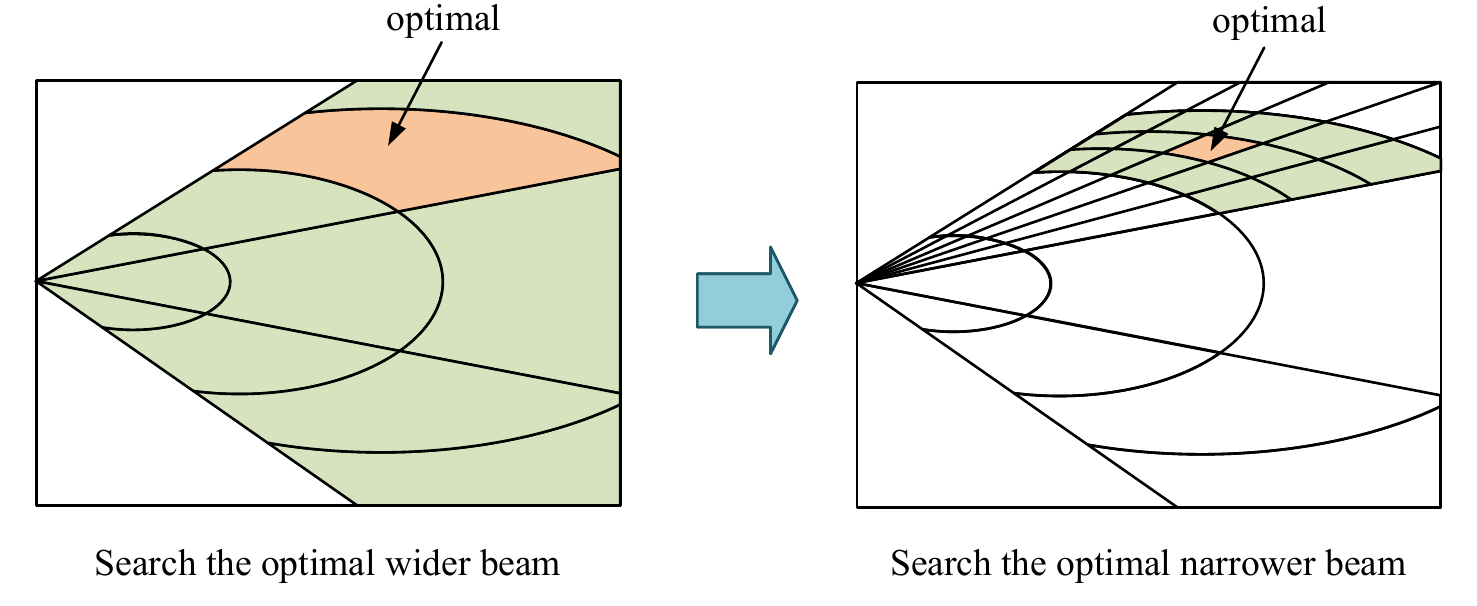}
    \caption{Near-field beam training process.}
    \label{fig:near_field_training}
\end{figure}

Compared to the far field, near-field beam training imposes new challenges. To elaborate, let us first provide a brief summary of far-field beam training based on hierarchical codebooks \cite{xiao2016hierarchical}. Generally, for conventional far-field beam training, the hierarchical codebook contains codewords that correspond to directional beams with varying beamwidths, ranging from wide to narrow. As shown in Fig. \ref{fig:far_field_training}, these codewords are then searched in a tree architecture to determine the optimal one. In this approach, wider beams are designated as “nodes” in the tree architecture, while narrower beams that are encompassed by the wider beams are regarded as “leaves” of the wider beam. The beam training starts with the BS transmitting a pilot signal to the user using the wider beams and selecting the optimal one with the highest received power at the user. The BS then transmits a pilot signal using the narrow beams that are “leaves” of the selected wider beam and repeats this process until the last level of the tree architecture is reached. The design of beam training involves two parts, namely \emph{codebook design} and \emph{training protocol design}, which have been widely investigated for FFC \cite{nitsche2014ieee, alkhateeb2014channel, xiao2016hierarchical, lin2017subarray, noh2017multi, chen2019two, 10033092, ning2023beamforming}.

However, far-field beam training methods are not directly applicable for near-field channels due to the absence of distance information in far-field codebooks. Consequently, it is necessary to revise the beam training for NFC. The near-field beam training process based on a hierarchical codebook is illustrated in Fig. \ref{fig:near_field_training}. Notably, compared to far-field beam training, near-field beam training requires a significantly larger codebook as near-field channels have to the modelled in the polar domain rather than the angular domain. This results in increased complexity of the near-field beam training process. Therefore, employing a low-complexity training protocol is critical for near-field beam training. Furthermore, the design of the codebook for near-field beam training is more intricate than that for far-field beam training due to the \emph{non-linear phase} or even \emph{non-uniform amplitude} of near-field channels. In the following, we discuss beam training for narrowband and wideband systems, respectively.
\begin{itemize}
    \item \textbf{Narrowband systems}: Some recent efforts have been devoted to the design of narrowband near-field beam training \cite{cui2022channel, zhang2022fast, liu2022deep, wu2023two}. Specifically, a new near-field codebook design was proposed in \cite{cui2022channel}, where the codewords are sampled uniformly in the angular domain and non-uniformly in the distance domain. To reduce the complexity of near-field beam training, the authors of \cite{zhang2022fast} proposed a two-phase training protocol. In this protocol, the candidate angle is first obtained via conventional far-field beam training in the first phase, which is followed by the estimation of the effective distance in the second phase. As a further advance, the authors of \cite{liu2022deep} conceived a deep learning based near-field beam training method, where a pair of neural networks were designed to determine the optimal near-field beam based on the received signals of the far-field wide beam. Finally, the authors of \cite{wu2023two} developed a two-stage hierarchical beam training method. In the first stage, a coarse direction is obtained through conventional far-field beam training. In this second stage, based on the obtained coarse direction, joint angle-and-distance estimation is carried out over a fine grid in the polar domain. 
    
    \item \textbf{Wideband systems}: For wideband beam training, the near-field beam split effect has to be considered. Unlike in the data transmission stage, where the beam split effect leads to a performance loss, it has the potential to speed up beam training as a single beam can cover several directions or locations at different OFDM subcarriers. When the TTD-based hybrid beamforming architecture is employed, flexible control of near-field beam split can be achieved by properly designing the time delays. Therefore, the size of the codebook and the complexity of the training process can be significantly reduced. To take advantage of this, a fast wideband near-field beam training method was proposed in \cite{cui2022near}. This method is based on an antenna architecture that only utilizes TTDs to realize the analog beamforming. However, how to design the wideband near-field beam training for the hybrid beamforming architecture employing both TTDs and PSs is still an open research problem.
\end{itemize}

Finally, all the aforementioned beam training methods are based on SPD antennas. For SPD antennas, the analog beamformer is generally subject to a unit-norm constraint. Therefore, the codebooks can be designed based on near-field array response vectors. On the other hand, for CAP antennas realized by metasurfaces, the complex constraints on the analog beamformers, c.f., \eqref{meta:amplitude}-\eqref{meta:phase}, makes the design of near-field beam training challenging, and more research is needed.

\subsection{Discussion and Open Research Problems}
We have introduced several fundamental antenna architectures for near-field beamfocusing and the basic principles of near-field beam training. However, to fully realize the advantages of near-field beamfocusing and beam training in practice, it is essential to overcome several key challenges. In the following, we discuss some of the primary open research problems that require attention:

\begin{itemize}
    \item \textbf{Near-field channel estimation}: Accurate channel estimation plays a key role in guaranteeing the performance of near-field beamfocusing. In conventional far-field communication, channel estimation techniques often rely on the sparsity of the channels in the angular domain. However, for near-field channels, the sparsity in the angular domain no longer holds. Therefore, it is important to unveil the sparsity of near-field channels in an appropriate domain for channel estimation. This task can be challenging, especially considering the additional distance dimension of near-field channels.
    \item \textbf{Near-field beamfocusing with finite-resolution DAC/ADC}: An alternative approach to reducing the complexity of hybrid beamforming is to adopt finite-resolution digital-to-analog converters (DACs) and analog-to-digital converters (ADCs) \cite{heath2016overview}. In this case, the power consumption of the RF chains can be reduced substantially. However, finite-resolution DAC/ADC may result in different design challenges, such as limited signal constellations. How to achieve near-field beamfocusing in this case is an open problem.
    \item \textbf{Multi-functional near-field beamfocusing}: 
    Next-generation wireless networks are envisioned to transcend the communication-only paradigm, incorporating a range of additional functionalities such as computing, sensing, secure transmission, and wireless power transfer. However, integrating such diverse functions poses a challenge as different near-field beamfocusing designs may be required to optimize performance for each functionality. Addressing this challenge is essential for unlocking the full potential of NFC in next-generation wireless networks and enabling the seamless integration of diverse functionalities.
    \item \textbf{Hybrid-field beamforming and beam training}: In practical scenarios, it is highly probable that users are situated in different field regions of the BS, resulting in a combination of near-field and far-field channels. However, the design of hybrid-field beamforming and beam training techniques accounting for the different channel characteristics of the near-field and far-field regions remains an open research challenge.
    \item \textbf{Dynamic switching between near-field and far-field communications}: Whether a user is located in the near-field or far-field is determined by the aperture of the employed antenna arrays and the frequency band. This classification is generally fixed in the current system designs, which poses a challenge for dynamic utilization of the benefits offered by near-field and far-field communications, namely high communication rate and low complexity, respectively. To overcome this challenge, it is imperative to develop new strategies that allow dynamic switching between these regions, thus enabling the exploitation of the full potential of near-field and far-field communications.
\end{itemize}

\begin{figure}[!t]
\setlength{\abovecaptionskip}{0pt}
\centering
\includegraphics[width=0.4\textwidth]{./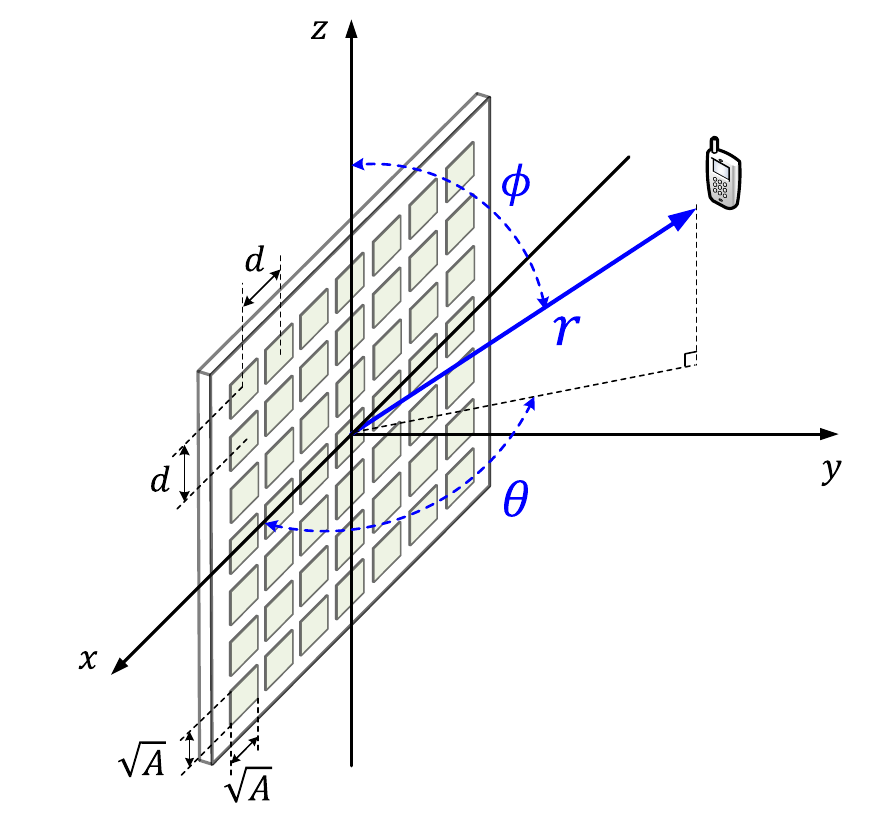}
\caption{System layout for NFC under LoS channels.}
\label{LoS_3D_Model}
\end{figure}
\section{Near-Field Performance Analysis}\label{Sec:per}
In this section, we provide a comprehensive performance analysis of NFC based on the near-field channel models discussed in Section \ref{NFC_Tutorial_Channel_Model}. We commence with the performance analysis for basic free-space LoS propagation and then shift our attention to statistical multipath channel models. For LoS channels, the received SNR and the corresponding power scaling law are analyzed for both SPD and CAP antennas. The derived results contribute to a deeper understanding of the performance limits of NFC. For multipath channels, we provide a general analytical framework for three typical performance metrics, namely the outage probability (OP), ergodic channel capacity (ECC), and ergodic mutual information (EMI).
\subsection{Performance Analysis for LoS Near-Field Channels}
\subsubsection{System Model}
As illustrated in {\figurename} {\ref{LoS_3D_Model}}, we consider a typical near-field MISO channel, where the BS is equipped with an UPA\footnote{Since UPA includes ULA as a special case, the results derived in this subsection can be straightforwardly extended to the case when the BS is equipped with an ULA.} containing $N\gg1$ antenna elements and the user is a single-antenna device. The UPA is placed on the $x$-$z$ plane and centered at the origin. Here, $N=N_{x}N_{z}$, where $N_{x}$ and $N_{z}$ denote the number of antenna elements along the $x$- and $z$-axes, respectively. For the sake of brevity, we assume that $N_{x}$ and $N_{z}$ are both odd numbers with $N_x=2\tilde{N}_x+1$ and $N_z=2\tilde{N}_z+1$. The physical dimensions of each BS antenna element along the $x$- and $z$-axes are given by $\sqrt{A}$, and the inter-element distance is $d$, where $d\geq\sqrt{A}$. It is worth noting that when $d=\sqrt{A}$, the SPD UPA turns into a CAP surface. The central location of the $(m,n)$-th BS antenna element is denoted by ${\mathbf{s}}_{m,n}=[nd,0,md]^{\mathsf{T}}$, where $n\in{\mathcal{N}}_x\triangleq\left\{-\tilde{N}_x,...,\tilde{N}_x\right\}$ and $m\in{\mathcal{N}}_z\triangleq\left\{-\tilde{N}_z,...,\tilde{N}_z\right\}$. As a result, the physical dimensions of the UPA along the $x$- and $z$-axes are $L_x\approx N_xd$ and $L_z\approx N_zd$, respectively. Let $e_a\in(0,1]$ denote the aperture efficiency of each antenna element. An antenna's aperture efficiency is defined as the ratio between the antenna's effective aperture and the antenna's physical aperture. By definition, the effective antenna aperture of each antenna element is given by $A_e=Ae_a$. The aperture efficiency is a dimensionless parameter between 0 and 1 that measures how close the antenna comes to using all the radio wave power intersecting its physical aperture \cite{Lu2022communicating}. If the aperture efficiency were 100\%, then all the wave's power falling on the antenna's physical aperture would be converted to electrical power delivered to the load attached to its output terminals. For the hypothetical isotropic antenna element, we have $A=A_e=\frac{\lambda^2}{4\pi}$ \cite{Lu2022communicating}.

As for the user, let $r$ denote its distance from the center of the antenna array, and $\theta\in[0,\pi]$ and $\phi\in[0,\pi]$ denote the azimuth and elevation angles, respectively, cf. {\figurename} {\ref{LoS_3D_Model}}. Consequently, the user location can be expressed as ${\mathbf{r}}=[r\Phi,r\Psi,r\Omega]^{\mathsf{T}}$, where $\Phi\triangleq\sin{\phi}\cos{\theta}$, $\Psi\triangleq\sin\phi\sin\theta$, and $\Omega\triangleq\cos{\phi}$. Moreover, we assume that the user is equipped with a hypothetical isotropic antenna element to receive the incoming signals and the receiving-mode polarization vector is fixed to be ${\bm\rho}\in{\mathbb C}^{3\times 1}$.

We next evaluate the performance of the considered MISO near-field channel by analyzing the received SNR at the user. Specifically, the SNR achieved by SPD antennas will be examined under the USW, NUSW, and the general near-field channel models. Subsequently, the SNR achieved by CAP antennas will be investigated.

\subsubsection{SNR Analysis for SPD Antennas}
When the BS is equipped with SPD antennas, the received signal at the user can be expressed as follows:
\begin{align}
y=\sqrt{p}{\mathbf{h}}^{\mathsf{T}}{\mathbf{w}}s+{n},
\end{align}
where $s\in{\mathbb C}$ is the transmitted data symbol with zero mean and unit variance, ${\mathbf{h}}\in{\mathbb C}^{N\times1}$ is the channel vector between the user and the BS, $p$ is the transmit power, ${\mathbf{w}}$ is the beamforming vector with $\lVert{\mathbf{w}}\rVert^2=1$, and ${{n}}\in{\mathcal{CN}}({{0}},\sigma^2)$ is AWGN. We assume that the BS has perfect knowledge of ${\mathbf{h}}$ and exploits maximum ratio transmission (MRT) beamforming to maximize the received SNR of the user, i.e., ${\mathbf{w}}$ is given by
\begin{align}
{\mathbf{w}}=\frac{{\mathbf{h}}^{*}}{\lVert{\mathbf{h}}\rVert}.
\end{align}
Therefore, the received SNR is given as follows:
\begin{align}\label{Section_Performance_Analysis_Statistical_Channel_MISO_SNR_Expression}
\gamma=\frac{p}{\sigma^2}\lVert{\mathbf{h}}\rVert^2.
\end{align}
For LoS channels, the channel gain $\lVert{\mathbf{h}}\rVert^2$ can be expressed as follows \cite{Lu2022communicating,Bjornson2020power}:
\begin{equation}\label{Section_Performance_Analysis_SNR_Most_General_Expression}
\begin{split}
\lVert{\mathbf{h}}\rVert^2=\sum\nolimits_{n\in{\mathcal{N}}_x}\sum\nolimits_{m\in{\mathcal{N}}_z}\lvert h_{m,n}^{i}({\mathbf{r}}) \rvert^2,
\end{split}
\end{equation}
where $h_{m,n}^{i}({\mathbf{r}})$ denotes the effective channel coefficient from the $(m,n)$-th BS antenna element to the user for $i\in\{{\rm U},{\rm N},{\rm G}\}$. Section \ref{NFC_Tutorial_Channel_Model} provides analytical expressions to determine $\lvert h_{m,n}^{i}({\mathbf{r}}) \rvert^2$ for several near-field LoS channel models including the USW model ($i={\rm{U}}$, Eq. \eqref{USW_Model_Channel_Coefficient}), the NUSW model ($i={\rm{N}}$, Eq. \eqref{NUSW_Model_Channel_Coefficient}), and the introduced general model ($i={\rm{G}}$, Eq. \eqref{proposed_h}). For each LoS model, we derive a closed-form expression for the received SNR, based on which the power scaling law in terms of the number of antennas, $N$, can be unveiled.

$\bullet$ \emph{\textbf{USW Channel Model:}} For the USW model ($i={\rm{U}}$), $\left\lVert h_{m,n}^{\rm{U}}({\mathbf{r}})\right\rVert^2$ can be derived from \eqref{USW_Model_Channel_Coefficient}, and the received SNR is provided in the following theorem.

\begin{theorem}\label{Section_Performance_Analysis_SNR_USW_Expression_Theorem}
The received SNR for the USW model is
\begin{equation}\label{Section_Performance_Analysis_SNR_USW_Expression}
\begin{split}
\gamma_{\rm{USW}}=\frac{p}{\sigma^2}\beta_0^2 N,
\end{split}
\end{equation}
where $\beta_0=\sqrt{A e_a \frac{1}{4\pi r^2}
G_1({\mathbf{0}},{\mathbf{r}})G_2({\mathbf{0}},{\mathbf{r}})}$.
\end{theorem}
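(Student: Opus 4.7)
The plan is to exploit the defining assumption of the USW model, namely that the channel amplitude is approximately uniform across the array aperture, to reduce the sum in \eqref{Section_Performance_Analysis_SNR_Most_General_Expression} to a trivial multiplication by $N$. The structural idea is that once we accept that every per-element channel coefficient shares the common amplitude $\beta_0$, the phase factors $e^{-j\frac{2\pi}{\lambda}\lVert \mathbf{r}-\mathbf{s}_{m,n}\rVert}$ disappear upon taking squared magnitudes, and after MRT beamforming the SNR collapses to $\frac{p}{\sigma^{2}}\beta_{0}^{2}N$.

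Concretely, I would proceed in three short steps. First, starting from \eqref{Section_Performance_Analysis_Statistical_Channel_MISO_SNR_Expression} and \eqref{Section_Performance_Analysis_SNR_Most_General_Expression}, I would write
\begin{equation}
\gamma = \frac{p}{\sigma^{2}} \sum_{n\in\mathcal{N}_x}\sum_{m\in\mathcal{N}_z} \bigl| h_{m,n}^{\rm U}(\mathbf{r}) \bigr|^{2}.
\end{equation}
Second, I would argue that in the USW regime the user distance $r$ exceeds the uniform-power distance $r_{\mathrm{UPD}}$ defined in Section~\ref{sub:UPD}, so the three amplitude components (free-space path loss, effective aperture loss $G_1$, polarization loss $G_2$) are all essentially constant across the aperture and can be evaluated at the central element $\mathbf{s}_0 = \mathbf{0}$. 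Incorporating the per-element physical aperture $A$ and efficiency $e_a$ that convert incident power density into received power, this yields
\begin{equation}
\bigl| h_{m,n}^{\rm U}(\mathbf{r}) \bigr|^{2} \approx A e_a \frac{G_1(\mathbf{0},\mathbf{r}) G_2(\mathbf{0},\mathbf{r})}{4\pi r^{2}} = \beta_{0}^{2},
\end{equation}
independent of the indices $(m,n)$. Third, substituting back and noting that $|\mathcal{N}_x|\cdot|\mathcal{N}_z| = N_x N_z = N$, the double sum collapses to $N\beta_{0}^{2}$, giving \eqref{Section_Performance_Analysis_SNR_USW_Expression} at once.

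The main conceptual obstacle is really only the amplitude-uniformity justification in the second step: one must carefully invoke the USW hypothesis so that both $G_1(\mathbf{s}_{m,n},\mathbf{r})\approx G_1(\mathbf{0},\mathbf{r})$ and $G_2(\mathbf{s}_{m,n},\mathbf{r})\approx G_2(\mathbf{0},\mathbf{r})$, as well as $\lVert \mathbf{r}-\mathbf{s}_{m,n}\rVert \approx r$ in the path-loss denominator (whereas the corresponding phase $\frac{2\pi}{\lambda}\lVert \mathbf{r}-\mathbf{s}_{m,n}\rVert$ is \emph{not} uniform and must retain its spherical-wave form — this is precisely why MRT beamforming is needed to coherently combine the phases before the squared norm is computed). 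Once this amplitude-uniformity is accepted, the remaining algebra is immediate and yields the linear power-scaling law $\gamma_{\rm USW}\propto N$, as claimed.
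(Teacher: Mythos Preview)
Your proposal is correct and follows essentially the same approach as the paper: invoke the USW uniform-amplitude assumption to replace every per-element squared gain by the common value $\beta_0^2$ and then sum over the $N$ elements. The only minor addition in the paper's version is that it explicitly derives the factor $Ae_a$ by integrating the point-wise power gain over the physical surface region $\mathcal{S}_{m,n}$ of each element and invoking $r\gg\sqrt{A}$ to pull the integrand out as a constant, whereas you simply state this factor as the aperture-to-power conversion; conceptually, however, the two arguments coincide.
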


\begin{proof}
Please refer to Appendix \ref{Section_Performance_Analysis_SNR_USW_Expression_Theorem_Proof}.
\end{proof}

\begin{remark}
Considering the terms appearing in $\beta_0$, $G_1({\mathbf{0}},{\mathbf{r}})$ and $G_2({\mathbf{0}},{\mathbf{r}})$ model the effective aperture loss and the polarization loss, respectively, whose expressions are given by \eqref{p_proj_general} and \eqref{eq_proj}, respectively. Moreover, $ \frac{A e_a}{4\pi r^2}$ models the free-space path loss. As mentioned in Section \ref{NFC_Tutorial_Channel_Model}, the near-field channel gain is mainly determined by the free-space path loss, and thus the terms $G_1(\cdot,\cdot)$ and $G_2(\cdot,\cdot)$ are not included in \eqref{USW_Model_Channel_Coefficient} for the sake of brevity. As outlined in Appendix \ref{Section_Performance_Analysis_SNR_USW_Expression_Theorem_Proof}, under the USW channel model, the powers radiated by different transmit antenna elements are affected by identical polarization mismatches, projected antenna apertures, and free-space path losses. In order to provide a general expression for the received SNR, we have kept the terms related to the effective aperture loss and the polarization loss in \textbf{Theorem} \ref{Section_Performance_Analysis_SNR_USW_Expression_Theorem}.
\end{remark}

Next, by letting $N_x, N_z\rightarrow\infty$, i.e., $N=N_xN_z\rightarrow\infty$, the asymptotic SNR for the USW channel model is given in the following corollary.

\begin{corollary}
As $N_x, N_z\rightarrow\infty$ ($N\rightarrow\infty$), the asymptotic SNR for the USW channel model satisfies
\begin{equation}\label{Section_Performance_Analysis_SNR_USW_Asymptotic}
\begin{split}
\lim_{N\rightarrow\infty}\gamma_{\rm{USW}}\simeq{\mathcal{O}}(N).
\end{split}
\end{equation}
\end{corollary}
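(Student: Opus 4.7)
The plan is to derive the corollary as an immediate consequence of Theorem \ref{Section_Performance_Analysis_SNR_USW_Expression_Theorem}, which already supplies the closed-form identity $\gamma_{\rm USW}=\frac{p}{\sigma^2}\beta_0^2 N$. The entire work of the corollary therefore reduces to isolating the $N$-dependence of each factor in this expression and invoking the big-$\mathcal{O}$ definition given in the notation section.

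First, I would inspect the constant $\beta_0=\sqrt{A e_a\,\frac{1}{4\pi r^2}G_1(\mathbf{0},\mathbf{r})G_2(\mathbf{0},\mathbf{r})}$ term by term. The per-element physical aperture $A$ and aperture efficiency $e_a$ are hardware parameters of a single antenna element and do not change when additional elements are added to the UPA. The user-centric distance $r=\lVert\mathbf{r}\rVert$, the effective aperture loss $G_1(\mathbf{0},\mathbf{r})$ in \eqref{p_proj_general}, and the polarization loss $G_2(\mathbf{0},\mathbf{r})$ in \eqref{eq_proj} are all evaluated at the fixed central reference $\mathbf{s}_0=\mathbf{0}$, so none of them carries an index over the array. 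Hence $\beta_0$ is a strictly positive finite constant with respect to $N$ whenever $\mathbf{r}$ is a valid user location in the radiating near-field. The sole $N$-dependent factor in $\gamma_{\rm USW}$ is the trailing $N$, which captures the coherent array gain produced by MRT beamforming under the uniform-amplitude assumption of the USW model.

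The final step is a direct application of the big-$\mathcal{O}$ definition stated in the notation subsection. Because
\begin{equation}
\limsup_{N\rightarrow\infty}\frac{\gamma_{\rm USW}}{N}=\frac{p}{\sigma^2}\beta_0^2<\infty,
\end{equation}
one concludes $\gamma_{\rm USW}=\mathcal{O}(N)$, which is exactly the claim of \eqref{Section_Performance_Analysis_SNR_USW_Asymptotic}.

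The main point worth flagging is a modeling caveat rather than a mathematical obstacle. As $N_x,N_z\rightarrow\infty$ with antenna spacing $d$ held fixed, the physical aperture $L_x\times L_z$ grows without bound, and eventually the uniform-amplitude premise underlying the USW model is violated once the aperture exceeds the uniform-power distance $r_{\rm UPD}$ discussed in Section \ref{sub:UPD}. The corollary should therefore be read as describing the scaling law strictly within the regime of validity of the USW approximation; the NUSW and general-channel analyses that follow in Section \ref{Sec:per} are precisely what will be needed to expose the sub-linear saturation that eventually supersedes this linear behavior.
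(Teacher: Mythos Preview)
Your proposal is correct and mirrors exactly what the paper does: the corollary is stated without a separate proof because it is an immediate consequence of Theorem \ref{Section_Performance_Analysis_SNR_USW_Expression_Theorem}, and your argument that $\beta_0$ is $N$-independent followed by the big-$\mathcal{O}$ definition is precisely the intended (and essentially only) route. Your added modeling caveat about the USW assumption breaking down for large apertures is also aligned with the paper's own Remark following the corollary.
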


\begin{remark}
The result in \eqref{Section_Performance_Analysis_SNR_USW_Asymptotic} suggests that for the USW model, the received SNR increases linearly with the total number of transmit antenna elements. In other words, by increasing the number of antenna elements, it is possible to increase the link gain to any desired level, which may even exceed the transmit power. This thereby breaks the law of conservation of energy. The reason for this behaviour is that when $N$ tends to infinity, the uniform amplitude assumption inherent to the USW model cannot capture the exact physical properties of near-field propagation.
\end{remark}

$\bullet$ \emph{\textbf{NUSW Channel Model:}}
For the NUSW model ($i={\rm{N}}$), $\left\lVert h_{m,n}^{\rm{N}}({\mathbf{r}})\right\rVert^2$ can be derived from \eqref{NUSW_Model_Channel_Coefficient}, and the received SNR is provided in the following theorem.

\begin{theorem}\label{Section_Performance_Analysis_SNR_NUSW_Expression_Theorem}
The received SNR for the NUSW model is given by
\begin{align}\label{Section_Performance_Analysis_SNR_NUSW_Expression}
\gamma_{\rm{NUSW}}=
\frac{p}{\sigma^2}\beta_0^2
\sum_{n\in{\mathcal{N}}_x}\sum_{m\in{\mathcal{N}}_z}\frac{1}{4\pi\lVert{\mathbf{s}}_{m,n}-{\mathbf{r}}\rVert^2},
\end{align}
where $\beta_0=\sqrt{Ae_aG_1({\mathbf{0}},{\mathbf{r}})G_2({\mathbf{0}},{\mathbf{r}})}$.
\end{theorem}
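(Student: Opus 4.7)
The plan is to substitute the NUSW channel coefficients into the general MISO SNR expression \eqref{Section_Performance_Analysis_Statistical_Channel_MISO_SNR_Expression}--\eqref{Section_Performance_Analysis_SNR_Most_General_Expression} and collect the factors that are common to all antenna elements. Starting from $\gamma=\frac{p}{\sigma^2}\lVert{\mathbf{h}}\rVert^2=\frac{p}{\sigma^2}\sum_{n\in{\mathcal{N}}_x}\sum_{m\in{\mathcal{N}}_z}\lvert h_{m,n}^{\mathrm{N}}({\mathbf{r}})\rvert^2$, the goal is to show that the summand factorises into a position-dependent inverse-square distance and a geometry-independent constant $\beta_0^2$.

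The first step is to read off $\lvert h_{m,n}^{\mathrm{N}}({\mathbf{r}})\rvert^2$ from the NUSW channel expression \eqref{NUSW_Model_Channel_Coefficient}, augmented with the transmit effective aperture $A e_a$ of each antenna element together with the effective aperture loss $G_1$ and polarization loss $G_2$. The defining feature of the NUSW model, in contrast to the general model \eqref{proposed_h}, is that the distance factor $1/(4\pi\lVert{\mathbf{s}}_{m,n}-{\mathbf{r}}\rVert^2)$ is kept element-dependent, while $G_1$ and $G_2$ are evaluated once, at the central element $\mathbf{s}_{0,0}=\mathbf{0}$, and treated as constant across the array. Under this approximation, $\lvert h_{m,n}^{\mathrm{N}}({\mathbf{r}})\rvert^2 = A e_a\, G_1(\mathbf{0},\mathbf{r})\, G_2(\mathbf{0},\mathbf{r}) / (4\pi\lVert{\mathbf{s}}_{m,n}-{\mathbf{r}}\rVert^2)$, since the exponential phase term drops out under the modulus. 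The second step is to plug this into the SNR expression, pull the factor $A e_a\, G_1(\mathbf{0},\mathbf{r})\, G_2(\mathbf{0},\mathbf{r}) \triangleq \beta_0^2$ outside the double sum, and conclude.

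The only genuinely subtle point is the justification for pulling $G_1$ and $G_2$ out of the sum while leaving the distance factor inside. This is not a calculation but a modelling choice that distinguishes the NUSW model from the general model of Section II-B: NUSW retains per-element variation in free-space path loss (hence the ``non-uniform power'' terminology) but neglects per-element variation in aperture projection and polarization mismatch. Once this is granted, the argument is purely algebraic, and no attempt needs to be made to simplify the remaining double sum $\sum_n\sum_m 1/(4\pi\lVert{\mathbf{s}}_{m,n}-{\mathbf{r}}\rVert^2)$, whose dependence on the UPA geometry $\mathbf{s}_{m,n}=[nd,0,md]^{\mathsf{T}}$ and the user location $\mathbf{r}$ is left in explicit form, with any closed-form evaluation naturally deferred to the asymptotic regime $N\to\infty$.
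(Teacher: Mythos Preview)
Your proposal is correct and follows essentially the same route as the paper's own proof: augment the NUSW coefficient \eqref{NUSW_Model_Channel_Coefficient} with the per-element area factor $Ae_a$ and the aperture/polarization losses $G_1,G_2$, invoke the NUSW modelling assumption that $G_1$ and $G_2$ are constant across the array (evaluated at ${\mathbf{s}}_0=\mathbf{0}$) while only $G_3({\mathbf{s}}_{m,n},{\mathbf{r}})=\frac{1}{4\pi\lVert{\mathbf{s}}_{m,n}-{\mathbf{r}}\rVert^2}$ varies, and factor the resulting constant $\beta_0^2=Ae_aG_1({\mathbf{0}},{\mathbf{r}})G_2({\mathbf{0}},{\mathbf{r}})$ out of the double sum. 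Your identification of the ``subtle point'' as a modelling choice rather than a derivation is exactly right and mirrors the paper's Remark following the theorem.
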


\begin{proof}
Please refer to Appendix \ref{Section_Performance_Analysis_SNR_NUSW_Expression_Theorem_Proof}.
\end{proof}

\begin{remark}
Similar to \eqref{Section_Performance_Analysis_SNR_USW_Expression}, $G_1({\mathbf{0}},{\mathbf{r}})$ and $G_2({\mathbf{0}},{\mathbf{r}})$ in \eqref{Section_Performance_Analysis_SNR_NUSW_Expression} model the effective aperture loss and the polarization loss, respectively. For the sake of brevity, these two terms are not included in \eqref{NUSW_Model_Channel_Coefficient}. As explained in Appendix \ref{Section_Performance_Analysis_SNR_NUSW_Expression_Theorem_Proof}, under the NUSW channel model, the powers radiated by different transmit antenna elements are affected by the same polarization mismatches and have the same projected antenna apertures. In order to provide a general expression for the received SNR, we have kept the terms related to the effective aperture loss and the polarization loss in Theorem \ref{Section_Performance_Analysis_SNR_NUSW_Expression_Theorem}.
\end{remark}

By letting $N_x, N_z\rightarrow\infty$, the asymptotic SNR for the NUSW model is obtained as follows.

\begin{corollary}\label{Section_Performance_Analysis_SNR_NUSW_Asymptotic_Theorem}
As $N_x, N_z\rightarrow\infty$ ($N\rightarrow\infty$), the asymptotic SNR for the NUSW model satisfies
\begin{equation}\label{Section_Performance_Analysis_SNR_NUSW_Asymptotic}
\begin{split}
\lim_{N\rightarrow\infty}\gamma_{\rm{NUSW}}\simeq\mathcal{O}(\log{N}).
\end{split}
\end{equation}
\end{corollary}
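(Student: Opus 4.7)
The plan is to analyze the double sum appearing in \eqref{Section_Performance_Analysis_SNR_NUSW_Expression}, namely
\[
S_N = \sum_{n \in \mathcal{N}_x}\sum_{m \in \mathcal{N}_z} \frac{1}{4\pi\lVert \mathbf{s}_{m,n} - \mathbf{r} \rVert^2},
\]
and show $S_N = \mathcal{O}(\log N)$ as $N = N_x N_z \to \infty$. Substituting $\mathbf{s}_{m,n} = [nd,0,md]^{\mathsf T}$ and $\mathbf{r} = r[\Phi,\Psi,\Omega]^{\mathsf T}$, the $(m,n)$-th summand reduces to $\frac{1}{4\pi[(nd - r\Phi)^2 + (r\Psi)^2 + (md - r\Omega)^2]}$, which decays like the reciprocal squared in-plane distance from the fixed point $(r\Phi, r\Omega)$ at transverse offset $r\Psi$.

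I would recognize $S_N$ as a Riemann sum with mesh size $d$ approximating
\[
I_N = \frac{1}{4\pi d^2}\int_{-N_z d/2}^{N_z d/2}\int_{-N_x d/2}^{N_x d/2}\frac{\mathrm{d}x\,\mathrm{d}z}{(x-r\Phi)^2 + (r\Psi)^2 + (z-r\Omega)^2}.
\]
Since the integrand is smooth and bounded on the rectangle (the denominator is at least $(r\Psi)^2>0$ for an off-plane user), standard midpoint-rule error estimates yield $S_N = I_N + \mathcal{O}(1)$. After the shift $u = x-r\Phi$, $v = z-r\Omega$, the domain is still a rectangle of dimensions $\Theta(N_x d)\times\Theta(N_z d)$ and the integrand becomes $\frac{1}{u^2+v^2+(r\Psi)^2}$. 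Sandwiching this rectangle between an inscribed disk of radius $R_-\sim\min(N_x,N_z)d$ and a circumscribed disk of radius $R_+\sim\sqrt{N_x^2+N_z^2}\,d$, each disk integral is readily computed in polar coordinates as $\pi\log\!\bigl(1 + R^2/(r\Psi)^2\bigr)$. Both bounds are $\Theta(\log N)$, so $I_N = \Theta(\log N)$, and consequently $\gamma_{\rm NUSW} = \mathcal{O}(\log N)$.

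The main obstacle will be the Riemann-sum-to-integral step: the continuous integrand becomes singular as the transverse offset $r\Psi$ shrinks, so one must verify that the Riemann-sum error is genuinely $\mathcal{O}(1)$ and does not contaminate the leading $\log N$ term. A related edge case is when the user lies on the array plane ($\Psi=0$): the handful of lattice sites $(nd,md)$ closest to $(r\Phi,r\Omega)$ must be peeled off and bounded separately, but since their number is $\mathcal{O}(1)$ their contribution is absorbed into the $\mathcal{O}(1)$ remainder. One should also comment on why this differs fundamentally from the USW case: the extra $1/\lVert\mathbf{s}_{m,n}-\mathbf{r}\rVert^2$ factor converts the sum from linear growth (as in \eqref{Section_Performance_Analysis_SNR_USW_Expression}) into a two-dimensional inverse-square lattice sum, whose logarithmic divergence is dictated by the area element $\rho\,d\rho$ cancelling one power of $1/\rho^2$ in polar coordinates, thereby producing the $\log N$ scaling and respecting conservation of energy.
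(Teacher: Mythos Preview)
Your proposal is correct and follows essentially the same route as the paper: approximate the double sum by a Riemann integral over the array rectangle, sandwich that rectangle between inscribed and circumscribed disks, evaluate each disk integral in polar coordinates to obtain $\pi\log(1+R^2/(r\Psi)^2)$, and apply the squeeze. The paper carries out exactly these steps (using the normalization $\epsilon=d/r$ and dropping the bounded shifts $\Phi,\Omega$ rather than absorbing them via your change of variables), so your write-up would differ only in the extra care you take with the Riemann-sum error and the $\Psi=0$ edge case, which the paper does not address.
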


\begin{proof}
Please refer to Appendix \ref{Section_Performance_Analysis_SNR_NUSW_Asymptotic_Theorem_Proof}.
\end{proof}

\begin{remark}
The result in \eqref{Section_Performance_Analysis_SNR_NUSW_Asymptotic} suggests that by taking into account the non-uniform amplitude, the received SNR for the NUSW model no longer scales linearly with $N$, as was the case for the USW model. Instead, it scales only logarithmically with $N$. However, such a logarithmic scaling law still breaks the law of conservation of energy when $N\rightarrow\infty$. The reason for this is that when $N$ tends to infinity, the variations of the projected apertures and polarization losses across the array elements cannot be neglected, which is not considered in the NUSW model.
\end{remark}

$\bullet$ \emph{\textbf{The General Channel Model:}}
Under the general model ($i=\rm{G}$), $\left\lVert h_{m,n}^{\rm{G}}({\mathbf{r}})\right\rVert^2$ can be derived from \eqref{proposed_h}, and the received SNR is given in the following theorem.

\begin{theorem}\label{Section_Performance_Analysis_SNR_Proposed_Vert_General_Expression_Theorem}
The received SNR for the general model is given by
\begin{equation}\label{Section_Performance_Analysis_SNR_Proposed_Vert_General_Expression}
\begin{split}
\gamma_{\rm{General}}&=
\frac{p}{\sigma^2}Ae_a\sum_{n\in{\mathcal{N}}_x}\sum_{m\in{\mathcal{N}}_z}\frac{G_1({\mathbf{s}}_{m,n},{\mathbf{r}})G_2({\mathbf{s}}_{m,n},{\mathbf{r}})}{4\pi\lVert{\mathbf{s}}_{m,n}-{\mathbf{r}}\rVert^2}.
\end{split}
\end{equation}
\end{theorem}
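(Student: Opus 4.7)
The plan is to derive the SNR expression for the general near-field channel model by essentially following the template established in the proofs of \textbf{Theorems \ref{Section_Performance_Analysis_SNR_USW_Expression_Theorem}} and \textbf{\ref{Section_Performance_Analysis_SNR_NUSW_Expression_Theorem}}, and then observing the crucial structural difference: under the general model, both the effective aperture loss $G_1(\mathbf{s}_{m,n},\mathbf{r})$ and the polarization loss $G_2(\mathbf{s}_{m,n},\mathbf{r})$ depend explicitly on the transmit antenna position $\mathbf{s}_{m,n}$, so they cannot be factored outside the double sum.

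First, I would invoke the MRT beamforming result to write the received SNR in the form $\gamma=\frac{p}{\sigma^{2}}\lVert\mathbf{h}\rVert^{2}$, as in \eqref{Section_Performance_Analysis_Statistical_Channel_MISO_SNR_Expression}, and then apply the per-element decomposition $\lVert\mathbf{h}\rVert^{2}=\sum_{n\in\mathcal{N}_{x}}\sum_{m\in\mathcal{N}_{z}}\lvert h_{m,n}^{\mathrm{G}}(\mathbf{r})\rvert^{2}$ from \eqref{Section_Performance_Analysis_SNR_Most_General_Expression}, using the general-model coefficient given in \eqref{proposed_h}. Because the effective receive aperture is $Ae_{a}$ (the product of the physical aperture of each radiating element and its aperture efficiency), the per-element channel power is scaled by this factor, which is the same mechanism that produced the $\beta_{0}^{2}$ pre-factor in the USW and NUSW theorems.

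Second, I would compute $\lvert h_{m,n}^{\mathrm{G}}(\mathbf{r})\rvert^{2}$. The complex exponential $e^{-j\frac{2\pi}{\lambda}\lVert\mathbf{r}-\mathbf{s}_{m,n}\rVert}$ has unit modulus and is annihilated by taking the squared magnitude, leaving
\begin{equation*}
\lvert h_{m,n}^{\mathrm{G}}(\mathbf{r})\rvert^{2}=\frac{G_{1}(\mathbf{s}_{m,n},\mathbf{r})\,G_{2}(\mathbf{s}_{m,n},\mathbf{r})}{4\pi\lVert\mathbf{s}_{m,n}-\mathbf{r}\rVert^{2}}.
\end{equation*}
Multiplying by the receive aperture factor $Ae_{a}$ and summing over $(m,n)\in\mathcal{N}_{x}\times\mathcal{N}_{z}$ immediately yields \eqref{Section_Performance_Analysis_SNR_Proposed_Vert_General_Expression}.

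There is no real analytical obstacle here; the argument is essentially bookkeeping after \eqref{proposed_h} is in hand. The conceptual content to highlight in the accompanying remark is the contrast with the preceding two theorems: in \textbf{Theorem \ref{Section_Performance_Analysis_SNR_USW_Expression_Theorem}} both $G_{1}$, $G_{2}$, and the distance could be replaced by their values at the array center, pulling $\beta_{0}^{2}$ out of the sum, while in \textbf{Theorem \ref{Section_Performance_Analysis_SNR_NUSW_Expression_Theorem}} only $G_{1}$ and $G_{2}$ could be pulled out. Under the general model, neither is permissible because $G_{1}(\mathbf{s}_{m,n},\mathbf{r})$ captures the per-element projection of the array normal onto the line of sight and $G_{2}(\mathbf{s}_{m,n},\mathbf{r})$ captures the per-element polarization mismatch, both of which vary across the aperture and are precisely the effects that break the non-physical scaling laws of \eqref{Section_Performance_Analysis_SNR_USW_Asymptotic} and \eqref{Section_Performance_Analysis_SNR_NUSW_Asymptotic}. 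I would therefore expect the subsequent corollary (not shown) to derive a bounded asymptotic SNR as $N\to\infty$, consistent with energy conservation.
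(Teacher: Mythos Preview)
Your proposal is correct and follows essentially the same approach as the paper's proof: compute the per-element effective power gain $\lvert h_{m,n}^{\mathrm{G}}(\mathbf{r})\rvert^{2}=Ae_{a}\,G_{1}(\mathbf{s}_{m,n},\mathbf{r})G_{2}(\mathbf{s}_{m,n},\mathbf{r})/(4\pi\lVert\mathbf{s}_{m,n}-\mathbf{r}\rVert^{2})$, emphasize that under the general model all three loss factors vary with $(m,n)$ and hence cannot be pulled outside the sum, and insert into \eqref{Section_Performance_Analysis_SNR_Most_General_Expression}. Your anticipation of the subsequent bounded-SNR corollary is also exactly what the paper establishes next.
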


\begin{proof}
Please refer to Appendix \ref{Section_Performance_Analysis_SNR_Proposed_Vert_General_Expression_Theorem_Proof}.
\end{proof}
Although \eqref{Section_Performance_Analysis_SNR_Proposed_Vert_General_Expression} can be utilized to calculate the SNR, deriving the power scaling law based on this expression is a challenging task. Thus, for mathematical tractability and to gain insights for system design, we consider a simplified case when the receiving-mode polarization vector at the user and the electric current induced in the UPA are both in $x$ direction, i.e., ${\bm\rho}=\hat{\mathbf J}({\mathbf{s}}_{m,n})=[1,0,0]^{\mathsf{T}}$, $\forall m,n$. In this case, the SNR can be approximated as shown in the following corollary.

\begin{corollary}\label{Section_Performance_Analysis_SNR_Proposed_Vert_General_Approximation_Expression_Y_Polar_Corollary}
When ${\bm\rho}=\hat{\mathbf J}({\mathbf{s}}_{m,n})=[1,0,0]^{\mathsf{T}}$, $\forall m,n$, the received SNR for the general model satisfies
\begin{equation}\label{Section_Performance_Analysis_SNR_Proposed_Vert_General_Approximation_Expression_Y_Polar}
\begin{split}
\gamma_{\rm{General}}&\approx\frac{pAe_{{a}}}{4\pi d^2\sigma^2}\sum_{x\in{{\mathcal{X}}_1}}\sum_{z\in{\mathcal{Z}}_1}\\
&\times\left(\frac{\Psi xz}{3(\Psi^2+x^2)\sqrt{\Psi^2+x^2+z^2}}\right.\\
&\left.+\frac{2}{3}\arctan\left(\frac{xz}{\Psi\sqrt{\Psi^2+x^2+z^2}}\right)\right),
\end{split}
\end{equation}
where ${\mathcal{X}}_1=\left\{\frac{N_x}{2}\epsilon-\Phi,\frac{N_x}{2}\epsilon+\Phi\right\}$, ${\mathcal{Z}}_1=\left\{\frac{N_z}{2}\epsilon-\Omega,\frac{N_z}{2}\epsilon+\Omega\right\}$, and $\epsilon=\frac{d}{r}$.
\end{corollary}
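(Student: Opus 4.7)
[Proof sketch for Corollary~\ref{Section_Performance_Analysis_SNR_Proposed_Vert_General_Approximation_Expression_Y_Polar_Corollary}]
The plan is to start from the exact expression \eqref{Section_Performance_Analysis_SNR_Proposed_Vert_General_Expression} in Theorem~\ref{Section_Performance_Analysis_SNR_Proposed_Vert_General_Expression_Theorem} and specialize the two loss factors to the given polarization, then recognize the resulting double sum as a Riemann sum for a double integral that happens to admit a closed-form primitive.

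First, I would instantiate $G_1$ and $G_2$. Since the UPA lies in the $x$--$z$ plane, its normal is $\hat{\mathbf{u}}_s=[0,1,0]^{\mathsf{T}}$, and \eqref{p_proj_general} immediately yields $G_1(\mathbf{s}_{m,n},\mathbf{r})=r\Psi/\lVert\mathbf{r}-\mathbf{s}_{m,n}\rVert$. With $\hat{\mathbf{J}}(\mathbf{s}_{m,n})=\bm{\rho}=[1,0,0]^{\mathsf{T}}$, Lemma~\ref{lemma:polar} gives, after the algebra of collapsing $\lvert\bm{\rho}^{\mathsf T}\mathbf{e}\rvert^2/\lVert\mathbf{e}\rVert^2$, the simple form $G_2(\mathbf{s}_{m,n},\mathbf{r})=1-(r\Phi-nd)^2/\lVert\mathbf{r}-\mathbf{s}_{m,n}\rVert^2=((r\Psi)^2+(r\Omega-md)^2)/\lVert\mathbf{r}-\mathbf{s}_{m,n}\rVert^2$. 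Substituting both back, each summand in \eqref{Section_Performance_Analysis_SNR_Proposed_Vert_General_Expression} reduces to $r\Psi\bigl((r\Psi)^2+(r\Omega-md)^2\bigr)/\bigl(4\pi\lVert\mathbf{r}-\mathbf{s}_{m,n}\rVert^{5}\bigr)$.

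Next, for $d\ll r$ and $N_x,N_z\gg 1$, I would replace the double sum over $(n,m)\in\mathcal{N}_x\times\mathcal{N}_z$ by a double integral via $\sum_{n,m}f(nd,md)\approx d^{-2}\iint f(u,v)\,du\,dv$ over $[-\tilde{N}_xd,\tilde{N}_xd]\times[-\tilde{N}_zd,\tilde{N}_zd]$. The natural substitution $x=(u-r\Phi)/r$, $z=(v-r\Omega)/r$ produces the dimensionless integrand $\Psi(\Psi^2+z^2)/(x^2+\Psi^2+z^2)^{5/2}$ (the $r$-factors cancel against $du\,dv=r^2\,dx\,dz$), integrated over the rectangle $[-(N_x/2)\epsilon-\Phi,(N_x/2)\epsilon-\Phi]\times[-(N_z/2)\epsilon-\Omega,(N_z/2)\epsilon-\Omega]$ with $\epsilon=d/r$. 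Crucially, the integrand is even in both $x$ and $z$, so the asymmetric rectangle splits cleanly into four corner rectangles of the form $[0,X]\times[0,Z]$ with $X\in\mathcal{X}_1$ and $Z\in\mathcal{Z}_1$, which is exactly the double sum structure displayed in \eqref{Section_Performance_Analysis_SNR_Proposed_Vert_General_Approximation_Expression_Y_Polar}.

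The main obstacle — and the only nontrivial calculation — is showing that
\[
H(X,Z)\triangleq\int_0^{X}\!\!\int_0^{Z}\frac{\Psi(\Psi^2+z^2)}{(x^2+\Psi^2+z^2)^{5/2}}\,dz\,dx
\]
equals the bracketed expression in \eqref{Section_Performance_Analysis_SNR_Proposed_Vert_General_Approximation_Expression_Y_Polar}. I would perform the inner $z$-integral first by writing $\Psi^2+z^2=(x^2+\Psi^2+z^2)-x^2$ and using the standard antiderivatives $\int(A^2+z^2)^{-3/2}dz=z/(A^2\sqrt{A^2+z^2})$ and $\int(A^2+z^2)^{-5/2}dz=z/\bigl(3A^2(A^2+z^2)^{3/2}\bigr)+2z/\bigl(3A^4\sqrt{A^2+z^2}\bigr)$ with $A^2=x^2+\Psi^2$; the outer $x$-integral then separates into a rational piece, whose primitive yields the term $\Psi XZ/\bigl(3(\Psi^2+X^2)\sqrt{\Psi^2+X^2+Z^2}\bigr)$, and a piece whose primitive is identified as $(2/3)\arctan\bigl(XZ/(\Psi\sqrt{\Psi^2+X^2+Z^2})\bigr)$ through the identity $\Psi^2(\Psi^2+X^2+Z^2)+X^2Z^2=(\Psi^2+X^2)(\Psi^2+Z^2)$. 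Differentiating the proposed $H(X,Z)$ with respect to $Z$ and then $X$ recovers the integrand (I expect this verification route to be the cleanest), and noting $H(0,Z)=H(X,0)=0$ confirms that the four-corner sum reproduces the integral over the full rectangle. Multiplying by the prefactor $\frac{pAe_a}{4\pi d^2\sigma^2}$ that remains after the change of variables completes the proof.
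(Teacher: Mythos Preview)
Your proposal is correct and follows essentially the same approach as the paper: specialize $G_1$ and $G_2$ for the $x$-polarized case, approximate the double sum by a double integral via the same Riemann-sum argument used for the NUSW model, and evaluate the resulting integral in closed form using the standard antiderivatives listed in the paper as \eqref{Section_Performance_Analysis_Integral_Helper_1}--\eqref{Section_Performance_Analysis_Integral_Helper_3}. Your explicit use of even symmetry in the shifted variables to explain the four-corner sum over $\mathcal{X}_1\times\mathcal{Z}_1$, and the suggestion to verify $H(X,Z)$ by partial differentiation, are helpful elaborations the paper leaves implicit, but the underlying route is the same.
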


\begin{proof}
Please refer to Appendix \ref{Section_Performance_Analysis_SNR_Proposed_Vert_General_Approximation_Expression_Y_Polar_Corollary_Proof}.
\end{proof}
We further investigate the asymptotic SNR when $N_x, N_z\rightarrow\infty$, which is given in the following corollary.

\begin{figure}[!t]
    \setlength{\abovecaptionskip}{0pt}
    \centering
    \includegraphics[width=0.4\textwidth]{./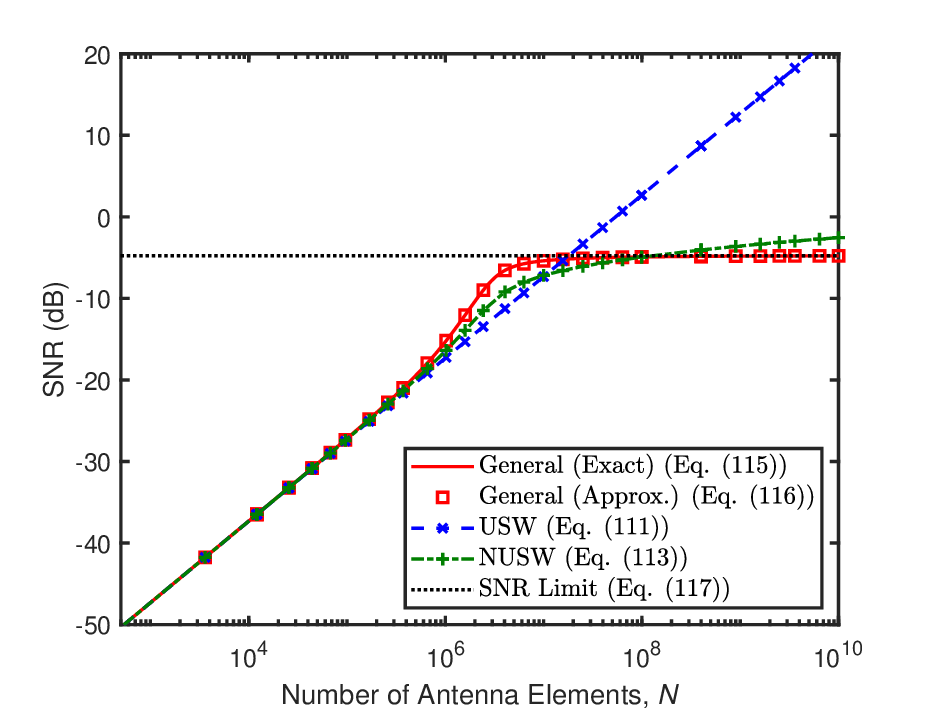}
    \caption{Comparison of SNRs for different channel models versus the number of antennas $N$ for a UPA with SPD antennas. The system is operating at a frequency of $28$ GHz. $\frac{p}{\sigma^2}=0$ dB, $N_x=N_z=\sqrt{N}$, $(\theta,\phi)=(\frac{\pi}{6},\frac{\pi}{6})$, $d=\frac{\lambda}{2}$, $r=5$ m, $A=\frac{\lambda^2}{4\pi}$, $e_a=1$, ${\bm\rho}=\hat{\mathbf J}({\mathbf{s}}_{m_x,m_z})=[1,0,0]^{\mathsf{T}}$, $\forall m_x,m_z$.}
    \label{Performance Analysis Figure: SNR_Discrete}
\end{figure}

\begin{corollary}\label{Section_Performance_Analysis_SNR_Proposed_Asymptotic_Cororllary}
Let ${\bm\rho}=\hat{\mathbf J}({\mathbf{s}}_{m,n})=[1,0,0]^{\mathsf{T}}$, $\forall m,n$. As $N_x, N_z\rightarrow\infty$ ($N\rightarrow\infty$), the asymptotic SNR for the general channel model satisfies
\begin{equation}\label{Section_Performance_Analysis_SNR_Proposed_Asymptotic}
\begin{split}
\lim_{N\rightarrow\infty}\gamma_{\rm{General}}=\frac{pAe_{{a}}}{4\pi d^2\sigma^2}\left(\frac{2}{3}\cdot\frac{\pi}{2}\cdot4\right)=\frac{pAe_{{a}}}{3 \sigma^2d^2}.
\end{split}
\end{equation}
\end{corollary}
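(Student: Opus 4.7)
The plan is to build directly on \textbf{Corollary \ref{Section_Performance_Analysis_SNR_Proposed_Vert_General_Approximation_Expression_Y_Polar_Corollary}}, which already expresses $\gamma_{\rm{General}}$ as a finite sum over the four ``corner'' arguments indexed by $\mathcal{X}_1$ and $\mathcal{Z}_1$. Since $\epsilon = d/r$ is fixed, letting $N_x, N_z \to \infty$ forces every element of $\mathcal{X}_1$ and $\mathcal{Z}_1$ to grow without bound (the shifts by $\pm \Phi$ and $\pm\Omega$ become negligible). The problem therefore reduces to studying the joint large-argument behaviour of the two functions inside the summand.

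First, I would analyse the rational term $\frac{\Psi x z}{3(\Psi^2+x^2)\sqrt{\Psi^2+x^2+z^2}}$. As $x,z\to\infty$, the factor $\Psi^2+x^2$ in the denominator is dominated by $x^2$, and $\sqrt{\Psi^2+x^2+z^2}$ is of order $\sqrt{x^2+z^2}$, so the whole expression behaves like $\Psi z / (3 x \sqrt{x^2+z^2})$, which vanishes. Hence each of the four corner contributions of this term tends to zero. Second, I would handle the arctangent term $\frac{2}{3}\arctan\bigl(\frac{xz}{\Psi\sqrt{\Psi^2+x^2+z^2}}\bigr)$: its argument grows without bound in both $x$ and $z$ (it scales at least like $xz/\sqrt{x^2+z^2}$), so by continuity of $\arctan$ the limit is $\frac{2}{3}\cdot\frac{\pi}{2}=\frac{\pi}{3}$ for each of the four summands.

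Combining the two observations, the double sum in \eqref{Section_Performance_Analysis_SNR_Proposed_Vert_General_Approximation_Expression_Y_Polar} converges to $4\cdot\frac{\pi}{3}=\frac{4\pi}{3}$, and multiplying by the prefactor $\frac{pAe_a}{4\pi d^2\sigma^2}$ gives exactly $\frac{pAe_a}{3\sigma^2 d^2}$, as claimed. The grouping $\frac{2}{3}\cdot\frac{\pi}{2}\cdot 4$ in \eqref{Section_Performance_Analysis_SNR_Proposed_Asymptotic} is meant to make this bookkeeping transparent.

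The only subtle point — and therefore the ``hard part'' — is justifying that one may pass to the limit termwise. Since the sum has only four terms (not an infinite series), no dominated-convergence machinery is required: one simply takes the pointwise limit of each bounded summand. The more delicate issue is that \textbf{Corollary \ref{Section_Performance_Analysis_SNR_Proposed_Vert_General_Approximation_Expression_Y_Polar_Corollary}} is an \emph{approximation} (a Riemann-sum/integral replacement for the original discrete sum), so strictly speaking one should argue that the approximation error is uniform in $N$ and vanishes as $d$ is held fixed; this is inherited from the derivation in Appendix \ref{Section_Performance_Analysis_SNR_Proposed_Vert_General_Approximation_Expression_Y_Polar_Corollary_Proof} and need not be redone here. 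Physically, this asymptote reflects energy conservation: the aperture captures at most the power density integrated over the solid angle $2\pi$ of the half-space in front of the array, so the SNR saturates rather than growing unboundedly, in sharp contrast to the $\mathcal O(N)$ and $\mathcal O(\log N)$ scaling laws obtained for the USW and NUSW models.
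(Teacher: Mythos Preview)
Your proposal is correct and follows exactly the same route as the paper: the paper's proof simply states the two limits $\lim_{N_x,N_z\to\infty}\frac{\Psi xz}{(\Psi^2+x^2)\sqrt{\Psi^2+x^2+z^2}}=0$ and $\lim_{N_x,N_z\to\infty}\arctan\bigl(\frac{xz}{\Psi\sqrt{\Psi^2+x^2+z^2}}\bigr)=\frac{\pi}{2}$ for $x\in\mathcal{X}_1$, $z\in\mathcal{Z}_1$, and then reads off the constant from~\eqref{Section_Performance_Analysis_SNR_Proposed_Vert_General_Approximation_Expression_Y_Polar}. Your additional remarks on the uniformity of the Riemann-sum approximation and the physical interpretation go beyond what the paper provides but do not change the argument.
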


\begin{proof}
This corollary can be directly obtained by using the following results
\begin{align}
&\lim_{N_x,N_z\rightarrow\infty}\frac{\Psi xz}{(\Psi^2+x^2)\sqrt{\Psi^2+x^2+z^2}}=0,\\
&\lim_{N_x,N_z\rightarrow\infty}\arctan\left(\frac{xz}{\Psi\sqrt{\Psi^2+x^2+z^2}}\right)=\frac{\pi}{2},
\end{align}
for $x\in{\mathcal{X}}_1$ and $z\in{\mathcal{Z}}_1$.
\end{proof}
Recalling the fact that $d\geq\sqrt{A}$ and $e_a\in(0,1]$ yields
\begin{align}\label{Section_Performance_Analysis_SNR_Proposed_Asymptotic_Limiting_Value}
\frac{pAe_{a}}{3 \sigma^2d^2}\leq\frac{pe_{a}}{3 \sigma^2}\leq\frac{p}{3 \sigma^2}<\frac{p}{\sigma^2},
\end{align}
and thus $\lim_{M\rightarrow\infty}\gamma_{\rm{General}}<\frac{p}{\sigma^2}$.

\begin{remark}
The results in \eqref{Section_Performance_Analysis_SNR_Proposed_Asymptotic} suggest that for the general channel model, the asymptotic SNR approaches a constant value $\frac{pAe_{{a}}}{3 \sigma^2d^2}$ as the UPA grows in size rather than increasing unboundedly as for the USW and NUSW channel models. Eq. \eqref{Section_Performance_Analysis_SNR_Proposed_Asymptotic_Limiting_Value} further shows that even with an infinitely large array, only at most $\frac{1}{3}$ of the total transmitted power can be received by the user, validating that $\gamma_{\rm{General}}$ satisfies the law of conservation of energy when $N\rightarrow\infty$. An intuitive explanation for why the received power is only one-third of the transmitted power, even though the UPA is infinitely large, is that each newly added BS antenna is deployed further away from the user, which reduces the effective area and increases the polarization loss \cite{Bjornson2020power}.
\end{remark}

\begin{remark}
The array occupation ratio is defined as $\zeta\triangleq\frac{A}{d^2}\in(0,1]$ and measures the fraction of the total UPA area that is occupied by array elements \cite{Lu2022communicating}. The result in \eqref{Section_Performance_Analysis_SNR_Proposed_Asymptotic} suggests that the asymptotic SNR increases linearly with $\zeta$, because more effective antenna area is available to radiate the signal power for a larger $\zeta$. As stated before, an SPD UPA becomes a CAP surface when $\zeta=1$. In this case, the asymptotic SNR is given by $\frac{pe_{a}}{3 \sigma^2}$, which will be further validated in Section \ref{Section_SNR Analysis for CAP-Antennas}.
\end{remark}

$\bullet$ \emph{\textbf{Numerical Results:}} To further verify our results, we show the SNRs obtained for the different channel models versus $N$ in {\figurename} {\ref{Performance Analysis Figure: SNR_Discrete}}. Particularly, $\gamma_{\rm{USW}}$, $\gamma_{\rm{NUSW}}$, $\gamma_{\rm{General}}$ (exact), and $\gamma_{\rm{General}}$ (approximation) are obtained with \eqref{Section_Performance_Analysis_SNR_USW_Expression}, \eqref{Section_Performance_Analysis_SNR_NUSW_Expression}, \eqref{Section_Performance_Analysis_SNR_Proposed_Vert_General_Expression}, and \eqref{Section_Performance_Analysis_SNR_Proposed_Vert_General_Approximation_Expression_Y_Polar}, respectively. The asymptotic SNR limit given in \eqref{Section_Performance_Analysis_SNR_Proposed_Asymptotic} is also included. As can be observed, for small and moderate $N$, the SNRs obtained for all models increase linearly with $N$. This is because, in this case, the user is located in the far field, where all considered models are accurate. However, for a sufficiently large $N$, the projected antenna apertures and polarization losses vary across the transmit antenna array. In this case, $\gamma_{\rm{USW}}$ and $\gamma_{\rm{NUSW}}$ violate the law of conservation of energy and approach infinity. By contrast, {\figurename} {\ref{Performance Analysis Figure: SNR_Discrete}} confirms that when $N\rightarrow\infty$, the received SNR obtained for the general model, i.e., $\gamma_{\rm{General}}$, approaches a constant, i.e., the SNR limit, and obeys the law of conservation of energy. Furthermore, {\figurename} {\ref{Performance Analysis Figure: SNR_Discrete}} shows that $N=10^6$ antennas are needed before the difference between $\gamma_{\rm{General}}$ and $\gamma_{\rm{USW}}$ (or $\gamma_{\rm{NUSW}}$) becomes noticeable. In this case, the array size is $\text{5.35}~\text{m}\times\text{5.35}~\text{m}$ which is a realistic size for future conformal arrays, e.g., deployed on facades of buildings. In conclusion, although the USW and NUSW models are applicable in some NFC application scenarios, they are not suitable for studying the asymptotic performance in the limit of large $N$.

\subsubsection{SNR Analysis for CAP Antennas}\label{Section_SNR Analysis for CAP-Antennas}
The above results obtained for SPD antennas can be directly extended to the case of CAP antennas. In the sequel, we assume that the UPA illustrated in {\figurename} {\ref{LoS_3D_Model}} is a CAP surface of size $L_x\times L_z$, which is placed on the $x$-$z$ plane and centered at the origin. The signal received by the user is given by
\begin{align}
y=\sqrt{p}h_{\rm{CAP}}({\mathbf{0}},{\mathbf{r}})s+n,
\end{align}
where $h_{\rm{CAP}}({\mathbf{0}},{\mathbf{r}})\in{\mathbb C}$ is the effective channel between the transmit CAP surface and the user. Therefore, the received SNR is given by
\begin{align}\label{Section_Performance_Analysis_SNR_Most_General_Expression_CAP}
\gamma=\frac{p}{\sigma^2}\lvert h_{\rm{CAP}}({\mathbf{0}},{\mathbf{r}}) \rvert^2,
\end{align}
where $\lvert h_{\rm{CAP}}({\mathbf{0}},{\mathbf{r}}) \rvert^2$ is the effective power gain. On the basis of \eqref{Section_Performance_Analysis_SNR_Most_General_Expression_CAP}, the received SNR for the USW, NUSW, and general channel models for CAP antennas can be derived, based on which the power scaling law in terms of the surface size, $S_{\rm{CAP}}\triangleq L_xL_z$, can be unveiled.

$\bullet$ \emph{\textbf{USW Channel Model:}}
We commence with the USW model for CAP antennas, for which the received SNR can be calculated as follows.

\begin{theorem}\label{Section_Performance_Analysis_SNR_USW_Expression_CAP_Theorem}
The received SNR for the USW model for CAP antennas is given by
\begin{equation}\label{Section_Performance_Analysis_SNR_USW_Expression_CAP}
\begin{split}
\gamma_{\rm{USW}}^{\rm{CAP}}=\frac{p}{\sigma^2} S_{\rm{CAP}}  \beta_0^2,
\end{split}
\end{equation}
where $\beta_0=\sqrt{e_a \frac{1}{4\pi r^2}
G_1({\mathbf{0}},{\mathbf{r}})G_2({\mathbf{0}},{\mathbf{r}})}$.
\end{theorem}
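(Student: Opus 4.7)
The plan is to prove Theorem \ref{Section_Performance_Analysis_SNR_USW_Expression_CAP_Theorem} by paralleling the argument used for Theorem \ref{Section_Performance_Analysis_SNR_USW_Expression_Theorem} (the SPD USW case), replacing the sum over $N$ discrete elements with an integral over the continuous aperture $S$ of area $S_{\rm{CAP}}$. The key observation is that, under the USW approximation, every infinitesimal element of the CAP surface sees the same amplitude as the central element — differing only in phase — so the coherent combining gain attainable by an optimally matched source current distribution scales simply with the total aperture area.

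First, I would start from the CAP received-signal model $y=\sqrt{p}\,h_{\rm{CAP}}({\mathbf{0}},{\mathbf{r}})s+n$ and express the effective channel as an overlap integral between the normalized source current $\tilde{J}_y(\mathbf{s})$ (satisfying $\int_S |\tilde{J}_y(\mathbf{s})|^2 d\mathbf{s}=1$) and the per-point channel response $h(\mathbf{s},\mathbf{r})$. This is the continuous counterpart of ${\mathbf{h}}^{\mathsf T}\mathbf{w}$ used in the SPD case. Second, I would invoke the USW approximation to write the per-point channel response as
\begin{equation}
h(\mathbf{s},\mathbf{r}) = \sqrt{e_a\,\frac{G_1(\mathbf{0},\mathbf{r})G_2(\mathbf{0},\mathbf{r})}{4\pi r^2}}\, e^{-j\frac{2\pi}{\lambda}\lVert\mathbf{r}-\mathbf{s}\rVert} = \beta_0\, e^{-j\frac{2\pi}{\lambda}\lVert\mathbf{r}-\mathbf{s}\rVert},
\end{equation}
i.e., uniform amplitude $\beta_0$ (with the per-element area factor $A$ from Theorem \ref{Section_Performance_Analysis_SNR_USW_Expression_Theorem} absent since the CAP model is formulated per unit area) and the spherical-wave phase inherited from \eqref{USW_Model_Channel_Coefficient}.

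Third, I would apply the CAP analog of MRT: by Cauchy–Schwarz in $L^2(S)$, the optimal normalized current is $\tilde{J}_y(\mathbf{s}) = h^*(\mathbf{s},\mathbf{r})/\sqrt{\int_S |h(\mathbf{s}',\mathbf{r})|^2 d\mathbf{s}'}$, which gives $|h_{\rm{CAP}}(\mathbf{0},\mathbf{r})|^2 = \int_S |h(\mathbf{s},\mathbf{r})|^2 d\mathbf{s}$. Plugging in the uniform amplitude yields $|h_{\rm{CAP}}|^2 = \beta_0^2 \int_S d\mathbf{s} = S_{\rm{CAP}}\,\beta_0^2$, and substitution into \eqref{Section_Performance_Analysis_SNR_Most_General_Expression_CAP} produces the claimed $\gamma_{\rm{USW}}^{\rm{CAP}} = \frac{p}{\sigma^2} S_{\rm{CAP}}\beta_0^2$.

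The main obstacle will be the bookkeeping between the SPD and CAP normalizations rather than any difficult estimate. Specifically, the SPD formula carries an explicit factor of $A$ inside $\beta_0^2$ (the physical area of each discrete element), whereas in the CAP setting this factor is absorbed into the differential $d\mathbf{s}$, so one has to be careful that the Green's function-based channel expression and the current-distribution normalization $\int_S|\tilde J_y|^2 d\mathbf{s}=1$ are consistent with the per-unit-area amplitude $\beta_0$ used above. Once this is verified (it amounts to checking that the SPD result recovers the CAP one under the identification $NA\to S_{\rm{CAP}}$ and $\zeta\to 1$, as already suggested in the discussion surrounding \eqref{Section_Performance_Analysis_SNR_Proposed_Asymptotic}), the remainder is a short calculation.
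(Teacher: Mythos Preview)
Your proposal is correct and follows essentially the same route as the paper: express the effective power gain as an integral of the per-point squared channel over the aperture, invoke the USW assumption to replace $G_1,G_2,G_3$ by their values at the center ${\mathbf s}_0={\mathbf 0}$, and integrate the resulting constant over ${\mathcal S}$ to pick up the factor $S_{\rm CAP}$. The paper's proof simply \emph{asserts} the integral representation $\lvert h_{\rm CAP}\rvert^2=\int_{\mathcal S}\lvert h({\mathbf s},{\mathbf r})\rvert^2 e_a\,{\rm d}{\mathbf s}$ as the CAP analogue of \eqref{Performance_Analysis_Proof_Effetive_Channel_General}, whereas you additionally justify it via the Cauchy--Schwarz/MRT argument on $L^2(S)$; this is a welcome clarification but not a different strategy.
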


\begin{proof}
Please refer to Appendix \ref{Section_Performance_Analysis_SNR_UPW_Expression_CAP_Theorem_Proof}.
\end{proof}
By setting $L_x, L_z\rightarrow\infty$, i.e., $S_{\rm{CAP}}=L_xL_z\rightarrow\infty$, the asymptotic SNR for the USW model for CAP antennas is obtained and provided in the following corollary.

\begin{corollary}
As $L_x, L_z\rightarrow\infty$ ($S_{\rm{CAP}}\rightarrow\infty$), the asymptotic SNR for the USW model for CAP antennas satisfies
\begin{equation}\label{Section_Performance_Analysis_SNR_USW_Asymptotic_CAP}
\begin{split}
\lim_{S_{\rm{CAP}}\rightarrow\infty}\gamma_{\rm{USW}}^{\rm{CAP}}\simeq{\mathcal{O}}(S_{\rm{CAP}}).
\end{split}
\end{equation}
\end{corollary}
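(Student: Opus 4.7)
The plan is to invoke \textbf{Theorem \ref{Section_Performance_Analysis_SNR_USW_Expression_CAP_Theorem}} directly and observe that the coefficient $\beta_0$ in the closed-form expression $\gamma_{\rm{USW}}^{\rm{CAP}}=\frac{p}{\sigma^2} S_{\rm{CAP}} \beta_0^2$ is independent of the aperture dimensions $L_x$ and $L_z$. Indeed, $\beta_0$ is built from the free-space path loss $\frac{1}{4\pi r^2}$, the aperture efficiency $e_a$, and the reference-point effective-aperture/polarization losses $G_1({\mathbf 0},{\mathbf r})$ and $G_2({\mathbf 0},{\mathbf r})$, all of which are evaluated at the center of the surface and therefore depend only on the user location ${\mathbf r}$, not on the surface size $S_{\rm CAP}=L_xL_z$.

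Given this observation, the verification of the scaling law reduces to a one-line computation. I would write
\begin{equation}
\gamma_{\rm{USW}}^{\rm{CAP}}
=\underbrace{\frac{p\, e_a\, G_1({\mathbf 0},{\mathbf r})\, G_2({\mathbf 0},{\mathbf r})}{4\pi r^2\sigma^2}}_{\text{constant in }S_{\rm CAP}}\cdot S_{\rm CAP},
\end{equation}
and then note that a quantity of the form $c\, S_{\rm CAP}$ with $0<c<\infty$ trivially satisfies $\limsup_{S_{\rm CAP}\to\infty}\frac{c\, S_{\rm CAP}}{S_{\rm CAP}}=c<\infty$, which is precisely the definition of ${\mathcal O}(S_{\rm CAP})$ adopted in the notation section. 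Thus $\lim_{S_{\rm{CAP}}\rightarrow\infty}\gamma_{\rm{USW}}^{\rm{CAP}}\simeq{\mathcal{O}}(S_{\rm{CAP}})$, establishing the claim.

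There is essentially no obstacle here: the entire content is carried by Theorem \ref{Section_Performance_Analysis_SNR_USW_Expression_CAP_Theorem}, whose proof (deferred to the appendix) presumably integrates the uniform USW channel gain $\frac{1}{4\pi r^2}$ over the CAP surface and picks up an $L_xL_z=S_{\rm CAP}$ factor because the integrand does not vary with the source point under the USW assumption. The only conceptual remark worth emphasizing in the proof, paralleling the discussion following the SPD counterpart in \eqref{Section_Performance_Analysis_SNR_USW_Asymptotic}, is that this unbounded linear growth in $S_{\rm CAP}$ violates the conservation of energy as $S_{\rm CAP}\to\infty$; this is not a flaw of the derivation but rather a manifestation of the fact that the USW model ignores variations in effective aperture and polarization loss across a large surface, and therefore should not be relied upon in the asymptotic regime. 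This motivates the subsequent analysis of the NUSW and general CAP models.
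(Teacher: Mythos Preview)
Your proposal is correct and matches the paper's approach exactly: the corollary is stated without a separate proof in the paper because it follows immediately from the closed-form expression in \textbf{Theorem~\ref{Section_Performance_Analysis_SNR_USW_Expression_CAP_Theorem}}, precisely as you argue. Your additional observation that $\beta_0$ depends only on the center-point quantities and not on $L_x,L_z$, together with the physical remark about energy conservation, mirrors the paper's discussion following the analogous SPD result in \eqref{Section_Performance_Analysis_SNR_USW_Asymptotic}.
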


\begin{remark}
The result in \eqref{Section_Performance_Analysis_SNR_USW_Asymptotic_CAP} reveals that, for the USW model, the received SNR for CAP antennas scales linearly with the transmit surface area, which leads to the violation of the law of conservation of energy when $S_{\rm{CAP}}\rightarrow\infty$. The reason for this lies in the fact that when $S_{\rm{CAP}}$ approaches infinity, the uniform amplitude assumed in the USW model cannot capture the exact physical properties of near-field propagation.
\end{remark}

$\bullet$ \emph{\textbf{NUSW Channel Model:}} For the NUSW model for CAP antennas, we have the following result.
\begin{theorem}
The received SNR for the NUSW model for CAP antennas is given by
\begin{equation}\label{Section_Performance_Analysis_SNR_NUSW_Expression_CAP}
\begin{split}
\gamma_{\rm{NUSW}}^{\rm{CAP}}=
\int_{-\frac{L_z}{2}}^{\frac{L_z}{2}}\int_{-\frac{L_x}{2}}^{\frac{L_x}{2}}
\frac{\frac{p}{\sigma^2}\beta_0^2\frac{1}{4\pi r^2}{\rm{d}}x{\rm{d}}z}{\Psi^2+(\frac{x}{r}-\Phi)^2+(\frac{z}{r}-\Omega)^2},
\end{split}
\end{equation}
where $\beta_0=\sqrt{e_a G_1({\mathbf{0}},{\mathbf{r}})G_2({\mathbf{0}},{\mathbf{r}})}$.
\end{theorem}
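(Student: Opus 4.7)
The plan is to adapt the derivation of Theorem~\ref{Section_Performance_Analysis_SNR_NUSW_Expression_Theorem} from the SPD case to CAP antennas by replacing the discrete summation over antenna elements with an integration over the transmit aperture. First I would express the received signal via the Green's function-based channel model in~\eqref{erg}--\eqref{A_222}, specialized to the case where the receiver is a single isotropic element. In this degenerate setting the kernel collapses to $K({\mathbf{s}}_1,{\mathbf{s}}_2)=G_{yy}^{*}({\mathbf{s}}_1,{\mathbf{r}})G_{yy}({\mathbf{s}}_2,{\mathbf{r}})$, so the end-to-end channel gain admits the factorization
\begin{equation}
|h_{\rm CAP}({\mathbf{0}},{\mathbf{r}})|^2 = \Bigl|\int_{V_T} G_{yy}({\mathbf{s}},{\mathbf{r}})\,\tilde{J}_y({\mathbf{s}})\,{\rm d}{\mathbf{s}}\Bigr|^2.
\end{equation}
Invoking Cauchy--Schwarz under the unit-energy constraint on $\tilde{J}_y$---the CAP analogue of MRT---identifies the optimal current $\tilde{J}_y({\mathbf{s}})\propto G_{yy}^{*}({\mathbf{s}},{\mathbf{r}})$ and yields $|h_{\rm CAP}({\mathbf{0}},{\mathbf{r}})|^2=\int_{V_T}|G_{yy}({\mathbf{s}},{\mathbf{r}})|^2\,{\rm d}{\mathbf{s}}$.

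Next I would insert the NUSW form of the squared channel coefficient. Under NUSW, the distance-dependent free-space path loss $1/(4\pi\lVert{\mathbf{s}}-{\mathbf{r}}\rVert^2)$ is retained while the effective aperture and polarization losses are still evaluated at the array centre, exactly as in the justification following Theorem~\ref{Section_Performance_Analysis_SNR_NUSW_Expression_Theorem}. Hence the per-point squared gain equals $e_a G_1({\mathbf{0}},{\mathbf{r}}) G_2({\mathbf{0}},{\mathbf{r}})/(4\pi\lVert{\mathbf{s}}-{\mathbf{r}}\rVert^2)=\beta_0^2/(4\pi\lVert{\mathbf{s}}-{\mathbf{r}}\rVert^2)$. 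Parametrizing the aperture by ${\mathbf{s}}=[x,0,z]^{\mathsf T}$ with $x\in[-L_x/2,L_x/2]$ and $z\in[-L_z/2,L_z/2]$ then converts the volume integral into the planar surface integral over the CAP surface. Finally, I would reduce the denominator to the claimed form via the expansion
\begin{equation}
\lVert{\mathbf{s}}-{\mathbf{r}}\rVert^2 = (x-r\Phi)^2 + (r\Psi)^2 + (z-r\Omega)^2 = r^2\bigl[\Psi^2 + (x/r-\Phi)^2 + (z/r-\Omega)^2\bigr],
\end{equation}
factor $r^{-2}$ out of the integrand, and multiply by $p/\sigma^2$ as per~\eqref{Section_Performance_Analysis_SNR_Most_General_Expression_CAP} to arrive at~\eqref{Section_Performance_Analysis_SNR_NUSW_Expression_CAP}.

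The main subtlety, rather than a genuine obstacle, is the careful bookkeeping in the first step: justifying the isotropic-receiver collapse of the kernel $K$ so that the Cauchy--Schwarz-based ``continuous MRT'' argument is valid, and then confirming that the NUSW approximation to $|G_{yy}|^2$ retains $G_1({\mathbf{0}},{\mathbf{r}})$ and $G_2({\mathbf{0}},{\mathbf{r}})$ (taken at the centre) rather than position-dependent versions thereof---this is precisely what distinguishes the NUSW model from the general model in~\eqref{proposed_h} and what makes $\beta_0$ pull out of the integral. Once this distinction is in place, the remaining calculations are direct continuous-aperture analogues of those used to prove Theorem~\ref{Section_Performance_Analysis_SNR_NUSW_Expression_Theorem}.
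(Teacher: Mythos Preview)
Your proposal is correct and, in its core, follows the same route as the paper: the paper's proof simply says it ``resembles the proofs of Theorem~\ref{Section_Performance_Analysis_SNR_NUSW_Expression_Theorem} and Theorem~\ref{Section_Performance_Analysis_SNR_USW_Expression_CAP_Theorem},'' meaning one takes the CAP surface-integral form of the effective power gain from Appendix~\ref{Section_Performance_Analysis_SNR_UPW_Expression_CAP_Theorem_Proof},
\[
|h_{\rm CAP}({\mathbf 0},{\mathbf r})|^2=\int_{-L_z/2}^{L_z/2}\int_{-L_x/2}^{L_x/2}|h([x,0,z]^{\mathsf T},{\mathbf r})|^2\,e_a\,{\rm d}x\,{\rm d}z,
\]
and then substitutes the NUSW per-point gain exactly as in Appendix~\ref{Section_Performance_Analysis_SNR_NUSW_Expression_Theorem_Proof} (i.e., $G_1,G_2$ frozen at the array centre, $G_3$ varying), followed by the same $r^{-2}$ factorization you describe.

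The one genuine difference is your first step: the paper never invokes the Green's-function kernel or Cauchy--Schwarz to \emph{derive} the surface-integral form---it simply asserts it as the continuous analogue of~\eqref{Performance_Analysis_Proof_Effetive_Channel_General}. Your continuous-MRT argument is a legitimate and more self-contained justification of that assertion, though be careful that the specific $G_{yy}$ of~\eqref{gyy} does not literally contain $e_a,G_1,G_2$; what you are really doing after Cauchy--Schwarz is replacing the Green's-function magnitude by the NUSW model's per-point squared gain, which is fine as a modelling step but is a substitution rather than an identity.
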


\begin{proof}
The proof resembles the proofs of \textbf{Theorem} \ref{Section_Performance_Analysis_SNR_NUSW_Expression_Theorem} and \textbf{Theorem} \ref{Section_Performance_Analysis_SNR_USW_Expression_CAP_Theorem}.
\end{proof}

By setting $L_x, L_z\rightarrow\infty$, the asymptotic SNR for the NUSW model for CAP antennas is obtained and given in the following corollary.

\begin{corollary}
As $L_x, L_z\rightarrow\infty$ ($S_{\rm{CAP}}\rightarrow\infty$), for CAP antennas, the asymptotic SNR for the NUSW model satisfies
\begin{equation}\label{Section_Performance_Analysis_SNR_NUSW_Asymptotic_CAP}
\begin{split}
\lim_{S_{\rm{CAP}}\rightarrow\infty}\gamma_{\rm{NUSW}}^{\rm{CAP}}\simeq{\mathcal{O}}(\log(S_{\rm{CAP}})).
\end{split}
\end{equation}
\end{corollary}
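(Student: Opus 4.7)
The plan is to extract the asymptotic order directly from the explicit double integral in \eqref{Section_Performance_Analysis_SNR_NUSW_Expression_CAP}. I would first absorb the constants and translate variables via $u = x - r\Phi$, $v = z - r\Omega$, turning the integrand into the radially symmetric kernel $(u^2 + v^2 + (r\Psi)^2)^{-1}$ over a rectangle of area $S_{\rm{CAP}} = L_x L_z$, merely shifted by a fixed vector $(r\Phi, r\Omega)$ independent of the aperture size.

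Having reduced the problem to a shifted radial integral, my first approach would be to pass to polar coordinates $u = \rho\cos\alpha$, $v = \rho\sin\alpha$. The elementary antiderivative $\int \frac{\rho\, d\rho}{\rho^2 + (r\Psi)^2} = \tfrac{1}{2}\ln(\rho^2 + (r\Psi)^2)$ gives, upon integrating up to the boundary distance $R(\alpha)$ of the shifted rectangle, a contribution of the form $\tfrac{1}{2}\ln(1 + R^2(\alpha)/(r\Psi)^2)$. When $L_x$ and $L_z$ grow at comparable rates, $R(\alpha) = \Theta(\sqrt{S_{\rm{CAP}}})$ uniformly in $\alpha$, so the subsequent angular integration yields a leading term of order $\log S_{\rm{CAP}}$.

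A more robust alternative, valid even under anisotropic growth of the two aperture dimensions, is to evaluate the iterated integral in closed form. The inner integral is $\int \frac{du}{u^2 + a^2} = \frac{1}{a}\arctan(u/a)$ with $a = \sqrt{v^2 + (r\Psi)^2}$; this is uniformly bounded by $\pi/a$ and approaches $\pi/a$ once $L_x \gg a$. Feeding this into the outer integration and using $\int \frac{dv}{\sqrt{v^2 + (r\Psi)^2}} = 2\,\mathrm{arcsinh}(v/(r\Psi))$, which grows like $\log L_z$, produces a contribution of order $\log L_z$. By the symmetry of the kernel in $u$ and $v$, the same argument with the roles exchanged yields $\log L_x$. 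Combining, the overall scaling is $\log(L_x L_z) = \log(S_{\rm{CAP}})$, as claimed.

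The main obstacle will be tracking the behaviour in the transition region $v \sim L_x$ where the arctangent has not yet saturated to $\pi/2$; fortunately this region contributes only $O(1)$, provided one uses a uniform comparison of the arctangent against its limit. A minor additional issue is that the rigid translation by $(r\Phi, r\Omega)$ shifts the integration limits by quantities independent of $L_x, L_z$ and can be absorbed by a dominated-convergence or squeeze argument, so it does not affect the logarithmic order. The proof essentially parallels the discrete derivation of Corollary~\ref{Section_Performance_Analysis_SNR_NUSW_Asymptotic_Theorem}, with Riemann sums replaced by their continuous analogues.
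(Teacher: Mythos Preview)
Your first approach is precisely the paper's: it refers back to Appendix~\ref{Section_Performance_Analysis_SNR_NUSW_Asymptotic_Theorem_Proof}, where after the same shift to the radially symmetric kernel one sandwiches the rectangle between inscribed and circumscribed disks of radii $R_2=\tfrac12\min(L_x,L_z)$ and $R_1=\tfrac12\sqrt{L_x^2+L_z^2}$, evaluates $\hat f(R)=\pi\ln(1+R^2/(r\Psi)^2)$ on each, and applies the Squeeze Theorem. Your parametrisation by $R(\alpha)$ with $R(\alpha)=\Theta(\sqrt{S_{\rm CAP}})$ is the same squeeze stated pointwise in the angle.

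Your second ``more robust'' alternative, however, has a logical slip in the combining step. From $I\le C\log L_z$ (one order of integration) and $I\le C\log L_x$ (the other) you only get $I=O(\min(\log L_x,\log L_z))$, not $\Theta(\log(L_xL_z))$; you cannot add two upper bounds on the same quantity. In fact the claimed robustness to anisotropic growth is false: if $L_x$ stays fixed while $L_z\to\infty$, the inner $u$-integral behaves like $L_x/v^2$ for large $v$, so the whole integral remains \emph{bounded} even though $S_{\rm CAP}\to\infty$. The corollary, like its discrete counterpart, implicitly takes both $L_x,L_z\to\infty$ at comparable rates, and under that assumption your first approach (and the paper's disk sandwich) already delivers the result cleanly.
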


\begin{proof}
The proof resembles the proof of \textbf{Corollary} \ref{Section_Performance_Analysis_SNR_NUSW_Asymptotic_Theorem}.
\end{proof}

\begin{remark}
The result in \eqref{Section_Performance_Analysis_SNR_NUSW_Asymptotic_CAP} reveals that by taking into account non-uniform amplitudes, the received SNR for the NUSW model increases logarithmically with $S_{\rm{CAP}}$, which differs from the linear scaling law obtained for the USW model. However, since the impact of varying projected apertures and polarization losses is not considered across the CAP, $\gamma_{\rm{NUSW}}^{\rm{CAP}}$ can exceed $\frac{p}{\sigma^2}$ when $S_{\rm{CAP}}\rightarrow\infty$, thereby violating the law of energy conservation.
\end{remark}

$\bullet$ \emph{\textbf{The General Channel Model:}} The SNR expression for the general channel model for CAP antennas is provided in the following theorem.

\begin{theorem}
For CAP antennas, the received SNR for the general channel model is given by
\begin{equation}\label{Section_Performance_Analysis_SNR_Proposed_Genarl_Expression_CAP}
\begin{split}
\gamma_{\rm{General}}^{\rm{CAP}}&=\frac{p}{\sigma^2}e_a \int_{-\frac{L_z}{2}}^{\frac{L_z}{2}}\int_{-\frac{L_x}{2}}^{\frac{L_x}{2}}\frac{1}
{4\pi\lVert{\mathbf{r}}-[x,0,z]\rVert^2}\\
&\times
G_1([x,0,z]^{\mathsf{T}},{\mathbf{r}})
G_2([x,0,z]^{\mathsf{T}},{\mathbf{r}})
{\rm{d}}x{\rm{d}}z.
\end{split}
\end{equation}
\end{theorem}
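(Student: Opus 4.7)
The plan is to obtain the stated SNR expression by following the CAP analogue of the derivation used for the SPD general model in Theorem~\ref{Section_Performance_Analysis_SNR_Proposed_Vert_General_Expression_Theorem}. Starting from the CAP signal model $y=\sqrt{p}\,h_{\rm{CAP}}(\mathbf{0},\mathbf{r})s+n$, the received SNR is $\gamma_{\rm{General}}^{\rm{CAP}}=\frac{p}{\sigma^2}\lvert h_{\rm{CAP}}(\mathbf{0},\mathbf{r})\rvert^2$, so the proof reduces to evaluating $\lvert h_{\rm{CAP}}(\mathbf{0},\mathbf{r})\rvert^2$ under the CAP analogue of MRT beamforming, i.e., the power-maximising matched current distribution across the surface.

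The main step is to assemble the per-point contribution to the CAP channel. I would treat the CAP surface as the limit of an SPD UPA in which each element has physical aperture $A=d^2$ (so that the array occupation ratio $\zeta=A/d^2$ equals unity); the squared per-element channel coefficient carried by the general SPD model then has three multiplicative factors at point $\mathbf{s}_{m,n}$: the free-space path loss $\frac{1}{4\pi\lVert\mathbf{s}_{m,n}-\mathbf{r}\rVert^2}$, the effective aperture loss $G_1(\mathbf{s}_{m,n},\mathbf{r})$ from \eqref{p_proj_general}, and the polarization loss $G_2(\mathbf{s}_{m,n},\mathbf{r})$ from Lemma~\ref{lemma:polar}, together with the aperture-efficiency constant $e_a$. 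The sum over elements in \eqref{Section_Performance_Analysis_SNR_Proposed_Vert_General_Expression} then becomes a Riemann sum of the form $\sum_{m,n}d^{2}f(\mathbf{s}_{m,n})$ that converges, as $d\to 0$ with the surface $[-L_x/2,L_x/2]\times[-L_z/2,L_z/2]$ held fixed, to the double integral claimed in \eqref{Section_Performance_Analysis_SNR_Proposed_Genarl_Expression_CAP}. As an independent cross-check, I would start from the Green's-function formulation in Section~\ref{NFC_Tutorial_Channel_Model}, optimise the normalised current via Cauchy--Schwarz in \eqref{A_222} to obtain $\tilde{J}_y(\mathbf{s})\propto G_{yy}^{*}(\mathbf{s},\mathbf{r})$, collapse the transmit volume onto the flat CAP in the $x$-$z$ plane, and augment the scalar Green's function with the directional corrections that produce $G_1$ (effective aperture projection onto $\hat{\mathbf{u}}_{\mathbf{s}}$) and $G_2$ (polarization mismatch); this recovers the same integrand.

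The hard part will be the bookkeeping of $G_1$ and $G_2$ at the continuum level: I need to verify that the position-dependent effective aperture and polarisation factors identified for a single antenna element transfer point-by-point into the CAP integrand without introducing spurious edge or orientation terms. Since the CAP sits in the $x$-$z$ plane, $\hat{\mathbf{u}}_{\mathbf{s}}$ is constant, which simplifies $G_1$; however, the receiver-side polarisation vector $\bm{\rho}_{\mathrm w}(\mathbf{r})$ and the induced current direction $\hat{\mathbf J}([x,0,z]^{\mathsf{T}})$ must be handled carefully so that Lemma~\ref{lemma:polar} applies directly at every point of the surface rather than only at the discrete centres $\mathbf{s}_{m,n}$. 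Once this accounting is settled, the remaining integration reduces to the routine Riemann-sum or Green's-function manipulations described above, and the proof closes.
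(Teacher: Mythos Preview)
Your proposal is correct and aligns with the paper's approach. The paper's proof simply says it ``resembles the proofs of Theorem~\ref{Section_Performance_Analysis_SNR_Proposed_Vert_General_Expression_Theorem} and Theorem~\ref{Section_Performance_Analysis_SNR_USW_Expression_CAP_Theorem}'': from the latter it takes the CAP effective power gain written \emph{directly} as the surface integral $\lvert h_{\rm CAP}\rvert^2=\int_{\mathcal{S}}\lvert h(\mathbf{s},\mathbf{r})\rvert^2 e_a\,{\rm d}\mathbf{s}$, and from the former it takes the pointwise general-model gain $\lvert h_{\rm G}(\mathbf{s},\mathbf{r})\rvert^2=G_1(\mathbf{s},\mathbf{r})G_2(\mathbf{s},\mathbf{r})/(4\pi\lVert\mathbf{r}-\mathbf{s}\rVert^2)$; combining these gives the stated result immediately. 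Your Riemann-sum route (set $A=d^2$ in \eqref{Section_Performance_Analysis_SNR_Proposed_Vert_General_Expression} and let $d\to0$) reaches the same integral and is equivalent, just slightly less direct; the continuum bookkeeping of $G_1,G_2$ you flag as ``the hard part'' is not actually an issue since both are defined pointwise in $\mathbf{s}$ from the outset, and the Green's-function cross-check is unnecessary here.
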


\begin{proof}
The proof resembles the proofs of \textbf{Theorem} \ref{Section_Performance_Analysis_SNR_Proposed_Vert_General_Expression_Theorem} and \textbf{Theorem} \ref{Section_Performance_Analysis_SNR_USW_Expression_CAP_Theorem}.
\end{proof}

Deriving the power scaling law based on \eqref{Section_Performance_Analysis_SNR_Proposed_Genarl_Expression_CAP} is a challenging task. Therefore, for convenience, in the following corollary, we consider the special case of ${\bm\rho}=\hat{\mathbf J}([x,0,z]^{\mathsf{T}})=[1,0,0]^{\mathsf{T}}$, $\forall x,z$, based on which some interesting conclusions can be drawn.

\begin{corollary}
When ${\bm\rho}=\hat{\mathbf J}([x,0,z]^{\mathsf{T}})=[1,0,0]^{\mathsf{T}}$, $\forall x,z$, the received SNR for the general channel model for CAP antennas satisfies
\begin{equation}\label{Section_Performance_Analysis_SNR_Proposed_Vert_General_Approximation_Expression_Y_Polar_CAP}
\begin{split}
\gamma_{\rm{General}}^{\rm{CAP}}&=\frac{pe_{{a}}}{4\pi\sigma^2}\sum_{x\in{{\mathcal{X}}}_2}\sum_{z\in{\mathcal{Z}}_2}\\
&\times\left(\frac{\Psi xz}{3(\Psi^2+x^2)\sqrt{\Psi^2+x^2+z^2}}\right.\\
&\left.+\frac{2}{3}\arctan\left(\frac{xz}{\Psi\sqrt{\Psi^2+x^2+z^2}}\right)\right),
\end{split}
\end{equation}
where ${\mathcal{X}}_2=\left\{\frac{L_x}{2r}-\Phi,\frac{L_x}{2r}+\Phi\right\}$ and ${\mathcal{Z}}_2=\left\{\frac{L_z}{2r}-\Omega,\frac{L_z}{2r}+\Omega\right\}$. As $L_x,L_z\rightarrow\infty$ ($S_{\rm{CAP}}\rightarrow\infty$), the asymptotic SNR satisfies
\begin{equation}\label{Section_Performance_Analysis_SNR_Proposed_Asymptotic_CAP}
\begin{split}
\lim_{S_{\rm{CAP}}\rightarrow\infty}\gamma_{\rm{General}}^{\rm{CAP}}=\frac{pe_{a}}{4\pi \sigma^2}\left(\frac{2}{3}\cdot\frac{\pi}{2}\cdot4\right)=\frac{p e_{{a}}}{3 \sigma^2}.
\end{split}
\end{equation}
\end{corollary}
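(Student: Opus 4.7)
My plan is to reduce \eqref{Section_Performance_Analysis_SNR_Proposed_Genarl_Expression_CAP} to a scalar rational integral by exploiting the specialised polarisation, then evaluate that integral in closed form, and finally pass to the limit. With ${\bm\rho}=\hat{\mathbf J}=[1,0,0]^{\mathsf T}$ and the surface normal ${\hat{\mathbf u}}_{\mathbf s}=[0,1,0]^{\mathsf T}$, a direct computation from \eqref{p_proj_general} gives $G_1=\frac{r\Psi}{\lVert{\mathbf r}-[x,0,z]^{\mathsf T}\rVert}$. Applying Lemma~\ref{lemma:polar} and using the idempotence of the projection ${\mathbf I}-\hat{\mathbf d}\hat{\mathbf d}^{\mathsf T}$ with $\hat{\mathbf d}=({\mathbf r}-[x,0,z]^{\mathsf T})/\lVert{\mathbf r}-[x,0,z]^{\mathsf T}\rVert$, I obtain $G_2=1-\frac{(r\Phi-x)^2}{\lVert{\mathbf r}-[x,0,z]^{\mathsf T}\rVert^2}$, which rewrites as $\frac{(r\Psi)^2+(r\Omega-z)^2}{\lVert{\mathbf r}-[x,0,z]^{\mathsf T}\rVert^2}$ via the Pythagorean identity $\lVert{\mathbf r}-{\mathbf s}\rVert^2-(r\Phi-x)^2=(r\Psi)^2+(r\Omega-z)^2$. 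The integrand therefore collapses to the purely rational expression $\frac{r\Psi[(r\Psi)^2+(r\Omega-z)^2]}{4\pi\lVert{\mathbf r}-[x,0,z]^{\mathsf T}\rVert^5}$.

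Next I would rescale $\tilde x=x/r$, $\tilde z=z/r$ and shift $u=\tilde x-\Phi$, $v=\tilde z-\Omega$; the $r$-dependence cancels and the integral becomes
\begin{equation*}
\gamma_{\rm General}^{\rm CAP}=\frac{pe_a}{4\pi\sigma^2}\iint\frac{\Psi(\Psi^2+v^2)\,du\,dv}{(u^2+\Psi^2+v^2)^{5/2}}
\end{equation*}
over the rectangle $u\in[-\tfrac{L_x}{2r}-\Phi,\tfrac{L_x}{2r}-\Phi]$, $v\in[-\tfrac{L_z}{2r}-\Omega,\tfrac{L_z}{2r}-\Omega]$. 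The inner $u$-integral is obtained via the substitution $u=\sqrt{\Psi^2+v^2}\tan\alpha$; writing $2u^2+3(\Psi^2+v^2)=2(u^2+\Psi^2+v^2)+(\Psi^2+v^2)$ splits the resulting $v$-integrand into two pieces. The first piece integrates against $\int(b^2+v^2)^{-3/2}dv=\frac{v}{b^2\sqrt{b^2+v^2}}$ with $b^2=u^2+\Psi^2$ to yield the rational term $\frac{\Psi uv}{3(\Psi^2+u^2)\sqrt{\Psi^2+u^2+v^2}}$, and the second integrates against the identity $\frac{d}{dv}\arctan\!\big(\tfrac{uv}{\Psi\sqrt{\Psi^2+u^2+v^2}}\big)=\frac{u\Psi}{(\Psi^2+v^2)\sqrt{u^2+\Psi^2+v^2}}$ (which follows from the algebraic simplification $\Psi^2(\Psi^2+u^2+v^2)+u^2v^2=(\Psi^2+u^2)(\Psi^2+v^2)$) to yield the $\tfrac{2}{3}\arctan$ term. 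The resulting antiderivative $F(u,v)$ is manifestly odd in each of $u$ and $v$, so the rectangular integral reduces to a four-corner sum that, after the sign flips are absorbed by oddness, becomes exactly the double summation over ${\mathcal X}_2\times{\mathcal Z}_2$ stated in \eqref{Section_Performance_Analysis_SNR_Proposed_Vert_General_Approximation_Expression_Y_Polar_CAP}.

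For the limit \eqref{Section_Performance_Analysis_SNR_Proposed_Asymptotic_CAP}, I would let $L_x,L_z\to\infty$ so every element of ${\mathcal X}_2$ and ${\mathcal Z}_2$ tends to $+\infty$. The rational term is $O(1/u)$ and vanishes, whereas the $\arctan$ term tends to $\pi/2$; summing the four identical corner contributions gives $4\cdot\tfrac{2}{3}\cdot\tfrac{\pi}{2}=\tfrac{4\pi}{3}$, and multiplication by the prefactor $\tfrac{pe_a}{4\pi\sigma^2}$ delivers $\tfrac{pe_a}{3\sigma^2}$, as claimed.

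The main obstacle is purely bookkeeping: performing the iterated integration without sign or coefficient errors and, in particular, verifying that the two pieces produced by the $u$-integration reassemble into $F(u,v)$ with no residual additive constants that would violate the odd-in-each-variable structure on which the four-corner evaluation relies. Once this is done, the change of variables, the oddness check, and the asymptotic evaluation are all mechanical.
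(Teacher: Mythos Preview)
Your proposal is correct and follows essentially the same route as the paper. The paper's proof simply invokes the SPD-antenna argument (Appendix~\ref{Section_Performance_Analysis_SNR_Proposed_Vert_General_Approximation_Expression_Y_Polar_Corollary_Proof}): compute $G_1,G_2$ under the $[1,0,0]^{\mathsf T}$ polarisation, reduce to the rational integrand, and evaluate the double integral via the three tabulated antiderivatives \eqref{Section_Performance_Analysis_Integral_Helper_1}--\eqref{Section_Performance_Analysis_Integral_Helper_3}; the limit then follows exactly as in Corollary~\ref{Section_Performance_Analysis_SNR_Proposed_Asymptotic_Cororllary}. Your version is the same calculation, just with the antiderivatives derived by hand (trig substitution and direct differentiation) rather than quoted, and with the oddness/four-corner bookkeeping made explicit---details the paper leaves implicit. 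The only substantive difference is that in the CAP case the integral is already exact (no Riemann-sum approximation is needed), which is why \eqref{Section_Performance_Analysis_SNR_Proposed_Vert_General_Approximation_Expression_Y_Polar_CAP} is an equality rather than the $\approx$ of \eqref{Section_Performance_Analysis_SNR_Proposed_Vert_General_Approximation_Expression_Y_Polar}; you handle this correctly.
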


\begin{proof}
The proof resembles the proofs of \textbf{Corollary} \ref{Section_Performance_Analysis_SNR_Proposed_Vert_General_Approximation_Expression_Y_Polar_Corollary} and \textbf{Corollary} \ref{Section_Performance_Analysis_SNR_Proposed_Asymptotic_Cororllary}.
\end{proof}
Recalling that $e_a\in(0,1]$, we obtain
\begin{align}\label{Section_Performance_Analysis_SNR_Proposed_Asymptotic_CAP_Limiting_Value}
\frac{p e_{{a}}}{3 \sigma^2}<\frac{p}{3 \sigma^2}<\frac{p}{\sigma^2},
\end{align}
and thus $\lim_{S_{\rm{CAP}}\rightarrow\infty}\gamma_{\rm{General}}^{\rm{CAP}}<\frac{p}{\sigma^2}$.

\begin{remark}
The results in \eqref{Section_Performance_Analysis_SNR_Proposed_Asymptotic_CAP} and \eqref{Section_Performance_Analysis_SNR_Proposed_Asymptotic_CAP_Limiting_Value} suggest that $\gamma_{\rm{General}}^{\rm{CAP}}$ satisfies the law of conservation of energy when $S_{\rm{CAP}}\rightarrow\infty$. Note that \eqref{Section_Performance_Analysis_SNR_Proposed_Asymptotic_CAP} can be also obtained from \eqref{Section_Performance_Analysis_SNR_Proposed_Asymptotic} by setting the inter-element distance $d$ to $d=\sqrt{A}$, i.e., $\zeta=\frac{A}{d^2}=1$. This is expected since a CAP surface is equivalent to an SPD UPA, which is fully covered by array elements.
\end{remark}

\begin{figure}[!t]
\setlength{\abovecaptionskip}{0pt}
\centering
\includegraphics[width=0.4\textwidth]{./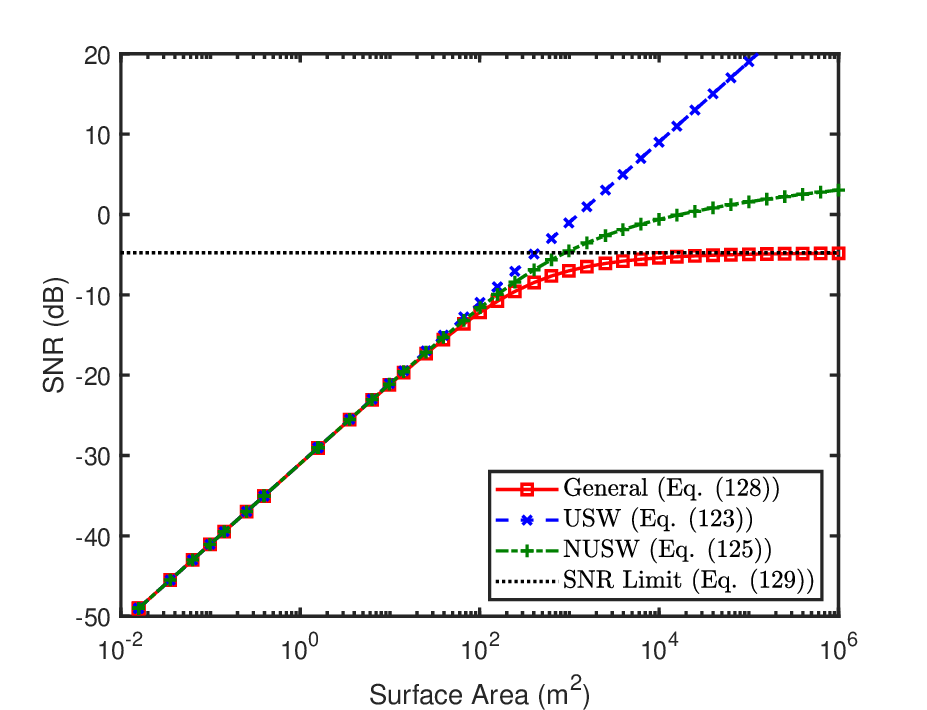}
\caption{Comparison of SNRs for different channel models versus the transmit surface area $S_{\rm{CAP}}$ for CAP antennas. $\frac{p}{\sigma^2}=0$ dB, $L_x=L_z=\sqrt{S_{\rm{CAP}}}$, $(\theta,\phi)=(\frac{\pi}{2},\frac{\pi}{2})$, $r=10$ m, $e_a=1$, ${\bm\rho}=\hat{\mathbf J}([x,0,z]^{\mathsf{T}})=[1,0,0]^{\mathsf{T}}$, $\forall x,z$.}
\label{Performance Analysis Figure: SNR_CAP}
\end{figure}

\begin{table*}[!t]
\caption{Comparison of SNRs Obtained for Different Near-Field LoS Channel Models.}
\label{tab:Section_Performance_Analysis_SNR_Table}
\centering
\begin{tabular}{!{\vrule width1pt}c!{\vrule width1pt}c!{\vrule width1pt}c!{\vrule width1pt}c!{\vrule width1pt}c!{\vrule width1pt}}
\Xhline{1pt} 
Antenna                                                                 & Channel Model & SNR Expression & Scaling Law & Conservation of Energy \\ \Xhline{1pt} 
\multirow{3}{*}{\begin{tabular}[c]{@{}c@{}}SPD\\ Antenna\end{tabular}} & USW            & \eqref{Section_Performance_Analysis_SNR_USW_Expression}              & \eqref{Section_Performance_Analysis_SNR_USW_Asymptotic}: $\mathcal{O}(N)$           & Violated when $N\rightarrow\infty$                      \\ \cline{2-5}
                                                                        & NUSW           & \eqref{Section_Performance_Analysis_SNR_NUSW_Expression}              & \eqref{Section_Performance_Analysis_SNR_NUSW_Asymptotic}: $\mathcal{O}(\log{N})$           & Violated when $N\rightarrow\infty$                      \\ \cline{2-5}
                                                                        & General       & \eqref{Section_Performance_Analysis_SNR_Proposed_Vert_General_Expression}, \eqref{Section_Performance_Analysis_SNR_Proposed_Vert_General_Approximation_Expression_Y_Polar}              & \eqref{Section_Performance_Analysis_SNR_Proposed_Asymptotic}: $\mathcal{O}(1)$           & Satisfied (Always)                      \\ \Xhline{1pt} 
\multirow{3}{*}{\begin{tabular}[c]{@{}c@{}}CAP\\ Antenna\end{tabular}} & USW            & \eqref{Section_Performance_Analysis_SNR_USW_Expression_CAP}              & \eqref{Section_Performance_Analysis_SNR_USW_Asymptotic_CAP}: $\mathcal{O}({{{S}}_{\rm{CAP}}})$           & Violated when ${{{S}}_{\rm{CAP}}}\rightarrow\infty$                       \\ \cline{2-5}
                                                                        & NUSW           & \eqref{Section_Performance_Analysis_SNR_NUSW_Expression_CAP}              & \eqref{Section_Performance_Analysis_SNR_NUSW_Asymptotic_CAP}: $\mathcal{O}(\log{{{S}}_{\rm{CAP}}})$           & Violated when ${{{S}}_{\rm{CAP}}}\rightarrow\infty$                      \\ \cline{2-5}
                                                                        & General       & \eqref{Section_Performance_Analysis_SNR_Proposed_Genarl_Expression_CAP}, \eqref{Section_Performance_Analysis_SNR_Proposed_Vert_General_Approximation_Expression_Y_Polar_CAP}              & \eqref{Section_Performance_Analysis_SNR_Proposed_Asymptotic_CAP}: $\mathcal{O}(1)$           & Satisfied (Always)                      \\ \Xhline{1pt} 
\end{tabular}
\end{table*}

\textbf{Numerical Results:} To further verify our results, we show the SNRs obtained for the considered channel models versus $S_{\rm{CAP}}$ in {\figurename} {\ref{Performance Analysis Figure: SNR_CAP}}. Particularly, $\gamma_{\rm{USW}}^{\rm{CAP}}$, $\gamma_{\rm{NUSW}}^{\rm{CAP}}$, and $\gamma_{\rm{General}}^{\rm{CAP}}$ are calculated based on \eqref{Section_Performance_Analysis_SNR_USW_Expression_CAP}, \eqref{Section_Performance_Analysis_SNR_NUSW_Expression_CAP}, and \eqref{Section_Performance_Analysis_SNR_Proposed_Vert_General_Approximation_Expression_Y_Polar_CAP}, respectively. The asymptotic SNR limit given in \eqref{Section_Performance_Analysis_SNR_Proposed_Asymptotic_CAP} is also included as a baseline. As can be observed from {\figurename} {\ref{Performance Analysis Figure: SNR_CAP}}, for small and moderate $S_{\rm{CAP}}$, the SNRs obtained for all considered channel models increase linearly with $S_{\rm{CAP}}$. This is because the user is located in the far field, where all considered models are accurate. However, since the impact of the varying projected apertures and polarization losses is ignored, the asymptotic values of $\gamma_{\rm{USW}}^{\rm{CAP}}$ and $\gamma_{\rm{NUSW}}^{\rm{CAP}}$ exceed the transmit SNR $\frac{p}{\sigma^2}$, therefore breaking the law of conservation of energy. Our observations from Figs. {\ref{Performance Analysis Figure: SNR_Discrete}} and {\figurename} {\ref{Performance Analysis Figure: SNR_CAP}} highlight the importance of correctly modelling the variations of the free-space path losses, projected apertures, and polarization losses across the antenna array elements and CAP surface, respectively, when studying the asymptotic limits of the SNR.

\subsubsection{Summary of the Analytical Results}
In Table \ref{tab:Section_Performance_Analysis_SNR_Table}, we summarize our analytical results for the SNR and scaling law. In the third column of the table, the notation ${\mathcal{O}}(1)$ is used to indicate that the asymptotic SNR is a constant. Although this subsection has focused on MISO transmission, the developed results can be extended to MIMO and multiuser systems. Following a similar approach as for obtaining the results in Table \ref{tab:Section_Performance_Analysis_SNR_Table}, the SINR and power scaling law for MIMO and multiuser transmission can be derived.
\begin{figure}[!t]
\setlength{\abovecaptionskip}{0pt}
\centering
\includegraphics[width=0.4\textwidth]{./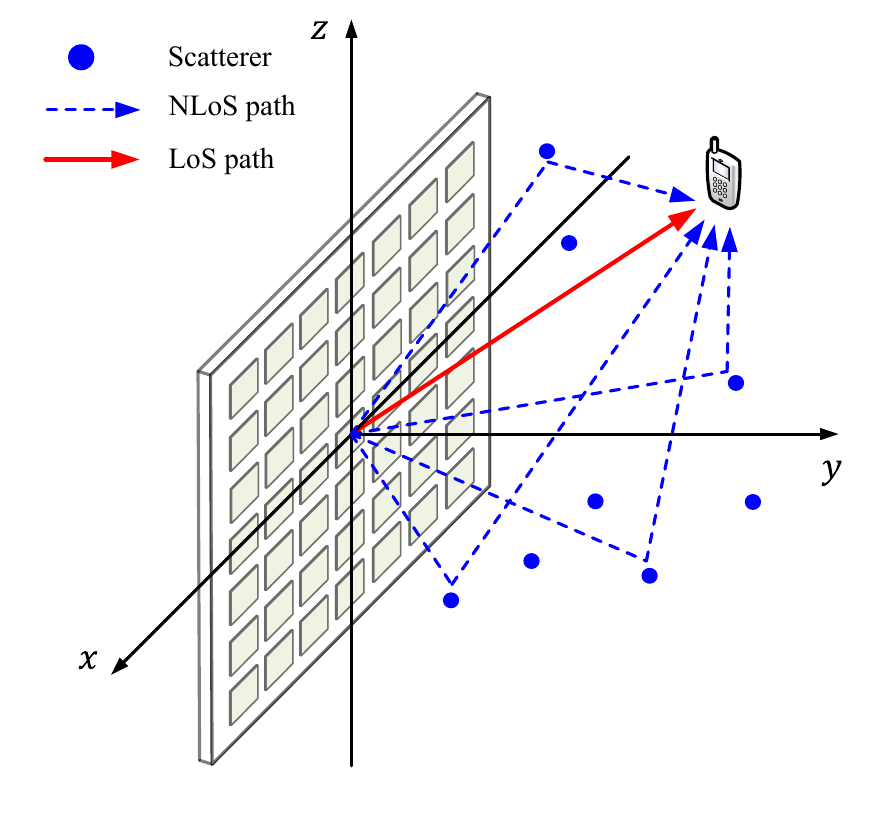}
\caption{System model for NFC for multipath channels.}
\label{Multipath_3D_Model}
\end{figure}
\subsection{Performance Analysis for Statistical Multipath Near-Field Channels}
Having analyzed the performance for the LoS near-field channel, we now shift our attention to statistical multipath near-field channels, as depicted in {\figurename} {\ref{Multipath_3D_Model}}.
\subsubsection{Channel Statistics}
If both LoS and NLOS components are present, the multipath channel coefficients can be modelled as in \eqref{H_NFC_MISO}:
\vspace{-0.2cm}
\begin{align}
\mathbf{h} = {\beta \mathbf{a}(\mathbf{r})} + \sum_{\ell=1}^L {\tilde{\beta}_{\ell} \mathbf{a}(\tilde{\mathbf{r}}_{\ell})},
\end{align}
where ${\overline{\mathbf{h}}}\triangleq{\beta \mathbf{a}(\mathbf{r})}$ is the LoS component and ${\tilde{\mathbf{h}}}_{\ell}\triangleq{\tilde{\beta}_{\ell} \mathbf{a}(\tilde{\mathbf{r}}_{\ell})}$ is the NLoS component generated by the $\ell$-th scatterer. By referring to {\figurename} {\ref{Multipath_3D_Model}}, we rewrite the NLoS component ${\tilde{\mathbf{h}}}_{\ell}$ as follows:
\begin{equation}
{\tilde{\mathbf{h}}}_{\ell} = \alpha_{\ell}{\mathbf{h}}_{\ell}(\mathbf{r}_{\ell})h_{\ell}({\mathbf{r}}_{\ell},\mathbf{r}),
\end{equation}
where ${\mathbf{h}}_{\ell}(\mathbf{r}_{\ell})\in{\mathbb C}^{N\times1}$ is the channel vector between the $\ell$-th scatterer and the BS, $h_{\ell}({\mathbf{r}}_{\ell},\mathbf{r})\in{\mathbb C}$ is the channel coefficient between the user and the $\ell$-th scatterer, and $\alpha_{\ell}$ models the complex random reflection coefficient of the $\ell$-th scatterer. The complex gains $\{\alpha_{\ell}\}_{\ell=1}^{L}$ are generally modelled as independently complex Gaussian distributed variables with $\alpha_{\ell}\sim{\mathcal{CN}}(0,\sigma_{\ell}^2)$, where $\sigma_{\ell}^2$ represents the intensity attenuation caused by the $\ell$-th scatterer \cite{Emil2017massive,Saleh1987communicating}. On this basis, the multipath channel can be modelled as follows
\begin{equation}\label{Section_Performance_Analysis_Statistical_Channel}
\begin{split}
{\mathbf{h}}\sim{\mathcal{CN}}(\overline{\mathbf{h}},\mathbf{R}),
\end{split}
\end{equation}
where ${\overline{\mathbf{h}}}=\mathbb{E}\{\mathbf{h}\}$ denotes the channel mean and
\begin{equation}
\begin{split}
{\mathbf{R}}&={\mathbb{E}}\{({\mathbf{h}}-{\overline{\mathbf{h}}})({\mathbf{h}}-{\overline{\mathbf{h}}})^{\mathsf{H}}\}\\
&=\sum_{{\ell}=1}^{L}\sigma_{\ell}^2|h_{\ell}({\mathbf{r}}_{\ell},\mathbf{r})|^2{\mathbf{h}}_{\ell}(\mathbf{r}_{\ell}){\mathbf{h}}_{\ell}^{\mathsf{H}}(\mathbf{r}_{\ell})
\end{split}
\end{equation}
is the correlation matrix. Eqn. \eqref{Section_Performance_Analysis_Statistical_Channel} corresponds to the Rician fading model. As shown in {\figurename} {\ref{Multipath_3D_Model}}, the scatterers are located in the near field of the BS \cite{dong2022near}. As a result, the LoS channels $\overline{\mathbf{h}}$, $\{h_{\ell}({\mathbf{r}}_{\ell},\mathbf{r})\}_{\ell=1}^{L}$, and $\{{\mathbf{h}}_{\ell}(\mathbf{r}_{\ell})\}_{\ell=1}^{L}$ can be described by the near-field LoS models presented in Section \ref{NFC_Tutorial_Channel_Model}. Different LoS channel models yield different correlation matrices $\mathbf{R}$, but the statistics of the resulting multipath MISO channel always follow \eqref{Section_Performance_Analysis_Statistical_Channel}.

NFC generally occurs in mmWave and sub-THz bands; therefore, the resulting channels are sparsely-scattered ($N\gg L$) and dominated by LoS propagation, i.e., $\lVert{\overline{\mathbf{h}}}\rVert^2\gg \sum_{\ell=1}^{L}\sigma_{\ell}^2|h_{\ell}({\mathbf{r}}_{\ell},\mathbf{r})|^2\lVert{\mathbf{h}}_{\ell}(\mathbf{r}_{\ell})\rVert^2$. Consequently, matrix $\mathbf{R}$ is generally rank-deficient. In practical wireless propagation environments, the LoS path might be blocked by obstacles. In this case, the mean of $\mathbf{h}$ equals zero, and \eqref{Section_Performance_Analysis_Statistical_Channel} degrades to a Rayleigh fading channel model with ${\mathbf{R}}={\mathbb{E}}\{{\mathbf{h}}{\mathbf{h}}^{\mathsf{H}}\}$. Although the NFC channel is generally dominated by its LoS component, considering the scenario with the LoS path blocked is also of interest and has theoretical significance as a limiting case. Hence, besides the Rician distribution, the Rayleigh distribution has also become one of the commonly accepted models for NFC channel modeling; see \cite{dong2022near,pizzo2020spatially,pizzo2022fourier,ji2023extra} and the references therein. The statistical model in \eqref{Section_Performance_Analysis_Statistical_Channel} is not only applicable for SPD antennas but also for CAP antennas if the CAP surface is sampled into an SPD antenna array \cite{pizzo2020spatially,pizzo2022fourier,ji2023extra}. For CAP surfaces without spatial sampling, the modelling of the resulting multipath channels is an open problem. Hence, we will focus our efforts on SPD antennas.

{\textbf{Near-Field Channel Statistics vs. Far-Field Channel Statistics:}} In NFC, different antenna elements of the same array suffer from significantly different path lengths and non-uniform channel gains. Due to this property, correlation matrix $\mathbf{R}$ cannot be simplified to an identity matrix even when the antenna array is half-wavelength-spaced and deployed in an isotropic scattering environment \cite{dong2022near}. This fact means that correlated fading models are physically more realistic for NFC. In other words, adopting a correlated fading model where ${\mathbf{R}}\ne{\mathbf{I}}$ to evaluate NFC performance is required. On the other hand, for performance analysis of conventional far-field communications, usually the independent and identically distributed (i.i.d.) Rayleigh fading channel model with ${\mathbf{R}}={\mathbf{I}}$ has been adopted in most theoretical research.

Considering the above facts, we adopt the correlated fading model for analyzing the statistical near-field performance in the presence of NLoS channels. In this case, channel vector ${\mathbf{h}}$ can be statistically described as follows
\begin{equation}
\mathbf{h}=\overline{\mathbf{h}}+\mathbf{R}^{\frac{1}{2}}\tilde{\mathbf{h}},
\end{equation}
where $\tilde{\mathbf{h}}\sim{\mathcal{CN}}(\mathbf{0},\mathbf{I})$. We next evaluate NFC performance for this statistical channel model by analyzing the OP, ECC, and EMI. For ease of understanding, we first analyze these metrics for correlated MISO Rayleigh channels with ${\overline{\mathbf{h}}}=\mathbf{0}$ and then extend the derived results to correlated MISO Rician channels with ${\overline{\mathbf{h}}}\ne{\mathbf{0}}$. Our aim is to provide a general analytical framework for analyzing the three metrics, i.e., OP, ECC, and EMI, without knowledge of the structure of correlation matrix ${\mathbf{R}}$. In other words, our proposed analytical framework is applicable for any $\mathbf{R}\succeq {\mathbf{0}}$.

\subsubsection{Analysis of the OP for Rayleigh Channels}\label{Section_Performance_Analysis_OP_Rayleigh_Part}
Let $\mathcal{R}$ denote the required communication rate. An outage occurs when the actual communication rate $\log_2(1+\gamma)$ is less than $\mathcal{R}$. Accordingly, the OP for Rayleigh channels can be written as follows:
\begin{equation}\label{Section_Performance_Analysis_OP_Definition_MISO}
\begin{split}
\mathcal{P}_{\rm{rayleigh}}&=\Pr(\log_2(1+\gamma)<\mathcal{R}).
\end{split}
\end{equation}
Inserting \eqref{Section_Performance_Analysis_Statistical_Channel_MISO_SNR_Expression} into \eqref{Section_Performance_Analysis_OP_Definition_MISO} yields
\begin{equation}\label{Section_Performance_Analysis_OP_Calculation_MISO}
\mathcal{P}_{\rm{rayleigh}}=\Pr\left(\lVert{\mathbf{h}}\rVert^2<\frac{2^{\mathcal{R}}-1}{p/\sigma^2}\right)=
F_{\lVert{\mathbf{h}}\rVert^2}\left(\frac{2^{\mathcal{R}}-1}{p/\sigma^2}\right),
\end{equation}
where $F_{\lVert{\mathbf{h}}\rVert^2}\left(\cdot\right)$ is the cumulative distribution function (CDF) of $\lVert{\mathbf{h}}\rVert^2$. The OP can be analyzed using the following basic analytical framework, which involves three steps.

$\bullet$ \emph{\textbf{Step 1 - Analyzing the Statistics of the Channel Gain: }}
In the first step, we calculate the probability density function (PDF) and CDF of $\lVert{\mathbf{h}}\rVert^2$ to obtain a closed-form expression for $\mathcal{P}$. The main results are summarized as follows.

\begin{lemma}\label{Section_Performance_Analysis_OP_SNR_CDF_PDF_Lemma}
For correlated MISO Rayleigh channel ${\mathbf{h}}\sim{\mathcal{CN}}({\mathbf{0}},{\mathbf{R}})$, the PDF and CDF of $\lVert{\mathbf{h}}\rVert^2$ are respectively given by
\begin{align}
&f_{\lVert{\mathbf{h}}\rVert^2}\left(x\right)=
\frac{\lambda_{\min}^{r_{\mathbf{R}}}}{\prod_{i=1}^{r_{\mathbf{R}}}{\lambda}_i}\sum_{k=0}^{\infty}\frac{\psi_k x^{r_{\mathbf{R}}+k-1}}{{\lambda}_{\min}^{r_{\mathbf{R}}+k}\Gamma\left(r_{\mathbf{R}}+k\right)}e^{-\frac{x}{{\lambda}_{\min}}},\label{Section_Performance_Analysis_OP_SNR_PDF}\\
&F_{\lVert{\mathbf{h}}\rVert^2}\left(x\right)
=\frac{\lambda_{\min}^{r_{\mathbf{R}}}}{\prod_{i=1}^{r_{\mathbf{R}}}{\lambda}_i}\sum_{k=0}^{\infty}\frac{\psi_k \Upsilon\left(k+r_{\mathbf{R}},{x}/{{\lambda}_{\min}}\right)}{\Gamma\left(r_{\mathbf{R}}+k\right)},\label{Section_Performance_Analysis_OP_SNR_CDF}
\end{align}
where $r_{\mathbf{R}}$ is the rank of matrix $\mathbf{R}$, $\{\lambda_i>0\}_{i=1}^{r_{\mathbf{R}}}$ are the positive eigenvalues of matrix $\mathbf{R}$, ${\lambda}_{\min}=\min_{i=1,\ldots,\lambda_{\mathbf{R}}}{\lambda}_i$, $\Gamma\left(z\right)=\int_{0}^{\infty}e^{-t}t^{z-1}{\rm d}t$ is the Gamma function, $\Upsilon\left(s,x\right)=\int_{0}^{x}t^{s-1}e^{-t}{\rm d}t$ is the lower incomplete Gamma function, ${\psi _0} = 1$, and the $\psi_k$ ($k\geq1$) can be calculated recursively as follows
\begin{align}
{\psi _{k + 1}} = \sum\nolimits_{i = 1}^{k + 1} {\left[ {\sum\nolimits_{j = 1}^{r_{\mathbf{R}}} {{{\left( {1 - {{{{\lambda}_{\min }}}}/{{{{\lambda}_{j}}}}} \right)}^i}} } \right]} \frac{\psi _{k + 1 - i}}{{k + 1}}.
\end{align}
\end{lemma}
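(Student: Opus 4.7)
The plan is to reduce $\lVert\mathbf{h}\rVert^{2}$ to a weighted sum of i.i.d.\ unit-scale exponential variables and then invoke Moschopoulos's classical series representation for such sums. Writing the eigendecomposition $\mathbf{R}=\mathbf{U}\mathbf{\Lambda}\mathbf{U}^{\mathsf{H}}$ with $\mathbf{\Lambda}=\mathrm{diag}(\lambda_{1},\ldots,\lambda_{r_{\mathbf{R}}},0,\ldots,0)$ and $\mathbf{h}=\mathbf{R}^{1/2}\tilde{\mathbf{h}}$ for $\tilde{\mathbf{h}}\sim\mathcal{CN}(\mathbf{0},\mathbf{I})$, the unitary invariance of $\tilde{\mathbf{h}}$ would let me diagonalize the quadratic form into
\begin{equation}
\lVert\mathbf{h}\rVert^{2}=\tilde{\mathbf{h}}^{\mathsf{H}}\mathbf{R}\tilde{\mathbf{h}}\stackrel{d}{=}\sum_{i=1}^{r_{\mathbf{R}}}\lambda_{i}X_{i},\qquad X_{i}\overset{\text{i.i.d.}}{\sim}\mathrm{Exp}(1),
\end{equation}
i.e., a hypoexponential-type random variable whose distribution is well studied whenever the $\lambda_{i}$ differ.

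Next, I would work with the moment generating function $M(s)=\prod_{i=1}^{r_{\mathbf{R}}}(1-s\lambda_{i})^{-1}$. The idea is to factor out a reference Gamma-type MGF $(1-s\lambda_{\min})^{-r_{\mathbf{R}}}$ and to expand the remaining rational factor as a power series in $s\lambda_{\min}/(1-s\lambda_{\min})$. This expansion, which is the standard Moschopoulos trick, yields
\begin{equation}
M(s)=\frac{\lambda_{\min}^{r_{\mathbf{R}}}}{\prod_{i}\lambda_{i}}\sum_{k=0}^{\infty}\frac{\psi_{k}}{(1-s\lambda_{\min})^{r_{\mathbf{R}}+k}},
\end{equation}
where matching coefficients term by term in the logarithmic derivative of the residual product (using $\log[(1-s\lambda_{i})/(1-s\lambda_{\min})]=-\sum_{i\ge 1}(1-\lambda_{\min}/\lambda_{j})^{i}/i\cdot[s\lambda_{\min}/(1-s\lambda_{\min})]^{i}$) produces exactly the triangular recursion stated in the lemma, anchored by $\psi_{0}=1$. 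Because $(1-s\lambda_{\min})^{-(r_{\mathbf{R}}+k)}$ is the MGF of a Gamma$(r_{\mathbf{R}}+k,\lambda_{\min})$ random variable, termwise Laplace inversion immediately delivers the claimed PDF. Integrating the series termwise from $0$ to $x$ and recognising each integral as the lower incomplete Gamma function $\Upsilon(r_{\mathbf{R}}+k,x/\lambda_{\min})$ then yields the CDF.

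The main obstacle I anticipate is not the algebra itself but rather the bookkeeping needed to (i) extract the recursion for $\psi_{k}$ in the stated symmetric form, and (ii) justify termwise inversion and integration of the infinite series. For the latter, absolute and uniform convergence on compact subsets follows from $|1-\lambda_{\min}/\lambda_{j}|<1$ (which holds because $\lambda_{\min}$ is the smallest strictly positive eigenvalue), which, together with a dominated convergence argument, legitimises the termwise operations. Once these technicalities are handled, matching the resulting expressions with the statement of \textbf{Lemma} \ref{Section_Performance_Analysis_OP_SNR_CDF_PDF_Lemma} is immediate, and the proof is complete.
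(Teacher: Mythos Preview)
Your proposal is correct and follows essentially the same approach as the paper: diagonalize the quadratic form via the eigendecomposition of $\mathbf{R}$ and unitary invariance of $\tilde{\mathbf{h}}$ to reduce $\lVert\mathbf{h}\rVert^{2}$ to a weighted sum of i.i.d.\ unit exponentials, then appeal to Moschopoulos's series representation. The only difference is cosmetic: the paper simply cites Moschopoulos for the PDF and integrates to get the CDF, whereas you sketch the MGF factorization and the derivation of the $\psi_k$ recursion explicitly, which is a welcome elaboration but not a different route.
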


\begin{proof}
Please refer to Appendix \ref{Section_Performance_Analysis_OP_SNR_CDF_PDF_Lemma_Proof}.
\end{proof}

$\bullet$ \emph{\textbf{Step 2 - Deriving a Closed-Form Expression of the OP: }}
In the second step, we exploit the CDF of $\lVert{\mathbf{h}}\rVert^2$ to calculate the OP, which yields the following theorem.

\begin{theorem}
The OP of the considered system is given by
\begin{align}\label{Section_Performance_Analysis_OP_Explicit_Expression}
\mathcal{P}_{\rm{rayleigh}}=
\frac{\lambda_{\min}^{r_{\mathbf{R}}}}{\prod_{i=1}^{r_{\mathbf{R}}}{\lambda}_i}\sum_{k=0}^{\infty}\frac{\psi_k \Upsilon\left(k+r_{\mathbf{R}},\frac{2^{\mathcal{R}}-1}{p/\sigma^2\lambda_{\min}}\right)}{\Gamma\left(r_{\mathbf{R}}+k\right)}.
\end{align}
\end{theorem}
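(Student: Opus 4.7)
The plan is to obtain the closed-form expression for $\mathcal{P}_{\rm{rayleigh}}$ by a direct substitution, combining two results already established in the excerpt: the reduction of the OP to a CDF evaluation in \eqref{Section_Performance_Analysis_OP_Calculation_MISO}, and the series representation of the CDF of $\lVert\mathbf{h}\rVert^2$ in \eqref{Section_Performance_Analysis_OP_SNR_CDF} from Lemma \ref{Section_Performance_Analysis_OP_SNR_CDF_PDF_Lemma}. The theorem is, in this sense, essentially a one-line corollary of these two facts, and no new stochastic analysis is required at this stage.

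First I would recall that since MRT beamforming yields $\gamma=\frac{p}{\sigma^2}\lVert\mathbf{h}\rVert^2$, the event $\log_2(1+\gamma)<\mathcal{R}$ is equivalent to $\lVert\mathbf{h}\rVert^2<\frac{2^{\mathcal{R}}-1}{p/\sigma^2}$, so that $\mathcal{P}_{\rm{rayleigh}} = F_{\lVert\mathbf{h}\rVert^2}\!\left(\tfrac{2^{\mathcal{R}}-1}{p/\sigma^2}\right)$ as in \eqref{Section_Performance_Analysis_OP_Calculation_MISO}. Next I would plug $x=\frac{2^{\mathcal{R}}-1}{p/\sigma^2}$ into \eqref{Section_Performance_Analysis_OP_SNR_CDF}, so that the argument of the lower incomplete Gamma function $\Upsilon(k+r_{\mathbf R},x/\lambda_{\min})$ becomes $\frac{2^{\mathcal{R}}-1}{(p/\sigma^2)\lambda_{\min}}$. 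Collecting terms immediately yields the stated expression.

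The real analytical difficulty behind this theorem therefore lies entirely in Lemma \ref{Section_Performance_Analysis_OP_SNR_CDF_PDF_Lemma}, not in the substitution step. Were I also to prove the lemma, the main obstacle would be handling the quadratic form $\lVert\mathbf{h}\rVert^2=\mathbf{h}^{\mathsf H}\mathbf{h}$ for a correlated complex Gaussian vector with a potentially rank-deficient covariance $\mathbf{R}$. My approach would be to diagonalize $\mathbf{R}$, so that $\lVert\mathbf{h}\rVert^2$ reduces to a weighted sum of $r_{\mathbf R}$ independent exponential variables with rates $\{1/\lambda_i\}$, and then to expand its moment generating function around the smallest eigenvalue $\lambda_{\min}$ rather than through the usual partial-fraction route, which is ill-behaved when several eigenvalues coincide. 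The recursive coefficients $\psi_k$ arise naturally from matching this Taylor-type expansion term-by-term, and one would verify the recursion by induction on $k$ and then invert the MGF series to obtain \eqref{Section_Performance_Analysis_OP_SNR_PDF}; integrating yields \eqref{Section_Performance_Analysis_OP_SNR_CDF}. Once that lemma is in hand, the proof of the theorem itself is purely a matter of substitution.
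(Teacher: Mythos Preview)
Your proposal is correct and matches the paper's own proof exactly: the paper simply states that the theorem follows by substituting the CDF expression \eqref{Section_Performance_Analysis_OP_SNR_CDF} into the OP formula \eqref{Section_Performance_Analysis_OP_Calculation_MISO}. Your additional remarks on how Lemma~\ref{Section_Performance_Analysis_OP_SNR_CDF_PDF_Lemma} would be established are also consistent with the paper's approach (diagonalize $\mathbf{R}$, reduce to a weighted sum of exponentials, and invoke Moschopoulos' series), though the paper delegates that entirely to the lemma's separate proof.
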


\begin{proof}
This theorem can be directly proved by substituting \eqref{Section_Performance_Analysis_OP_SNR_CDF} into \eqref{Section_Performance_Analysis_OP_Calculation_MISO}.
\end{proof}

$\bullet$ \emph{\textbf{Step 3 - Deriving a High-SNR Approximation of the OP: }}
In the last step, we investigate the asymptotic behaviour of the OP in the high-SNR regime, i.e., $p\rightarrow\infty$, in order to provide more insights for system design. The main results are summarized in the following corollary.
% \vspace{-5pt}
\begin{corollary}\label{Section_Performance_Analysis_OP_Asymptotic_Expression_Standard_Theorem}
The asymptotic OP in the high-SNR regime can be expressed in the following form:
\begin{align}\label{Section_Performance_Analysis_OP_Asymptotic_Expression_Standard}
\lim\limits_{p\rightarrow\infty}{\mathcal{P}}\simeq({\mathcal{G}}_{\rm{a}}\cdot p)^{-{\mathcal{G}}_{\rm{d}}},
\end{align}
where ${\mathcal{G}}_{\rm{a}}=\frac{\left(r_{\mathbf{R}}!\prod_{i=1}^{r_{\mathbf{R}}}{\lambda}_i\right)^{{1}/{r_{\mathbf{R}}}}}{\sigma^2(2^{\mathcal{R}}-1)}$ and ${\mathcal{G}}_{\rm{d}}=r_{\mathbf{R}}$.
\end{corollary}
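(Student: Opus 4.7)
The plan is to derive the high-SNR asymptotic form of the OP directly from the exact closed-form expression in \eqref{Section_Performance_Analysis_OP_Explicit_Expression} by examining the behaviour of the lower incomplete Gamma function as its argument tends to zero. As $p\to\infty$, the argument $x_p\triangleq\frac{2^{\mathcal{R}}-1}{(p/\sigma^2)\lambda_{\min}}$ shrinks to $0$, so each term $\Upsilon(r_{\mathbf{R}}+k,x_p)$ can be expanded using its small-argument series. The idea is that only the $k=0$ term controls the leading-order behaviour, while the $k\geq 1$ terms vanish at strictly faster polynomial rates in $1/p$ and can therefore be absorbed into the asymptotic notation.

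First, I would invoke the standard small-argument expansion $\Upsilon(s,x)=\frac{x^{s}}{s}+o(x^{s})$ as $x\to 0^{+}$, valid for any $s>0$. Applying this to each summand in \eqref{Section_Performance_Analysis_OP_Explicit_Expression} gives
\begin{equation}
\frac{\psi_k\,\Upsilon(k+r_{\mathbf{R}},x_p)}{\Gamma(r_{\mathbf{R}}+k)}
= \frac{\psi_k\, x_p^{\,r_{\mathbf{R}}+k}}{(r_{\mathbf{R}}+k)\,\Gamma(r_{\mathbf{R}}+k)} + o\!\left(x_p^{\,r_{\mathbf{R}}+k}\right).
\end{equation}
Because $x_p=\Theta(1/p)$, the $k$-th term scales as $p^{-(r_{\mathbf{R}}+k)}$, so the $k=0$ term (with $\psi_0=1$) dominates and all $k\geq 1$ terms are $o(p^{-r_{\mathbf{R}}})$.

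Keeping only the leading term and using $\Gamma(r_{\mathbf{R}}+1)=r_{\mathbf{R}}!$, I then substitute back to obtain
\begin{equation}
\lim_{p\to\infty}\mathcal{P}_{\mathrm{rayleigh}}\simeq \frac{\lambda_{\min}^{r_{\mathbf{R}}}}{\prod_{i=1}^{r_{\mathbf{R}}}\lambda_i}\cdot\frac{1}{r_{\mathbf{R}}!}\left(\frac{2^{\mathcal{R}}-1}{(p/\sigma^2)\lambda_{\min}}\right)^{r_{\mathbf{R}}}=\frac{1}{r_{\mathbf{R}}!\prod_{i=1}^{r_{\mathbf{R}}}\lambda_i}\left(\frac{(2^{\mathcal{R}}-1)\sigma^2}{p}\right)^{\!r_{\mathbf{R}}}.
\end{equation}
Matching this against the canonical form $(\mathcal{G}_{\mathrm{a}}p)^{-\mathcal{G}_{\mathrm{d}}}$ immediately identifies the diversity order $\mathcal{G}_{\mathrm{d}}=r_{\mathbf{R}}$ and, by taking the $r_{\mathbf{R}}$-th root of the coefficient, the array gain $\mathcal{G}_{\mathrm{a}}=\frac{(r_{\mathbf{R}}!\prod_{i}\lambda_i)^{1/r_{\mathbf{R}}}}{\sigma^{2}(2^{\mathcal{R}}-1)}$, completing the proof.

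The only delicate point is justifying the termwise asymptotic expansion of the infinite series: one must ensure that the remainder arising from $k\geq 1$ is uniformly $o(p^{-r_{\mathbf{R}}})$, which requires a convergence argument on $\sum_{k}\psi_k x_p^{k}/(\Gamma(r_{\mathbf{R}}+k+1))$ for small $x_p$. This can be handled by noting that the series defining the PDF in \eqref{Section_Performance_Analysis_OP_SNR_PDF} converges absolutely on any bounded interval, so termwise integration and termwise limit interchange are legitimate. Beyond this technicality, the argument is a routine application of the small-argument expansion of $\Upsilon(\cdot,\cdot)$.
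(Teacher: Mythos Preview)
Your proposal is correct and follows essentially the same approach as the paper: both arguments start from the closed-form OP in \eqref{Section_Performance_Analysis_OP_Explicit_Expression}, invoke the small-argument expansion $\Upsilon(s,t)\simeq t^{s}/s$ as $t\to 0$, observe that the $k=0$ term (with $\psi_0=1$) dominates while all $k\geq 1$ terms are $o(p^{-r_{\mathbf{R}}})$, and then read off $\mathcal{G}_{\rm d}$ and $\mathcal{G}_{\rm a}$ from the leading coefficient. Your explicit remark about justifying the termwise limit of the infinite series is a welcome addition that the paper's proof leaves implicit.
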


\begin{proof}
Please refer to Appendix \ref{Section_Performance_Analysis_OP_Asymptotic_Expression_Standard_Theorem_Proof}.
\end{proof}
In \eqref{Section_Performance_Analysis_OP_Asymptotic_Expression_Standard}, ${\mathcal{G}}_{\rm{a}}$ is referred to as the \emph{array gain}, and ${\mathcal{G}}_{\rm{d}}$ is referred to as the diversity gain or diversity order \cite{Wang2003_TCOM}. The diversity order ${\mathcal{G}}_{\rm{d}}$ determines the slope of the OP as a function of the transmit power, at high SNR, depicted in a log-log scale. On the other hand, ${\mathcal{G}}_{\rm{a}}$ (in decibels) specifies the power gain of the actual OP compared to a benchmark-OP of $p^{-{\mathcal{G}}_{\rm{d}}}$. Note that the OP can be improved by increasing ${\mathcal{G}}_{\rm{a}}$ or ${\mathcal{G}}_{\rm{d}}$.
% \vspace{-5pt}
\begin{remark}
The results in \textbf{Corollary} \ref{Section_Performance_Analysis_OP_Asymptotic_Expression_Standard_Theorem} indicate that in the high-SNR regime, the slope and power gain of the OP is given by $r_{\mathbf{R}}$ and $\frac{\left(r_{\mathbf{R}}!\prod_{i=1}^{r_{\mathbf{R}}}{\lambda}_i\right)^{{1}/{r_{\mathbf{R}}}}}{\sigma^22^{\mathcal{R}}-1}$, respectively.
\end{remark}
% \vspace{-5pt}

\begin{figure}[!t]
\setlength{\abovecaptionskip}{0pt}
\centering
\includegraphics[width=0.4\textwidth]{./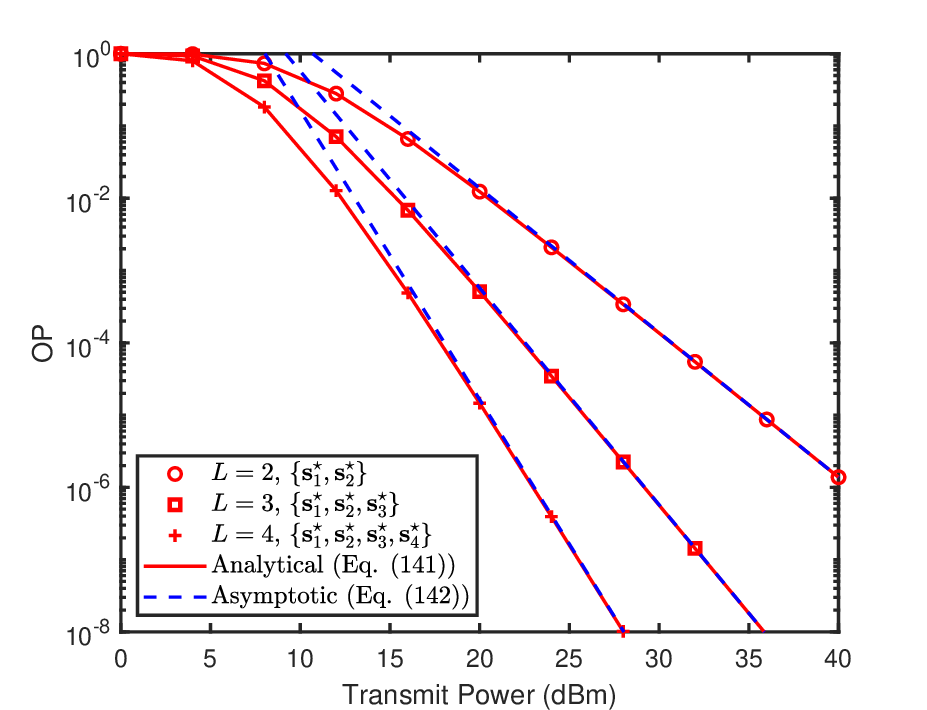}
\caption{Outage probability for correlated MISO Rayleigh channels and data rate $\mathcal{R}=1$ bps/Hz. The system is operating at $28$ GHz. $N_x=N_z=33$, $d=\frac{\lambda}{2}$, $(\theta,\phi)=(\frac{\pi}{2},\frac{\pi}{2})$, $r=4$ m, $A=\frac{\lambda^2}{4\pi}$, $e_a=1$, ${\bm\rho}=\hat{\mathbf J}({\mathbf{s}}_{m,n})=[1,0,0]^{\mathsf{T}}$, $\forall m,n$, $\sigma^2=-90$ dBm, and $\sigma_{\ell}^2=1$, $\forall \ell$. The coordinates (in meters) of the scatterers are given by $
\mathbf{s}_{1}^{\star}=[\frac{7}{10}\sin{\frac{\pi}{4}},r,\frac{7}{10}\cos{\frac{\pi}{4}}]^{\mathsf{T}}$, $\mathbf{s}_{2}^{\star}=[\frac{4}{5}\sin{\frac{\pi}{8}},r,\frac{4}{5}\cos{\frac{\pi}{8}}]^{\mathsf{T}}$, $
\mathbf{s}_{3}^{\star}=[\frac{13}{20}\sin{\frac{47\pi}{64}},r,\frac{13}{20}\cos{\frac{47\pi}{64}}]^{\mathsf{T}}$, and
$\mathbf{s}_{4}^{\star}=[\frac{2}{5}\sin{\frac{\pi}{7}},r,\frac{2}{5}\cos{\frac{\pi}{7}}]^{\mathsf{T}}$. The LoS channels $\overline{\mathbf{h}}$, $\{h_{\ell}({\mathbf{r}}_{\ell},\mathbf{r})\}_{\ell=1}^{L}$, and $\{{\mathbf{h}}_{\ell}(\mathbf{r}_{\ell})\}_{\ell=1}^{L}$ follow the USW channel model.}
\label{Performance Analysis Figure: OP_Discrete}
\end{figure}

$\bullet$ \emph{\textbf{Numerical Results:}} To illustrate the above derivations, we show the OP as a function of the transmit power, $p$, in {\figurename} {\ref{Performance Analysis Figure: OP_Discrete}} for the USW channel model and various values of $L$. The simulation results are denoted by markers. The analytical and asymptotic results are calculated using \eqref{Section_Performance_Analysis_OP_Explicit_Expression} and \eqref{Section_Performance_Analysis_OP_Asymptotic_Expression_Standard}, respectively. {\figurename} {\ref{Performance Analysis Figure: OP_Discrete}} reveals that the analytical results are in good agreement with the simulated results, and the derived asymptotic results approach the numerical results in the high-SNR regime ($p\rightarrow\infty$). Furthermore, it can be observed that larger values of $L$ yield higher diversity orders.

$\bullet$ \emph{\textbf{Extension to MIMO Case:}} Note that the definition presented in \eqref{Section_Performance_Analysis_OP_Definition_MISO} only applies to MISO channels. The extension to the single-user MIMO case with isotropic inputs is given by
\begin{equation}\label{Section_Performance_Analysis_OP_Definition_SU_MIMO}
\mathcal{P}=\Pr(\log_2\det({\mathbf{I}}+p/\sigma^2{\mathbf{H}}{\mathbf{H}}^{\mathsf{H}})<\mathcal{R}),
\end{equation}
where ${\mathbf{H}}\in{\mathbb{C}}^{N_{R}\times N_T}$ is the channel matrix with $N_R$ and $N_T$ denoting the numbers of receive and transmit antennas, respectively. The evaluation of the OP in \eqref{Section_Performance_Analysis_OP_Definition_SU_MIMO} requires the application of tools from random matrix theory; please refer to \cite{Tulino2004_FTC, Muller2013_Arxiv,Couillet2011} and the references therein for more details. The existing literature shows that the asymptotic OP for MIMO channels in the high-SNR regime also follows the standard form given in \eqref{Section_Performance_Analysis_OP_Asymptotic_Expression_Standard} (see, e.g., \cite{Yang2021_TBC}).
\subsubsection{Analysis of the ECC for Rayleigh Channels}
Having analyzed the OP, we turn our attention to the ECC. Achieving the ECC requires the BS to adaptively adjust its coding rate to the channel capacity at the beginning of each coherence interval. The ECC mathematically equals the mean of the instantaneous channel capacity $\log_2(1+p/\sigma^2\lVert{\mathbf{h}}\rVert^2)$, which can be expressed as follows:
\begin{equation}\label{Section_Performance_Analysis_ECC_Definition_MISO}
\begin{split}
\bar{\mathcal{C}}_{\rm{rayleigh}}&={\mathbb{E}}\{\log_2(1+p/\sigma^2\lVert{\mathbf{h}}\rVert^2)\}\\
&=\int_0^{\infty}\log_2(1+p/\sigma^2 x)f_{\lVert{\mathbf{h}}\rVert^2}(x){\rm{d}}x.
\end{split}
\end{equation}
The analysis of the ECC also comprises three steps.

$\bullet$ \emph{\textbf{Step 1 - Analyzing the Statistics of the Channel Gain:}} 
In the first step, we analyze the statistics of the channel gain $\lVert{\mathbf{h}}\rVert^2$. The results are provided in \textbf{Lemma} \ref{Section_Performance_Analysis_OP_SNR_CDF_PDF_Lemma}.

$\bullet$ \emph{\textbf{Step 2 - Deriving a Closed-Form Expression for the ECC:}} 
In the second step, we exploit the PDF of $\lVert{\mathbf{h}}\rVert^2$ to derive a closed-form expression for $\bar{\mathcal{C}}_{\rm{rayleigh}}$, which yields the following theorem.

\begin{theorem}
The ECC can be expressed in closed form as follows:
\begin{equation}\label{Section_Performance_Analysis_ECC_Calculation_MISO}
\begin{split}
\bar{\mathcal{C}}_{\rm{rayleigh}}&=\frac{\lambda_{\min}^{r_{\mathbf{R}}}}{\prod_{i=1}^{r_{\mathbf{R}}}{\lambda}_i}\sum_{k=0}^{\infty}\sum_{\mu=0}^{r_{\mathbf{R}}+k-1}
\frac{\psi_k/\ln{2}}{(r_{\mathbf{R}}+k-1-\mu)!}\\
&\times\left[\frac{(-1)^{r_{\mathbf{R}}+k-\mu}e^{\frac{1}{{p/\sigma^2}{\lambda_{\min}}}}}{({p/\sigma^2}{\lambda_{\min}})^{r_{\mathbf{R}}+k-1-\mu}}{\rm{Ei}}\left({\frac{-1}{{p/\sigma^2}\lambda_{\min}}}\right)\right.\\
&+\left.\sum_{u=1}^{r_{\mathbf{R}}+k-1-\mu}(u-1)!\left(\frac{-1}{p/\sigma^2\lambda_{\min}}\right)^{r_{\mathbf{R}}+k-1-\mu-u}\right].
\end{split}
\end{equation}
where ${\rm{Ei}}\left(x\right)=-\int_{-x}^{\infty}{e}^{-t}t^{-1}{\rm{d}}t$ denotes the exponential integral function.
\end{theorem}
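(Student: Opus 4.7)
The plan is to derive the closed-form ECC expression by inserting the PDF of $\lVert\mathbf{h}\rVert^2$ from \textbf{Lemma \ref{Section_Performance_Analysis_OP_SNR_CDF_PDF_Lemma}} into the integral definition \eqref{Section_Performance_Analysis_ECC_Definition_MISO} and then reducing the resulting expectation to a canonical integral involving $\ln(1+ax)$ against a Gamma-type density. Since the PDF \eqref{Section_Performance_Analysis_OP_SNR_PDF} is an absolutely convergent series of Gamma densities with index $r_{\mathbf{R}}+k$ and scale $\lambda_{\min}$, the first step is to swap summation and integration (justified by Tonelli's theorem, as all terms in the integrand are nonnegative after the change of variable $\log_2 = \ln/\ln 2$), reducing the ECC to
\begin{equation}
\bar{\mathcal{C}}_{\rm{rayleigh}} = \frac{\lambda_{\min}^{r_{\mathbf{R}}}}{\prod_{i=1}^{r_{\mathbf{R}}}\lambda_i}\sum_{k=0}^{\infty}\frac{\psi_k}{\lambda_{\min}^{r_{\mathbf{R}}+k}\Gamma(r_{\mathbf{R}}+k)\ln 2}\,I_k,
\end{equation}
where $I_k \triangleq \int_0^\infty \ln(1+\tfrac{p}{\sigma^2}x)\,x^{r_{\mathbf{R}}+k-1}\,e^{-x/\lambda_{\min}}\,\mathrm{d}x$.

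Next, I would evaluate $I_k$ in closed form. The natural route is to use the standard identity (obtained by repeated integration by parts, or from tables such as Gradshteyn--Ryzhik 4.337.5 after a change of variables)
\begin{equation}
\int_0^\infty \ln(1+ax)\,x^{n-1}\,e^{-bx}\,\mathrm{d}x \;=\; (n-1)!\sum_{\mu=0}^{n-1}\frac{J_{n-1-\mu}(a,b)}{(n-1-\mu)!},
\end{equation}
where $J_m(a,b)$ is the elementary antiderivative expressed through $\mathrm{Ei}(-b/a)$, an exponential prefactor $e^{b/a}$, and a finite polynomial tail in $1/(ab)$. Setting $a=p/\sigma^2$, $b=1/\lambda_{\min}$, and $n=r_{\mathbf{R}}+k$, the outer sum over $\mu$ and the inner sum over $u$ in the claimed identity emerge directly from expanding $J_m(a,b)$; the sign pattern $(-1)^{r_{\mathbf{R}}+k-\mu}$ and the factor $e^{1/((p/\sigma^2)\lambda_{\min})}$ arise from the exponential shift, and the finite polynomial correction $\sum_{u=1}^{\cdots}(u-1)!(-1)^{\cdots}/(p/\sigma^2\lambda_{\min})^{\cdots}$ collects the residues of integration by parts before $\mathrm{Ei}$ is reached. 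Assembling these pieces and reindexing produces exactly \eqref{Section_Performance_Analysis_ECC_Calculation_MISO}.

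The main obstacle will be the bookkeeping in step two: keeping the factorials, signs, and the polynomial tail consistent while performing the $n-1$ iterations of integration by parts that convert $\int x^{n-1}e^{-bx}\ln(1+ax)\mathrm{d}x$ into the $\mathrm{Ei}$-plus-polynomial form. A clean way to avoid repeated induction would be to differentiate the Laplace transform $\int_0^\infty e^{-bx}\ln(1+ax)\mathrm{d}x = \tfrac{1}{b}e^{b/a}\,\mathrm{E}_1(b/a)$ a total of $n-1$ times with respect to $b$ (using $(-1)^{n-1}\partial_b^{\,n-1}e^{-bx}=x^{n-1}e^{-bx}$) and then apply Leibniz's rule; this recovers the double sum in \eqref{Section_Performance_Analysis_ECC_Calculation_MISO} with $\mathrm{Ei}(-1/(p/\sigma^2\lambda_{\min}))=-\mathrm{E}_1(1/(p/\sigma^2\lambda_{\min}))$. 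Convergence of the outer $k$-series is inherited from the convergence of \eqref{Section_Performance_Analysis_OP_SNR_PDF} established in \textbf{Lemma \ref{Section_Performance_Analysis_OP_SNR_CDF_PDF_Lemma}}, since $\log_2(1+(p/\sigma^2)x)$ grows only logarithmically and is dominated by the already-integrable tail of the Gamma mixture.
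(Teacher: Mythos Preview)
Your proposal is correct and follows essentially the same approach as the paper: substitute the PDF \eqref{Section_Performance_Analysis_OP_SNR_PDF} into \eqref{Section_Performance_Analysis_ECC_Definition_MISO}, interchange summation and integration, and evaluate the resulting term-by-term integrals via \cite[Eq.~(4.337.5)]{Ryzhik2007}. Your write-up in fact supplies more justification (Tonelli for the interchange, the Leibniz-differentiation alternative, and the convergence remark) than the paper's one-line proof.
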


\begin{proof}
This theorem is proved by substituting \eqref{Section_Performance_Analysis_OP_SNR_PDF} into \eqref{Section_Performance_Analysis_ECC_Definition_MISO} and solving the resulting integral with the aid of \cite[Eq. (4.337.5)]{Ryzhik2007}.
\end{proof}

$\bullet$ \emph{\textbf{Step 3 - Deriving a High-SNR Approximation for the ECC:}} 
In the third step, we perform asymptotic analysis for the ECC assuming a sufficiently high SNR ($p\rightarrow\infty$) in order to provide additional insights for system design. The asymptotic ECC in the high-SNR regime is presented in the following corollary.

\begin{corollary}\label{Section_Performance_Analysis_ECC_Asymptotic_Expression_Standard_Theorem}
The asymptotic ECC in the high-SNR regime can be expressed in the following form:
\begin{align}\label{Section_Performance_Analysis_ECC_Asymptotic_Expression_Standard}
\lim_{p\rightarrow\infty}{\bar{\mathcal{C}}}_{\rm{rayleigh}}\simeq {\mathcal{S}}_{\infty}\left(\log_2(p)-{\mathcal{L}}_{\infty}\right),
\end{align}
where ${\mathcal{S}}_{\infty}=1$ and
\begin{equation}\label{Section_Performance_Analysis_ECC_Calculation_Asymptotic_MISO}
\begin{split}
{\mathcal{L}}_{\infty}&=-\mathbb{E}\{\log_2({\lVert{\mathbf{h}}\rVert^2}/\sigma^2)\}=\log_2{\sigma^2}\\
&-\frac{\lambda_{\min}^{r_{\mathbf{R}}}}{\prod_{i=1}^{r_{\mathbf{R}}}{\lambda}_i}\sum_{k=0}^{\infty}\frac{\psi_k(\psi(r_{\mathbf{R}}+k)+\ln(\lambda_{\min}))}{\ln2},
\end{split}
\end{equation}
where $\psi\left(x\right)\!=\!\frac{{\rm d}}{{\rm d}x}\ln{\Gamma\left(x\right)}$ is the Digamma function.
\end{corollary}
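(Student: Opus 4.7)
My plan is to establish the two ingredients of the affine high-SNR expansion $\bar{\mathcal{C}}_{\rm{rayleigh}} \simeq {\mathcal{S}}_{\infty}(\log_2 p - {\mathcal{L}}_{\infty})$ separately: first identifying the slope ${\mathcal{S}}_{\infty}$ and the "correct" form of the offset via a direct large-$p$ expansion of $\log_2(1+(p/\sigma^2)\lVert\mathbf{h}\rVert^2)$, and then evaluating the resulting logarithmic expectation in closed form using the PDF of $\lVert\mathbf{h}\rVert^2$ from \textbf{Lemma \ref{Section_Performance_Analysis_OP_SNR_CDF_PDF_Lemma}}.

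For the first step, I would write
\begin{equation}
\log_2\!\left(1+\frac{p}{\sigma^2}\lVert\mathbf{h}\rVert^2\right) = \log_2(p) + \log_2\!\left(\frac{\lVert\mathbf{h}\rVert^2}{\sigma^2}\right) + \log_2\!\left(1+\frac{\sigma^2}{p\,\lVert\mathbf{h}\rVert^2}\right),
\end{equation}
and take expectations. Since $\mathbf{R}\succeq\mathbf 0$ has rank $r_{\mathbf R}\geq 1$, the random variable $\lVert\mathbf{h}\rVert^2$ is positive almost surely, and the third term vanishes pointwise as $p\to\infty$. A dominated-convergence argument (using that $\log_2(1+\sigma^2/(p\lVert\mathbf{h}\rVert^2))$ is bounded above by $(\sigma^2/\ln 2)\cdot 1/(p\lVert\mathbf{h}\rVert^2)$, whose expectation is finite because $\mathbb{E}\{\lVert\mathbf{h}\rVert^{-2}\}<\infty$ whenever $r_{\mathbf R}\geq 2$, and handled separately in the rank-one case) then yields ${\mathcal{S}}_{\infty}=1$ and ${\mathcal{L}}_{\infty}=-\mathbb{E}\{\log_2(\lVert\mathbf{h}\rVert^2/\sigma^2)\}$, which already matches the first equality in \eqref{Section_Performance_Analysis_ECC_Calculation_Asymptotic_MISO}.

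For the second step, I would substitute the PDF \eqref{Section_Performance_Analysis_OP_SNR_PDF} into $\mathbb{E}\{\ln \lVert\mathbf{h}\rVert^2\}=\int_0^\infty \ln(x) f_{\lVert\mathbf{h}\rVert^2}(x)\,\mathrm{d}x$, interchange sum and integral (justified by absolute convergence of the series expansion of $f_{\lVert\mathbf{h}\rVert^2}$ established in the proof of \textbf{Lemma \ref{Section_Performance_Analysis_OP_SNR_CDF_PDF_Lemma}}), and evaluate each summand using the standard identity
\begin{equation}
\int_0^\infty x^{a-1} e^{-x/\lambda}\ln(x)\,\mathrm{d}x = \lambda^a\,\Gamma(a)\,[\psi(a)+\ln\lambda], \qquad a>0,\ \lambda>0,
\end{equation}
which follows from differentiating $\int_0^\infty x^{a-1}e^{-x/\lambda}\mathrm{d}x=\lambda^a\Gamma(a)$ with respect to $a$. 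Setting $a=r_{\mathbf R}+k$, $\lambda=\lambda_{\min}$ cancels the $\lambda_{\min}^{r_{\mathbf R}+k}\Gamma(r_{\mathbf R}+k)$ in the denominator of the PDF, leaving exactly the series in \eqref{Section_Performance_Analysis_ECC_Calculation_Asymptotic_MISO} after dividing by $\ln 2$ and adding back $\log_2\sigma^2$.

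The main obstacle is the rigorous justification of swapping the limit with the expectation (and of swapping sum with integral), rather than the computation itself. For $r_{\mathbf R}=1$ the density of $\lVert\mathbf{h}\rVert^2$ is bounded away from zero near the origin, so $1/\lVert\mathbf{h}\rVert^2$ has infinite mean and the naive dominated convergence bound fails; in this case I would instead use that $\log_2(1+t)-\log_2(t) = \log_2(1+1/t)$ is monotonically decreasing in $t>0$, so monotone convergence applies directly to the residual term $\mathbb{E}\{\log_2(1+\sigma^2/(p\lVert\mathbf h\rVert^2))\}\downarrow 0$ as $p\to\infty$, giving the desired expansion without any integrability issue. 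The remaining convergence interchanges are routine since $\psi_k$ are the coefficients of an absolutely convergent series on $[0,\infty)$ as argued in Appendix \ref{Section_Performance_Analysis_OP_SNR_CDF_PDF_Lemma_Proof}.
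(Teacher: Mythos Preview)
Your approach is essentially the same as the paper's: the paper also splits $\log_2(1+p\lVert\mathbf{h}\rVert^2/\sigma^2)\simeq\log_2(p)+\log_2(\lVert\mathbf{h}\rVert^2/\sigma^2)$ at high SNR and then evaluates $\mathbb{E}\{\log_2(\lVert\mathbf{h}\rVert^2/\sigma^2)\}$ term-by-term using the same Gamma-type integral (cited there as \cite[Eq.~(4.352.1)]{Ryzhik2007}). Your treatment is actually more careful than the paper's, which simply invokes $\lim_{x\to\infty}\log_2(1+ax)/\log_2(ax)=1$ without discussing the interchange of limit and expectation; your monotone-convergence argument for the residual $\mathbb{E}\{\log_2(1+\sigma^2/(p\lVert\mathbf{h}\rVert^2))\}\downarrow 0$ works uniformly for all $r_{\mathbf{R}}\geq 1$ and is the cleanest way to close that gap.
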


\begin{proof}
Please refer to Appendix \ref{Section_Performance_Analysis_ECC_Asymptotic_Expression_Standard_Theorem_Proof}.
\end{proof}
It is worth noting that $\log_2({p})$ in \eqref{Section_Performance_Analysis_ECC_Asymptotic_Expression_Standard} can be rewritten as $\log_2({p})=\frac{10\log_{10}({p})}{10\log_{10}2}=\frac{p|_{\rm{dB}}}{3~{\rm{dB}}}$. Hence, ${\mathcal{L}}_{\infty}$ is termed the \emph{high-SNR power offset} in 3-dB units \cite{Lozano2005_TIT}, and ${\mathcal{S}}_{\infty}$ is referred to as the \emph{high-SNR slope}, the \emph{number of DoFs}, the \emph{maximum multiplexing gain}, or the \emph{pre-log factor} in bits/s/Hz/(3 dB). The high-SNR slope ${\mathcal{S}}_{\infty}$ characterizes the ECC as a function of the transmit power, at high SNR, on a log scale. In multi-antenna communications, ${\mathcal{S}}_{\infty}$ quantifies the number of spatial DoFs, which determines the number of spatial dimensions available for communications. On the other hand, ${\mathcal{L}}_{\infty}$ is the power shift with respect to the baseline ECC curve of $S_{\infty}\log_2(p)$. Note that the ECC can be improved by increasing ${\mathcal{S}}_{\infty}$ or decreasing ${\mathcal{L}}_{\infty}$.

\begin{remark}
The results in \textbf{Corollary} \ref{Section_Performance_Analysis_ECC_Asymptotic_Expression_Standard_Theorem} reveal that the high-SNR slope and the high-SNR power offset of the ECC are given by $1$ and $-\mathbb{E}\{\log_2({\lVert{\mathbf{h}}/\sigma^2\rVert^2})\}$, respectively.
\end{remark}

\begin{figure}[!t]
\setlength{\abovecaptionskip}{0pt}
\centering
\includegraphics[width=0.4\textwidth]{./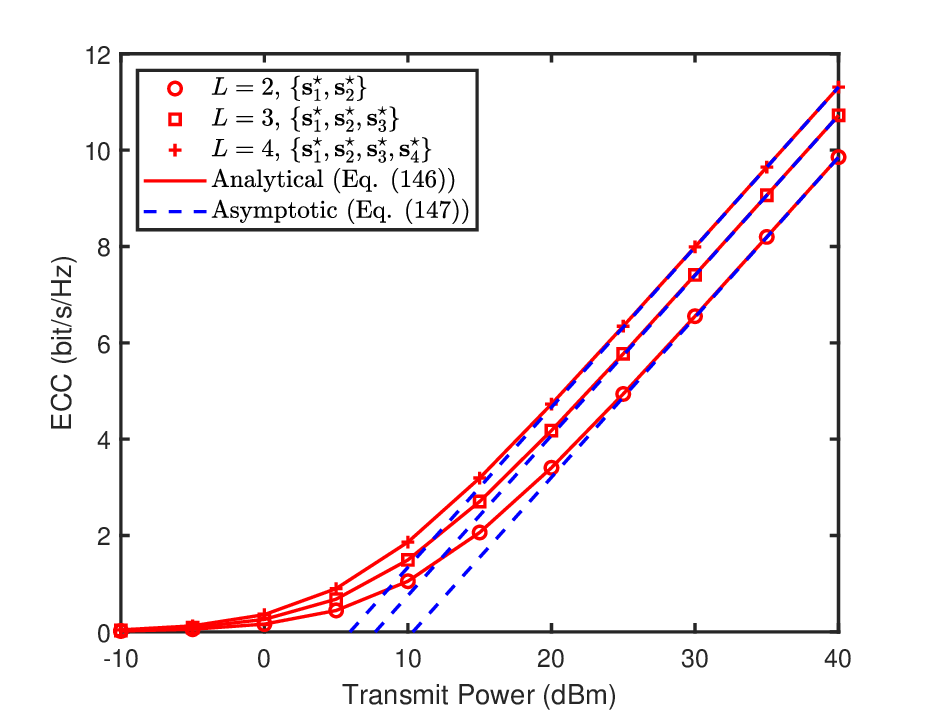}
\caption{Ergodic channel capacity for correlated MISO Rayleigh channels. The system is operating at $28$ GHz. $N_x=N_z=33$, $d=\frac{\lambda}{2}$, $(\theta,\phi)=(\frac{\pi}{2},\frac{\pi}{2})$, $r=4$ m, $A=\frac{\lambda^2}{4\pi}$, $e_a=1$, ${\bm\rho}=\hat{\mathbf J}({\mathbf{s}}_{m,n})=[1,0,0]^{\mathsf{T}}$, $\forall m,n$, $\sigma^2=-90$ dBm, and $\sigma_{\ell}^2=1$, $\forall \ell$. The coordinates (in meters) of the scatterers are given by $
\mathbf{s}_{1}^{\star}=[\frac{7}{10}\sin{\frac{\pi}{4}},r,\frac{7}{10}\cos{\frac{\pi}{4}}]^{\mathsf{T}}$, $\mathbf{s}_{2}^{\star}=[\frac{4}{5}\sin{\frac{\pi}{8}},r,\frac{4}{5}\cos{\frac{\pi}{8}}]^{\mathsf{T}}$, $
\mathbf{s}_{3}^{\star}=[\frac{13}{20}\sin{\frac{47\pi}{64}},r,\frac{13}{20}\cos{\frac{47\pi}{64}}]^{\mathsf{T}}$, and
$\mathbf{s}_{4}^{\star}=[\frac{2}{5}\sin{\frac{\pi}{7}},r,\frac{2}{5}\cos{\frac{\pi}{7}}]^{\mathsf{T}}$. The LoS channels $\overline{\mathbf{h}}$, $\{h_{\ell}({\mathbf{r}}_{\ell},\mathbf{r})\}_{\ell=1}^{L}$, and $\{{\mathbf{h}}_{\ell}(\mathbf{r}_{\ell})\}_{\ell=1}^{L}$ follow the USW channel model.}
\label{Performance Analysis Figure: ECC_Discrete}
\end{figure}

$\bullet$ \emph{\textbf{Numerical Results:}} To illustrate the above results, we show the ECC versus the transmit power, $p$, in {\figurename} {\ref{Performance Analysis Figure: ECC_Discrete}} for the USW channel model and various values of $L$. The analytical and asymptotic results are calculated using \eqref{Section_Performance_Analysis_ECC_Calculation_MISO} and \eqref{Section_Performance_Analysis_ECC_Asymptotic_Expression_Standard}, respectively. Simulation results are plotted using markers. As {\figurename} {\ref{Performance Analysis Figure: ECC_Discrete}} shows, the analytical results are in excellent agreement with the simulated results, and the derived asymptotic results approach the numerical results in the high-SNR regime ($p\rightarrow\infty$). For the considered case, the high-SNR slope is independent of $L$, which causes the ECC curves in {\figurename} {\ref{Performance Analysis Figure: ECC_Discrete}} to be parallel to each other for high transmit powers.

$\bullet$ \emph{\textbf{Extension to MIMO Case:}} The definition presented in \eqref{Section_Performance_Analysis_ECC_Definition_MISO} applies to MISO systems. The extension to single-user MIMO systems with isotropic inputs is given by
\begin{equation}\label{Section_Performance_Analysis_ECC_Definition_SU_MIMO}
\bar{\mathcal{C}}=\mathbb{E}\{\log_2\det({\mathbf{I}}+p/\sigma^2{\mathbf{H}}{\mathbf{H}}^{\mathsf{H}})\}.
\end{equation}
The evaluation of the ECC in \eqref{Section_Performance_Analysis_ECC_Definition_SU_MIMO} requires the application of random matrix theory; see \cite{Tulino2004_FTC,Muller2013_Arxiv,Couillet2011,Akemann2013} for more details. Furthermore, the existing literature has shown that the asymptotic ECC for MIMO channels in the high-SNR regime can be also expressed in the standard form given in \eqref{Section_Performance_Analysis_ECC_Asymptotic_Expression_Standard} (see, e.g., \cite{Lozano2005_TIT}). Besides the ECC and OP, the diversity-multiplexing tradeoff (DMT) is another important performance metric for statistical NFC MIMO channels. Based on the system model sketched in {\figurename} {\ref{Multipath_3D_Model}}, the statistical MIMO channel under Rayleigh fading can be characterized as ${\mathbf{H}}={\mathbf{A}}_{\rm{r}}{\bm\Lambda}_{\mathbf{H}}{\mathbf{A}}_t$, where ${\mathbf{A}}_{\rm{r}}$ and ${\mathbf{A}}_{\rm{r}}$ are two deterministic matrices containing the array steering vectors, and ${\bm\Lambda}_{\mathbf{H}}$ is a diagonal random matrix with complex Gaussian distributed diagonal elements. Since ${\mathbf{A}}_{\rm{r}}{\bm\Lambda}_{\mathbf{H}}{\mathbf{A}}_t$ corresponds to a finite-dimensional channel model \cite{Sayeed2002deconstructing}, conventional random matrix theory tools are difficult to apply. Therefore, a theoretical analysis of the DMT for this channel is an open problem, which deserves further research attention.

\vspace{-0.05cm}
\subsubsection{Analysis of the EMI for Rayleigh Channels}
For our analysis of the ECC, we have utilized Shannon's formula to calculate the channel capacity, i.e., $\log_2(1+\gamma)$. From the information-theoretic point of view, the channel capacity $\log_2(1+\gamma)$ is achievable when the transmitter sends Gaussian distributed source signals \cite{telatar1999capacity}. However, although Gaussian signals are capacity-achieving, practical systems transmit signals belong to finite and discrete constellations, such as quadrature amplitude modulation (QAM) \cite{Wu2018,Lozano2018,Rodrigues2010,Alvarado2014}. The analysis of the mutual information (MI) for finite-alphabet input signals (finite-alphabet inputs) is very different from that for Gaussian distributed input signals (Gaussian inputs). Motivated by this, we establish the basic analytical framework for the EMI achieved by finite-alphabet inputs for correlated fading channels.

The EMI for fading channels is best understood by considering the MI of a scalar Gaussian channel with finite-alphabet inputs. To this end, consider the scalar AWGN channel
\begin{align}\label{Section_Performance_Analysis_AWGN_Channel}
Y=\sqrt{\gamma}X+Z,
\end{align}
where $Z\sim{\mathcal {CN}}\left(0,1\right)$ is the AWGN, $\gamma$ is the SNR, and $X\in{\mathbb{C}}$ is the transmitted symbol. We assume that $X$ satisfies the power constraint ${\mathbb{E}}\{\left|X\right|^2\}=1$ and is taken from a finite constellation alphabet $\mathcal X$ consisting of $Q$ points, i.e., ${\mathcal{X}}=\left\{\mathsf{x}_q\right\}_{q=1}^{Q}$. The $q$th symbol in $\mathcal{X}$, i.e, $\mathsf{x}_q$, is transmitted with probability $p_q$, $0 < p_q < 1$, and the vector of probabilities ${\mathbf{p}}_{\mathcal{X}}\triangleq[p_1,\cdots,p_Q]^{\mathsf{T}}$ is called the input distribution with $\sum_{q=1}^{Q}p_q=1$. For this AWGN channel, the MI is given by \cite{Lozano2018}
\vspace{-0.05cm}
\begin{equation}\label{Section_Performance_Analysis_MI_AWGN_Definition}
\begin{split}
I_{\mathcal X}\left(\gamma\right)&=H_{{\mathbf{p}}_{\mathcal{X}}}-\frac{1}{\pi}\sum_{q=1}^{Q}\int_{\mathbb C}p_qe^{-\left|u-\sqrt{\gamma}{\mathsf{x}}_q\right|^2}\\
&\times\log_2{\left(\sum_{{q'}=1}^{Q}\frac{p_{q'}}{p_q}e^{\left|u-\sqrt{\gamma}{\mathsf{x}}_q\right|^2-\left|u-\sqrt{\gamma}{\mathsf{x}}_{q'}\right|^2}\right)}{\rm d}u,
\end{split}
\end{equation}
where $H_{{\mathbf{p}}_{\mathcal{X}}}=\sum_{q=1}^{Q}p_q\log_2\left(\frac{1}{p_q}\right)$ is the entropy of the input distribution ${\mathbf{p}}_{\mathcal{X}}$ in bits. When $\mathcal{X}$ is a Gaussian constellation, then $I_{\mathcal X}\left(\gamma\right)$ degenerates to $I_{\mathcal X}\left(\gamma\right)=\log_2(1+\gamma)$. In contrast to Shannon's formula, the MI in \eqref{Section_Performance_Analysis_MI_AWGN_Definition} generally cannot be simplified to a closed-form expression, which complicates the subsequent analysis.

By a straightforward extension of \eqref{Section_Performance_Analysis_MI_AWGN_Definition} to a single-input vector channel, the EMI achieved in the considered MISO Rayleigh channel can be expressed as follows \cite{Ouyang2020_CL}:
\begin{equation}\label{Section_Performance_Analysis_EMI_Definition_MISO}
\begin{split}
\bar{\mathcal{I}}_{\mathcal{X}}^{\rm{rayleigh}}&={\mathbb{E}}\{I_{\mathcal X}(p/\sigma^2\lVert{\mathbf{h}}\rVert^2)\}\\
&=\int_{0}^{\infty}I_{\mathcal X}(p/\sigma^2 x)f_{\lVert{\mathbf{h}}\rVert^2}(x){\rm{d}}x.
\end{split}
\end{equation}
To characterize the EMI, we follow three main steps which are detailed in the sequel.

$\bullet$ \emph{\textbf{Step 1 - Analyzing the Statistics of the Channel Gain:}} 
Similar to the analyses of the OP and the ECC, the statistics of $\lVert{\mathbf{h}}\rVert^2$ are needed. The results are given in \eqref{Section_Performance_Analysis_OP_SNR_PDF} and \eqref{Section_Performance_Analysis_OP_SNR_CDF}.

\begin{table*}[!t]
\caption{Fitting Coefficients for 4/16/64/256-QAM in \eqref{Section_Performance_Analysis_MI_AWGN_Approximation}.}
\label{tab:Section_Performance_Analysis_EDCF_Table}
\centering
\resizebox{1\textwidth}{!}{
\begin{tabular}{!{\vrule width1pt}c!{\vrule width1pt}c!{\vrule width1pt}c!{\vrule width1pt}c!{\vrule width1pt}c!{\vrule width1pt}c!{\vrule width1pt}c!{\vrule width1pt}c!{\vrule width1pt}c!{\vrule width1pt}c!{\vrule width1pt}c!{\vrule width1pt}c!{\vrule width1pt}c!{\vrule width1pt}c!{\vrule width1pt}c!{\vrule width1pt}}
\Xhline{1pt} 
$\mathcal{X}$ & $k_{\mathcal{X}}$ & $\zeta^{(\mathcal{X})}_1$ & $\zeta^{(\mathcal{X})}_2$      & $\zeta^{(\mathcal{X})}_3$      & $\zeta^{(\mathcal{X})}_4$      & $\zeta^{(\mathcal{X})}_5$      & $\zeta^{(\mathcal{X})}_6$      & $\vartheta^{({\mathcal{X}})}_1$      & $\vartheta^{({\mathcal{X}})}_2$      & $\vartheta^{({\mathcal{X}})}_3$      & $\vartheta^{({\mathcal{X}})}_4$      & $\vartheta^{({\mathcal{X}})}_5$      & $\vartheta^{({\mathcal{X}})}_6$      & RMSE    \\ \Xhline{1pt} 
4-QAM         & 6                 & 0.1634                    & 0.1161 & 0.0364 & 0.3830 & 0.2004 & 0.1007 & 0.1811 & 0.6005 & 1.8110 & 0.0066 & 0.0526 & 0.0168 & $3.50\times10^{-4}$ \\ \Xhline{1pt} 
16-QAM        & 5                 & 0.0450                    & 0.3715 & 0.1480 & 0.2325 & 0.2030 & ---    & 1.8404 & 0.0245 & 0.6305 & 0.1876 & 0.0447 & ---    & $2.62\times10^{-4}$ \\ \Xhline{1pt} 
64-QAM        & 4                 & 0.1904                    & 0.5514 & 0.0479 & 0.2103 & ---    & ---    & 0.2212 & 0.1077 & 2.0653 & 0.7627 & ---    & ---    & $1.18\times10^{-4}$ \\ \Xhline{1pt} 
256-QAM       & 3                 & 0.6603                    & 0.3127 & 0.0270 & ---    & ---    & ---    & 0.5343 & 0.9290 & 3.0391 & ---    & ---    & ---    & $5.47\times10^{-5}$ \\ \Xhline{1pt} 
\multicolumn{15}{l}{${{{}^{\ddag}}}$RMSE denotes the root mean square error caused by approximating the exact MI.}\\
\end{tabular}}
\end{table*}

$\bullet$ \emph{\textbf{Step 2 - Deriving a Closed-Form Expression for the EMI:}} 
In the second step, we leverage the PDF of $\lVert{\mathbf{h}}\rVert^2$ to derive a closed-form expression for the EMI.

As stated before, there is no closed-form expression for the MI, which makes the derivation of $\bar{\mathcal{I}}_{\mathcal{X}}$ a challenging task. As a compromise, we resort to developing accurate approximations of $\bar{\mathcal{I}}_{\mathcal{X}}$. As suggested in \cite{Ouyang2020_CL, Ouyang2020_TCOM2, Ouyang2020_TCOM1, Ouyang2020_WCL}, by using multi-exponential decay curve fitting (M-EDCF), the MI given in \eqref{Section_Performance_Analysis_OP_SNR_CDF} can be approximated as follows:
\begin{equation}\label{Section_Performance_Analysis_MI_AWGN_Approximation}
I_{\mathcal X}\left(\gamma\right)\approx \hat{I}_{\mathcal X}\left(\gamma\right)=H_{{\mathbf{p}}_{\mathcal{X}}}\left(1-\sum_{j=1}^{k_{\mathcal{X}}}\zeta^{({\mathcal{X}})}_je^{-\vartheta^{({\mathcal{X}})}_j\gamma}\right),
\end{equation}
where $\sum_{j=1}^{k_{\mathcal{X}}}\zeta^{({\mathcal{X}})}_j=1$. The fitting parameters $k_{\mathcal{X}}$, $\left\{\zeta^{({\mathcal{X}})}_j\right\}_{j=1}^{k_{\mathcal{X}}}$, and $\left\{\vartheta^{({\mathcal{X}})}_j\right\}_{j=1}^{k_{\mathcal{X}}}$ can be found by using the open-source fitting software package 1stOpt\footnote{This package is online available: \url{http://www.7d-soft.com/en/}.} \cite{Ouyang2020_TCOM2}. Table \ref{tab:Section_Performance_Analysis_EDCF_Table} lists the fitting parameters of the most commonly used equiprobable square QAM constellations, i.e., 4-QAM (or quadrature phase shift keying, QPSK), 16-QAM, 64-QAM, and 256-QAM. Based on the data in Table \ref{tab:Section_Performance_Analysis_EDCF_Table} and the discussions in \cite{Ouyang2020_TCOM2}, using $\hat{I}_{\mathcal X}\left(\cdot\right)$ to approximate ${I}_{\mathcal X}\left(\cdot\right)$ yields an absolute error of ${\mathcal{O}}(10^{-4})$ and a relative error of ${\mathcal{O}}(10^{-3})$. Considering this excellent approximation quality, it is reasonable to employ $\hat{I}_{\mathcal X}\left(\cdot\right)$ on approximating the EMI, which yields the following theorem.

\begin{theorem}\label{Section_Performance_Analysis_EMI_Approximation_Explicit_MISO_Theorem}
The EMI achieved by finite-alphabet inputs can be approximated as follows:
\begin{equation}\label{Section_Performance_Analysis_EMI_Approximation_Explicit_MISO}
\bar{\mathcal{I}}_{\mathcal{X}}^{\rm{rayleigh}}\approx H_{{\mathbf{p}}_{\mathcal{X}}}-\sum_{j=1}^{k_{\mathcal{X}}}
\sum_{k=0}^{\infty}\frac{\frac{\lambda_{\min}^{r_{\mathbf{R}}}H_{{\mathbf{p}}_{\mathcal{X}}}}{\prod_{i=1}^{r_{\mathbf{R}}}{\lambda}_i}\zeta^{({\mathcal{X}})}_j\psi_k}{\left(1+\lambda_{\min}\frac{p}{\sigma^2}\vartheta^{({\mathcal{X}})}_j\right)^{r_{\mathbf{R}}+k}}.
\end{equation}
\end{theorem}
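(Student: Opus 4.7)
The plan is to substitute the M-EDCF approximation of $I_{\mathcal{X}}(\gamma)$ from \eqref{Section_Performance_Analysis_MI_AWGN_Approximation} into the EMI definition \eqref{Section_Performance_Analysis_EMI_Definition_MISO}, exchange sum and integral, and evaluate each resulting term in closed form using a single Gamma-type integral. Since all the pieces needed (the M-EDCF expansion with coefficients summing to one, the PDF of $\lVert\mathbf{h}\rVert^2$ in \eqref{Section_Performance_Analysis_OP_SNR_PDF}, and the fact that $f_{\lVert\mathbf{h}\rVert^2}$ integrates to one) are available earlier in the paper, this reduces to a short calculation rather than a conceptual argument.

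First I would write
\begin{equation*}
\bar{\mathcal{I}}_{\mathcal{X}}^{\rm{rayleigh}}\approx \int_{0}^{\infty}H_{{\mathbf{p}}_{\mathcal{X}}}\Bigl(1-\sum_{j=1}^{k_{\mathcal{X}}}\zeta^{({\mathcal{X}})}_je^{-\vartheta^{({\mathcal{X}})}_j\frac{p}{\sigma^2}x}\Bigr)f_{\lVert\mathbf{h}\rVert^{2}}(x)\,{\rm d}x.
\end{equation*}
Because $\sum_{j}\zeta^{({\mathcal{X}})}_{j}=1$ and $\int_{0}^{\infty}f_{\lVert\mathbf{h}\rVert^{2}}(x)\,{\rm d}x=1$, the constant part of the integrand yields the leading term $H_{{\mathbf{p}}_{\mathcal{X}}}$. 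What remains is, for each $j$, the single moment generating function evaluation
\begin{equation*}
\mathcal{M}_{j}\triangleq\int_{0}^{\infty}e^{-\vartheta^{({\mathcal{X}})}_{j}\frac{p}{\sigma^2}x}f_{\lVert\mathbf{h}\rVert^{2}}(x)\,{\rm d}x.
\end{equation*}

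Next I would substitute \eqref{Section_Performance_Analysis_OP_SNR_PDF}, swap the (absolutely convergent, non-negative) infinite sum in $k$ with the integral by Fubini--Tonelli, and invoke the basic identity $\int_{0}^{\infty}x^{n-1}e^{-ax}{\rm d}x=\Gamma(n)/a^{n}$ with $a=\frac{1}{\lambda_{\min}}+\vartheta^{({\mathcal{X}})}_{j}\frac{p}{\sigma^{2}}$ and $n=r_{\mathbf{R}}+k$. Writing $a^{r_{\mathbf{R}}+k}=\lambda_{\min}^{-(r_{\mathbf{R}}+k)}(1+\lambda_{\min}\vartheta^{({\mathcal{X}})}_{j}p/\sigma^{2})^{r_{\mathbf{R}}+k}$ and cancelling the $\lambda_{\min}^{-(r_{\mathbf{R}}+k)}$ and $\Gamma(r_{\mathbf{R}}+k)$ factors that already appear in the PDF gives
\begin{equation*}
\mathcal{M}_{j}=\frac{\lambda_{\min}^{r_{\mathbf{R}}}}{\prod_{i=1}^{r_{\mathbf{R}}}\lambda_{i}}\sum_{k=0}^{\infty}\frac{\psi_{k}}{\bigl(1+\lambda_{\min}\vartheta^{({\mathcal{X}})}_{j}p/\sigma^{2}\bigr)^{r_{\mathbf{R}}+k}}.
\end{equation*}
Multiplying by $H_{{\mathbf{p}}_{\mathcal{X}}}\zeta^{({\mathcal{X}})}_{j}$ and summing over $j$ reproduces the claimed double series in \eqref{Section_Performance_Analysis_EMI_Approximation_Explicit_MISO}.

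Essentially no step is hard; the only place that needs a line of care is the termwise interchange of the $k$-sum with the $x$-integral, but since every term is non-negative this is immediate by Tonelli's theorem, and the resulting series converges whenever $\lambda_{\min}\vartheta^{({\mathcal{X}})}_{j}p/\sigma^{2}>0$, which holds throughout the operating range. The only approximation error inherited in the final expression is that of the M-EDCF fit itself, whose root mean square error is quantified in Table \ref{tab:Section_Performance_Analysis_EDCF_Table}, so that the $\approx$ in \eqref{Section_Performance_Analysis_EMI_Approximation_Explicit_MISO} can be made precise by the $\mathcal{O}(10^{-4})$ pointwise bound recalled after \eqref{Section_Performance_Analysis_MI_AWGN_Approximation}.
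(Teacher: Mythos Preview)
Your proposal is correct and follows essentially the same approach as the paper: the paper's proof simply states that the result follows by substituting \eqref{Section_Performance_Analysis_OP_SNR_PDF} and \eqref{Section_Performance_Analysis_MI_AWGN_Approximation} into \eqref{Section_Performance_Analysis_EMI_Definition_MISO} and evaluating the resulting integral via \cite[Eq.~(3.326.2)]{Ryzhik2007}, which is exactly the Gamma-type identity $\int_{0}^{\infty}x^{n-1}e^{-ax}\,{\rm d}x=\Gamma(n)/a^{n}$ you invoke. Your write-up is in fact more detailed than the paper's, since you make the Fubini--Tonelli justification and the cancellation of the $\lambda_{\min}^{-(r_{\mathbf{R}}+k)}$ and $\Gamma(r_{\mathbf{R}}+k)$ factors explicit.
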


\begin{proof}
This theorem can be proved by substituting \eqref{Section_Performance_Analysis_OP_SNR_PDF} and \eqref{Section_Performance_Analysis_MI_AWGN_Approximation} into
\eqref{Section_Performance_Analysis_EMI_Definition_MISO} and calculating the resulting integral with the aid of \cite[Eq. (3.326.2)]{Ryzhik2007}.
\end{proof}
Given the closed-form expression of the EMI, the energy efficiency (EE), i.e., the total energy consumption per bit (in bit/Joule), can also be obtained as follows:
{
\begin{equation}\label{Section_Performance_Analysis_EE_Approximation_Explicit_MISO}
{\rm{EE}}=\frac{{\rm{EMI}}\times W}{P_{\rm{tot}}},
\end{equation}
}where $P_{\rm{tot}}$ denotes the total consumed power (in Watt), $W$ is the bandwidth (in Hz), and ${\rm{EMI}}$ represents the spectral efficiency (SE) or EMI (in bit/s/Hz). Particularly, we have ${\rm{EMI}}=\bar{\mathcal{C}}$ for Gaussian inputs and ${\rm{EMI}}=\bar{\mathcal{I}}_{\mathcal{X}}$ for finite-alphabet inputs.  According to the circuit power consumption model in \cite{cui2004energy,cui2005energy}, the total consumed power is calculated as follows:
{
\begin{equation}\label{Section_Performance_Analysis_Energy_Efficiency_Total_Power}
P_{\rm{tot}}=\zeta_{\rm{eff}}^{-1}p+P_{\rm{circ}},
\end{equation}
}where $\zeta_{\rm{eff}}<1$ is the efficiency of the power amplifier, $p$ denotes the transmit power, and $P_{\rm{circ}}$ is given as follows:
{
\begin{equation}\label{Section_Performance_Analysis_Energy_Efficiency_Circuit}
P_{\rm{circ}}=2P_{\rm{syn}}+N P_{\rm{TR}} + P_{\rm{RR}},
\end{equation}
}with $P_{\rm{syn}}$, $P_{\rm{TR}}$, and $P_{\rm{RR}}$ representing the circuit power consumptions of the frequency synthesizer, the transmit RF chain, and the receive RF chain, respectively. Inserting \eqref{Section_Performance_Analysis_Energy_Efficiency_Total_Power} and \eqref{Section_Performance_Analysis_Energy_Efficiency_Circuit} into \eqref{Section_Performance_Analysis_EE_Approximation_Explicit_MISO} yields
\begin{equation}\label{Section_Performance_Analysis_EE_Approximation_Explicit_MISO_Total}
{\rm{EE}}=\frac{{\rm{EMI}}\times W}{\zeta_{\rm{eff}}^{-1}p+2P_{\rm{syn}}+N P_{\rm{TR}} + P_{\rm{RR}}}.
\end{equation}

$\bullet$ \emph{\textbf{Step 3 - Deriving a High-SNR Approximation for the EMI:}} 
In the last step, we investigate the asymptotic behaviour of the EMI in the high-SNR regime, i.e., $p\rightarrow\infty$. The main results are summarized as follows.
% \vspace{-5pt}
\begin{corollary}\label{Section_Performance_Analysis_EMI_Asymptotic_Expression_Standard_Theorem}
The asymptotic EMI in the high-SNR regime can be expressed as follows:
\begin{align}\label{Section_Performance_Analysis_EMI_Asymptotic_Expression_Standard}
\lim_{p\rightarrow\infty}{\bar{\mathcal{I}}_{\mathcal{X}}}^{\rm{rayleigh}}\simeq H_{{\mathbf{p}}_{\mathcal{X}}}-({\mathcal{A}}_{\rm{a}}\cdot p)^{-{\mathcal{A}}_{\rm{d}}},
\end{align}
where ${\mathcal{A}}_{\rm{a}}=\frac{1}{\sigma^2}\left(\frac{1}{\ln{2}}\frac{{\mathcal M}\left[{{\rm{MMSE}}_{\mathcal{X}}(t)};r_{\mathbf{R}}+1\right]}{r_{\mathbf{R}}!\prod_{i=1}^{r_{\mathbf{R}}}{\lambda}_i}\right)^{-1/r_{\mathbf{R}}}\in(0,\infty)$ and ${\mathcal{A}}_{\rm{d}}=r_{\mathbf{R}}$. Here, ${\rm{MMSE}}_{\mathcal X}\left(t\right)$ denotes the minimum mean square error (MMSE) in estimating $X$ in \eqref{Section_Performance_Analysis_AWGN_Channel} from $Y$ when $\gamma=t$, and ${\mathcal M}\left[\varrho\left(t\right);z\right]\triangleq\int_{0}^{\infty}t^{z-1}\varrho\left(t\right){\rm d}t$ denotes the Mellin transform of $\varrho\left(t\right)$ \cite{flajolet1995mellin}.
\end{corollary}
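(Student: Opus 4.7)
The plan is to start from the I--MMSE identity of Guo--Shamai--Verdú, which states that $\frac{d}{d\gamma} I_{\mathcal{X}}(\gamma) = \frac{1}{\ln 2}\,\mathrm{MMSE}_{\mathcal{X}}(\gamma)$. Combined with the saturation property $\lim_{\gamma\to\infty} I_{\mathcal{X}}(\gamma) = H_{\mathbf{p}_{\mathcal{X}}}$ (valid for any finite alphabet), this yields the closed-form tail representation
\begin{equation}
H_{\mathbf{p}_{\mathcal{X}}} - I_{\mathcal{X}}(\gamma) = \frac{1}{\ln 2}\int_{\gamma}^{\infty}\mathrm{MMSE}_{\mathcal{X}}(t)\,\mathrm{d}t.
\end{equation}
Taking the expectation of this identity with $\gamma = (p/\sigma^2)\|\mathbf{h}\|^2$ converts the gap $H_{\mathbf{p}_{\mathcal{X}}} - \bar{\mathcal{I}}_{\mathcal{X}}^{\mathrm{rayleigh}}$ into an integral that is amenable to Laplace-type asymptotic analysis.

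The next step is to extract the leading-order behaviour of the PDF $f_{\|\mathbf{h}\|^2}(x)$ near the origin. From \textbf{Lemma \ref{Section_Performance_Analysis_OP_SNR_CDF_PDF_Lemma}}, only the $k=0$ term survives at order $x^{r_{\mathbf{R}}-1}$, so that $f_{\|\mathbf{h}\|^2}(x) \sim \frac{x^{r_{\mathbf{R}}-1}}{(r_{\mathbf{R}}-1)!\prod_{i=1}^{r_{\mathbf{R}}}\lambda_i}$ as $x\downarrow 0$. Since $H_{\mathbf{p}_{\mathcal{X}}}-I_{\mathcal{X}}(\gamma)$ decays (exponentially) in $\gamma$, Watson's lemma guarantees that, as $p\to\infty$, the dominant contribution to $\int_0^\infty [H_{\mathbf{p}_{\mathcal{X}}}-I_{\mathcal{X}}((p/\sigma^2)x)]f_{\|\mathbf{h}\|^2}(x)\mathrm{d}x$ comes from a vanishing neighbourhood of $x=0$, allowing $f_{\|\mathbf{h}\|^2}$ to be replaced by its monomial leading term.

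With this replacement, I would substitute $u = (p/\sigma^2)x$ and then swap the order of integration via Fubini:
\begin{align}
H_{\mathbf{p}_{\mathcal{X}}}-\bar{\mathcal{I}}_{\mathcal{X}}^{\mathrm{rayleigh}} &\sim \frac{(\sigma^2/p)^{r_{\mathbf{R}}}}{(r_{\mathbf{R}}-1)!\prod_i\lambda_i}\cdot\frac{1}{\ln 2}\int_0^\infty u^{r_{\mathbf{R}}-1}\!\!\int_u^\infty\mathrm{MMSE}_{\mathcal{X}}(t)\,\mathrm{d}t\,\mathrm{d}u \nonumber\\
&= \frac{(\sigma^2/p)^{r_{\mathbf{R}}}}{r_{\mathbf{R}}!\,\prod_i\lambda_i\ln 2}\int_0^\infty t^{r_{\mathbf{R}}}\mathrm{MMSE}_{\mathcal{X}}(t)\,\mathrm{d}t.
\end{align}
The remaining integral is precisely the Mellin transform $\mathcal{M}[\mathrm{MMSE}_{\mathcal{X}}(t);\,r_{\mathbf{R}}+1]$, and collecting the constants inside a single $(-r_{\mathbf{R}})$-th power reproduces exactly the claimed $\mathcal{A}_{\mathrm{a}}$ and $\mathcal{A}_{\mathrm{d}} = r_{\mathbf{R}}$.

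The main obstacle is the rigorous justification of the asymptotic substitution of $f_{\|\mathbf{h}\|^2}$ by its leading monomial. Concretely, one must show that the tail contribution from $x$ bounded away from zero is $o(p^{-r_{\mathbf{R}}})$ and that the remainder $f_{\|\mathbf{h}\|^2}(x) - \frac{x^{r_{\mathbf{R}}-1}}{(r_{\mathbf{R}}-1)!\prod_i\lambda_i}$ (which is $O(x^{r_{\mathbf{R}}})$ near the origin) produces a strictly higher-order correction. Both steps rely on the fast decay of $\mathrm{MMSE}_{\mathcal{X}}(\gamma)$ at infinity for finite-alphabet inputs (which ensures $\mathcal{M}[\mathrm{MMSE}_{\mathcal{X}}(t);\,r_{\mathbf{R}}+1]<\infty$) and on a uniform exponential bound on $H_{\mathbf{p}_{\mathcal{X}}}-I_{\mathcal{X}}(\gamma)$, so that dominated convergence can be invoked throughout. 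Once this analytic groundwork is in place, the algebraic identification of $\mathcal{A}_{\mathrm{a}}$ is essentially automatic.
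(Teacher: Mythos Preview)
Your proposal is correct and follows essentially the same path as the paper's proof: both rely on the I--MMSE relation and the small-argument behaviour of the law of $\lVert\mathbf{h}\rVert^2$ to produce the Mellin transform $\mathcal{M}[\mathrm{MMSE}_{\mathcal{X}}(t);r_{\mathbf{R}}+1]$. The only organizational difference is that the paper first integrates by parts against the \emph{CDF}, obtaining
\[
\bar{\mathcal{I}}_{\mathcal{X}}^{\mathrm{rayleigh}}=H_{\mathbf{p}_{\mathcal{X}}}-\frac{1}{\ln 2}\int_0^\infty F_{\lVert\mathbf{h}\rVert^2}\!\left(\tfrac{t\sigma^2}{p}\right)\mathrm{MMSE}_{\mathcal{X}}(t)\,\mathrm{d}t,
\]
and then inserts the CDF asymptotic $F_{\lVert\mathbf{h}\rVert^2}(y)\simeq y^{r_{\mathbf{R}}}/(r_{\mathbf{R}}!\prod_i\lambda_i)$ directly, which yields the Mellin integral without any Fubini swap. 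Your route instead keeps the PDF, substitutes its leading monomial, and then exchanges the order of integration; the two computations are dual to one another (your Fubini step is precisely the integration by parts the paper performs up front). The paper also records explicitly, via two short lemmas on the decay of $\mathrm{MMSE}_{\mathcal{X}}$ and the strip of convergence of Mellin transforms, that $\mathcal{M}[\mathrm{MMSE}_{\mathcal{X}}(t);r_{\mathbf{R}}+1]\in(0,\infty)$, which is exactly the ``fast decay'' fact you invoke to justify the asymptotic substitution.
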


\begin{proof}
Please refer to Appendix \ref{Section_Performance_Analysis_EMI_Asymptotic_Expression_Standard_Theorem_Proof}.
\end{proof}

\begin{remark}
The result in \eqref{Section_Performance_Analysis_EMI_Asymptotic_Expression_Standard} suggests that the EMI achieved by finite-alphabet inputs converges to $H_{{\mathbf{p}}_{\mathcal{X}}}$ in the limit of large $p$, and its rate of convergence (ROC) equals the rate of $({\mathcal{A}}_{\rm{a}}\cdot p)^{-{\mathcal{A}}_{\rm{d}}}$ converging to $0$.
\end{remark}

In \eqref{Section_Performance_Analysis_EMI_Asymptotic_Expression_Standard}, ${\mathcal{A}}_{\rm{a}}$ is referred to as the \emph{array gain}, and ${\mathcal{A}}_{\rm{d}}$ is referred to as the diversity order. The diversity order ${\mathcal{A}}_{\rm{d}}$ determines the slope of the ROC, i.e., $({\mathcal{A}}_{\rm{a}}\bar\gamma)^{-{\mathcal{A}}_{\rm{d}}}$, as a function of the transmit power, at high SNR, in a log-log scale. On the other hand, ${\mathcal{A}}_{\rm{a}}$ (in decibels) determines the power gain relative to a benchmark ROC curve of $(\bar\gamma)^{-{\mathcal{A}}_{\rm{d}}}$. It is noteworthy that the ROC is accelerated by increasing ${\mathcal{A}}_{\rm{a}}$ or ${\mathcal{A}}_{\rm{d}}$, and a faster ROC yields a larger EMI.

\begin{remark}
The results in \textbf{Corollary} \ref{Section_Performance_Analysis_EMI_Asymptotic_Expression_Standard_Theorem} reveal that the diversity order and the array gain of the EMI are given by $r_{\mathbf{R}}$ and $\frac{1}{\sigma^2}\left(\frac{1}{\ln{2}}\frac{{\mathcal M}\left[{{\rm{MMSE}}_{\mathcal{X}}(x)};r_{\mathbf{R}}+1\right]}{r_{\mathbf{R}}!\prod_{i=1}^{r_{\mathbf{R}}}{\lambda}_i}\right)^{-1/r_{\mathbf{R}}}$, respectively.
\end{remark}

\begin{remark}\label{Remark_Gaussian_Finite_Alphabet}
By comparing \eqref{Section_Performance_Analysis_EMI_Asymptotic_Expression_Standard} with \eqref{Section_Performance_Analysis_ECC_Asymptotic_Expression_Standard}, one can see a significant difference between the EMI achieved by finite-alphabet inputs and Gaussian inputs. The EMI achieved by Gaussian inputs, also referred to as the ECC, grows with $\bar\gamma$ unboundedly. In contrast, the EMI achieved with finite-alphabet inputs converges to a constant value.
\end{remark}

$\bullet$ \emph{\textbf{Numerical Results:}} To further illustrate our derived results, in {\figurename} {\ref{Performance Analysis Figure: AMI_Discrete Explicit}}, we plot the EMI for USW LoS channels achieved by square QAM constellations versus the transmit power, $p$. The simulation results are denoted by markers. The approximated results are obtained based on \eqref{Section_Performance_Analysis_EMI_Approximation_Explicit_MISO}. As can be observed in {\figurename} {\ref{Performance Analysis Figure: AMI_Discrete Explicit}}, the approximated results are in excellent agreement with the simulation results. This verifies the accuracy of the approximation in \eqref{Section_Performance_Analysis_MI_AWGN_Approximation} and \eqref{Section_Performance_Analysis_EMI_Approximation_Explicit_MISO}. The EMI achieved by Gaussian inputs is also shown as a baseline. As {\figurename} {\ref{Performance Analysis Figure: AMI_Discrete Explicit}} shows, the EMI for Gaussian inputs grows unboundedly as $p$ increases, whereas the EMI for finite-alphabet inputs converges to the entropy of the input distribution $H_{{\mathbf{p}}_{\mathcal{X}}}$, in the limit of large $p$. This observation validates our discussions in Remark \ref{Remark_Gaussian_Finite_Alphabet}. Moreover, we observe that in the low-SNR regime, the EMI achieved by finite-alphabet inputs is close to that achieved by Gaussian inputs, which is consistent with the results in \cite{Guo2005}. To illustrate the ROC of the EMI, we depict $H_{{\mathbf{p}}_{\mathcal{X}}}-{\bar{\mathcal{I}}_{\mathcal{X}}}$ versus $p$ in {\figurename} {\ref{Performance Analysis Figure: AMI_Discrete Asymptotic}}. As can be observed, in the high-SNR regime, the derived asymptotic results approach the numerical results. Besides, it can be observed that lower modulation orders yield faster ROCs. The numerical results presented in {\figurename} {\ref{Performance Analysis Figure: AMI_Discrete Explicit}} and {\figurename} {\ref{Performance Analysis Figure: AMI_Discrete Asymptotic}} are based on square $M$-QAM constellations, which are the most widely used modulation schemes in practical communication systems, supported, e.g., in 5G new radio (NR) \cite{3gpp2018nr}. For a thorough study, we compare the EMI achieved by QAM and phase shift keying (PSK) modulation in {\figurename} {\ref{Performance Analysis Figure: AMI_Different Modulation}. PSK is used in wireless local area network applications such as Bluetooth and radio frequency identification (RFID). As observed from {\figurename} {\ref{Performance Analysis Figure: AMI_Different Modulation}}, for a given modulation order $M>4$, PSK's EMI is smaller than QAM's. This means that PSK has a lower spectral efficiency than QAM. Moreover, PSK has an inferior BER performance compared to QAM at high SNRs, as the phase transitions become more difficult to detect. PSK may also require more complex phase synchronization and demodulation techniques, especially for higher-order PSK \cite{Proakis2008}. The above arguments imply that QAM is preferred over PSK for application in future NFC networks.}

\begin{figure}[!t]
\setlength{\abovecaptionskip}{0pt}
\centering
\includegraphics[width=0.4\textwidth]{./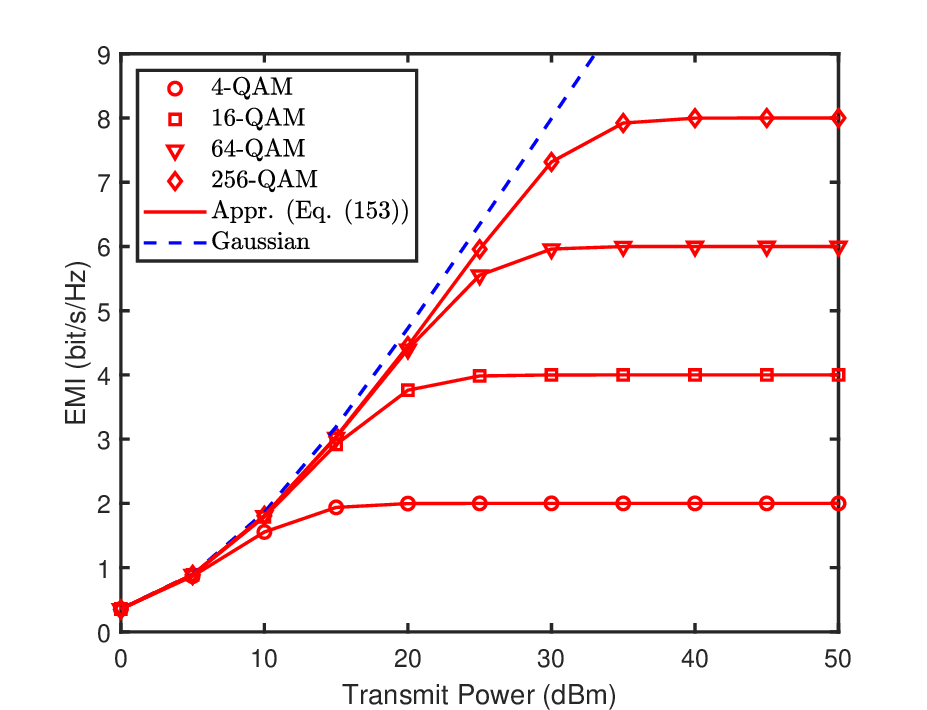}
\caption{Ergodic mutual information for correlated MISO Rayleigh channels. The system is operating at $28$ GHz. $N_x=N_z=33$, $d=\frac{\lambda}{2}$, $(\theta,\phi)=(\frac{\pi}{2},\frac{\pi}{2})$, $r=4$ m, $A=\frac{\lambda^2}{4\pi}$, $e_a=1$, ${\bm\rho}=\hat{\mathbf J}({\mathbf{s}}_{m,n})=[1,0,0]^{\mathsf{T}}$, $\forall m,n$, $\sigma^2=-90$ dBm, and $\sigma_{\ell}^2=1$, $\forall \ell$. The coordinates (in meters) of the scatterers are given by $
\mathbf{s}_{1}^{\star}=[\frac{7}{10}\sin{\frac{\pi}{4}},r,\frac{7}{10}\cos{\frac{\pi}{4}}]^{\mathsf{T}}$, $\mathbf{s}_{2}^{\star}=[\frac{4}{5}\sin{\frac{\pi}{8}},r,\frac{4}{5}\cos{\frac{\pi}{8}}]^{\mathsf{T}}$, $
\mathbf{s}_{3}^{\star}=[\frac{13}{20}\sin{\frac{47\pi}{64}},r,\frac{13}{20}\cos{\frac{47\pi}{64}}]^{\mathsf{T}}$, and
$\mathbf{s}_{4}^{\star}=[\frac{2}{5}\sin{\frac{\pi}{7}},r,\frac{2}{5}\cos{\frac{\pi}{7}}]^{\mathsf{T}}$. The LoS channels $\overline{\mathbf{h}}$, $\{h_{\ell}({\mathbf{r}}_{\ell},\mathbf{r})\}_{\ell=1}^{L}$, and $\{{\mathbf{h}}_{\ell}(\mathbf{r}_{\ell})\}_{\ell=1}^{L}$ follow the USW channel model.}
\label{Performance Analysis Figure: AMI_Discrete Explicit}
\end{figure}

\begin{figure}[!t]
\setlength{\abovecaptionskip}{0pt}
\centering
\includegraphics[width=0.4\textwidth]{./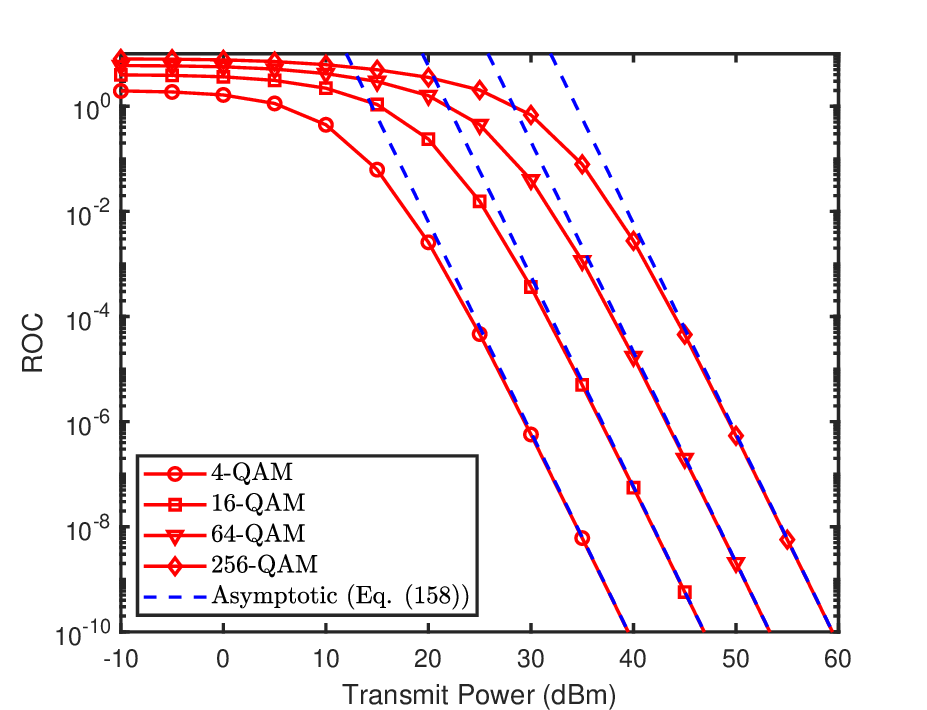}
\caption{Rate of convergence of the EMI for correlated MISO Rayleigh channels. The system is operating at $28$ GHz. $N_x=N_z=33$, $d=\frac{\lambda}{2}$, $(\theta,\phi)=(\frac{\pi}{2},\frac{\pi}{2})$, $r=4$ m, $A=\frac{\lambda^2}{4\pi}$, $e_a=1$, ${\bm\rho}=\hat{\mathbf J}({\mathbf{s}}_{m,n})=[1,0,0]^{\mathsf{T}}$, $\forall m,n$, $\sigma^2=-90$ dBm, and $\sigma_{\ell}^2=1$, $\forall \ell$. The coordinates (in meters) of the scatterers are given by $
\mathbf{s}_{1}^{\star}=[\frac{7}{10}\sin{\frac{\pi}{4}},r,\frac{7}{10}\cos{\frac{\pi}{4}}]^{\mathsf{T}}$, $\mathbf{s}_{2}^{\star}=[\frac{4}{5}\sin{\frac{\pi}{8}},r,\frac{4}{5}\cos{\frac{\pi}{8}}]^{\mathsf{T}}$, $
\mathbf{s}_{3}^{\star}=[\frac{13}{20}\sin{\frac{47\pi}{64}},r,\frac{13}{20}\cos{\frac{47\pi}{64}}]^{\mathsf{T}}$, and
$\mathbf{s}_{4}^{\star}=[\frac{2}{5}\sin{\frac{\pi}{7}},r,\frac{2}{5}\cos{\frac{\pi}{7}}]^{\mathsf{T}}$. The LoS channels $\overline{\mathbf{h}}$, $\{h_{\ell}({\mathbf{r}}_{\ell},\mathbf{r})\}_{\ell=1}^{L}$, and $\{{\mathbf{h}}_{\ell}(\mathbf{r}_{\ell})\}_{\ell=1}^{L}$ follow the USW channel model.}
\label{Performance Analysis Figure: AMI_Discrete Asymptotic}
\end{figure}

\begin{figure}[!t]
\setlength{\abovecaptionskip}{0pt}
\centering
\includegraphics[width=0.4\textwidth]{./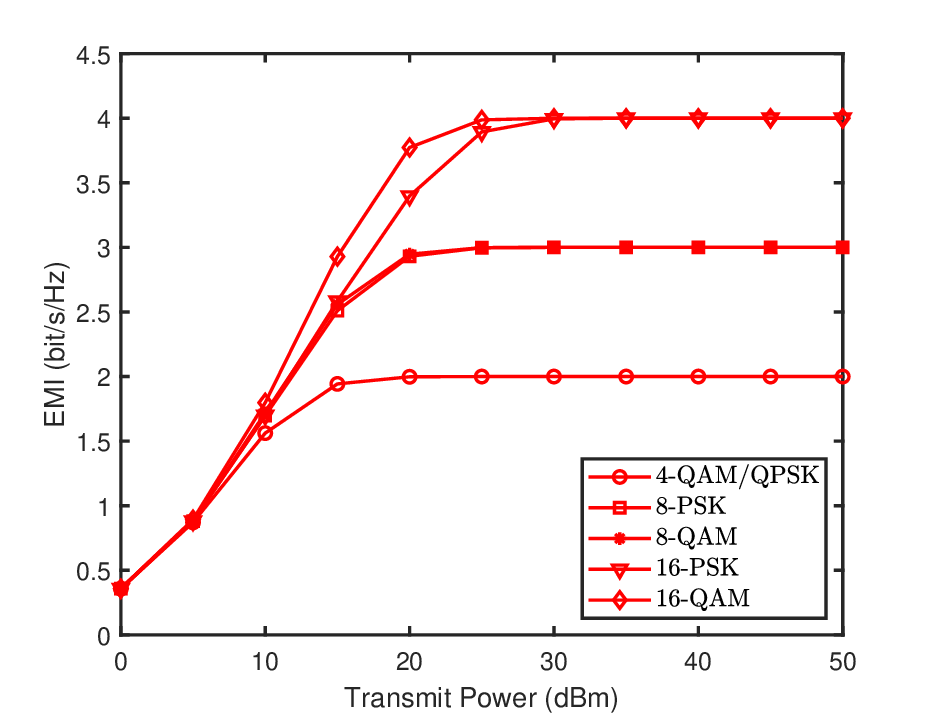}
\caption{Ergodic mutual information achieved by different modulation schemes for correlated MISO Rayleigh channels. The system is operating at frequency $28$ GHz. $N_x=N_z=33$, $d=\frac{\lambda}{2}$, $(\theta,\phi)=(\frac{\pi}{2},\frac{\pi}{2})$, $r=4$ m, $A=\frac{\lambda^2}{4\pi}$, $e_a=1$, ${\bm\rho}=\hat{\mathbf J}({\mathbf{s}}_{m,n})=[1,0,0]^{\mathsf{T}}$, $\forall m,n$, $\sigma^2=-90$ dBm, and $\sigma_{\ell}^2=1$, $\forall \ell$. The coordinates (in meters) of the scatterers are given by $
\mathbf{s}_{1}^{\star}=[\frac{7}{10}\sin{\frac{\pi}{4}},r,\frac{7}{10}\cos{\frac{\pi}{4}}]^{\mathsf{T}}$, $\mathbf{s}_{2}^{\star}=[\frac{4}{5}\sin{\frac{\pi}{8}},r,\frac{4}{5}\cos{\frac{\pi}{8}}]^{\mathsf{T}}$, $
\mathbf{s}_{3}^{\star}=[\frac{13}{20}\sin{\frac{47\pi}{64}},r,\frac{13}{20}\cos{\frac{47\pi}{64}}]^{\mathsf{T}}$, and
$\mathbf{s}_{4}^{\star}=[\frac{2}{5}\sin{\frac{\pi}{7}},r,\frac{2}{5}\cos{\frac{\pi}{7}}]^{\mathsf{T}}$. The LoS channels $\overline{\mathbf{h}}$, $\{h_{\ell}({\mathbf{r}}_{\ell},\mathbf{r})\}_{\ell=1}^{L}$, and $\{{\mathbf{h}}_{\ell}(\mathbf{r}_{\ell})\}_{\ell=1}^{L}$ are described by the USW channel model.}
\label{Performance Analysis Figure: AMI_Different Modulation}
\end{figure}

\begin{figure}[!t]
\setlength{\abovecaptionskip}{0pt}
\centering
\includegraphics[width=0.4\textwidth]{./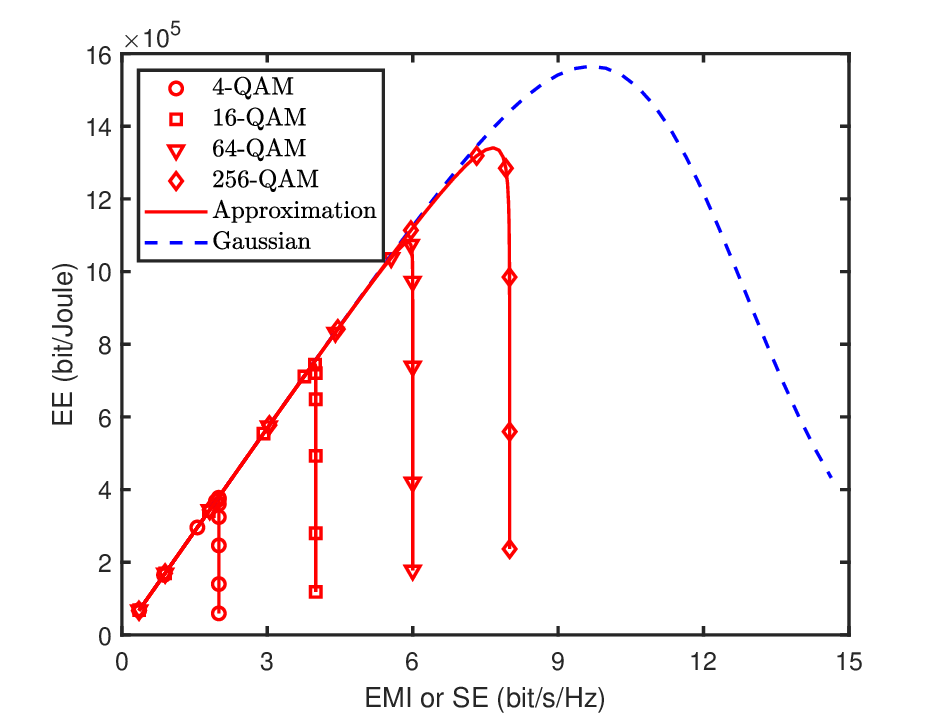}
\caption{EE-SE tradeoff for correlated MISO Rayleigh channels. The system is operating at frequency $28$ GHz with $W=10$ MHz. $N_x=N_z=33$, $d=\frac{\lambda}{2}$, $(\theta,\phi)=(\frac{\pi}{2},\frac{\pi}{2})$, $r=4$ m, $A=\frac{\lambda^2}{4\pi}$, $e_a=1$, ${\bm\rho}=\hat{\mathbf J}({\mathbf{s}}_{m,n})=[1,0,0]^{\mathsf{T}}$, $\forall m,n$, $\sigma^2=-90$ dBm, and $\sigma_{\ell}^2=1$, $\forall \ell$. The coordinates (in meters) of the scatterers are given by $
\mathbf{s}_{1}^{\star}=[\frac{7}{10}\sin{\frac{\pi}{4}},r,\frac{7}{10}\cos{\frac{\pi}{4}}]^{\mathsf{T}}$, $\mathbf{s}_{2}^{\star}=[\frac{4}{5}\sin{\frac{\pi}{8}},r,\frac{4}{5}\cos{\frac{\pi}{8}}]^{\mathsf{T}}$, $
\mathbf{s}_{3}^{\star}=[\frac{13}{20}\sin{\frac{47\pi}{64}},r,\frac{13}{20}\cos{\frac{47\pi}{64}}]^{\mathsf{T}}$, and
$\mathbf{s}_{4}^{\star}=[\frac{2}{5}\sin{\frac{\pi}{7}},r,\frac{2}{5}\cos{\frac{\pi}{7}}]^{\mathsf{T}}$. The LoS channels $\overline{\mathbf{h}}$, $\{h_{\ell}({\mathbf{r}}_{\ell},\mathbf{r})\}_{\ell=1}^{L}$, and $\{{\mathbf{h}}_{\ell}(\mathbf{r}_{\ell})\}_{\ell=1}^{L}$ are described by the USW channel model. In addition, a practical class-A RF power amplifier is considered for which $\zeta_{\rm{eff}}=0.35$, $P_{\rm{syn}}=50$ mW, $P_{\rm{TR}}=48.2$ mW, and $P_{\rm{RR}} = 62.5$ mW \cite[Table 1]{cui2005energy} hold. The values of $\zeta_{\rm{eff}}$ and $P_{\rm{syn}}$ can be directly found in \cite[Table 1]{cui2005energy}. However, the values of $P_{\rm{TR}}$ and $P_{\rm{RR}}$ are calculated with the aid of \cite[Eqn. (32)]{cui2005energy} and \cite[Eqn. (33)]{cui2005energy}, respectively. Since the computation is trivial, we omit the detailed steps here.}
\label{Performance Analysis Figure: EE_SE_TF_Discrete Explicit}
\end{figure}

{\figurename} {\ref{Performance Analysis Figure: EE_SE_TF_Discrete Explicit}} illustrates the EE-SE tradeoff obtained by \eqref{Section_Performance_Analysis_EE_Approximation_Explicit_MISO_Total} when the transmit power budget $p$ varies from $0$ dBm to $50$ dBm. It can be observed that for increasing EMI or SE, the EE first increases and then decreases. This suggests that there exists an optimal EMI that maximizes the EE, which is denoted as ${\rm{EMI}}^{\star}$. The reason for this behaviour is that the total consumed power $P_{\rm{tot}}$ increases linearly with the power budget $p$, whereas the EMI and SE increase sub-linearly with $p$. Particularly, for finite-alphabet inputs, the achieved EMI converges to some constant as $p$ increases. Therefore, the corresponding EE decreases rapidly when the EMI exceeds ${\rm{EMI}}^{\star}$. Additionally, it can be seen that a larger modulation order yields a larger value of ${\rm{EMI}}^{\star}$ and a better SE-EE tradeoff. Gaussian input signals provide an EE performance upper bound. In the low-EMI or low-SNR regime, we observe that the EE achieved by finite-alphabet inputs is close to that achieved by Gaussian inputs, which is consistent with the results shown in {\figurename} {\ref{Performance Analysis Figure: AMI_Discrete Explicit}}. It, thus, is advisable to apply lower-order modulation types in the low-power regime \cite{verdu2002spectral}.

$\bullet$ \emph{\textbf{Extension to MIMO Case:}} Compared with the EMI achieved in MISO channels (as formulated in \eqref{Section_Performance_Analysis_EMI_Definition_MISO}), the analysis of the EMI achieved in MIMO channels is much more challenging. In a multiple-stream MIMO channel, the received signal vector is given by
\begin{align}\label{Section_Performance_Analysis_AWGN_MIMO_Channel}
{\mathbf{y}}=\sqrt{\bar{\gamma}}{\mathbf{H}}{\mathbf{P}}{\mathbf{x}}+{\mathbf{n}},
\end{align}
where ${\mathbf{n}}\sim{\mathcal{CN}}(0,\mathbf{I})$ is Gaussian noise, ${\mathbf{P}}\in{\mathbb{C}}^{N_T\times N_D}$ denotes the precoding matrix satisfying ${\mathsf{tr}}\left\{{\mathbf{P}}{\mathbf{P}}^{\mathsf{H}}\right\}=1$ with $N_D$
being the number of data streams, and ${\mathbf{x}}\in{\mathbb{C}}^{N_D\times1}$ is the data vector with i.i.d. elements drawn from constellation $\mathcal X$. Hence, input signal $\mathbf{x}$ is taken from a multi-dimensional constellation $\hat{\mathcal{X}}$ comprising $Q^{N_D}$ points, i.e., $\mathbf{x}\in{\hat{\mathcal{X}}}=\left\{{\bm{\mathsf{x}}}_q\in{\mathbb{C}}^{N_D}\right\}_{q=1}^{Q^{N_D}}$, with $\mathbb{E}\left\{{\mathbf{x}}{\mathbf{x}}^{\mathsf{H}}\right\}={\mathbf{I}}$. Assume ${\bm{\mathsf{x}}}_q$ is sent with probability $q_q$, $0 < q_q < 1$, and the input distribution is given by ${\mathbf{q}}_{\hat{\mathcal{X}}}\triangleq[q_1,\cdots,q_{Q^{N_D}}]^{\mathsf{T}}$ with $\sum_{g=1}^{Q^{N_D}}q_g=1$. The MI in this case can be written as follows \cite{Rodrigues2014_TIT,Ouyang2023_CL,Lozano2018}:
\begin{equation}\label{Section_Performance_Analysis_MI_AWGN_MIMO_Definition}
\begin{split}
&I_{\hat{\mathcal X}}\left(p/\sigma^2;{\mathbf{H}}{\mathbf{P}}\right)=H_{{\mathbf{q}}_{\hat{\mathcal{X}}}}-N_R\log_2{e}-
\sum\nolimits_{q=1}^{Q^{N_D}}q_q\\
&\times{\mathbb{E}}_{\mathbf{n}}\left\{
\log_2\left({\sum_{q'=1}^{Q^{N_D}}\frac{q_{q'}}{q_q}{e}^{-\left\|{\mathbf{n}}+\sqrt{{p/\sigma^2}}{\mathbf{H}}{\mathbf{P}}
({\bm{\mathsf{x}}}_q-{\bm{\mathsf{x}}}_{q'})\right\|^2}}\right)
\right\}.
\end{split}
\end{equation}
The EMI achieved in channel \eqref{Section_Performance_Analysis_AWGN_MIMO_Channel} is given by
{
\begin{equation}\label{Section_Performance_Analysis_EMI_AWGN_MIMO_Definition}
\begin{split}
&\bar{\mathcal{I}}_{\hat{\mathcal{X}}}={\mathbb E}\left\{I_{\hat{\mathcal X}}\left(\frac{p}{\sigma^2};{\mathbf{H}}{\mathbf{P}}\right)\right\}=H_{{\mathbf{q}}_{\hat{\mathcal{X}}}}-N_R\log_2{e}-
\sum\nolimits_{q=1}^{Q^{N_D}}q_q\\
&\times{\mathbb{E}}_{{\mathbf{H}},{\mathbf{n}}}\left\{
\log_2\left({\sum_{q'=1}^{Q^{N_D}}\frac{q_{q'}}{q_q}{e}^{-\left\|{\mathbf{n}}+\sqrt{p/\sigma^2}{\mathbf{H}}{\mathbf{P}}
({\bm{\mathsf{x}}}_q-{\bm{\mathsf{x}}}_{q'})\right\|^2}}\right)
\right\}.
\end{split}
\end{equation}
}By comparing $I_{\hat{\mathcal X}}\left({p/\sigma^2};{\mathbf{H}}{\mathbf{P}}\right)$ in \eqref{Section_Performance_Analysis_MI_AWGN_MIMO_Definition} with $I_{\mathcal X}\left({p/\sigma^2}\right)$ in \eqref{Section_Performance_Analysis_MI_AWGN_Definition}, we observe that $I_{\hat{\mathcal X}}\left({{\bar{\gamma}}};{\mathbf{H}}{\mathbf{P}}\right)$ presents an even more intractable form than $I_{\mathcal X}\left({p/\sigma^2}\right)$. Therefore, our developed methodology for analyzing ${\mathbb{E}}\{I_{\mathcal X}\left({p/\sigma^2}\right)\}$ cannot be straightforwardly applied to analyzing ${\mathbb E}\{I_{\hat{\mathcal X}}\left({{p/\sigma^2}};{\mathbf{H}}{\mathbf{P}}\right)\}$. In the past years, the EMI achieved by finite-alphabet inputs in MIMO channels has been studied extensively and many approximated expressions for the EMI were derived under different fading models; see \cite{wu2014linear,wu2017low, Wu2018} and the references therein. However, it should be noted that the problem of characterizing the high-SNR asymptotic EMI for MIMO transmission, i.e., $\lim_{p\rightarrow\infty}{\mathbb E}\{I_{\hat{\mathcal X}}\left({{p/\sigma^2}};{\mathbf{H}}{\mathbf{P}}\right)\}$ has been open for years, and only a couple of works appeared recently. The author in \cite{Rodrigues2014_TIT} discussed the high-SNR asymptotic behaviour of ${\mathbb E}\{I_{\hat{\mathcal X}}\left({{\bar{\gamma}}};{\mathbf{H}}{\mathbf{P}}\right)\}$ for isotropic inputs and correlated Rician channels. The authors in \cite{Ouyang2023_CL} further characterized the high-SNR EMI by considering a double-scattering fading model and non-isotropic precoding. Most interestingly, as shown in these two works, the asymptotic EMI for MIMO channels in the high-SNR regime also follows the standard form given in \eqref{Section_Performance_Analysis_EMI_Asymptotic_Expression_Standard}.

By now, we have established a framework for analyzing the OP, ECC, and EMI for the correlated MISO Rayleigh fading model. We next exploit this framework to analyze the NFC performance for correlated Rician fading.
\subsubsection{Analysis of the OP for Rician Channels}
Based on \eqref{Section_Performance_Analysis_OP_Definition_MISO} and \eqref{Section_Performance_Analysis_OP_Calculation_MISO}, the OP can be expressed as follows:
\begin{equation}\label{Section_Performance_Analysis_OP_Calculation_MISO_Rician}
\mathcal{P}_{\rm{rician}}=\Pr\left(\lVert\overline{\mathbf{h}}+\mathbf{R}^{\frac{1}{2}}\tilde{\mathbf{h}}\rVert^2<\frac{2^{\mathcal{R}}-1}{p/\sigma^2}\right),
\end{equation}
which is analyzed in the following three steps.

$\bullet$ \emph{\textbf{Step 1 - Analyzing the Statistics of the Channel Gain:}} 
In the first step, we analyze the statistics of ${\lVert{\mathbf{h}}\rVert^2}=\lVert\overline{\mathbf{h}}+\mathbf{R}^{\frac{1}{2}}\tilde{\mathbf{h}}\rVert^2$ to obtain a closed-form expression for $\mathcal{P}_{\rm{rician}}$. We commence with the following lemma.
% \vspace{-5pt}
\begin{lemma}\label{Section_Performance_Analysis_Rician_OP_SNR_Trans_Lemma}
The variate $\lVert\overline{\mathbf{h}}+\mathbf{R}^{\frac{1}{2}}\tilde{\mathbf{h}}\rVert^2$ has the same statistics as the following variate:
\begin{align}\label{Section_Performance_Analysis_Rician_OP_SNR_Trans_Variable}
\underbrace{\sum_{i=1}^{r_{\mathbf{R}}}\lambda_i\lVert\overline{{h}}_i\lambda_i^{-\frac{1}{2}}+\tilde{{h}}_i\rVert^2}_{\tilde{a}}+
\underbrace{\sum_{i=r_{\mathbf{R}}+1}^{N}\lVert\overline{{h}}_i\rVert^2}_{\tilde{a}_0},
\end{align}
where ${\mathbf{U}}^{\mathsf{H}}{\bm\Lambda}{\mathbf{U}}$ is the eigenvalue decomposition of $\mathbf{R}$ with $\mathbf{U}{\mathbf{U}}^{\mathsf{H}}=\mathbf{I}$ and ${\bm\Lambda}={\mathsf{diag}}\{\lambda_1,\ldots,\lambda_{r_{\mathbf{R}}},0,\ldots,0\}$, $\{\lambda_i>0\}_{i=1}^{r_{\mathbf{R}}}$ are the positive eigenvalues of matrix $\mathbf{R}$, $\overline{{h}}_i$ is the $i$th element of vector $\mathbf{U}\overline{\mathbf{h}}\in{\mathbb{C}}^{N}$, and $[\tilde{{h}}_1,\ldots,\tilde{{h}}_{r_{\mathbf{R}}}]^{\mathsf{T}}\sim{\mathcal{CN}}(\mathbf{0},\mathbf{I})$.
\end{lemma}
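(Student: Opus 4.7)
The plan is to prove the distributional equivalence by exploiting the unitary invariance of the complex Gaussian distribution together with the spectral structure of $\mathbf{R}$. First I would rewrite $\lVert\overline{\mathbf{h}}+\mathbf{R}^{1/2}\tilde{\mathbf{h}}\rVert^2 = \lVert\mathbf{U}\overline{\mathbf{h}}+\mathbf{U}\mathbf{R}^{1/2}\tilde{\mathbf{h}}\rVert^2$, which holds because $\mathbf{U}$ is unitary and the Euclidean norm is preserved. Using the given eigendecomposition $\mathbf{R}={\mathbf{U}}^{\mathsf{H}}{\bm\Lambda}\mathbf{U}$, the principal square root satisfies $\mathbf{R}^{1/2}={\mathbf{U}}^{\mathsf{H}}{\bm\Lambda}^{1/2}\mathbf{U}$, so $\mathbf{U}\mathbf{R}^{1/2}={\bm\Lambda}^{1/2}\mathbf{U}$, turning the expression into $\lVert\mathbf{U}\overline{\mathbf{h}}+{\bm\Lambda}^{1/2}\mathbf{U}\tilde{\mathbf{h}}\rVert^2$.

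Next I would introduce the rotated vectors $\overline{\mathbf{h}}'\triangleq\mathbf{U}\overline{\mathbf{h}}$ (whose $i$th entry is $\overline{h}_i$ as defined in the lemma) and $\tilde{\mathbf{h}}'\triangleq\mathbf{U}\tilde{\mathbf{h}}$. The key probabilistic step is to invoke rotational invariance: since $\tilde{\mathbf{h}}\sim\mathcal{CN}(\mathbf{0},\mathbf{I})$ and $\mathbf{U}$ is unitary, $\tilde{\mathbf{h}}'\sim\mathcal{CN}(\mathbf{0},\mathbf{I})$ as well, and it can be identified (in distribution) with the vector whose first $r_{\mathbf{R}}$ entries are the $\tilde{h}_i$ in the lemma. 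With this, the squared norm becomes $\sum_{i=1}^{N}\lvert\overline{h}_i+\sqrt{\lambda_i}\tilde{h}'_i\rvert^2$.

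Finally I would split the sum at $i=r_{\mathbf{R}}$, using the rank assumption $\lambda_i=0$ for $i>r_{\mathbf{R}}$. For $i\leq r_{\mathbf{R}}$, factoring out $\lambda_i>0$ yields $\lambda_i\lvert\overline{h}_i\lambda_i^{-1/2}+\tilde{h}'_i\rvert^2$, which reproduces the $\tilde{a}$ term. For $i>r_{\mathbf{R}}$, the noise contribution vanishes and only $\lvert\overline{h}_i\rvert^2$ survives, giving the deterministic $\tilde{a}_0$ term. Assembling these pieces establishes the claimed equality in distribution.

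I do not anticipate a genuine obstacle here: the argument is essentially a change of basis plus the rotational symmetry of the standard complex Gaussian. The only subtlety worth stating explicitly is the handling of the rank-deficient case, where $\mathbf{R}^{1/2}\tilde{\mathbf{h}}$ lies almost surely in the range of $\mathbf{R}$, so the components indexed by $i>r_{\mathbf{R}}$ contribute no randomness. Making this explicit via the spectral decomposition, as above, sidesteps any ambiguity in defining $\mathbf{R}^{1/2}$ when $\mathbf{R}$ is singular.
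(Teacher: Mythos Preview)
Your proposal is correct and follows essentially the same route as the paper's proof: multiply by the unitary $\mathbf{U}$ to preserve the norm, use $\mathbf{R}^{1/2}={\mathbf{U}}^{\mathsf{H}}{\bm\Lambda}^{1/2}\mathbf{U}$ so that $\mathbf{U}\mathbf{R}^{1/2}\tilde{\mathbf{h}}={\bm\Lambda}^{1/2}\mathbf{U}\tilde{\mathbf{h}}$, invoke unitary invariance of $\mathcal{CN}(\mathbf{0},\mathbf{I})$, and split the resulting sum at $i=r_{\mathbf{R}}$. Your added remark on the rank-deficient case is a nice clarification but does not depart from the paper's argument.
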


\begin{proof}
Please refer to Appendix \ref{Section_Performance_Analysis_Rician_OP_SNR_Trans_Lemma_Proof}.
\end{proof}
Note that $\tilde{a}_0$ is a deterministic constant and the randomness of \eqref{Section_Performance_Analysis_Rician_OP_SNR_Trans_Variable} originates from $\tilde{a}$. Moreover, since $[\tilde{{h}}_1,\ldots,\tilde{{h}}_{r_{\mathbf{R}}}]^{\mathsf{T}}\sim{\mathcal{CN}}(\mathbf{0},\mathbf{I})$, we have $\lambda_i^{-\frac{1}{2}}\overline{{h}}_i+\tilde{{h}}_i\sim{\mathcal{CN}}(\lambda_i^{-\frac{1}{2}}\overline{{h}}_i,1)$, which means that $2\lVert\overline{{h}}_i\lambda_i^{-\frac{1}{2}}+\tilde{{h}}_i\rVert^2$ follows a noncentral chi-square distribution with $2$ degrees of freedom and noncentrality parameter $\kappa_i=2\lambda_i^{-1}\lvert\overline{h}_i\rvert^2$ \cite{johnson1995continuous}. Consequently, $\tilde{a}$ can be expressed as a weighted sum of noncentral chi-square variates with $2$ degrees of freedom, noncentrality parameter equaling $2\lambda_i^{-1}\lvert\overline{h}_i\rvert^2$, and weight coefficients $a_i=\frac{\lambda_i}{2}$. On this basis, the PDF and CDF of $\tilde{a}$ are presented in the following lemmas.

\begin{lemma}\label{Section_Performance_Analysis_OP_Rician_SNR_PDF_Help_Lemma}
The PDF of $\tilde{a}$ is given by
\begin{align}\label{Section_Performance_Analysis_OP_Rician_SNR_PDF_Help_Expression}
f_{\tilde{a}}\left(x\right)=
\frac{e^{-\frac{x}{2\varpi}}x^{r_{\mathbf{R}}-1}}{(2\varpi)^{r_{\mathbf{R}}}\Gamma(r_{\mathbf{R}})}\sum_{k=0}^{\infty}\frac{k!c_k}{(r_{\mathbf{R}})_k}L_k^{(r_{\mathbf{R}}-1)}\left(\frac{r_{\mathbf{R}}x}{2\varpi\xi_0}\right),
\end{align}
where $L_n^{(\alpha)}(\cdot)$ is the generalized Laguerre polynomial \cite[Eq. (8.970.1)]{Ryzhik2007} and $(x)_n$ is the Pochhammer symbol \cite{olver2010nist}. Based on \cite{abramowitz1988handbook}, we have
\begin{align}
L_n^{(\alpha)}(x)&=\frac{\Gamma(n+\alpha+1)}{n!}\sum_{k=0}^{n}\frac{(-n)_k x^k}{k!\Gamma(\alpha+k+1)},\\
(x)_n&=\frac{\Gamma(x+n)}{\Gamma(x)},(-x)_n=(-1)^n(x-n+1)_n.
\end{align}
The coefficients $c_k$ are recursively obtained using the following formulas:
\begin{subequations}\label{Section_Performance_Analysis_OP_Rician_SNR_PDF_Help_Coefficient}
\begin{align}
c_k&=\frac{1}{k}\sum_{j=0}^{k-1}c_jd_{k-j},~k\geq1,\\
c_0&=\left(\frac{r_{\mathbf{R}}}{\xi_0}\right)^{r_{\mathbf{R}}}{e}^{-\frac{1}{2}\sum\limits_{i=1}^{r_{\mathbf{R}}}\frac{\kappa_ia_i({r_{\mathbf{R}}}-\xi_0)}{\varpi\xi_0+a_i({r_{\mathbf{R}}}-\xi_0)}}\nonumber\\
&\times\prod_{i=1}^{r_{\mathbf{R}}}\left(1+\frac{a_i}{\varpi}\left({{r_{\mathbf{R}}}}/{\xi_0}-1\right)\right)^{-1},\\
d_j&=-\frac{j\varpi {r_{\mathbf{R}}}}{2\xi_0}\sum_{i=1}^{r_{\mathbf{R}}}\kappa_ia_i(\varpi-a_i)^{j-1}\left(\frac{\xi_0}{\varpi\xi_0+a_i({r_{\mathbf{R}}}-\xi_0)}\right)^{j+1}\nonumber\\
&+\sum_{i=1}^{r_{\mathbf{R}}}\left(\frac{1-a_i/\varpi}{1+(a_i/\varpi)({r_{\mathbf{R}}}/\xi_0-1)}\right)^{j},j\geq1.
\end{align}
\end{subequations}
In \eqref{Section_Performance_Analysis_OP_Rician_SNR_PDF_Help_Coefficient}, the parameters $\xi_0$ and $\varpi$ are selected in a suitable manner to guarantee the uniform convergence of \eqref{Section_Performance_Analysis_OP_Rician_SNR_PDF_Help_Expression}. More specifically, if $\xi_0<{r_{\mathbf{R}}}/2$, then the series representation in \eqref{Section_Performance_Analysis_OP_Rician_SNR_PDF_Help_Expression} uniformly converges in any finite interval for all $\varpi>0$. If $\xi_0\geq {r_{\mathbf{R}}}/2$, then \eqref{Section_Performance_Analysis_OP_Rician_SNR_PDF_Help_Expression} uniformly converges in any finite interval for
$\varpi>(2-{r_{\mathbf{R}}}/\xi_0)a_{\max}/2$, where $a_{\max}=\max_{i=1,\ldots,r_{\mathbf{R}}}{a_i}$. In this paper, we set $\xi_0={r_{\mathbf{R}}}/3$ and $\varpi=\frac{1}{r_{\mathbf{R}}}\sum_{i=1}^{r_{\mathbf{R}}}a_i$.
\end{lemma}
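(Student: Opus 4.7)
The plan is to recognize $\tilde{a}$ as a weighted sum of independent noncentral chi-squared random variables and then invert its moment generating function (MGF) through a Laguerre series expansion anchored on a reference Gamma density. First I would observe that, since $\lambda_i^{-1/2}\overline{h}_i+\tilde h_i\sim\mathcal{CN}(\lambda_i^{-1/2}\overline{h}_i,1)$, the scaled squared magnitude $2\lVert\lambda_i^{-1/2}\overline{h}_i+\tilde h_i\rVert^{2}$ is a noncentral $\chi^{2}$ variate with $2$ DoFs and noncentrality $\kappa_i=2\lambda_i^{-1}|\overline{h}_i|^{2}$, exactly as stated in the excerpt. Consequently $\tilde a=\sum_{i=1}^{r_{\mathbf R}}a_iW_i$ with $a_i=\lambda_i/2$ and independent $W_i$, whose MGF factorizes as
$$M_{\tilde a}(s)=\prod_{i=1}^{r_{\mathbf R}}\frac{1}{1-2a_is}\exp\!\left(\frac{\kappa_ia_is}{1-2a_is}\right).$$

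The key idea is to introduce a reference Gamma distribution with shape $r_{\mathbf R}$ and scale $2\varpi$, whose MGF is $(1-2\varpi s)^{-r_{\mathbf R}}$, and to apply a Möbius-type change of variable $1-2\varpi s=1/(1-t)$. Under this substitution, $M_{\tilde a}(s)$ becomes a function $\Phi(t)$ that I would expand as a power series $\Phi(t)=\sum_{k\ge 0}\tilde c_kt^{k}$. The rationale for this particular substitution is the classical Laguerre generating identity $\sum_{k\ge 0}L_k^{(r_{\mathbf R}-1)}(z)t^{k}=(1-t)^{-r_{\mathbf R}}e^{-zt/(1-t)}$: once combined with the Gamma weight $x^{r_{\mathbf R}-1}e^{-x/(2\varpi)}$, the Laplace transforms of the tilted basis elements $x^{r_{\mathbf R}-1}e^{-x/(2\varpi)}L_k^{(r_{\mathbf R}-1)}\!\left(\tfrac{r_{\mathbf R}x}{2\varpi\xi_0}\right)$ are precisely the monomials $t^{k}$ up to the normalization $k!/(r_{\mathbf R})_k$. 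Matching coefficients on both sides therefore identifies $\tilde c_k$ with the $c_k$ of the lemma, and $c_0=\Phi(0)$ yields the closed-form prefactor in \eqref{Section_Performance_Analysis_OP_Rician_SNR_PDF_Help_Coefficient}.

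To derive the recursion, I would take the logarithmic derivative of $\Phi(t)$. Since $\log\Phi(t)$ is a finite linear combination of terms of the form $-\log(1-\rho_it)$ and $\nu_it/(1-\rho_it)$, its derivative admits a clean power series whose coefficients are exactly the $d_j$ stated in the lemma. The Cauchy-product identity applied to $\Phi'(t)=\Phi(t)\cdot(\log\Phi)'(t)$ then immediately gives $kc_k=\sum_{j=0}^{k-1}c_jd_{k-j}$, matching the stated recurrence.

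The main obstacle is establishing the \emph{uniform convergence} of the resulting Laguerre series, since termwise inverse Laplace transformation is only legitimate when $\sum \tilde c_kt^{k}$ converges on a disk whose image under $t\mapsto s$ engulfs a neighbourhood of the imaginary axis. This requires locating the singularities of $\Phi(t)$, which sit at $t_i=1-\bigl[1+(a_i/\varpi)(r_{\mathbf R}/\xi_0-1)\bigr]^{-1}$, and choosing $(\xi_0,\varpi)$ so that $|t_i|<1$ for all $i$. A short case analysis separating $\xi_0<r_{\mathbf R}/2$ from $\xi_0\ge r_{\mathbf R}/2$ reproduces the two convergence regimes stated in the lemma, and the default choices $\xi_0=r_{\mathbf R}/3$ and $\varpi=r_{\mathbf R}^{-1}\sum_i a_i$ are easily checked to satisfy them. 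Once uniform convergence on compact subsets of $(0,\infty)$ is secured, term-by-term inversion delivers the claimed PDF.
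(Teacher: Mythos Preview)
Your outline is essentially correct and in fact supplies far more than the paper does: the paper's own ``proof'' consists of a single sentence referring the reader to \cite[Section~3]{castano2005distribution}, so there is nothing to compare against beyond that citation. What you have written is precisely the Casta\~no-Mart\'{\i}nez--L\'opez-Bl\'azquez derivation that the paper invokes: recognise $\tilde a$ as a linear combination of independent noncentral $\chi^2_2$ variates, write the product MGF, expand the ratio to a reference Gamma MGF in a Laguerre basis via the generating identity, and obtain the coefficients through the logarithmic-derivative recursion.

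One point you might sharpen: your M\"obius substitution $1-2\varpi s=1/(1-t)$ as written does not yet reveal where the free parameter $\xi_0$ enters the argument $\tfrac{r_{\mathbf R}x}{2\varpi\xi_0}$ of the Laguerre polynomial. In the cited construction, $\xi_0$ is an additional tuning parameter introduced by writing the reference basis as a \emph{two}-parameter family (shape $r_{\mathbf R}$, scale $2\varpi$, and a separate rescaling of the Laguerre argument), which is exactly what gives the flexibility in the convergence analysis you describe. As long as you make that explicit---i.e., expand $M_{\tilde a}(s)/(1-2\varpi s)^{-r_{\mathbf R}}$ in powers of a variable determined jointly by $\varpi$ and $\xi_0$---the singularity locations $t_i$ and the two convergence regimes fall out as you state. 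With that clarification your proposal is a faithful reconstruction of the referenced proof.
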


\begin{proof}
Please refer to \cite[Section 3]{castano2005distribution}.
\end{proof}

\begin{lemma}\label{Section_Performance_Analysis_OP_Rician_SNR_CDF_Help_Lemma}
The CDF of $\tilde{a}$ is given by
\begin{equation}\label{Section_Performance_Analysis_OP_Rician_SNR_CDF_Help_Expression}
\begin{split}
F_{\tilde{a}}\left(x\right)&=
\frac{e^{-\frac{x}{2\varpi}}x^{{r_{\mathbf{R}}}}}{(2\varpi)^{{r_{\mathbf{R}}}+1}\Gamma({r_{\mathbf{R}}}+1)}\\
&\times\sum_{k=0}^{\infty}\frac{k!m_k}{({r_{\mathbf{R}}}+1)_k}L_k^{({r_{\mathbf{R}}})}\left(\frac{{r_{\mathbf{R}}}+1}{2\varpi\xi_0}x\right).
\end{split}
\end{equation}
The coefficients $m_k$ are recursively obtained using the following formulas:
{
\begin{subequations}\label{Section_Performance_Analysis_OP_Rician_SNR_CDF_Help_Coefficient}
\begin{align}
m_k&=\frac{1}{k}\sum_{j=0}^{k-1}m_jq_{k-j},~k\geq1,\\
m_0&=2\left({r_{\mathbf{R}}}+1\right)^{{r_{\mathbf{R}}}+1}{e}^{-\frac{1}{2}\sum\limits_{i=1}^{r_{\mathbf{R}}}\frac{\kappa_ia_i({r_{\mathbf{R}}}+1-\xi_0)}{\varpi\xi_0+a_i({r_{\mathbf{R}}}+1-\xi_0)}}\nonumber\\
&\times\frac{\varpi^{{r_{\mathbf{R}}}+1}}{{r_{\mathbf{R}}}+1-\xi_0}\prod_{i=1}^{r_{\mathbf{R}}}\left(\varpi\xi_0+a_i({r_{\mathbf{R}}}+1-\xi_0)\right)^{-1},\\
q_j&=-\frac{j\varpi({r_{\mathbf{R}}}+1)}{2\xi_0}\sum_{i=1}^{r_{\mathbf{R}}}\kappa_ia_i(\varpi-a_i)^{j-1}\\
&\times\left(\frac{\xi_0}{\varpi\xi_0+a_i({r_{\mathbf{R}}}+1-\xi_0)}\right)^{j+1}+\left(\frac{-\xi_0}{{r_{\mathbf{R}}}+1-\xi_0}\right)^{j}\nonumber\\
&+\sum_{i=1}^{r_{\mathbf{R}}}\left(\frac{\xi_0(\varpi-a_i)}{\varpi\xi_0+a_i({r_{\mathbf{R}}}+1-\xi_0)}\right)^{j},j\geq1.
\end{align}
\end{subequations}
}In \eqref{Section_Performance_Analysis_OP_Rician_SNR_CDF_Help_Coefficient}, parameters $\xi_0$ and $\varpi$ are selected in a similar manner as for the PDF to ensure the uniform convergence of \eqref{Section_Performance_Analysis_OP_Rician_SNR_CDF_Help_Expression}.
\end{lemma}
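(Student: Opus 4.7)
The plan is to mirror the derivation of the PDF series in Lemma~\ref{Section_Performance_Analysis_OP_Rician_SNR_PDF_Help_Lemma}, adapted to yield the CDF directly. By Lemma~\ref{Section_Performance_Analysis_Rician_OP_SNR_Trans_Lemma}, $\tilde a$ is a weighted sum of $r_{\mathbf R}$ independent non-central $\chi^2_2(\kappa_i)$ variates with weights $a_i=\lambda_i/2$, so its moment generating function factors as
\[
M_{\tilde a}(s)=\prod_{i=1}^{r_{\mathbf R}}\frac{1}{1-2a_is}\exp\!\left(\frac{\kappa_ia_is}{1-2a_is}\right).
\]
First I would pick a reference Gamma distribution whose density carries exactly the prefactor $x^{r_{\mathbf R}}e^{-x/(2\varpi)}$ appearing in the claimed CDF, namely $\mathcal G(r_{\mathbf R}+1,\,2\varpi\xi_0/(r_{\mathbf R}+1))$. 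The generating-function identity
\[
(1-z)^{-(r_{\mathbf R}+1)}\exp\!\left(-\tfrac{zy}{1-z}\right)=\sum_{k=0}^{\infty}\frac{k!}{(r_{\mathbf R}+1)_k}L_k^{(r_{\mathbf R})}(y)\,z^k
\]
then serves as the dictionary between Laguerre expansions and formal power series in an auxiliary variable $z$, and it is precisely this identity, applied with shape parameter $r_{\mathbf R}+1$ rather than $r_{\mathbf R}$, that explains the shifted indices and the $L_k^{(r_{\mathbf R})}$ in the CDF expression.

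Next, following Casta\~no-Mart\'inez and L\'opez-Bl\'azquez, I would expand the ratio $M_{\tilde a}(s)/M_{\mathrm{ref}}(s)$ as a power series in $z$ after performing the substitution $s\mapsto s(z)$ that turns $M_{\mathrm{ref}}(s)$ into $(1-z)^{-(r_{\mathbf R}+1)}$. The coefficient $m_0$ is then the value of the transformed ratio at $z=0$, which reduces to the closed form stated in \eqref{Section_Performance_Analysis_OP_Rician_SNR_CDF_Help_Coefficient} after elementary simplification. The $q_j$ appear as the Taylor coefficients of $\frac{d}{dz}\log\!\bigl[M_{\tilde a}(s(z))/M_{\mathrm{ref}}(s(z))\bigr]$; the three terms in the definition of $q_j$ correspond respectively to the non-centrality contributions $\sum_i\kappa_ia_i$, the shape-index shift from $r_{\mathbf R}$ to $r_{\mathbf R}+1$, and the gamma-ratio correction associated with the weights $a_i/\varpi$. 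Applying the standard identity $(e^f)'=f'e^f$ to the logarithm of a generating function yields the recursion $m_k=\frac{1}{k}\sum_{j=0}^{k-1}m_j q_{k-j}$, and termwise identification of the Laguerre coefficients then produces \eqref{Section_Performance_Analysis_OP_Rician_SNR_CDF_Help_Expression}.

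The hard part will be the uniform-convergence argument that legitimizes this termwise matching. The Laguerre series converges uniformly on any finite interval only when the tilt parameter $\varpi$ dominates the weight spectrum in a precise sense, and the specific choices $\xi_0=r_{\mathbf R}/3$ and $\varpi=r_{\mathbf R}^{-1}\sum_i a_i$ are calibrated exactly to satisfy the strict inequality $\varpi>(2-r_{\mathbf R}/\xi_0)a_{\max}/2$ stated in the PDF lemma, applied now with the shifted shape $r_{\mathbf R}+1$. An alternative route would be to integrate the PDF series of Lemma~\ref{Section_Performance_Analysis_OP_Rician_SNR_PDF_Help_Lemma} term by term, but the simultaneous shift from $r_{\mathbf R}$ to $r_{\mathbf R}+1$ in both the Laguerre order and the Pochhammer denominator makes this route cumbersome; hence the direct MGF expansion against a modified reference Gamma distribution is the cleaner path, paralleling the treatment of the PDF.
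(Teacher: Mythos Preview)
Your proposal is correct and follows essentially the same approach as the paper: the paper simply refers the reader to \cite[Section 3]{castano2005distribution} (Casta\~no-Mart\'inez and L\'opez-Bl\'azquez), which is precisely the MGF/Laguerre-expansion method you have sketched. If anything, you have supplied more detail than the paper itself, which treats the result as a direct citation rather than reproving it.
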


\begin{proof}
Please refer to \cite[Section 3]{castano2005distribution}.
\end{proof}
Based on \eqref{Section_Performance_Analysis_OP_Rician_SNR_CDF_Help_Expression} and \eqref{Section_Performance_Analysis_OP_Rician_SNR_PDF_Help_Expression}, we obtain the CDF and PDF of ${\lVert{\mathbf{h}}\rVert^2}=\lVert\overline{\mathbf{h}}+\mathbf{R}^{\frac{1}{2}}\tilde{\mathbf{h}}\rVert^2$ as follows:
\begin{align}
F_{\lVert{\mathbf{h}}\rVert^2}(x)&=F_{\tilde{a}}\left(x-\tilde{a}_0\right), x\geq\tilde{a}_0 \label{Section_Performance_Analysis_OP_Rician_SNR_CDF_Master_Expression},\\
f_{\lVert{\mathbf{h}}\rVert^2}(x)&=f_{\tilde{a}}\left(x-\tilde{a}_0\right), x\geq\tilde{a}_0,
\label{Section_Performance_Analysis_OP_Rician_SNR_PDF_Master_Expression}
\end{align}
where $\tilde{a}_0\geq0$.

$\bullet$ \emph{\textbf{Step 2 -  Deriving a Closed-Form Expression for the OP:}}
In the second step, we exploit the CDF of $\lVert{\mathbf{h}}\rVert^2$ to calculate the OP, which yields the following theorem.

\begin{theorem}\label{Section_Performance_Analysis_OP_Rician_Explicit_Expression_Theorem}
The OP of the considered system is given by
\begin{align}\label{Section_Performance_Analysis_OP_Rician_Explicit_Expression}
\mathcal{P}_{\rm{rician}}
=\left\{
\begin{array}{rl}
F_{\tilde{a}}\left(\frac{2^{\mathcal{R}}-1}{p/\sigma^2}-\tilde{a}_0\right)           & {\frac{(2^{\mathcal{R}}-1)\sigma^2}{\tilde{a}_0}>p\geq0}\\
0         & {\frac{(2^{\mathcal{R}}-1)\sigma^2}{\tilde{a}_0}<p}
\end{array} \right..
\end{align}
\end{theorem}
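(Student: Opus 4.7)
The plan is to prove the theorem by directly combining the SNR expression derived in Section~IV with the distributional result of \textbf{Lemma}~\ref{Section_Performance_Analysis_Rician_OP_SNR_Trans_Lemma}. Starting from the OP definition in \eqref{Section_Performance_Analysis_OP_Calculation_MISO_Rician}, I would note that the event inside the probability is $\{\lVert\overline{\mathbf{h}}+\mathbf{R}^{1/2}\tilde{\mathbf{h}}\rVert^2 < (2^{\mathcal{R}}-1)/(p/\sigma^2)\}$, so the OP is simply the CDF of $\lVert{\mathbf{h}}\rVert^2$ evaluated at the threshold $(2^{\mathcal{R}}-1)/(p/\sigma^2)$.

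The next step is to invoke \textbf{Lemma}~\ref{Section_Performance_Analysis_Rician_OP_SNR_Trans_Lemma}, which establishes that $\lVert\overline{\mathbf{h}}+\mathbf{R}^{1/2}\tilde{\mathbf{h}}\rVert^2$ is distributionally equivalent to $\tilde{a}+\tilde{a}_0$, where $\tilde{a}_0$ is a deterministic, nonnegative constant (being a sum of squared magnitudes $\lvert\overline{h}_i\rvert^2$) and $\tilde{a}$ is a nonnegative random variable (a weighted sum of noncentral chi-square variates). Because $\tilde{a}_0$ is deterministic, the CDF of $\lVert{\mathbf{h}}\rVert^2$ satisfies $F_{\lVert{\mathbf{h}}\rVert^2}(x)=\Pr(\tilde{a}+\tilde{a}_0<x)=F_{\tilde{a}}(x-\tilde{a}_0)$ whenever $x \geq \tilde{a}_0$, which is exactly relation \eqref{Section_Performance_Analysis_OP_Rician_SNR_CDF_Master_Expression}.

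Setting $x=(2^{\mathcal{R}}-1)/(p/\sigma^2)$, the condition $x>\tilde{a}_0$ translates to $p<(2^{\mathcal{R}}-1)\sigma^2/\tilde{a}_0$, yielding the first branch of \eqref{Section_Performance_Analysis_OP_Rician_Explicit_Expression}. For the second branch, I would argue that since $\tilde{a}\geq 0$ with probability one, any threshold strictly below $\tilde{a}_0$ produces a probability of zero; equivalently, when the transmit power exceeds $(2^{\mathcal{R}}-1)\sigma^2/\tilde{a}_0$, the deterministic floor $\tilde{a}_0$ already guarantees $\lVert{\mathbf{h}}\rVert^2 p/\sigma^2 \geq 2^{\mathcal{R}}-1$, so no outage can occur.

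I do not foresee any substantial obstacle: the result is essentially a bookkeeping consequence of \textbf{Lemma}~\ref{Section_Performance_Analysis_Rician_OP_SNR_Trans_Lemma} together with \textbf{Lemma}~\ref{Section_Performance_Analysis_OP_Rician_SNR_CDF_Help_Lemma}, which already supplies the closed-form series representation of $F_{\tilde{a}}(\cdot)$. The only subtlety worth flagging is the precise treatment of the boundary case $p=(2^{\mathcal{R}}-1)\sigma^2/\tilde{a}_0$, where the strict-inequality event $\{\lVert{\mathbf{h}}\rVert^2<x\}$ coincides with $\{\tilde{a}<0\}$ and therefore still contributes probability zero; this is why the second branch in the theorem statement can be given with a strict inequality without loss.
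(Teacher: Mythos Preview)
Your proposal is correct and follows essentially the same approach as the paper: the paper's proof is a one-line remark that the theorem follows by substituting \eqref{Section_Performance_Analysis_OP_Rician_SNR_CDF_Master_Expression} into \eqref{Section_Performance_Analysis_OP_Calculation_MISO_Rician}, which is precisely the chain of reasoning you spell out in more detail. Your additional remarks on the nonnegativity of $\tilde{a}$ and the boundary case are sound elaborations, but the core argument is identical.
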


\begin{proof}
This theorem can be directly proved by substituting \eqref{Section_Performance_Analysis_OP_Rician_SNR_CDF_Master_Expression} into \eqref{Section_Performance_Analysis_OP_Calculation_MISO_Rician}.
\end{proof}

\begin{remark}
The result in \eqref{Section_Performance_Analysis_OP_Rician_Explicit_Expression} suggests that the OP for NFC Rician fading channels is a piecewise function of $p$, where the two pieces intersect at $p_0={\frac{(2^{\mathcal{R}}-1)\sigma^2}{\tilde{a}_0}}$. This means that for MISO Rician channels, achieving zero OP only requires a finite transmit power. By contrast, for MISO Rayleigh channels, we have $\overline{\mathbf{h}}=\mathbf{0}$ and thus $\tilde{a}_0=0$, which means that achieving a zero OP requires an infinitely high transmit power. The above discussion reveals that the LoS component can improve the outage performance of NFC.
\end{remark}

$\bullet$ \emph{\textbf{Step 3 - Deriving the Rate of Convergence of the OP:}}
In Step 3 of Section \ref{Section_Performance_Analysis_OP_Rayleigh_Part}, we investigate the high-SNR asymptotic behaviour of $\mathcal{P}_{\rm{rayleigh}}$ and formulate it into the standard form $({\mathcal{G}}_{\rm{a}}\cdot p)^{-{\mathcal{G}}_{\rm{d}}}$ (given by \eqref{Section_Performance_Analysis_OP_Asymptotic_Expression_Standard}). Considering \eqref{Section_Performance_Analysis_OP_Rician_Explicit_Expression}, $\mathcal{P}_{\rm{rician}}$ satisfies $\lim_{p\rightarrow\infty}\mathcal{P}_{\rm{rician}}=0$. However, since $\mathcal{P}_{\rm{rician}}$ is a piecewise function of $p$, we cannot formulate its high-SNR approximation in the standard form as given by \eqref{Section_Performance_Analysis_OP_Asymptotic_Expression_Standard}. It is worth noting that the standard form $({\mathcal{G}}_{\rm{a}}\cdot p)^{-{\mathcal{G}}_{\rm{d}}}$ essentially characterizes the rate of $\mathcal{P}_{\rm{rayleigh}}$ converging to zero. Motivated by this, in the last step, we investigate the rate of $\mathcal{P}_{\rm{rician}}$ converging to zero.

Based on \eqref{Section_Performance_Analysis_OP_Rician_Explicit_Expression}, as $p$ increases, $\mathcal{P}_{\rm{rician}}$ becomes zero when $p=p_0={\frac{(2^{\mathcal{R}}-1)\sigma^2}{\tilde{a}_0}}$, i.e.,
\begin{align}
\lim\nolimits_{p\rightarrow{\frac{(2^{\mathcal{R}}-1)\sigma^2}{\tilde{a}_0}}}\mathcal{P}_{\rm{rician}}=0.
\end{align}
The following corollary characterizes the asymptotic behaviour of $\mathcal{P}_{\rm{rician}}$ as $p\rightarrow{\frac{(2^{\mathcal{R}}-1)\sigma^2}{\tilde{a}_0}}$.

\begin{corollary}\label{Section_Performance_Analysis_OP_Asymptotic_Expression_Standard_Rician_Theorem}
When $p\rightarrow{\frac{(2^{\mathcal{R}}-1)\sigma^2}{\tilde{a}_0}}$, $\mathcal{P}_{\rm{rician}}$ satisfies
{
\begin{equation}\label{Section_Performance_Analysis_OP_Asymptotic_Expression_Standard_Rician}
\begin{split}
&\lim\nolimits_{p\rightarrow{\frac{(2^{\mathcal{R}}-1)\sigma^2}{\tilde{a}_0}}}{\mathcal{P}}_{\rm{rician}}\\
&\simeq\left({\mathfrak{G}}_{\rm{a}}^{-1}\cdot\left(\frac{1}{p}-\frac{1}{p_0}
\right)\right)^{{\mathfrak{G}}_{\rm{d}}},
\end{split}
\end{equation}
}where ${\mathfrak{G}}_{\rm{a}}=\frac{\left(r_{\mathbf{R}}!\prod_{i=1}^{r_{\mathbf{R}}}{\lambda}_i{e}^{\lambda_i^{-1}\lvert\overline{h}_i\rvert^2}\right)^{{1}/{r_{\mathbf{R}}}}}
{\sigma^2(2^{\mathcal{R}}-1)}$ and ${\mathfrak{G}}_{\rm{d}}=r_{\mathbf{R}}$.
\end{corollary}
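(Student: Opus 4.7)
The plan is to reduce the corollary to a small-argument asymptotic analysis of the CDF $F_{\tilde{a}}(\cdot)$. Recall from \textbf{Theorem} \ref{Section_Performance_Analysis_OP_Rician_Explicit_Expression_Theorem} that for $0\leq p<p_0\triangleq\frac{(2^{\mathcal{R}}-1)\sigma^2}{\tilde{a}_0}$ we have $\mathcal{P}_{\rm{rician}}=F_{\tilde{a}}(u(p))$, with $u(p)\triangleq \frac{2^{\mathcal{R}}-1}{p/\sigma^2}-\tilde{a}_0=(2^{\mathcal{R}}-1)\sigma^2\left(\tfrac{1}{p}-\tfrac{1}{p_0}\right)$. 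Since $u(p)\downarrow 0$ as $p\uparrow p_0$, it suffices to determine the leading term of $F_{\tilde{a}}(u)$ as $u\to 0^+$ and substitute back.

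First, I would invoke \textbf{Lemma} \ref{Section_Performance_Analysis_Rician_OP_SNR_Trans_Lemma} to represent $\tilde{a}$ as the independent sum $\sum_{i=1}^{r_{\mathbf{R}}} Z_i$, where $Z_i\triangleq \lambda_i|\lambda_i^{-1/2}\overline{h}_i+\tilde{h}_i|^2$ and $\tilde{h}_i\sim\mathcal{CN}(0,1)$. For each $i$, the variate $2Z_i/\lambda_i$ is a noncentral $\chi^2$ with two degrees of freedom and noncentrality $\kappa_i=2|\overline{h}_i|^2/\lambda_i$, whose probability density at the origin equals $\tfrac{1}{2}e^{-\kappa_i/2}$. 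A simple rescaling then yields $f_{Z_i}(0)=\lambda_i^{-1}e^{-|\overline{h}_i|^2/\lambda_i}\triangleq c_i<\infty$, and each $f_{Z_i}$ is continuous and bounded on a neighbourhood of $z=0$.

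Next, since $\tilde{a}$ is an $r_{\mathbf{R}}$-fold convolution of non-negative densities bounded at the origin, a standard induction on convolutions gives
\begin{align}
F_{\tilde{a}}(u)&=\frac{u^{r_{\mathbf{R}}}}{r_{\mathbf{R}}!}\prod_{i=1}^{r_{\mathbf{R}}}c_i+o(u^{r_{\mathbf{R}}}) \nonumber \\
&=\frac{u^{r_{\mathbf{R}}}}{r_{\mathbf{R}}!\prod_{i=1}^{r_{\mathbf{R}}}\lambda_ie^{|\overline{h}_i|^2/\lambda_i}}+o(u^{r_{\mathbf{R}}}),\quad u\to 0^+.
\end{align}
Substituting $u=u(p)$ and recognizing the definitions of $\mathfrak{G}_{\rm{a}}$ and $\mathfrak{G}_{\rm{d}}=r_{\mathbf{R}}$, one obtains $\mathcal{P}_{\rm{rician}}\simeq(\mathfrak{G}_{\rm{a}}^{-1}(1/p-1/p_0))^{\mathfrak{G}_{\rm{d}}}$, as claimed.

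The main technical obstacle is justifying that the leading coefficient of $F_{\tilde{a}}(u)$ is precisely $\prod_i c_i/r_{\mathbf{R}}!$ and controlling the $o(u^{r_{\mathbf{R}}})$ remainder uniformly near zero; this follows from an elementary induction on the convolution identity together with the boundedness and continuity of each $f_{Z_i}$ at the origin. An alternative, more computationally demanding route is to start from the explicit series representation \eqref{Section_Performance_Analysis_OP_Rician_SNR_CDF_Help_Expression} and isolate the $u^{r_{\mathbf{R}}}$ term by evaluating $L_k^{(r_{\mathbf{R}})}(0)$ through $L_k^{(\alpha)}(0)=\binom{k+\alpha}{k}$ and applying the recursion \eqref{Section_Performance_Analysis_OP_Rician_SNR_CDF_Help_Coefficient}; this avoids convolution asymptotics but requires careful bookkeeping of the Laguerre-series summation to verify that the coefficient collapses to $1/(r_{\mathbf{R}}!\prod_i\lambda_i e^{|\overline{h}_i|^2/\lambda_i})$.
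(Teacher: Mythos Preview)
Your proposal is correct. Both you and the paper reduce the problem to the small-argument behaviour of $F_{\tilde a}$ via \textbf{Theorem}~\ref{Section_Performance_Analysis_OP_Rician_Explicit_Expression_Theorem} and the decomposition $\tilde a=\sum_{i=1}^{r_{\mathbf R}}Z_i$ from \textbf{Lemma}~\ref{Section_Performance_Analysis_Rician_OP_SNR_Trans_Lemma}, and both arrive at the same leading coefficient $\big(r_{\mathbf R}!\prod_i\lambda_i e^{|\overline h_i|^2/\lambda_i}\big)^{-1}$. The route differs in how the small-$u$ asymptotic of the sum is extracted: the paper computes the Laplace transform of each $f_{Z_i}$, expands it as $s\to\infty$ to obtain $\mathcal L_{f_{\tilde a}}(s)\simeq s^{-r_{\mathbf R}}\prod_i \lambda_i^{-1}e^{-|\overline h_i|^2/\lambda_i}$, and then inverts term-by-term (a Tauberian-style step), whereas you evaluate each $f_{Z_i}(0)$ directly and run a convolution induction on the CDF. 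Your route is more elementary and self-contained, avoiding the implicit appeal to an Abelian/Tauberian correspondence; the paper's transform route is more mechanical and generalises readily if one later needs higher-order terms. The alternative you sketch via the Laguerre series \eqref{Section_Performance_Analysis_OP_Rician_SNR_CDF_Help_Expression} would also work but is indeed the most laborious of the three.
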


\begin{proof}
Please refer to Appendix \ref{Section_Performance_Analysis_OP_Asymptotic_Expression_Standard_Rician_Theorem_Proof}.
\end{proof}

\begin{figure}[!t]
    \setlength{\abovecaptionskip}{0pt}
    \centering
    \includegraphics[width=0.4\textwidth]{./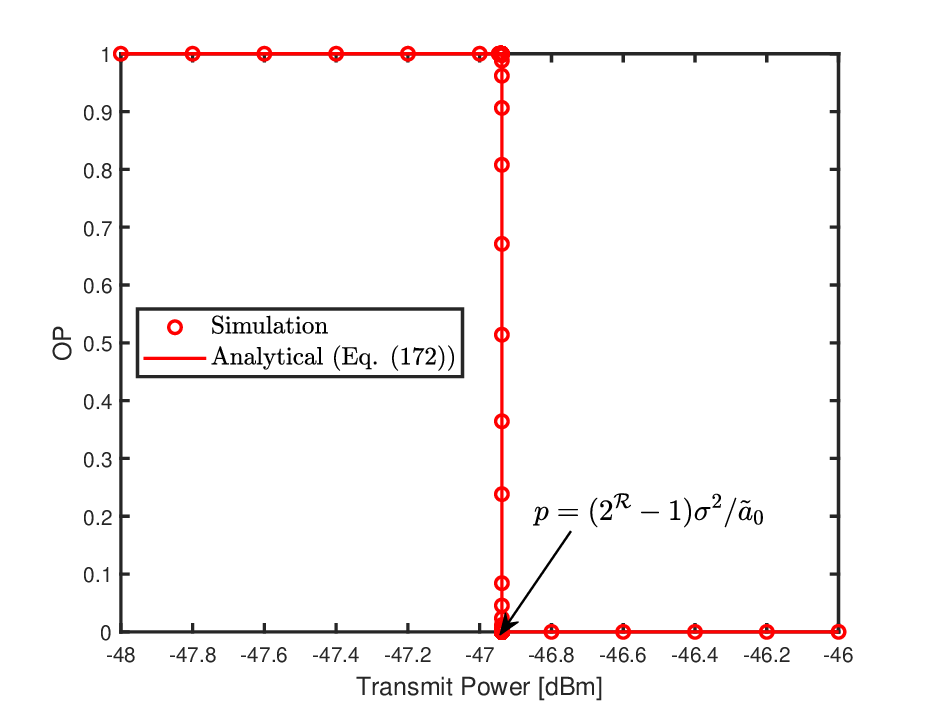}
    \caption{Outage probability for correlated MISO Rician channels and data rate $\mathcal{R}=1$ bps/Hz. The system is operating at $28$ GHz. $N_x=N_z=33$, $d=\frac{\lambda}{2}$, $(\theta,\phi)=(\frac{\pi}{2},\frac{\pi}{2})$, $r=4$ m, $A=\frac{\lambda^2}{4\pi}$, $e_a=1$, ${\bm\rho}=\hat{\mathbf J}({\mathbf{s}}_{m,n})=[1,0,0]^{\mathsf{T}}$, $\forall m,n$, $\sigma^2=-90$ dBm, $L=4$, and $\sigma_{\ell}^2=1$, $\forall \ell$. The coordinates (in meters) of the scatterers are given by $
    \mathbf{s}_{1}^{\star}=[\frac{7}{10}\sin{\frac{\pi}{4}},r,\frac{7}{10}\cos{\frac{\pi}{4}}]^{\mathsf{T}}$, $\mathbf{s}_{2}^{\star}=[\frac{4}{5}\sin{\frac{\pi}{8}},r,\frac{4}{5}\cos{\frac{\pi}{8}}]^{\mathsf{T}}$, $
    \mathbf{s}_{3}^{\star}=[\frac{13}{20}\sin{\frac{47\pi}{64}},r,\frac{13}{20}\cos{\frac{47\pi}{64}}]^{\mathsf{T}}$, and
    $\mathbf{s}_{4}^{\star}=[\frac{2}{5}\sin{\frac{\pi}{7}},r,\frac{2}{5}\cos{\frac{\pi}{7}}]^{\mathsf{T}}$. The LoS channels $\overline{\mathbf{h}}$, $\{h_{\ell}({\mathbf{r}}_{\ell},\mathbf{r})\}_{\ell=1}^{L}$, and $\{{\mathbf{h}}_{\ell}(\mathbf{r}_{\ell})\}_{\ell=1}^{L}$ follow the USW channel model.}
    \label{Performance Analysis Figure: Rician_OP_Discrete_Explicit}
\end{figure}

\begin{figure}[!t]
    \setlength{\abovecaptionskip}{0pt}
    \centering
    \includegraphics[width=0.4\textwidth]{./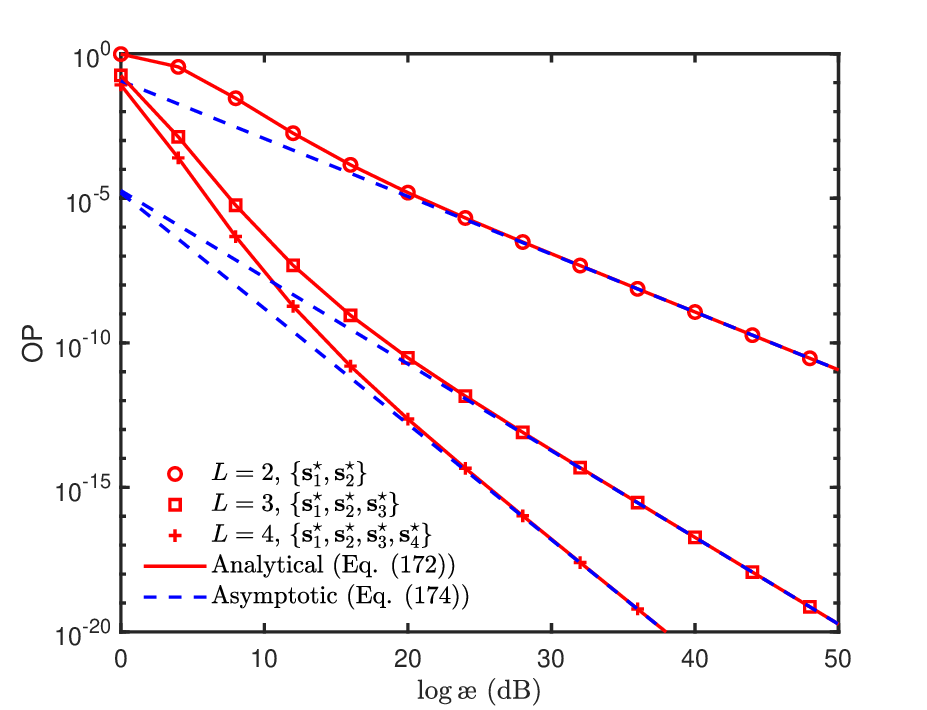}
    \caption{Outage probability versus $\ae$ for correlated MISO Rician channels and data rate $\mathcal{R}=1$ bps/Hz. The system is operating at $28$ GHz. $N_x=N_z=33$, $d=\frac{\lambda}{2}$, $(\theta,\phi)=(\frac{\pi}{2},\frac{\pi}{2})$, $r=4$ m, $A=\frac{\lambda^2}{4\pi}$, $e_a=1$, ${\bm\rho}=\hat{\mathbf J}({\mathbf{s}}_{m,n})=[1,0,0]^{\mathsf{T}}$, $\forall m,n$, $\sigma^2=-90$ dBm, $L=4$, and $\sigma_{\ell}^2=1$, $\forall \ell$. The coordinates (in meters) of the scatterers are given by $
    \mathbf{s}_{1}^{\star}=[\frac{7}{10}\sin{\frac{\pi}{4}},r,\frac{7}{10}\cos{\frac{\pi}{4}}]^{\mathsf{T}}$, $\mathbf{s}_{2}^{\star}=[\frac{4}{5}\sin{\frac{\pi}{8}},r,\frac{4}{5}\cos{\frac{\pi}{8}}]^{\mathsf{T}}$, $
    \mathbf{s}_{3}^{\star}=[\frac{13}{20}\sin{\frac{47\pi}{64}},r,\frac{13}{20}\cos{\frac{47\pi}{64}}]^{\mathsf{T}}$, and
    $\mathbf{s}_{4}^{\star}=[\frac{2}{5}\sin{\frac{\pi}{7}},r,\frac{2}{5}\cos{\frac{\pi}{7}}]^{\mathsf{T}}$. The LoS channels $\overline{\mathbf{h}}$, $\{h_{\ell}({\mathbf{r}}_{\ell},\mathbf{r})\}_{\ell=1}^{L}$, and $\{{\mathbf{h}}_{\ell}(\mathbf{r}_{\ell})\}_{\ell=1}^{L}$ follow the USW channel model.}
    \label{Performance Analysis Figure: Rician_OP_Discrete_Asymp}
\end{figure}

\begin{remark}
The results in \textbf{Corollary} \ref{Section_Performance_Analysis_OP_Asymptotic_Expression_Standard_Rician_Theorem} indicate that as $p\rightarrow{\frac{(2^{\mathcal{R}}-1)\sigma^2}{\tilde{a}_0}}$, the rate at which ${\mathcal{P}}_{\rm{rician}}$ converges to zero equals the rate of $\left({\mathfrak{G}}_{\rm{a}}^{-1}\cdot\left(\frac{1}{p}-
\frac{\tilde{a}_0}{\left(2^{\mathcal{R}}-1\right)\sigma^2}\right)\right)^{{\mathfrak{G}}_{\rm{d}}}$ converging to zero. Based on \eqref{Section_Performance_Analysis_OP_Asymptotic_Expression_Standard_Rician}, the diversity order and the array gain of ${\mathcal{P}}_{\rm{rician}}$ are given by $r_{\mathbf{R}}$ and $\frac{\left(r_{\mathbf{R}}!\prod_{i=1}^{r_{\mathbf{R}}}{\lambda}_i{e}^{\lambda_i^{-1}\lvert\overline{h}_i\rvert^2}\right)^{{1}/{r_{\mathbf{R}}}}}
{\sigma^2(2^{\mathcal{R}}-1)}$, respectively.
\end{remark}

\begin{remark}
When $\overline{\mathbf{h}}=\mathbf{0}$, the Rician model degenerates to the Rayleigh model as $\tilde{a}_0=0$. In this case, \eqref{Section_Performance_Analysis_OP_Asymptotic_Expression_Standard_Rician} degenerates into the standard form given by \eqref{Section_Performance_Analysis_OP_Asymptotic_Expression_Standard}. By comparing \eqref{Section_Performance_Analysis_OP_Asymptotic_Expression_Standard} with \eqref{Section_Performance_Analysis_OP_Asymptotic_Expression_Standard_Rician}, we find that ${\mathcal{G}}_{\rm{d}}={\mathfrak{G}}_{\rm{d}}$ and ${\mathcal{G}}_{\rm{a}}<{\mathfrak{G}}_{\rm{a}}$. This suggests that the OP for Rician fading yields the same diversity order as that for Rayleigh fading, yet has a larger array gain than the latter one.
\end{remark}

$\bullet$ \emph{\textbf{Numerical Results:}} To further illustrate the derived results, in {\figurename} {\ref{Performance Analysis Figure: Rician_OP_Discrete_Explicit}}, we show the OP for USW LoS channels as a function of the transmit power, $p$. The analytical results are calculated using \eqref{Section_Performance_Analysis_OP_Rician_Explicit_Expression}. As can be observed in {\figurename} {\ref{Performance Analysis Figure: Rician_OP_Discrete_Explicit}}, the analytical results are in good agreement with the simulated results, and the OP collapses to zero when $p={\frac{(2^{\mathcal{R}}-1)\sigma^2}{\tilde{a}_0}}$. This verifies the correctness of Theorem \ref{Section_Performance_Analysis_OP_Rician_Explicit_Expression_Theorem}. The simulation parameters used to generate {\figurename} {\ref{Performance Analysis Figure: Rician_OP_Discrete_Explicit}} are the same as those used to generate {\figurename} {\ref{Performance Analysis Figure: OP_Discrete}}. For this simulation setting, we have $\tilde{a}_0\gg {\mathbb E}\{\tilde{a}\}$, which means that the considered BS-to-user channel is LoS-dominated. By comparing the results in {\figurename} {\ref{Performance Analysis Figure: Rician_OP_Discrete_Explicit}} and {\figurename} {\ref{Performance Analysis Figure: OP_Discrete}}, we find that to achieve the same OP, the Rician fading channel requires much less power resources than the Rayleigh fading channel. This performance gain mainly originates from the strong LoS component for Rician fading. To illustrate the ROC of the OP, we depict ${\mathcal{P}}_{\rm{rician}}$ versus $\ae\triangleq\left(\frac{2^{\mathcal{R}}-1}{p/\sigma^2}-\tilde{a}_0\right)^{-1}$ in {\figurename} {\ref{Performance Analysis Figure: Rician_OP_Discrete_Asymp}}. As can be observed, when $\ae\rightarrow\infty$, i.e., $p\rightarrow{\frac{(2^{\mathcal{R}}-1)\sigma^2}{\tilde{a}_0}}$, the derived asymptotic results approach the analytical results. Besides, it can be observed that a higher diversity order is achievable when the channel contains more scatterers.

\subsubsection{Analysis of the ECC for Rician Channels}
Having analyzed the OP, we turn our attention to the ECC, which is given as follows:
\begin{equation}\label{Section_Performance_Analysis_ECC_Definition_MISO_Rician}
\begin{split}
\bar{\mathcal{C}}_{\rm{rician}}&={\mathbb{E}}\{\log_2(1+\bar\gamma\lVert{\mathbf{h}}\rVert^2)\}\\
&=\int_{{\tilde{a}}_0}^{\infty}\log_2(1+p/\sigma^2 x)f_{\lVert{\mathbf{h}}\rVert^2}(x){\rm{d}}x\\
&=\int_0^{\infty}\log_2(1+p/\sigma^2 ({\tilde{a}}_0+x))f_{\tilde{a}}\left(x\right){\rm{d}}x.
\end{split}
\end{equation}
The analysis of the ECC also comprises three steps.

$\bullet$ \emph{\textbf{Step 1 - Analyzing the Statistics of the Channel Gain:}} 
In the first step, we analyze the statistics of the channel gain $\lVert{\mathbf{h}}\rVert^2$. The results can be found in \textbf{Lemma} \ref{Section_Performance_Analysis_OP_Rician_SNR_PDF_Help_Lemma}.

$\bullet$ \emph{\textbf{Step 2 - Deriving a Closed-Form Expression for the ECC:}}
In the second step, we exploit the CDF of $\lVert{\mathbf{h}}\rVert^2$ to derive a closed-form expression for $\bar{\mathcal{C}}_{\rm{rician}}$, which leads to the following theorem.
\vspace{-5pt}
\begin{theorem}
The ECC can be expressed in closed form as follows:
\vspace{-0.1cm}
\begin{equation}\label{Section_Performance_Analysis_ECC_Calculation_MISO_Rician}
\begin{split}
&\bar{\mathcal{C}}_{\rm{rician}}=\log_2(1+p/\sigma^2{\tilde{a}}_0)+\sum_{k=0}^{\infty}\sum_{t=0}^{k}
\frac{c_k(-k)_t({r_{\mathbf{R}}}/\xi_0)^t}{t!\Gamma({r_{\mathbf{R}}}+t)\ln{2}}\\
&\quad\times\sum_{\mu=0}^{{r_{\mathbf{R}}}+t-1}\frac{\Gamma({r_{\mathbf{R}}}+t)}{\Gamma({r_{\mathbf{R}}}+t-\mu)}\left[\frac{(-1)^{{r_{\mathbf{R}}}+t-\mu}e^{\frac{1}{\mathfrak{a}}}}{{\mathfrak{a}}^{{r_{\mathbf{R}}}+t-1-\mu}}{\rm{Ei}}\left({\frac{-1}{\mathfrak{a}}}\right)\right.\\
&\quad+\left.\sum_{u=1}^{{r_{\mathbf{R}}}+t-1-\mu}(u-1)!\left(\frac{-1}{\mathfrak{a}}\right)^{{r_{\mathbf{R}}}+t-1-\mu-u}\right]
\end{split}
\vspace{-0.1cm}
\end{equation}
with ${\mathfrak{a}}\triangleq2\varpi p/(\sigma^2+p {\tilde{a}}_0)$.
\end{theorem}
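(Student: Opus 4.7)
I begin from the definition
\begin{equation*}
\bar{\mathcal{C}}_{\rm{rician}}=\int_0^{\infty}\log_2\!\left(1+\tfrac{p}{\sigma^2}(\tilde{a}_0+x)\right)f_{\tilde{a}}(x)\,{\rm{d}}x,
\end{equation*}
so the plan is to peel off the deterministic LoS contribution first, then expand the PDF of $\tilde a$ given in \textbf{Lemma} \ref{Section_Performance_Analysis_OP_Rician_SNR_PDF_Help_Lemma} into an elementary form on which a standard tabulated integral can be invoked.

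First, I would factor
$1+\tfrac{p}{\sigma^2}(\tilde{a}_0+x)=\bigl(1+\tfrac{p}{\sigma^2}\tilde{a}_0\bigr)\bigl(1+\tfrac{\mathfrak{a}}{2\varpi}x\bigr)$,
using the identity $\tfrac{p/\sigma^2}{1+(p/\sigma^2)\tilde a_0}=\tfrac{\mathfrak a}{2\varpi}$ that follows directly from the definition of $\mathfrak a$. Taking $\log_2$ splits the integrand additively; because $f_{\tilde a}$ is a PDF, the constant term integrates to $\log_2\!\left(1+\tfrac{p}{\sigma^2}\tilde a_0\right)$, which already matches the leading term of the claimed expression. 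What remains to evaluate is
\begin{equation*}
\mathcal J \;\triangleq\; \int_0^{\infty}\log_2\!\left(1+\tfrac{\mathfrak{a}}{2\varpi}x\right) f_{\tilde{a}}(x)\,{\rm{d}}x.
\end{equation*}

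Second, I would substitute the series representation of $f_{\tilde a}$ and immediately replace the generalized Laguerre polynomial with its finite-sum form $L_k^{(r_{\mathbf R}-1)}(z)=\frac{\Gamma(k+r_{\mathbf R})}{k!}\sum_{t=0}^{k}\frac{(-k)_t z^t}{t!\,\Gamma(r_{\mathbf R}+t)}$ stated right after Lemma \ref{Section_Performance_Analysis_OP_Rician_SNR_PDF_Help_Lemma}. After collecting the Pochhammer cancellation $\tfrac{k!}{(r_{\mathbf R})_k}\cdot\tfrac{\Gamma(k+r_{\mathbf R})}{k!}=\Gamma(r_{\mathbf R})$, interchanging summation and integration (justified by the uniform convergence asserted in \textbf{Lemma} \ref{Section_Performance_Analysis_OP_Rician_SNR_PDF_Help_Lemma} for the choice $\xi_0=r_{\mathbf R}/3$), and performing the change of variable $y=x/(2\varpi)$, the problem reduces to evaluating, for each $(k,t)$, the prototype integral
\begin{equation*}
\mathcal I_{n}(\mathfrak a)\;=\;\int_0^{\infty} y^{\,n-1}\,e^{-y}\,\ln(1+\mathfrak a\, y)\,{\rm{d}}y,\qquad n=r_{\mathbf R}+t.
\end{equation*}

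Third, I would invoke \cite[Eq.~(4.337.5)]{Ryzhik2007}—exactly the same tool used to obtain $\bar{\mathcal{C}}_{\rm{rayleigh}}$ in \eqref{Section_Performance_Analysis_ECC_Calculation_MISO}—which furnishes a closed form for $\mathcal I_n(\mathfrak a)$ as a combination of $e^{1/\mathfrak a}\,{\rm Ei}(-1/\mathfrak a)$ and a finite rational sum in $1/\mathfrak a$. Dividing by $\ln 2$, inserting the resulting expression back, and keeping the index $\mu$ of the inner finite sum intact, matches the bracketed structure in the theorem. Gathering the prefactors $c_k$, $(-k)_t$, $(r_{\mathbf R}/\xi_0)^t$, $1/(t!\Gamma(r_{\mathbf R}+t))$ from the Laguerre expansion and the PDF normalization yields precisely \eqref{Section_Performance_Analysis_ECC_Calculation_MISO_Rician}.

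The main obstacle I anticipate is bookkeeping rather than analytical depth: one must be careful with the ratio $\mathfrak a/(2\varpi)$ that appears after the change of variable (it is exactly the constant that makes $\mathfrak a$ appear in the final formula), and one must verify that the swap of the infinite $k$-sum with the integral is licit. The latter is handled by the uniform-convergence statement included with \textbf{Lemma} \ref{Section_Performance_Analysis_OP_Rician_SNR_PDF_Help_Lemma}, together with a dominated-convergence argument on $[0,R]$ and a tail bound using $\log_2(1+\tfrac{\mathfrak a}{2\varpi}x)=O(\ln x)$ against the exponential decay $e^{-x/(2\varpi)}$ of $f_{\tilde a}$; the remaining algebraic reshuffling is routine.
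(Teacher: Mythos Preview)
Your proposal is correct and follows essentially the same route as the paper: substitute the Laguerre-series PDF of $\tilde a$ from \textbf{Lemma}~\ref{Section_Performance_Analysis_OP_Rician_SNR_PDF_Help_Lemma} into the ECC integral and evaluate the resulting integrals via \cite[Eq.~(4.337.5)]{Ryzhik2007}. Your explicit factoring of $\log_2(1+\tfrac{p}{\sigma^2}\tilde a_0)$ and the Laguerre-to-monomial expansion simply make visible the bookkeeping that the paper leaves implicit.
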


\begin{proof}
This theorem is proved by substituting \eqref{Section_Performance_Analysis_OP_Rician_SNR_PDF_Master_Expression} into \eqref{Section_Performance_Analysis_ECC_Definition_MISO_Rician} and solving the resulting integral with the aid of \cite[Eq. (4.337.5)]{Ryzhik2007}.
\end{proof}

\begin{figure}[!t]
    \setlength{\abovecaptionskip}{0pt}
    \centering
    \includegraphics[width=0.4\textwidth]{./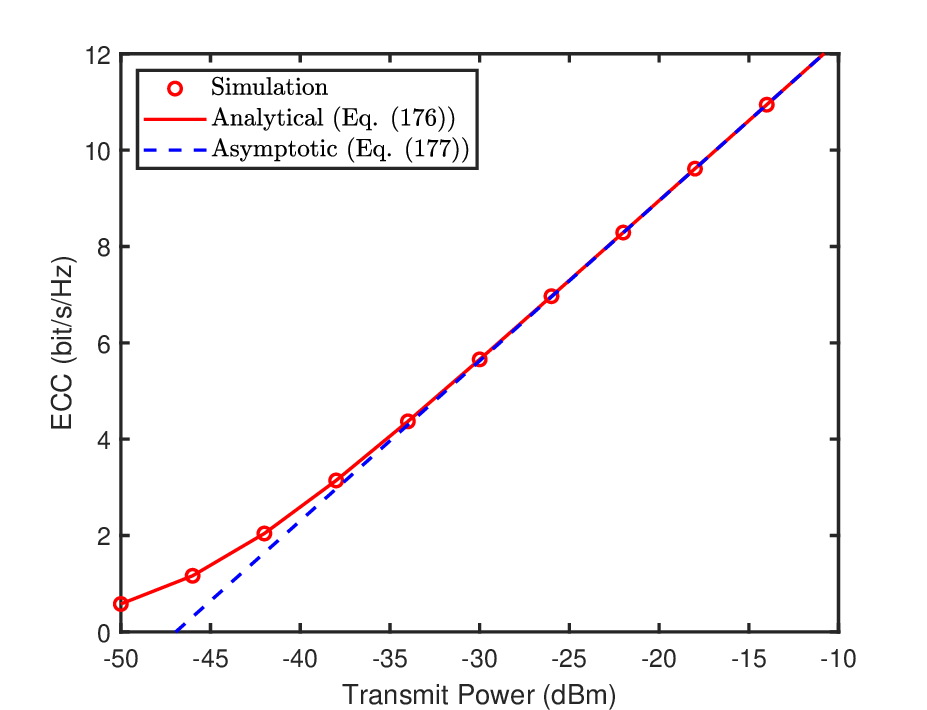}
    \caption{Ergodic channel capacity for correlated MISO Rician channels. The system is operating at $28$ GHz. $N_x=N_z=33$, $d=\frac{\lambda}{2}$, $(\theta,\phi)=(\frac{\pi}{2},\frac{\pi}{2})$, $r=4$ m, $A=\frac{\lambda^2}{4\pi}$, $e_a=1$, ${\bm\rho}=\hat{\mathbf J}({\mathbf{s}}_{m,n})=[1,0,0]^{\mathsf{T}}$, $\forall m,n$, $\sigma^2=-90$ dBm, $L=4$, and $\sigma_{\ell}^2=1$, $\forall \ell$. The coordinates (in meters) of the scatterers are given by $
    \mathbf{s}_{1}^{\star}=[\frac{7}{10}\sin{\frac{\pi}{4}},r,\frac{7}{10}\cos{\frac{\pi}{4}}]^{\mathsf{T}}$, $\mathbf{s}_{2}^{\star}=[\frac{4}{5}\sin{\frac{\pi}{8}},r,\frac{4}{5}\cos{\frac{\pi}{8}}]^{\mathsf{T}}$, $
    \mathbf{s}_{3}^{\star}=[\frac{13}{20}\sin{\frac{47\pi}{64}},r,\frac{13}{20}\cos{\frac{47\pi}{64}}]^{\mathsf{T}}$, and
    $\mathbf{s}_{4}^{\star}=[\frac{2}{5}\sin{\frac{\pi}{7}},r,\frac{2}{5}\cos{\frac{\pi}{7}}]^{\mathsf{T}}$. The LoS channels $\overline{\mathbf{h}}$, $\{h_{\ell}({\mathbf{r}}_{\ell},\mathbf{r})\}_{\ell=1}^{L}$, and $\{{\mathbf{h}}_{\ell}(\mathbf{r}_{\ell})\}_{\ell=1}^{L}$ follow the USW channel model.}
    \label{Performance Analysis Figure: ECC_Discrete_Rician}
\end{figure}

$\bullet$ \emph{\textbf{Step 3 - Deriving a High-SNR Approximation for the ECC:}}
In the third step, we perform an asymptotic analysis of the ECC for a sufficiently large transmit power, i.e., $p\rightarrow\infty$ in order to obtain more insights for system design. The asymptotic ECC in the high-SNR regime is presented in the following corollary.

\begin{corollary}\label{Section_Performance_Analysis_ECC_Asymptotic_Expression_Standard_Rician_Theorem}
The asymptotic ECC in the high-SNR regime can be expressed in the following form:
\begin{align}\label{Section_Performance_Analysis_ECC_Asymptotic_Expression_Standard_Rician}
\lim_{p\rightarrow\infty}{\bar{\mathcal{C}}}_{\rm{rician}}\simeq {\mathfrak{S}}_{\infty}\left(\log_2(p)-{\mathfrak{L}}_{\infty}\right),
\end{align}
where ${\mathfrak{S}}_{\infty}=1$ and
\begin{equation}\label{Section_Performance_Analysis_ECC_Calculation_Asymptotic_MISO_Rician}
\begin{split}
{\mathfrak{L}}_{\infty}&=-\mathbb{E}\{\log_2({\lVert{\mathbf{h}}\rVert^2/\sigma^2})\}\\
&=-\log_2({\tilde{a}}_0)-\sum_{k=0}^{\infty}\sum_{t=0}^{k}
\frac{c_k(-k)_t({r_{\mathbf{R}}}/\xi_0)^t}{t!\Gamma({r_{\mathbf{R}}}+t)\ln{2}}\\
&\times\sum_{\mu=0}^{{r_{\mathbf{R}}}+t-1}\frac{\Gamma({r_{\mathbf{R}}}+t)}{\Gamma({r_{\mathbf{R}}}+t-\mu)}\left[\frac{(-1)^{{r_{\mathbf{R}}}+t-\mu}e^{{\tilde{a}}_0}}{{{\tilde{a}}_0}^{-{r_{\mathbf{R}}}-t+1+\mu}}{\rm{Ei}}\left(-{\tilde{a}}_0\right)\right.\\
&+\left.\sum_{u=1}^{{r_{\mathbf{R}}}+t-1-\mu}(u-1)!\left(-{\tilde{a}}_0\right)^{{r_{\mathbf{R}}}+t-1-\mu-u}\right].
\end{split}
\end{equation}
\end{corollary}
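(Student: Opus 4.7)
The plan is to follow the three-step template used in the Rayleigh case (\textbf{Corollary} \ref{Section_Performance_Analysis_ECC_Asymptotic_Expression_Standard_Theorem}), adapting each step to the Rician setting, where $\lVert\mathbf{h}\rVert^2=\tilde{a}_0+\tilde{a}$ has a strictly positive deterministic floor $\tilde{a}_0$ that did not appear under Rayleigh fading.

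First, I would start from the definition $\bar{\mathcal{C}}_{\rm{rician}}=\mathbb{E}\{\log_2(1+(p/\sigma^2)\lVert\mathbf{h}\rVert^2)\}$ and invoke the elementary identity
\begin{equation}
\log_2\!\left(1+\tfrac{p}{\sigma^2}\lVert\mathbf{h}\rVert^2\right)=\log_2(p)+\log_2\!\left(\tfrac{\lVert\mathbf{h}\rVert^2}{\sigma^2}\right)+\log_2\!\left(1+\tfrac{\sigma^2}{p\lVert\mathbf{h}\rVert^2}\right).
\end{equation}
Taking expectations and letting $p\to\infty$, the last term vanishes by dominated convergence, since $\log_2(1+\sigma^2/(p\lVert\mathbf{h}\rVert^2))\leq \log_2(1+\sigma^2/(p\tilde{a}_0))\to 0$ and the integrand is uniformly bounded. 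This immediately yields ${\mathfrak{S}}_{\infty}=1$ and ${\mathfrak{L}}_{\infty}=-\mathbb{E}\{\log_2(\lVert\mathbf{h}\rVert^2/\sigma^2)\}$, establishing the first equality in \eqref{Section_Performance_Analysis_ECC_Calculation_Asymptotic_MISO_Rician}.

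Second, to obtain the explicit closed form for ${\mathfrak{L}}_{\infty}$, I would take the $p\to\infty$ limit of the exact ECC expression in \eqref{Section_Performance_Analysis_ECC_Calculation_MISO_Rician} directly. The leading term $\log_2(1+(p/\sigma^2)\tilde{a}_0)$ asymptotically splits as $\log_2(p)+\log_2(\tilde{a}_0/\sigma^2)+o(1)$, which isolates the ${\mathfrak{S}}_{\infty}\log_2(p)$ contribution, while the parameter $\mathfrak{a}=2\varpi p/(\sigma^2+p\tilde{a}_0)$ has a finite limit, so that the exponential–integral terms in \eqref{Section_Performance_Analysis_ECC_Calculation_MISO_Rician} converge term-by-term to the corresponding expressions in \eqref{Section_Performance_Analysis_ECC_Calculation_Asymptotic_MISO_Rician}. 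Alternatively, one may evaluate $\mathbb{E}\{\log_2(\lVert\mathbf{h}\rVert^2)\}=\int_0^{\infty}\log_2(\tilde{a}_0+x)f_{\tilde{a}}(x)\,\mathrm{d}x$ directly using the PDF of \textbf{Lemma} \ref{Section_Performance_Analysis_OP_Rician_SNR_PDF_Help_Lemma}: expanding $L_k^{(r_{\mathbf{R}}-1)}(\cdot)$ as a finite polynomial sum reduces each resulting piece to an integral of the form $\int_0^{\infty}x^{r_{\mathbf{R}}+t-1}\ln(\tilde{a}_0+x)e^{-x/(2\varpi)}\,\mathrm{d}x$, which is evaluated in closed form via \cite[Eq. (4.337.5)]{Ryzhik2007} after the translation $y=x+\tilde{a}_0$.

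The main obstacle will be rigorously justifying the term-by-term limit interchange inside the infinite Laguerre-polynomial series. The series in \textbf{Lemma} \ref{Section_Performance_Analysis_OP_Rician_SNR_PDF_Help_Lemma} converges uniformly only on finite intervals, while the logarithmic factor $\log_2(\tilde{a}_0+x)$ grows without bound as $x\to\infty$. Securing the validity of exchanging limit, summation, and integration—most likely by exploiting the exponential decay $e^{-x/(2\varpi)}$ to dominate the tail and bounding each generalized Laguerre term via a Mehler-type estimate such as $|L_k^{(\alpha)}(x)|\lesssim \binom{k+\alpha}{k}e^{x/2}$—is the delicate part. Once this technical interchange is handled, the remaining manipulations are routine algebra, and the double-sum structure in \eqref{Section_Performance_Analysis_ECC_Calculation_Asymptotic_MISO_Rician} follows by matching coefficients with the finite-$p$ expression in \eqref{Section_Performance_Analysis_ECC_Calculation_MISO_Rician}.
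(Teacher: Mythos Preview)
Your proposal is correct and takes essentially the same approach as the paper, which simply states that the proof ``closely follows \textbf{Corollary}~\ref{Section_Performance_Analysis_ECC_Asymptotic_Expression_Standard_Theorem}.'' Your additive decomposition of $\log_2(1+(p/\sigma^2)\lVert\mathbf{h}\rVert^2)$ together with the deterministic floor $\lVert\mathbf{h}\rVert^2\geq\tilde{a}_0>0$ to control the remainder via dominated convergence is precisely the Rician refinement of the Rayleigh argument; your subsequent route to the explicit series---either by passing to the limit $\mathfrak{a}\to 2\varpi/\tilde{a}_0$ in \eqref{Section_Performance_Analysis_ECC_Calculation_MISO_Rician} or by integrating $\log_2(\tilde{a}_0+x)$ against the PDF of \textbf{Lemma}~\ref{Section_Performance_Analysis_OP_Rician_SNR_PDF_Help_Lemma}---matches the paper's intent, and your concern about the term-by-term interchange in the Laguerre series is a legitimate technical point that the paper does not address explicitly.
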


\begin{proof}
The proof closely follows \textbf{Corollary} \ref{Section_Performance_Analysis_ECC_Asymptotic_Expression_Standard_Theorem}.
\end{proof}

\begin{remark}\label{Section_Performance_Analysis_ECC_Asymptotic_Expression_Standard_Rician_Remark}
The results in \textbf{Corollary} \ref{Section_Performance_Analysis_ECC_Asymptotic_Expression_Standard_Rician_Theorem} suggest that the high-SNR slope and the high-SNR power offset of ${\bar{\mathcal{C}}}_{\rm{rician}}$ are given by $1$ and $-\mathbb{E}\{\log_2({\lVert{\mathbf{h}}\rVert^2/\sigma^2})\}$, respectively. By comparing \eqref{Section_Performance_Analysis_ECC_Asymptotic_Expression_Standard_Rician} with \eqref{Section_Performance_Analysis_ECC_Asymptotic_Expression_Standard}, we find that ${\mathcal{S}}_{\infty}={\mathfrak{S}}_{\infty}$, which suggests that the ECC for Rician fading yields the same high-SNR slope as that for Rayleigh fading. However, providing a theoretical comparison between ${\mathcal{L}}_{\infty}$ and ${\mathfrak{L}}_{\infty}$ is challenging. Thus, we will use numerical results to compare these two metrics.
\end{remark}

$\bullet$ \emph{\textbf{Numerical Results:}} To illustrate the above results, in {\figurename} {\ref{Performance Analysis Figure: ECC_Discrete_Rician}}, we show the ECC for USW LoS channels versus the transmit power, $p$. As {\figurename} {\ref{Performance Analysis Figure: ECC_Discrete_Rician}} shows, the analytical results are in excellent agreement with the simulated results, and the derived asymptotic results approach the analytical results in the high-SNR regime. The simulation parameters used to generate {\figurename} {\ref{Performance Analysis Figure: ECC_Discrete_Rician}} are the same as those used to generate {\figurename} {\ref{Performance Analysis Figure: ECC_Discrete}}. By comparing the results in these two figures, we find that to achieve the same ECC, the Rician fading channel requires much less power than the Rayleigh fading channel. Since the ECCs achieved for these two types of fading have the same high-SNR slope, we conclude that the ECC for Rician fading yields a smaller high-SNR power offset than that for Rayleigh fading, i.e., ${\mathfrak{L}}_{\infty}<{\mathcal{L}}_{\infty}$. This performance gain mainly originates from the strong LoS component for Rician fading.

\subsubsection{Analysis of the EMI for Rician Channels}
The EMI can be written as follows \cite{Ouyang2020_CL}:
\begin{equation}\label{Section_Performance_Analysis_EMI_Definition_MISO_Rician}
\begin{split}
\bar{\mathcal{I}}_{\mathcal{X}}^{\rm{rician}}&={\mathbb{E}}\{I_{\mathcal X}(p/\sigma^2\lVert{\mathbf{h}}\rVert^2)\}\\
&=\int_{{\tilde{a}}_0}^{\infty}I_{\mathcal X}(p/\sigma^2 x)f_{\lVert{\mathbf{h}}\rVert^2}(x){\rm{d}}x\\
&=\int_{0}^{\infty}I_{\mathcal X}(p/\sigma^2 ({\tilde{a}}_0+x))f_{{\tilde{a}}}(x){\rm{d}}x.
\end{split}
\end{equation}
To characterize the EMI, we follow three main steps which are detailed in the sequel.

$\bullet$ \emph{\textbf{Step 1 - Analyzing the Statistics of the Channel Gain:}}
Similar to the analyses of the OP and the ECC, we discuss the statistics of $\lVert{\mathbf{h}}\rVert^2$ in the first step. The results are provided in \eqref{Section_Performance_Analysis_OP_Rician_SNR_CDF_Master_Expression} and \eqref{Section_Performance_Analysis_OP_Rician_SNR_PDF_Master_Expression}.

\begin{figure}[!t]
    \setlength{\abovecaptionskip}{0pt}
    \centering
    \includegraphics[width=0.4\textwidth]{./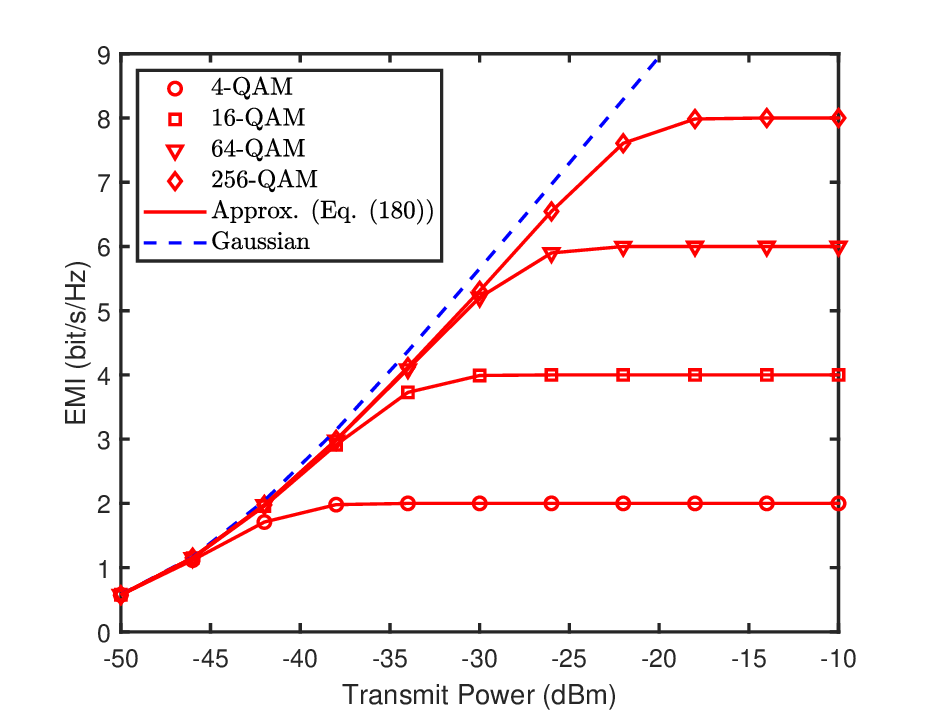}
    \caption{Ergodic mutual information for correlated MISO Rician channels. The system is operating at $28$ GHz. $N_x=N_z=33$, $d=\frac{\lambda}{2}$, $(\theta,\phi)=(\frac{\pi}{2},\frac{\pi}{2})$, $r=4$ m, $A=\frac{\lambda^2}{4\pi}$, $e_a=1$, ${\bm\rho}=\hat{\mathbf J}({\mathbf{s}}_{m,n})=[1,0,0]^{\mathsf{T}}$, $\forall m,n$, $\sigma^2=-90$ dBm, $L=4$, and $\sigma_{\ell}^2=1$, $\forall \ell$. The coordinates (in meters) of the scatterers are given by $
    \mathbf{s}_{1}^{\star}=[\frac{7}{10}\sin{\frac{\pi}{4}},r,\frac{7}{10}\cos{\frac{\pi}{4}}]^{\mathsf{T}}$, $\mathbf{s}_{2}^{\star}=[\frac{4}{5}\sin{\frac{\pi}{8}},r,\frac{4}{5}\cos{\frac{\pi}{8}}]^{\mathsf{T}}$, $
    \mathbf{s}_{3}^{\star}=[\frac{13}{20}\sin{\frac{47\pi}{64}},r,\frac{13}{20}\cos{\frac{47\pi}{64}}]^{\mathsf{T}}$, and
    $\mathbf{s}_{4}^{\star}=[\frac{2}{5}\sin{\frac{\pi}{7}},r,\frac{2}{5}\cos{\frac{\pi}{7}}]^{\mathsf{T}}$. The LoS channels $\overline{\mathbf{h}}$, $\{h_{\ell}({\mathbf{r}}_{\ell},\mathbf{r})\}_{\ell=1}^{L}$, and $\{{\mathbf{h}}_{\ell}(\mathbf{r}_{\ell})\}_{\ell=1}^{L}$ follow the USW channel model.}
    \label{Performance Analysis Figure: AMI_Discrete Explicit_Rician}
\end{figure}

\begin{figure}[!t]
    \setlength{\abovecaptionskip}{0pt}
    \centering
    \includegraphics[width=0.4\textwidth]{./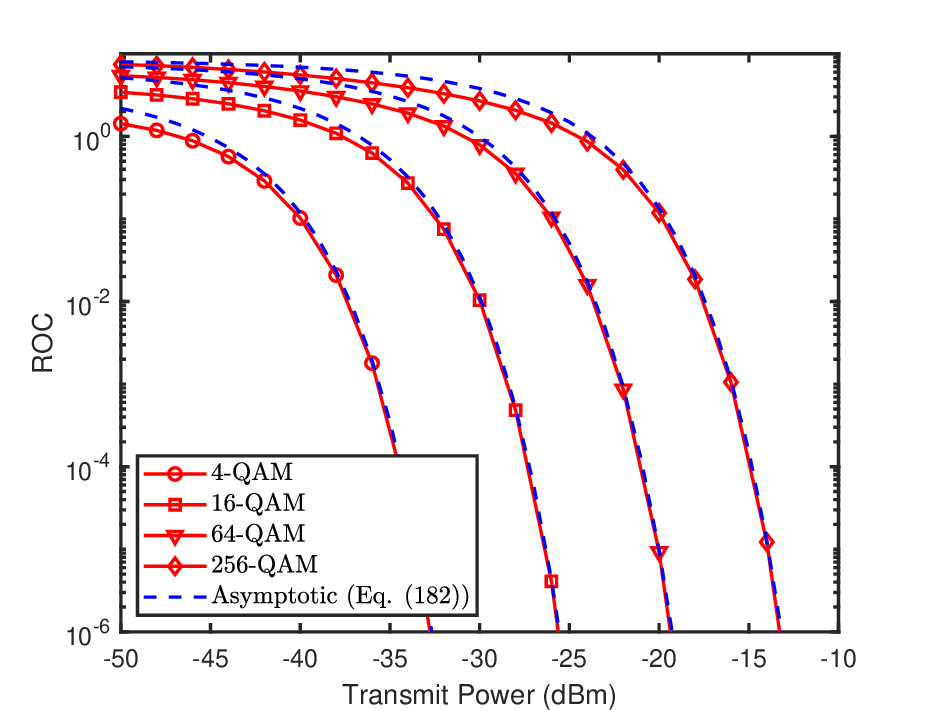}
    \caption{Rate of convergence of the EMI for correlated MISO Rician channels. The system is operating at $28$ GHz. $N_x=N_z=33$, $d=\frac{\lambda}{2}$, $(\theta,\phi)=(\frac{\pi}{2},\frac{\pi}{2})$, $r=4$ m, $A=\frac{\lambda^2}{4\pi}$, $e_a=1$, ${\bm\rho}=\hat{\mathbf J}({\mathbf{s}}_{m,n})=[1,0,0]^{\mathsf{T}}$, $\forall m,n$, $\sigma^2=-90$ dBm, $L=4$, and $\sigma_{\ell}^2=1$, $\forall \ell$. The coordinates (in meters) of the scatterers are given by $
    \mathbf{s}_{1}^{\star}=[\frac{7}{10}\sin{\frac{\pi}{4}},r,\frac{7}{10}\cos{\frac{\pi}{4}}]^{\mathsf{T}}$, $\mathbf{s}_{2}^{\star}=[\frac{4}{5}\sin{\frac{\pi}{8}},r,\frac{4}{5}\cos{\frac{\pi}{8}}]^{\mathsf{T}}$, $
    \mathbf{s}_{3}^{\star}=[\frac{13}{20}\sin{\frac{47\pi}{64}},r,\frac{13}{20}\cos{\frac{47\pi}{64}}]^{\mathsf{T}}$, and
    $\mathbf{s}_{4}^{\star}=[\frac{2}{5}\sin{\frac{\pi}{7}},r,\frac{2}{5}\cos{\frac{\pi}{7}}]^{\mathsf{T}}$. The LoS channels $\overline{\mathbf{h}}$, $\{h_{\ell}({\mathbf{r}}_{\ell},\mathbf{r})\}_{\ell=1}^{L}$, and $\{{\mathbf{h}}_{\ell}(\mathbf{r}_{\ell})\}_{\ell=1}^{L}$ follow the USW channel model.}
    \label{Performance Analysis Figure: AMI_Discrete Asymptotic_Rician}
\end{figure}

$\bullet$ \emph{\textbf{Step 2 - Deriving a Closed-Form Expressions for the EMI:}}
In the second step, we leverage the PDF of $\lVert{\mathbf{h}}\rVert^2$ to derive a closed-form expression for the EMI. The main results are summarized as follows.

\begin{theorem}\label{Section_Performance_Analysis_EMI_Approximation_Explicit_MISO_Rician_Theorem}
The EMI achieved by finite-alphabet inputs can be approximated as follows
\begin{equation}\label{Section_Performance_Analysis_EMI_Approximation_Explicit_MISO_Rician}
\bar{\mathcal{I}}_{\mathcal{X}}^{\rm{rician}}\approx H_{{\mathbf{p}}_{\mathcal{X}}}-\sum_{k=0}^{\infty}c_k\sum_{t=0}^{k}\frac{(-k)_t}{t!}\sum_{j=1}^{k_{\mathcal{X}}}
\frac{H_{{\mathbf{p}}_{\mathcal{X}}}\zeta^{({\mathcal{X}})}_je^{-\vartheta^{({\mathcal{X}})}_j{\tilde{a}}_0p/\sigma^2}}{(1+2\varpi\vartheta^{({\mathcal{X}})}_jp/\sigma^2)^{U+t}}.
\end{equation}
\end{theorem}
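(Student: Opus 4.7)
The plan is to start from the definition of the EMI in \eqref{Section_Performance_Analysis_EMI_Definition_MISO_Rician} and replace the MI $I_{\mathcal{X}}(\cdot)$ by its M-EDCF approximation in \eqref{Section_Performance_Analysis_MI_AWGN_Approximation}. Using $\sum_{j=1}^{k_{\mathcal{X}}}\zeta^{(\mathcal{X})}_j=1$ together with the fact that $f_{\tilde{a}}$ integrates to unity, the EMI decomposes as
\begin{equation*}
\bar{\mathcal{I}}_{\mathcal{X}}^{\rm{rician}}\approx H_{{\mathbf{p}}_{\mathcal{X}}}-H_{{\mathbf{p}}_{\mathcal{X}}}\sum_{j=1}^{k_{\mathcal{X}}}\zeta^{(\mathcal{X})}_j e^{-\vartheta^{(\mathcal{X})}_j \tilde{a}_0 p/\sigma^2}\int_0^{\infty} e^{-\vartheta^{(\mathcal{X})}_j p/\sigma^2\, x}\, f_{\tilde{a}}(x)\,{\rm{d}}x,
\end{equation*}
so the entire task reduces to evaluating the Laplace transform of $f_{\tilde{a}}$ at $s_j\triangleq\vartheta^{(\mathcal{X})}_jp/\sigma^2$.

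Next, I would insert the explicit series representation of $f_{\tilde{a}}$ from \textbf{Lemma} \ref{Section_Performance_Analysis_OP_Rician_SNR_PDF_Help_Lemma}, and then expand the generalized Laguerre polynomial via the polynomial identity $L_k^{(r_{\mathbf{R}}-1)}(z)=\frac{\Gamma(k+r_{\mathbf{R}})}{k!}\sum_{t=0}^{k}\frac{(-k)_t z^t}{t!\,\Gamma(r_{\mathbf{R}}+t)}$ cited in \textbf{Lemma} \ref{Section_Performance_Analysis_OP_Rician_SNR_PDF_Help_Lemma}. After interchanging the (finite and convergent) summations with the integral, the resulting kernel becomes a product of a monomial and an exponential in $x$, so each term reduces to a standard Gamma integral
\begin{equation*}
\int_0^{\infty} x^{r_{\mathbf{R}}+t-1}\, e^{-x\left(s_j+\frac{1}{2\varpi}\right)}\,{\rm{d}}x=\frac{\Gamma(r_{\mathbf{R}}+t)(2\varpi)^{r_{\mathbf{R}}+t}}{(1+2\varpi s_j)^{r_{\mathbf{R}}+t}}.
\end{equation*}
Collecting constants and using $(r_{\mathbf{R}})_k=\Gamma(r_{\mathbf{R}}+k)/\Gamma(r_{\mathbf{R}})$ causes the factors $\Gamma(r_{\mathbf{R}})$, $(2\varpi)^{r_{\mathbf{R}}}$ and $\Gamma(r_{\mathbf{R}}+t)$ to cancel in a clean way, leaving precisely the coefficient pattern $c_k\,(-k)_t/t!$ multiplied by an inverse power $(1+2\varpi\vartheta^{(\mathcal{X})}_j p/\sigma^2)^{-(r_{\mathbf{R}}+t)}$ and the exponential prefactor $e^{-\vartheta^{(\mathcal{X})}_j\tilde{a}_0 p/\sigma^2}$, which matches \eqref{Section_Performance_Analysis_EMI_Approximation_Explicit_MISO_Rician} after reordering the summations.

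The main obstacle I anticipate is the justification of the term-by-term integration, since the PDF series in \eqref{Section_Performance_Analysis_OP_Rician_SNR_PDF_Help_Expression} converges only conditionally and its coefficients $c_k$ are defined recursively. However, the choice $\xi_0=r_{\mathbf{R}}/3$ and $\varpi=\frac{1}{r_{\mathbf{R}}}\sum_{i}a_i$ specified below \eqref{Section_Performance_Analysis_OP_Rician_SNR_PDF_Help_Coefficient} guarantees uniform convergence on every bounded interval, and the exponential damping factor $e^{-\vartheta^{(\mathcal{X})}_j p/\sigma^2 x}$ in the integrand supplies the tail decay needed to apply dominated convergence, so the interchange is legitimate. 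A secondary nuisance is bookkeeping of the Pochhammer and Gamma factors arising when the Laguerre expansion is plugged into the Lemma's series; this is purely algebraic and can be handled by tracking powers of $(2\varpi)$ and $(r_{\mathbf{R}}/\xi_0)$ carefully. No substantively new analytic ingredient beyond \textbf{Lemma} \ref{Section_Performance_Analysis_OP_Rician_SNR_PDF_Help_Lemma} and the fitting formula \eqref{Section_Performance_Analysis_MI_AWGN_Approximation} is needed.
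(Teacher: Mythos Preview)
Your proposal is correct and follows essentially the same route as the paper's proof, which simply says to substitute the M-EDCF approximation \eqref{Section_Performance_Analysis_MI_AWGN_Approximation} and the PDF of $\tilde{a}$ into \eqref{Section_Performance_Analysis_EMI_Definition_MISO_Rician} and evaluate the resulting integral via \cite[Eq.~(3.326.2)]{Ryzhik2007}; you have merely spelled out the Laguerre expansion and the Gamma-integral bookkeeping in more detail. Note that your careful tracking of the $(r_{\mathbf{R}}/\xi_0)^t$ factor actually reproduces the coefficient pattern seen in the analogous ECC result \eqref{Section_Performance_Analysis_ECC_Calculation_MISO_Rician}, so the absence of that factor (and the symbol $U$ in place of $r_{\mathbf{R}}$) in the stated formula \eqref{Section_Performance_Analysis_EMI_Approximation_Explicit_MISO_Rician} appears to be a typographical slip in the paper rather than a flaw in your derivation.
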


\begin{proof}
This theorem can be proved by substituting \eqref{Section_Performance_Analysis_MI_AWGN_Approximation} and \eqref{Section_Performance_Analysis_OP_Rician_SNR_CDF_Master_Expression} into
\eqref{Section_Performance_Analysis_EMI_Definition_MISO_Rician} and calculating the resulting integral with the aid of \cite[Eq. (3.326.2)]{Ryzhik2007}.
\end{proof}

$\bullet$ \emph{\textbf{Step 3 - Deriving a High-SNR Approximation for the EMI:}}
In the last step, we investigate the asymptotic behaviour of the EMI in the high-SNR regime, i.e., $p\rightarrow\infty$. The main results are summarized as follows.

\begin{corollary}\label{Section_Performance_Analysis_EMI_Asymptotic_Expression_Standard_Rician_Theorem}
The asymptotic EMI in the high-SNR regime can be expressed in the following form:
\begin{align}\label{Section_Performance_Analysis_EMI_Asymptotic_Expression_Standard_Rician}
\lim_{p\rightarrow\infty}\bar{\mathcal{I}}_{\mathcal{X}}^{\rm{rician}}\simeq H_{{\mathbf{p}}_{\mathcal{X}}}-{\mathcal{O}}(p^{-r_{\mathbf{R}}-\frac{1}{2}}e^{-\frac{d_{\mathcal X,\min}^2}{8}\frac{p}{\sigma^2}{\tilde{a}}_0}),
\end{align}
where $d_{\mathcal X,\min}\triangleq\min_{q\neq q'}\left|{\mathsf{x}_q}-{\mathsf{x}_{q'}}\right|$. For an equiprobable square $M$-QAM constellation, the asymptotic EMI in the high-SNR regime can be expressed as follows:
\begin{align}\label{Section_Performance_Analysis_EMI_Asymptotic_Expression_Standard_Rician_MQAM}
\lim_{p\rightarrow\infty}\bar{\mathcal{I}}_{\mathcal{X}}^{\rm{rician}}\simeq \log_2{M}-{\mathcal{O}}(p^{-r_{\mathbf{R}}-\frac{1}{2}}e^{-\frac{d_{{\mathcal{X}},{\min}}^2}{8}\frac{p}{\sigma^2}{\tilde{a}}_0}{\mathfrak{A}}_{\rm{a}}),
\end{align}
where
\begin{align}\label{Section_Performance_Analysis_EMI_Asymptotic_Expression_Standard_Rician_MQAM_Gain}
{\mathfrak{A}}_{\rm{a}}=\frac{(2\sqrt{M}-1)\sqrt{\pi}d_{{\mathcal{X}},\min}/\ln(2)}
{2\sqrt{2}\sqrt{{\tilde{a}}_0}\sqrt{M}(d_{{\mathcal{X}},\min}^2/8)^{r_{\mathbf{R}}+1}}\prod_{i=1}^{r_{\mathbf{R}}}\frac{1}{\lambda_i}
{{e}^{-{\lvert\overline{h}_i\rvert^2}/{\lambda_i}}}.
\end{align}
\end{corollary}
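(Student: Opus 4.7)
\noindent\textbf{Proof proposal for Corollary \ref{Section_Performance_Analysis_EMI_Asymptotic_Expression_Standard_Rician_Theorem}.}
The plan is to exploit the I–MMSE identity $\frac{d}{d\gamma}I_{\mathcal{X}}(\gamma)=\frac{1}{\ln 2}\mathrm{MMSE}_{\mathcal{X}}(\gamma)$, which yields
\begin{equation}
H_{{\mathbf{p}}_{\mathcal{X}}}-I_{\mathcal{X}}(\gamma)=\frac{1}{\ln 2}\int_{\gamma}^{\infty}\mathrm{MMSE}_{\mathcal{X}}(t)\,\mathrm{d}t,
\end{equation}
and then take the expectation with respect to $\lVert\mathbf h\rVert^2=\tilde a_0+\tilde a$, where the decomposition of Lemma \ref{Section_Performance_Analysis_Rician_OP_SNR_Trans_Lemma} is crucial because it isolates the deterministic offset $\tilde a_0$ (which will generate the $e^{-\frac{d_{\mathcal{X},\min}^2}{8}\frac{p}{\sigma^2}\tilde a_0}$ factor) from the random part $\tilde a$ (which will generate the algebraic $p^{-r_{\mathbf{R}}-\frac12}$ decay). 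Writing
\begin{equation}
H_{{\mathbf{p}}_{\mathcal{X}}}-\bar{\mathcal{I}}_{\mathcal{X}}^{\rm{rician}}
=\frac{1}{\ln 2}\,\mathbb{E}\!\left\{\int_{\frac{p}{\sigma^2}(\tilde a_0+\tilde a)}^{\infty}\mathrm{MMSE}_{\mathcal{X}}(t)\,\mathrm{d}t\right\}
\end{equation}
reduces the problem to an asymptotic analysis of a Laplace-type integral.

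Next, I would invoke the known high-SNR asymptotics of the MMSE for finite-alphabet inputs, namely $\mathrm{MMSE}_{\mathcal{X}}(\gamma)=\Theta\!\bigl(\gamma^{-1/2}e^{-\frac{d_{\mathcal{X},\min}^2}{8}\gamma}\bigr)$ as $\gamma\to\infty$ (for equiprobable square $M$-QAM the leading constant is $\frac{2(2\sqrt M-1)d_{\mathcal{X},\min}}{\sqrt{M}\sqrt{8\pi}}$, a fact that can be read off from the nearest-neighbor decomposition of the conditional mean estimator). Integrating this tail from $\gamma=\frac{p}{\sigma^2}(\tilde a_0+\tilde a)$ to $\infty$ by a single Laplace-type step (or equivalently Watson's lemma) gives the conditional estimate
\begin{equation}
H_{{\mathbf{p}}_{\mathcal{X}}}-I_{\mathcal{X}}\!\left(\tfrac{p}{\sigma^2}(\tilde a_0+\tilde a)\right)
\sim \frac{C_{\mathcal{X}}}{\ln 2}\left(\tfrac{p}{\sigma^2}(\tilde a_0+\tilde a)\right)^{-1/2}e^{-\frac{d_{\mathcal{X},\min}^2}{8}\frac{p}{\sigma^2}(\tilde a_0+\tilde a)},
\end{equation}
with $C_{\mathcal{X}}=\frac{8(2\sqrt M-1)}{d_{\mathcal{X},\min}\sqrt{M}\sqrt{2\pi}}$ in the square-QAM case.

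Taking expectation over $\tilde a$ then reduces the problem to evaluating $\mathbb{E}\!\bigl\{(\tilde a_0+\tilde a)^{-1/2}e^{-\mu \tilde a}\bigr\}$ with $\mu=\frac{d_{\mathcal{X},\min}^2}{8}\frac{p}{\sigma^2}\to\infty$. Since $\tilde a_0>0$ the prefactor $(\tilde a_0+\tilde a)^{-1/2}$ is bounded and can be replaced by $\tilde a_0^{-1/2}$ to leading order; what remains is a Laplace transform of $f_{\tilde a}$ at a large argument, which is governed by the behavior of $f_{\tilde a}(x)$ near $x=0$. From Lemma \ref{Section_Performance_Analysis_OP_Rician_SNR_PDF_Help_Lemma}, this local behavior is
\begin{equation}
f_{\tilde a}(x)\underset{x\to 0^+}{\sim}\frac{x^{r_{\mathbf{R}}-1}}{\Gamma(r_{\mathbf{R}})}\prod_{i=1}^{r_{\mathbf{R}}}\frac{1}{\lambda_i}e^{-|\overline h_i|^2/\lambda_i},
\end{equation}
obtained by taking $x\to 0$ in the Laguerre series (or equivalently by evaluating the non-central $\chi^2$ density at the origin and multiplying by the Jacobians). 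Applying Watson's lemma then gives $\mathbb{E}\{e^{-\mu\tilde a}\}\sim \mu^{-r_{\mathbf{R}}}\prod_{i}\lambda_i^{-1}e^{-|\overline h_i|^2/\lambda_i}$, and combining this with the $(p/\sigma^2)^{-1/2}\tilde a_0^{-1/2}$ prefactor and the $e^{-\frac{d_{\mathcal{X},\min}^2}{8}\frac{p}{\sigma^2}\tilde a_0}$ factor yields exactly \eqref{Section_Performance_Analysis_EMI_Asymptotic_Expression_Standard_Rician_MQAM}–\eqref{Section_Performance_Analysis_EMI_Asymptotic_Expression_Standard_Rician_MQAM_Gain}, while the generic bound \eqref{Section_Performance_Analysis_EMI_Asymptotic_Expression_Standard_Rician} follows from only the exponential order of $\mathrm{MMSE}_{\mathcal{X}}$.

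The main obstacle I anticipate is the bookkeeping of the constants for the $M$-QAM statement: one must (i) pin down the exact leading coefficient of $\mathrm{MMSE}_{\mathcal{X}}$ at high SNR (not just its order) from the pairwise error structure of square QAM, (ii) justify that only the pairs realizing $d_{\mathcal{X},\min}$ contribute at leading order, and (iii) justify interchanging expectation, the $\int_\gamma^\infty$, and the asymptotic expansion, which requires uniform integrability bounds on $f_{\tilde a}$. The polynomial $p^{-r_{\mathbf{R}}}$ is of the same flavor as the diversity-order result in Corollary \ref{Section_Performance_Analysis_OP_Asymptotic_Expression_Standard_Rician_Theorem}, but the additional $p^{-1/2}$ factor is specific to the $\gamma^{-1/2}$ prefactor of $\mathrm{MMSE}_{\mathcal{X}}$ for QAM and would need to be tracked carefully; if one only wants the generic statement \eqref{Section_Performance_Analysis_EMI_Asymptotic_Expression_Standard_Rician}, this delicate step is bypassed and only the exponential order matters.
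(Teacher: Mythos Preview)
Your proposal is correct and follows essentially the same route as the paper. The paper writes $H_{{\mathbf{p}}_{\mathcal X}}-\bar{\mathcal I}_{\mathcal X}^{\rm rician}=\frac{1}{\ln 2}\int_0^\infty F_{\tilde a}\!\left(\tfrac{t\sigma^2}{p}\right)\mathrm{MMSE}_{\mathcal X}\!\left(t+\tfrac{p}{\sigma^2}\tilde a_0\right)\mathrm{d}t$ and then substitutes the small-$x$ CDF asymptotics $F_{\tilde a}(x)\sim \tfrac{x^{r_{\mathbf R}}}{r_{\mathbf R}!}\prod_i\lambda_i^{-1}e^{-|\overline h_i|^2/\lambda_i}$ together with the MMSE tail $\mathrm{MMSE}_{\mathcal X}(\gamma)\sim C_{\mathcal X}\,\gamma^{-1/2}e^{-d_{\mathcal X,\min}^2\gamma/8}$; your version (first integrate the MMSE tail conditionally, then average via Watson's lemma on $f_{\tilde a}$) is the Fubini-equivalent reordering of the same computation, using the same two asymptotic ingredients and arriving at the same constants.
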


\begin{proof}
Please refer to Appendix \ref{Section_Performance_Analysis_EMI_Asymptotic_Expression_Standard_Rician_Theorem_Proof}.
\end{proof}

\begin{remark}\label{Section_Performance_Analysis_EMI_Asymptotic_Expression_Standard_Rician_Remark}
The results in \eqref{Section_Performance_Analysis_EMI_Asymptotic_Expression_Standard_Rician} suggest that the EMI achieved by finite-alphabet inputs converges to $H_{{\mathbf{p}}_{\mathcal{X}}}$ in the limit of large $p$, and its ROC is proportional to the rate of $p^{-r_{\mathbf{R}}-\frac{1}{2}}e^{-\frac{d_{\mathcal X,\min}^2}{8}\frac{p}{\sigma^2}{\tilde{a}}_0}$ converging to $0$. This means that $\lim_{p\rightarrow\infty}\bar{\mathcal{I}}_{\mathcal{X}}^{\rm{rician}}\simeq H_{{\mathbf{p}}_{\mathcal{X}}}-{\mathcal{O}}(p^{-\infty})$. By comparing \eqref{Section_Performance_Analysis_EMI_Asymptotic_Expression_Standard_Rician} with \eqref{Section_Performance_Analysis_EMI_Asymptotic_Expression_Standard}, we find that $\bar{\mathcal{I}}_{\mathcal{X}}^{\rm{rician}}$ yields a much faster ROC than $\bar{\mathcal{I}}_{\mathcal{X}}^{\rm{rayleigh}}$. This fact means that to achieve the same EMI, in Rician fading much less power resources are required than in Rayleigh fading.
\end{remark}

\begin{table*}[!t]

\caption{Summary of the Analytical Results for Statistical Multipath Channels.}
\label{tab:Section_Performance_Analysis_Fading_Table}
\centering
\resizebox{0.85\textwidth}{!}{
\begin{tabular}{!{\vrule width1pt}c!{\vrule width1pt}c!{\vrule width1pt}c!{\vrule width1pt}c!{\vrule width1pt}c!{\vrule width1pt}c!{\vrule width1pt}}
\Xhline{1pt} 
Fading                    & Metric               & Definition        & Calculation       & High-SNR Approximation & Characteristics              \\ \Xhline{1pt} 
\multirow{6}{*}{Rayleigh} & \multirow{2}{*}{OP}  & \multirow{2}{*}{\eqref{Section_Performance_Analysis_OP_Definition_MISO}} & \multirow{2}{*}{\eqref{Section_Performance_Analysis_OP_Explicit_Expression}} & \multirow{2}{*}{\eqref{Section_Performance_Analysis_OP_Asymptotic_Expression_Standard}: $\lim_{p\rightarrow\infty}{\mathcal{P}}_{\rm{rayleigh}}\simeq({\mathcal{G}}_{\rm{a}}p)^{-{\mathcal{G}}_{\rm{d}}}$}      & Array Gain (${\mathcal{G}}_{\rm{a}}$)            \\
                          &                      &                   &                   &                        & Diversity Order (${\mathcal{G}}_{\rm{d}}$)       \\ \cline{2-6}
                          & \multirow{2}{*}{ECC} & \multirow{2}{*}{\eqref{Section_Performance_Analysis_ECC_Definition_MISO}} & \multirow{2}{*}{\eqref{Section_Performance_Analysis_ECC_Calculation_MISO}} & \multirow{2}{*}{\eqref{Section_Performance_Analysis_ECC_Asymptotic_Expression_Standard}: $\lim_{p\rightarrow\infty}{\bar{\mathcal{C}}}_{\rm{rayleigh}}\simeq {\mathcal{S}}_{\infty}\left(\log_2(p)-{\mathcal{L}}_{\infty}\right)$}      & High-SNR Slope (${\mathcal{S}}_{\infty}$)        \\
                          &                      &                   &                   &                        & High-SNR Power Offset (${\mathcal{L}}_{\infty}$) \\ \cline{2-6}
                          & \multirow{2}{*}{EMI} & \multirow{2}{*}{\eqref{Section_Performance_Analysis_EMI_Definition_MISO}} & \multirow{2}{*}{\eqref{Section_Performance_Analysis_EMI_Approximation_Explicit_MISO}} & \multirow{2}{*}{\eqref{Section_Performance_Analysis_EMI_Asymptotic_Expression_Standard}: $\lim_{p\rightarrow\infty}{\bar{\mathcal{I}}_{\mathcal{X}}^{\rm{rayleigh}}}\simeq H_{{\mathbf{p}}_{\mathcal{X}}}-({\mathcal{A}}_{\rm{a}}p)^{-{\mathcal{A}}_{\rm{d}}}$}      & Array Gain (${\mathcal{A}}_{\rm{a}}$)           \\
                          &                      &                   &                   &                        & Diversity Order (${\mathcal{A}}_{\rm{d}}$)     \\ \Xhline{1pt} 
\multirow{6}{*}{Rician}   & \multirow{2}{*}{OP}  & \multirow{2}{*}{\eqref{Section_Performance_Analysis_OP_Calculation_MISO_Rician}} & \multirow{2}{*}{\eqref{Section_Performance_Analysis_OP_Rician_Explicit_Expression}} & \multirow{2}{*}{\eqref{Section_Performance_Analysis_OP_Asymptotic_Expression_Standard_Rician}: $\lim_{p\rightarrow p_0}{\mathcal{P}}_{\rm{rician}}\simeq
\left({\mathfrak{G}}_{\rm{a}}^{-1}(p^{-1}-p_0^{-1})\right)^{{\mathfrak{G}}_{\rm{d}}}$}      & Array Gain (${\mathfrak{G}}_{\rm{a}}$)            \\
                          &                      &                   &                   &                        & Diversity Order (${\mathfrak{G}}_{\rm{d}}$)      \\ \cline{2-6}
                          & \multirow{2}{*}{ECC} & \multirow{2}{*}{\eqref{Section_Performance_Analysis_ECC_Definition_MISO_Rician}} & \multirow{2}{*}{\eqref{Section_Performance_Analysis_ECC_Calculation_MISO_Rician}} & \multirow{2}{*}{\eqref{Section_Performance_Analysis_ECC_Asymptotic_Expression_Standard_Rician}: $\lim_{p\rightarrow\infty}{\bar{\mathcal{C}}}_{\rm{rician}}\simeq {\mathfrak{S}}_{\infty}\left(\log_2(p)-{\mathfrak{L}}_{\infty}\right)$}      & High-SNR Slope (${\mathfrak{S}}_{\infty}$)        \\
                          &                      &                   &                   &                        & High-SNR Power Offset (${\mathfrak{L}}_{\infty}$)\\ \cline{2-6}
                          & \multirow{2}{*}{EMI} & \multirow{2}{*}{\eqref{Section_Performance_Analysis_EMI_Definition_MISO_Rician}} & \multirow{2}{*}{\eqref{Section_Performance_Analysis_EMI_Approximation_Explicit_MISO_Rician}} & \multirow{2}{*}{\eqref{Section_Performance_Analysis_EMI_Asymptotic_Expression_Standard_Rician}: $\lim_{p\rightarrow\infty}\bar{\mathcal{I}}_{\mathcal{X}}^{\rm{rician}}\simeq H_{{\mathbf{p}}_{\mathcal{X}}}-{\mathcal{O}}(p^{-\infty})$}      & ---                     \\
                          &                      &                   &                   &                        & ---                    \\ \Xhline{1pt} 
\end{tabular}}
\end{table*}

$\bullet$ \emph{\textbf{Numerical Results:}} To further illustrate the obtained results, in {\figurename} {\ref{Performance Analysis Figure: AMI_Discrete Explicit_Rician}}, we plot the EMI for USW LoS channels achieved by equiprobable square $M$-QAM constellations versus the transmit power, $p$. The simulation results are denoted by markers. As can be observed in {\figurename} {\ref{Performance Analysis Figure: AMI_Discrete Explicit_Rician}}, the approximated results are in excellent agreement with the simulated results. The simulation parameters used to generate {\figurename} {\ref{Performance Analysis Figure: AMI_Discrete Explicit}} are the same as those used to generate {\figurename} {\ref{Performance Analysis Figure: AMI_Discrete Explicit}}. By comparing the results in both figures, we find that to achieve the same EMI, in Rician fading much less power is required than in Rayleigh fading, which supports the discussion in \textbf{Remark} \ref{Section_Performance_Analysis_EMI_Asymptotic_Expression_Standard_Rician_Remark}. To illustrate the ROC of the EMI, we plot $H_{{\mathbf{p}}_{\mathcal{X}}}-{\bar{\mathcal{I}}_{\mathcal{X}}^{\rm{rician}}}$ versus $p$ in {\figurename} {\ref{Performance Analysis Figure: AMI_Discrete Asymptotic_Rician}}. As can be observed, in the high-SNR regime, the derived asymptotic results approach the numerical results. Besides, it can be observed that lower modulation orders yield faster ROCs\footnote{By comparing {\figurename} \ref{Performance Analysis Figure: AMI_Discrete Explicit} and {\figurename} {\ref{Performance Analysis Figure: AMI_Discrete Explicit_Rician}}, we find that the EMI curves for Rician fading are similar to those for Rayleigh fading. Hence, we omit corresponding numerical results for the EE-SE tradeoff and a PSK-QAM comparison for brevity.}.

\subsubsection{Summary of the Analytical Results}
For convenience, we summarize the analytical results for the OP, ECC, and EMI in Table \ref{tab:Section_Performance_Analysis_Fading_Table}. Despite being developed for MISO channels, the expressions given in Table \ref{tab:Section_Performance_Analysis_Fading_Table} also apply to other types of channels subject to single-stream transmission, such as single-input multiple-output (SIMO) channels, single-stream MIMO channels, and multicast channels. The only difference lies in the statistics of the channel gain. For example, the received SNR for single-stream MIMO channels can be written as $\gamma_{\rm{St-MIMO}}={\bar\gamma}a_{\rm{St-MIMO}}$ with the channel gain given by \cite{Ouyang2023_CL}
\begin{align}
a_{\rm{St-MIMO}}=\lvert{\mathbf{v}}^{\mathsf{H}}{\mathbf{H}}{\mathbf{w}}\rvert^2,
\end{align}
where ${\mathbf{w}}$ and ${\mathbf{v}}$ are the beamforming vectors utilized at transmitter and receiver, respectively. As another example, the received SNR for a $K$-user MISO multicast channel can be written as $\gamma_{\rm{Multicast}}=\bar\gamma a_{\rm{Multicast}}$ with the channel gain given by \cite{jindal2006capacity}
\begin{align}
a_{\rm{Multicast}}=\min\nolimits_{k\in\{1,\ldots,K\}}\lvert{\mathbf{h}}_k^{\mathsf{H}}{\mathbf{w}}\rvert^2,
\end{align}
where $\mathbf{w}$ is the transmit beamforming vector, and ${\mathbf{h}}_k$ is the channel vector of user $k$. After obtaining the PDF and CDF of $a_{\rm{St-MIMO}}$ or $a_{\rm{Multicast}}$, one can directly leverage the expressions in Table \ref{tab:Section_Performance_Analysis_Fading_Table} to analyze the OP, ECC, and EMI for the corresponding channels. We also note that despite requiring a more complicated analysis than MISO channels, the OP, ECC, and EMI achieved in MIMO channels still follow the standard high-SNR forms given in \eqref{Section_Performance_Analysis_OP_Asymptotic_Expression_Standard} (or \eqref{Section_Performance_Analysis_OP_Asymptotic_Expression_Standard_Rician}), \eqref{Section_Performance_Analysis_ECC_Asymptotic_Expression_Standard}, and \eqref{Section_Performance_Analysis_EMI_Asymptotic_Expression_Standard} (or \eqref{Section_Performance_Analysis_EMI_Asymptotic_Expression_Standard_Rician}), respectively.

\subsection{Discussion and Open Research Problems}
We have analyzed several fundamental performance evaluation metrics for NFC for both deterministic and statistical near-field channel models. It is hoped that our established analytical framework and derived results will provide in-depth insight into the design of NFC systems. However, there are still numerous open research problems in this area, some of which are summarized in the following.
\begin{itemize}
  \item \textbf{Information-Theoretic Limit Characterization}: Understanding the information-theoretic aspects of NFC is vital for practical implementation. NFC differs from conventional FFC with regard to the channel and signal models, and thus further efforts are required to explore the information-theoretic limits of NFC. For SPD antennas, most information-theoretic results developed for FFC also apply to NFC if the channel model is adjusted accordingly. However, this is different for CAP antenna-based NFC. In an NFC channel established by CAP antennas, determining the information-theoretic limits and design principles has to be based on continuous electromagnetic models, which gives rise to the interdisciplinary problem of integrating information theory and electromagnetic theory. Fundamental research on this topic deserves in-depth study.
  \item \textbf{System-Level Performance Analysis}: Although we have provided a comprehensive performance evaluation framework for NFC, our results are limited to the simple MISO case. More research is needed for the MIMO and multiuser scenarios. Leveraging the performance metrics adopted in this section to evaluate the performance gap between NFC and FFC in more complicated scenarios is a promising research direction that can unveil important system design insights. Furthermore, the fading performance for CAP antenna-based NFC has received limited attention due to the absence of an analytically tractable statistical model. Last but not least, stochastic geometry (SG) tools can capture the randomness of the locations of the users. Incorporating NFC’s physical properties into the SG tools may contribute to new spatial models and channel statistics, facilitating the derivation of computable expressions of further key performance metrics.
  \item \textbf{Network-Level Performance Analysis}: In practice, NFC will be deployed in multi-cell environments. As the density of wireless networks increases, inter-cell interference becomes a major obstacle to realizing the benefits of NFC. As such, analyzing NFC performance at the network level and unveiling system design insights with respect to interference management is crucial. Multi-cell settings yield more complicated wireless propagation environments. For example, the near field of one BS may overlap with another BS's far field or near field. Analyzing the NFC performance in such a complex communication scenario is challenging, and constitutes an important direction for future research.
\end{itemize}

\section{Conclusions}

% In this paper, a comprehensive tutorial on the emerging NFC technology has been provided with an emphasis on the following three fundamental aspects: channel modelling, beamforming and antenna architectures, and performance analysis. 1) For near-field channel modelling, various channel models were introduced for SPD antennas, including USW, NUSW, and a general model, each offering different levels of accuracy and complexity. Furthermore, a Green's function method-based channel model was introduced for CAP antennas.
% 2) For beamforming and antenna architectures, the unique beamfocusing property in NFC was highlighted and representative practical antenna structures architectures were discussed for achieving beamfocusing in both narrowband and wideband NFC. Beam training for narrowband and wideband NFC was also elaborated, which is vital for beamfocusing in practice. 3) For performance analysis, on the one hand, the received SNR and power scaling law under deterministic LoS channels in NFC were derived for SPD and CAP antennas. On the other hand, a general analytical framework was proposed for performance analysis of NFC in statistical multipath channels. With the aid of the proposed analytical framework, several important insights were revealed for guiding practical NFC system design. For each aspect, several open problems and research directions were exemplified to motivate future work. As the investigation of NFC is still in its infancy, our hope is that this tutorial can serve as a useful tool for researchers to explore the vast potential of the “NFC Golden Mine”.

This paper has presented a comprehensive tutorial on the emerging NFC technology, focusing on three fundamental aspects: near-field channel modelling, beamforming and antenna architectures, and performance analysis. 1) For near-field channel modelling, various models for SPD antennas were introduced, providing different levels of accuracy and complexity. Additionally, a Green's function method-based model was presented for CAP antennas. 2) For beamforming and antenna architectures, the unique beamfocusing property in NFC was highlighted and practical antenna structures for achieving beamfocusing in narrowband and wideband NFC were highlighted, along with practical beam training techniques. 3) For performance analysis, the received SNR and power scaling law under deterministic LoS channels for both SPD and CAP antennas were derived, and a general analytical framework was proposed for NFC performance analysis in statistical multipath channels, yielding valuable insights for practical system design. Throughout this tutorial, we have identified several open problems and research directions to inspire and guide future work in the nascent field of NFC. As NFC is still in its infancy, we hope that this tutorial will serve as a valuable tool for researchers, enabling them to explore the vast potential of the “NFC Golden Mine”.

\begin{appendices}
\section{Proof of Lemma 1} \label{proof_polarization}
The transmit-mode polarization vector at the transmit antenna essentially represents the normalized electric field. The electric field ${\mathbf E}({\mathbf{r}},{\mathbf{s}})\in{\mathbb C}^{3}$ generated in $\mathbf{r}$ from $\mathbf{s}$ is
\begin{align}
{\mathbf E}({\mathbf{r}},{\mathbf{s}})={\mathbf G}({\mathbf{r}}-{\mathbf{s}}){\mathbf J}({\mathbf{s}}),
\end{align}
where ${\mathbf J}({\mathbf{s}})=J_x({\mathbf{s}}){\hat{\mathbf u}}_{x}+J_y({\mathbf{s}}){\hat{\mathbf u}}_{y}+J_z({\mathbf{s}}){\hat{\mathbf u}}_{z}$ is the electric current vector (with ${\hat{\mathbf u}}_{x}$, ${\hat{\mathbf u}}_{y}$, and ${\hat{\mathbf u}}_{z}$ representing the unit vectors in the $x$, $y$, $z$ directions) and ${\mathbf G}({\mathbf{r}}-{\mathbf{s}})\in{\mathbb C}^{3\times3}$ is the Green function which 
is given as
\begin{align}
{\mathbf G}(\mathbf{x})=-\frac{j\omega\mu_0}{4\pi}\left[{\mathbf I}+\frac{1}{k_0}\nabla\nabla\right]\frac{e^{-jk_0\lVert{\mathbf x}\rVert}}{\lVert{\mathbf x}\rVert}.
\end{align}
where $\mu_0$ is the free space permeability, $\omega$ is the angular frequency of the signal, and $k_0$ is the wave number. Mathematically, the Green function can be further expanded as follows
\begin{equation}
\begin{split}
{\mathbf G}(\mathbf{x})&=-\frac{j\eta_0 e^{-jk_0\lVert{\mathbf x}\rVert}}{2\lambda\lVert{\mathbf x}\rVert}\left[\left({\mathbf{I}}-\hat{\mathbf x}\hat{\mathbf x}^{\mathsf H}\right)+\frac{j\lambda}{2\pi\lVert{\mathbf x}\rVert}\right.\\
&\times\left.\left({\mathbf{I}}-3\hat{\mathbf x}\hat{\mathbf x}^{\mathsf H}\right)-\frac{\lambda^2}{(2\pi\lVert{\mathbf x}\rVert)^2}\left({\mathbf{I}}-3\hat{\mathbf x}\hat{\mathbf x}^{\mathsf H}\right)\right],
\end{split}
\end{equation}
where $\hat{\mathbf x}=\frac{\mathbf{x}}{\lVert{\mathbf x}\rVert}$, $\eta_0=\sqrt{{\mu_0}/{\epsilon_0}}$, and $\epsilon_0$ is the free space permittivity. When ${\lVert{\mathbf x}\rVert}\gg\lambda$, the Green function 
is well approximated as 
\begin{equation}\label{Section_Performance_Analysis_Simplified_Green_Function}
{\mathbf G}(\mathbf{x})\simeq-\frac{j\eta_0 e^{-jk_0\lVert{\mathbf x}\rVert}}{2\lambda\lVert{\mathbf x}\rVert}\left({\mathbf{I}}-\hat{\mathbf x}\hat{\mathbf x}^{\mathsf H}\right).
\end{equation}
To guarantee ${\lVert{\mathbf x}\rVert}\gg\lambda$, the electric field should not be observed in the reactive near-field, which is a mild condition for practical systems. Henceforth, we assume ${\lVert{\mathbf x}\rVert}\gg\lambda$ and directly exploit \eqref{Section_Performance_Analysis_Simplified_Green_Function}. In this case, the normalized electric field generated in $\mathbf{r}$ from $\mathbf{s}$ is given by
\begin{equation}\label{Section_Performance_Analysis_Normalized_Electric Field}
{\bm\rho}_{\mathrm a}({\mathbf{r}},{\mathbf{s}})=\frac{{\mathbf E}({\mathbf{r}},{\mathbf{s}})}{\lVert{\mathbf E}({\mathbf{r}},{\mathbf{s}})\rVert}
=\frac{{\mathbf G}({\mathbf{r}}-{\mathbf{s}})\hat{\mathbf J}({\mathbf{s}})}{\lVert{\mathbf G}({\mathbf{r}}-{\mathbf{s}})\hat{\mathbf J}({\mathbf{s}})\rVert},
\end{equation}
where $\hat{\mathbf J}({\mathbf{s}})=\frac{{\mathbf J}({\mathbf{s}})}{\lVert\hat{\mathbf J}({\mathbf{s}})\rVert}$ is the normalized electric current vector given as follows
\begin{equation}\label{Section_Performance_Analysis_Normalized_Current_Vector}
\hat{\mathbf J}({\mathbf{s}})=\frac{[J_x({\mathbf{s}}),J_y({\mathbf{s}}),J_z({\mathbf{s}})]^{\mathsf{T}}}{\sqrt{\lvert J_x({\mathbf{s}})\rvert^2+\lvert J_y({\mathbf{s}})\rvert^2+\lvert J_z({\mathbf{s}})\rvert^2}}.
\end{equation}
Substituting \eqref{Section_Performance_Analysis_Simplified_Green_Function} into \eqref{Section_Performance_Analysis_Normalized_Electric Field} yields
\begin{equation}
\begin{split}
{\bm\rho}_{\mathrm a}({\mathbf{r}},{\mathbf{s}})=\frac{-{j e^{-jk_0\lVert{\mathbf r}-{\mathbf s}\rVert}}\left({\mathbf{I}}-\frac{({\mathbf r}-{\mathbf s})({\mathbf r}-{\mathbf s})^{\mathsf T}}{\lVert{\mathbf r}-{\mathbf s}\rVert^2}\right)\hat{\mathbf J}({\mathbf{s}})}{\left\lVert\left({\mathbf{I}}-\frac{({\mathbf r}-{\mathbf s})({\mathbf r}-{\mathbf s})^{\mathsf T}}{\lVert{\mathbf r}-{\mathbf s}\rVert^2}\right)\hat{\mathbf J}({\mathbf{s}})\right\rVert}.
\end{split}
\end{equation}
Let ${\bm\rho}_{\mathrm w}({\mathbf{r}})$ denote the normalized receive-mode polarization vector at receive point $\mathbf{r}$. Taken together, the  power gain due to the projected antenna aperture equals the polarization loss factor, which can be expressed as
\begin{align}
p_{\mathrm{polar}}({\mathbf{r}},{\mathbf{s}}) = \lvert{\bm\rho}_{\mathrm w}^{\mathsf T}({\mathbf{r}}){\bm\rho}_{\mathrm a}({\mathbf{r}},{\mathbf{s}})\rvert^2.
\end{align}
Noting that $\lvert-{j e^{-jk_0\lVert{\mathbf r}-{\mathbf s}\rVert}}\rvert=1$, we obtain
\begin{align}
p_{\mathrm{polar}}({\mathbf{r}},{\mathbf{s}}) = \frac{\left\lvert{\bm\rho}_{\mathrm w}^{\mathsf T}({\mathbf{r}})\left({\mathbf{I}}-\frac{({\mathbf r}-{\mathbf s})({\mathbf r}-{\mathbf s})^{\mathsf T}}{\lVert{\mathbf r}-{\mathbf s}\rVert^2}\right)\hat{\mathbf J}({\mathbf{s}})\right\rvert^2}{\left\lVert\left({\mathbf{I}}-\frac{({\mathbf r}-{\mathbf s})({\mathbf r}-{\mathbf s})^{\mathsf T}}{\lVert{\mathbf r}-{\mathbf s}\rVert^2}\right)\hat{\mathbf J}({\mathbf{s}})\right\rVert^2}.
\end{align}
The proof is thus completed.

\section{Proof of Lemma \ref{lemma_focus}} \label{proof_focus}
To determine the depth of focus, the value of $\frac{1}{N}|\mathbf{a}^T(\theta, r_0) \mathbf{a}^*(\theta, r)|$ should be calculated. According to the results of \cite[Appendix A]{cui2022channel}, by defining 
\begin{equation}
    \eta = \sqrt{ \frac{N^2 d^2 \sin^2 \theta}{2 \lambda} \left| \frac{1}{r_0} - \frac{1}{r} \right| },
\end{equation}
we can obtain the following approximation:
\begin{align}
    \frac{1}{N}|\mathbf{a}^{\mathsf{T}}(\theta, r_0) \mathbf{a}^*(\theta, r)| \approx \left| \frac{C(\eta) + j S(\eta)}{\eta}  \right|,
\end{align}
where $C(\eta) = \int_{0}^{\eta} \cos(\frac{\pi}{2}x^2)  \,dx$ and $S(\eta) = \int_{0}^{\eta} \sin(\frac{\pi}{2}x^2)  \,dx$. Based on numerical results, it is easy to show that $\left| \frac{C(\eta) + j S(\eta)}{\eta}  \right| \ge \frac{1}{2}$ if $\eta \le 1.6$. Therefore, by defining $\eta_{3 \mathrm{dB}} = 1.6$, we approximately have $\frac{1}{N}|\mathbf{a}^T(\theta, r_0) \mathbf{a}^*(\theta, r)| \ge \frac{1}{2}$ if 
\begin{align}
    \sqrt{ \frac{N^2 d^2 \sin^2 \theta}{2 \lambda} \left| \frac{1}{r_0} - \frac{1}{r} \right| } \le \eta_{3 \mathrm{dB}},
\end{align}
which is equivalent to 
\begin{equation}
    \max \left\{0, \frac{1}{r} - \frac{2 \lambda \eta_{3 \mathrm{dB}}^2 }{N^2 d^2 \sin^2 \theta} \right\} \le \frac{1}{r_0} \le \frac{1}{r} + \frac{2 \lambda \eta_{3 \mathrm{dB}}^2 }{N^2 d^2 \sin^2 \theta}.
\end{equation}
Define $r_{\mathrm{DF}} \triangleq \frac{N^2 d^2 \sin^2 \theta}{2 \lambda \eta^2_{3 \mathrm{dB}}}$. 
If $r < r_{\mathrm{DF}}$, the range of $r_0$ is given by 
\begin{equation}
    \frac{ r r_{\mathrm{DF}}}{r_{\mathrm{DF}} + r} \le r_0 \le  \frac{ r r_{\mathrm{DF}}}{r_{\mathrm{DF}} - r}.
\end{equation}
The depth of focus is thus given by 
\begin{equation}
    \mathrm{DF} = \frac{ r r_{\mathrm{DF}}}{r_{\mathrm{DF}}-r} - \frac{ r r_{\mathrm{DF}}}{r_{\mathrm{DF}} + r} = \frac{2r^2 r_{\mathrm{DF}}}{r_{\mathrm{DF}}^2 - r^2}.
\end{equation}
If $r \ge r_{\mathrm{DF}}$, the range of $r_0$ is given by 
\begin{equation}
    \frac{ r r_{\mathrm{DF}}}{r + r_{\mathrm{DF}}} \le r_0 \le  \infty.
\end{equation}
Thus, the depth of focus is given by 
\begin{equation}
    \mathrm{DF} = \infty.
\end{equation}
The proof is thus completed.

\section{Proof of Lemma \ref{lemma:beam spit}} \label{proof_split}

By defining $\tilde{N} = \frac{N-1}{2}$, the normalized array gain achieved by $\mathbf{f}_{\mathrm{RF}} = \mathbf{a}^*(f_c, \theta_c, r_c)$ at location $(\theta, r)$ at subcarrier $m$ is given by 
\begin{align}
    &\frac{1}{N} \left| \mathbf{a}^{\mathsf{T}}(f_m, \theta, r) \mathbf{f}_{\mathrm{RF}} \right| = \frac{1}{N} \left| \sum_{n = -\tilde{N}}^{\tilde{N}} e^{j \pi  \left( \delta_n(\theta_c, r_c) -\frac{f_m}{f_c} \delta_n(\theta, r) \right) } \right| \nonumber \\
    = & \frac{1}{N} \left| \sum_{n = -\tilde{N}}^{\tilde{N}} e^{j \pi \left( n \left( \cos \theta_c - \frac{f_m}{f_c} \cos \theta \right) - n^2 \left( \frac{d \sin^2\theta_c }{2r_c} - \frac{f_m}{f_c} \frac{d \sin^2\theta}{2r} \right) \right) }  \right|.
\end{align}  
To simplify the array gain, we define a function $G(x, y) = \frac{1}{N} \left|\sum_{n=1}^N e^{j \left( n x + n^2 y \right)} \right|$. Then, the array gain can be written as $G\left(\pi(\cos \theta_c - \frac{f_m}{f_c} \cos \theta), -\pi(\frac{d\sin^2 \theta_c}{2r_c} - \frac{f_m}{f_c} \frac{d \sin^2\theta}{2r})\right)$. It can be easily verified that the maximum value of $G(x, y)$ is obtained when $(x,y) = (0,0)$.   
The location $(\theta_m, r_m)$  that the beam focuses on for subcarrier $m$ is the location that has the maximum array gain, i.e., $\cos \theta_c - \frac{f_m}{f_c} \cos \theta_m = 0$ and $\frac{d\sin^2 \theta_c}{2r_c} - \frac{f_m}{f_c} \frac{d\sin^2 \theta_m}{2r_m} = 0$. Thus, we have 
\begin{align}
    \theta_m = \arccos \left( \frac{f_c}{f_m} \cos \theta_c \right), \quad r_m = \frac{f_m \sin^2\theta_m}{f_c \sin^2\theta_c} r_c.
\end{align}
The proof is thus completed.

\section{Proof of Theorem \ref{Section_Performance_Analysis_SNR_USW_Expression_Theorem}}\label{Section_Performance_Analysis_SNR_USW_Expression_Theorem_Proof}
The effective power gain from the $(m,n)$th transmit array element to the receiver satisfies
\begin{align}\label{Performance_Analysis_Proof_Effetive_Channel_General}
\left\lvert h_{m,n}^{i}({\mathbf{r}})\right\rvert^2=\int_{{\mathcal{S}}_{m,n}}\left\lvert {h}_{i}({\mathbf{s}},{\mathbf{r}})\right\rvert^2e_a{\rm{d}}\mathbf{s},
\end{align}
where ${\mathcal{S}}_{m,n}=\left[nd-\frac{\sqrt{A}}{2},nd+\frac{\sqrt{A}}{2}\right]\times\left[md-\frac{\sqrt{A}}{2},md+\frac{\sqrt{A}}{2}\right]$ denotes the surface region of the $(m,n)$-th array element, $e_a\cdot{\rm{d}}\mathbf{s}$ is the maximal value of the effective antenna area in the $x$-$z$ plane located around $\mathbf{s}$, and ${h}_i({\mathbf s},{\mathbf{r}})$ is the complex-valued channel from a point source located in $\mathbf{s}$ to the receive point $\mathbf{r}$. Recalling that the system operates in the radiating near-field region, i.e., $r\gg \lambda$, and the size of each individual element $\sqrt{A}$ is on the order of a wavelength, we have $r\gg \sqrt{A}$. Here, $r\gg \sqrt{A}$ means that the variation of the complex-value channel ${h}_i({\mathbf s},{\mathbf{r}})$ across different points ${\mathbf{s}}\in{\mathcal{S}}_{m,n}$ is negligible. Hence, we can rewrite \eqref{Performance_Analysis_Proof_Effetive_Channel_General} as follows:
\begin{equation}\label{Performance_Analysis_Proof_Effetive_Channel_General_Practical_Simplicafiction}
\begin{split}
\left\lvert h_{m,n}^{i}({\mathbf{r}})\right\rvert^2&=e_a\left\lvert {h}_i({\mathbf{s}}_{m,n},{\mathbf r})\right\rvert^2\int_{{\mathcal{S}}_{m,n}}{\rm{d}}\mathbf{s}\\
&=\left\lvert {h}_i({\mathbf{s}}_{m,n},{\mathbf r})\right\rvert^2Ae_a,
\end{split}
\end{equation}
where the integral $\int_{{\mathcal{S}}_{m,n}}{\rm{d}}\mathbf{s}$ returns the physical area of the $(m,n)$-th transmit array element. Thus, the received SNR for the USW model ($i={\rm{U}}$) is given by
\begin{align}
\gamma_{\rm{USW}}&=\frac{p}{\sigma^2}\lVert{\mathbf{h}}\rVert^2=\frac{p}{\sigma^2}\sum\nolimits_{n\in{\mathcal{N}}_x}\sum\nolimits_{m\in{\mathcal{N}}_z}\lvert h_{m,n}^{\rm{U}}({\mathbf{r}}) \rvert^2\nonumber\\
&=Ae_a\frac{p}{\sigma^2}\sum\nolimits_{n\in{\mathcal{N}}_x}\sum\nolimits_{m\in{\mathcal{N}}_z}\left\lvert {h}_{\rm{U}}({\mathbf{s}}_{m,n},{\mathbf r})\right\rvert^2.
\end{align}
Next, we note that the influence of the effective aperture loss and the polarization loss was not considered in \eqref{USW_Model_Channel_Coefficient}. Thus, to obtain a general expression of the received SNR, we should add the effective aperture loss and polarization loss terms to \eqref{USW_Model_Channel_Coefficient}, which yields
\begin{equation}\label{Performance_Analysis_Proof_Effetive_Power_Gain_USW_Trans1}
\begin{split}
\left\lvert {h}_{\rm{U}}({\mathbf{s}}_{m,n},{\mathbf r})\right\rvert^2=G_1({\mathbf{s}}_{m,n},{\mathbf r})G_2({\mathbf{s}}_{m,n},{\mathbf r})G_3({\mathbf{s}}_{m,n},{\mathbf r}).
\end{split}
\end{equation}
In \eqref{Performance_Analysis_Proof_Effetive_Power_Gain_USW_Trans1}, $G_1({\mathbf{s}}_{m,n},{\mathbf r})$ (given by Eq. \eqref{p_proj_general}), $G_2({\mathbf{s}}_{m,n},{\mathbf r})$ (given by Eq. \eqref{eq_proj}), and $G_3({\mathbf{s}}_{m,n},{\mathbf r})\triangleq\frac{1}{4\pi\lVert{\mathbf{s}}_{m,n}-{\mathbf r}\rVert^2}$ model the effective aperture loss, the polarization loss, and the free-space path loss, respectively. Under the USW channel model, the powers radiated by different transmit antenna elements are affected by identical polarization mismatches, projected antenna apertures, and free-space path losses \cite{Lu2022communicating}. Thus, it follows that
\begin{equation}
\begin{split}
\left\lvert {h}_{\rm{U}}({\mathbf{s}}_{m,n},{\mathbf r})\right\rvert^2=G_1({\mathbf{s}}_{0},{\mathbf r})G_2({\mathbf{s}}_{0},{\mathbf r})G_3({\mathbf{s}}_{0},{\mathbf r}),
\end{split}
\end{equation}
where ${\mathbf{s}}_0$ denotes the center location of the transmit UPA. In the considered system, we have ${\mathbf{s}}_0={\mathbf{0}}$, from which \eqref{Section_Performance_Analysis_SNR_USW_Expression} follows directly. The proof is thus completed.

\section{Proof of Theorem \ref{Section_Performance_Analysis_SNR_NUSW_Expression_Theorem}}
\label{Section_Performance_Analysis_SNR_NUSW_Expression_Theorem_Proof}
Note that the influence of the effective aperture loss and the polarization loss was not considered in \eqref{NUSW_Model_Channel_Coefficient}. To obtain a general expression for the received SNR, we should add the effective aperture loss and polarization loss terms back into \eqref{NUSW_Model_Channel_Coefficient}, which yields
\begin{equation}\label{Performance_Analysis_Proof_Effetive_Power_Gain_NUSW_Trans1}
\begin{split}
\left\lvert {h}_{\rm{N}}({\mathbf{s}}_{m,n},{\mathbf r})\right\rvert^2=G_1({\mathbf{s}}_{m,n},{\mathbf r})G_2({\mathbf{s}}_{m,n},{\mathbf r})G_3({\mathbf{s}}_{m,n},{\mathbf r}).
\end{split}
\end{equation}
Under the NUSW channel model, the powers radiated by different transmit antenna elements are affected by the same polarization mismatches and have the same projected antenna apertures but have different free-space path losses \cite{Lu2022communicating}. Therefore, we have
\begin{align}
&G_1({\mathbf{s}}_{m,n},{\mathbf r})=G_1({\mathbf{0}},{\mathbf r}),\\ 
&G_2({\mathbf{s}}_{m,n},{\mathbf r})=G_2({\mathbf{0}},{\mathbf r}).
\end{align}
Hence, the effective power gain satisfies
\begin{equation}\label{Performance_Analysis_Proof_Effetive_Power_Gain_NUSW_Trans2}
\begin{split}
\left\lvert {h}_{\rm{N}}({\mathbf{s}}_{m,n},{\mathbf r})\right\rvert^2=G_1({\mathbf{0}},{\mathbf r})G_2({\mathbf{0}},{\mathbf r})G_3({\mathbf{s}}_{m,n},{\mathbf r}).
\end{split}
\end{equation}
Inserting \eqref{Performance_Analysis_Proof_Effetive_Power_Gain_NUSW_Trans2} into \eqref{Section_Performance_Analysis_SNR_Most_General_Expression} yields the results in \eqref{Section_Performance_Analysis_SNR_NUSW_Expression}. The proof is thus completed.

\section{Proof of Corollary \ref{Section_Performance_Analysis_SNR_NUSW_Asymptotic_Theorem}}\label{Section_Performance_Analysis_SNR_NUSW_Asymptotic_Theorem_Proof}
Substituting ${\mathbf{s}}_{m,n}=[nd,0,md]^{\mathsf{T}}$ and ${\mathbf{r}}=[r\Phi,r\Psi,r\Omega]^{\mathsf{T}}$ into \eqref{Section_Performance_Analysis_SNR_NUSW_Expression} yields
\begin{equation}
\begin{split}
\gamma_{\rm{NUSW}}&=\frac{p\beta_0^2}{4\pi\sigma^2r^2}\sum_{n\in{\mathcal{N}}_x}\sum_{m\in{\mathcal{N}}_z}\\
&\times\frac{1}{\Psi^2+(n\epsilon-\Phi)^2+(m\epsilon-\Omega)^2},
\end{split}
\end{equation}
where $\epsilon=\frac{d}{r}$. Since the array element separation $d$ is typically on the order of a wavelength, in practice, we have $r\gg d$ and thus $\epsilon\ll1$. Furthermore, we define the function
\begin{align}
f_1(x,z)\triangleq\frac{1}{\Psi^2+(x-\Phi)^2+(z-\Omega)^2}
\end{align}
in the rectangular area ${\mathcal{H}}=\left\{(x,z)\left|-\frac{N_x\epsilon}{2}\leq x\leq\frac{N_x\epsilon}{2},-\frac{N_z\epsilon}{2}\leq z\leq\frac{N_z\epsilon}{2}\right.\right\}$. This rectangular area is further partitioned into $N_xN_z$ sub-rectangles, each with equal area $\epsilon^2$. Since $\epsilon\ll1$, we have $f_1(x,z)\approx f_1(n\epsilon,m\epsilon)$ for $\forall (x,z)\in\{(x,z)|(n-\frac{1}{2})\epsilon\leq x\leq(n+\frac{1}{2})\epsilon,(m-\frac{1}{2})\epsilon\leq z\leq(m+\frac{1}{2})\epsilon\}$. Based on the concept of double integrals, we obtain
\begin{align}
\sum_{n\in{\mathcal{N}}_x}\sum_{m\in{\mathcal{N}}_z}f_1(n\epsilon,m\epsilon)\epsilon^2\approx
\int\int_{\mathcal{H}}f_1(x,z){\rm{d}}x{\rm{d}}z,
\end{align}
which yields
\begin{align}\label{Section_Performance_Analysis_SNR_NUSW_Expression_Approximation}
\gamma_{\rm{NUSW}}\approx\int_{-\frac{N_z}{2}\epsilon}^{\frac{N_z}{2}\epsilon}\int_{-\frac{N_x}{2}\epsilon}^{\frac{N_x}{2}\epsilon}
\frac{\frac{p\beta_0^2}{\sigma^2}\frac{1}{4\pi r^2\epsilon^2}{\rm d}x{\rm d}z}{\Psi^2+(x-\Phi)^2+(z-\Omega)^2}.
\end{align}
Since $\Omega\in[0,1]$ and $\Phi\in[0,1]$, \eqref{Section_Performance_Analysis_SNR_NUSW_Expression_Approximation} reduces to the following
form as $N_x,N_z\rightarrow\infty$:
\begin{equation}\label{Section_Performance_Analysis_NUSW_SNR_Explicit_Trans}
\gamma_{\rm{NUSW}}\approx\int_{-\frac{N_z}{2}\epsilon}^{\frac{N_z}{2}\epsilon}\int_{-\frac{N_x}{2}\epsilon}^{\frac{N_x}{2}\epsilon}
\frac{\frac{pg^2}{\sigma^2}\frac{1}{4\pi r^2\epsilon^2}{\rm d}x{\rm d}z}{\Psi^2+x^2+z^2}.
\end{equation}
As shown in {\figurename} {\ref{Geometry_Region}}, the integration region in \eqref{Section_Performance_Analysis_NUSW_SNR_Explicit_Trans} is bounded by two disks with radius $R_2=\frac{\epsilon}{2}\min\left\{N_x,{N_z}\right\}$ and $R_1=\frac{\epsilon}{2}\sqrt{N_x^2+N_z^2}$, respectively. By defining function
\begin{equation}
\begin{split}
\hat{f}(R)\triangleq\int_{0}^{2\pi}\int_{0}^{R}{\frac{\rho{\rm{d}}\rho{\rm d}\theta}{\Psi^2+\rho^2}}=\pi\ln\left(1+\frac{R^2}{\Psi^2}\right),
\end{split}
\end{equation}
we have
\begin{equation}
\begin{split}
\hat{f}(R_2)<\int_{-\frac{N_z}{2}\epsilon}^{\frac{N_z}{2}\epsilon}\int_{-\frac{N_x}{2}\epsilon}^{\frac{N_x}{2}\epsilon}
{\frac{{\rm d}x{\rm d}z}{\Psi^2+x^2+z^2}}<\hat{f}(R_1).
\end{split}
\end{equation}
As $N_x,N_z\rightarrow\infty$, we have $R_1,R_2\rightarrow\infty$. By the Squeeze Theorem \cite{Rohde2011}, the asymptotic SNR satisfies $\lim_{N\rightarrow\infty}\gamma_{\rm{NUSW}}\simeq\mathcal{O}(\log{N})$. The proof is thus completed.

\begin{figure}[!t]
\setlength{\abovecaptionskip}{0pt}
\centering
\includegraphics[height=0.3\textwidth]{./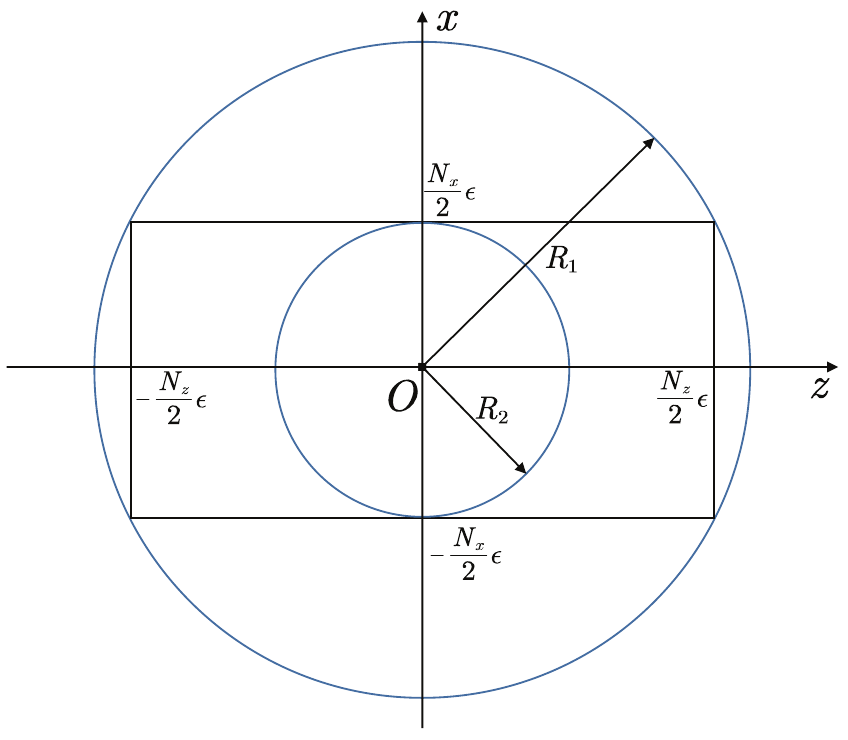}
\caption{The inscribed and circumscribed disks of the rectangular region $N_x\epsilon \times N_z\epsilon$.}
\label{Geometry_Region}
\end{figure}

\section{Proof of Theorem \ref{Section_Performance_Analysis_SNR_Proposed_Vert_General_Expression_Theorem}}\label{Section_Performance_Analysis_SNR_Proposed_Vert_General_Expression_Theorem_Proof}
Based on \eqref{Performance_Analysis_Proof_Effetive_Channel_General_Practical_Simplicafiction}, the effective power gain from the $(m,n)-$th transmit array element to the receiver satisfies
\begin{equation}
\begin{split}
\left\lvert h_{m,n}^{\rm{G}}({\mathbf{r}})\right\rvert^2=\left\lvert {h}_{\rm{G}}({\mathbf{s}}_{m,n},{\mathbf r})\right\rvert^2Ae_a,
\end{split}
\end{equation}
where
\begin{equation}
\begin{split}
\left\lvert {h}_{\rm{G}}({\mathbf{s}}_{m,n},{\mathbf r})\right\rvert^2=G_1({\mathbf{s}}_{m,n},{\mathbf r})G_2({\mathbf{s}}_{m,n},{\mathbf r})G_3({\mathbf{s}}_{m,n},{\mathbf r}).
\end{split}
\end{equation}
In the general channel model, the powers radiated by different transmit antenna elements are affected by different free-space path losses, projected antenna apertures, and polarization mismatches. Consequently, for $(m,n)\ne (m',n')$, we have
\begin{align}
&G_1({\mathbf{s}}_{m,n},{\mathbf r})\ne G_1({\mathbf{s}}_{m',n'},{\mathbf r}),\\
&G_2({\mathbf{s}}_{m,n},{\mathbf r})\ne G_2({\mathbf{s}}_{m',n'},{\mathbf r}),\\
&G_3({\mathbf{s}}_{m,n},{\mathbf r})\ne G_3({\mathbf{s}}_{m',n'},{\mathbf r}).
\end{align}
Hence, the effective power gain satisfies
\begin{equation}\label{Performance_Analysis_Proof_Effetive_Power_Gain_Proposed_Model}
\begin{split}
\left\lvert {h}_{\rm{G}}({\mathbf{s}}_{m,n},{\mathbf r})\right\rvert^2&=G_1({\mathbf{s}}_{m,n},{\mathbf r})G_2({\mathbf{s}}_{m,n},{\mathbf r})G_3({\mathbf{s}}_{m,n},{\mathbf r})\\
&\ne\left\lvert {h}_{\rm{G}}({\mathbf{s}}_{m',n'},{\mathbf r})\right\rvert^2.
\end{split}
\end{equation}
Inserting \eqref{Performance_Analysis_Proof_Effetive_Power_Gain_Proposed_Model} into \eqref{Section_Performance_Analysis_SNR_Most_General_Expression} yields \eqref{Section_Performance_Analysis_SNR_Proposed_Vert_General_Expression}. The proof is thus completed.
\section{Proof of Corollary \ref{Section_Performance_Analysis_SNR_Proposed_Vert_General_Approximation_Expression_Y_Polar_Corollary}}\label{Section_Performance_Analysis_SNR_Proposed_Vert_General_Approximation_Expression_Y_Polar_Corollary_Proof}
Based on \eqref{p_proj_general}, \eqref{eq_proj}, when ${\bm\rho}_{w}(\mathbf{r})={\bm\rho}=\hat{\mathbf J}({\mathbf{s}}_{m,n})=[1,0,0]^{\mathsf{T}}$, $\forall m,n$, we have
\begin{align}
G_1({\mathbf{s}}_{m,n},{\mathbf{r}})&=\frac{r\Psi}{\lVert {\mathbf r}-{\mathbf{s}}_{m,n}\rVert},\\
G_2({\mathbf{s}}_{m,n},{\mathbf{r}})&=\frac{r^2\Psi^2+(r\Omega-md)^2}{\lVert {\mathbf r}-{\mathbf{s}}_{m,n}\rVert^2},\\
G_3({\mathbf{s}}_{m,n},{\mathbf{r}})&=\frac{1}{4\pi\lVert {\mathbf r}-{\mathbf{s}}_{m,n}\rVert^2},
\end{align}
Therefore, the received SNR can be written as
\begin{equation}
\begin{split}
\gamma_{\rm{General}}&=
\frac{p}{\sigma^2}\sum_{n\in{\mathcal{N}}_x}\sum_{m\in{\mathcal{N}}_z}Ae_a\\
&{\frac{r^3\Psi^3+r\Psi(r\Omega-md)^2}{4\pi
((r\Psi)^2+(nd-r\Phi)^2+(md-r\Omega)^2)^{5/2}}}.
\end{split}
\end{equation}
Following the same approach as for obtaining \eqref{Section_Performance_Analysis_SNR_NUSW_Expression_Approximation}, we approximate $\gamma_{\rm{General}}$ as follows:
\begin{equation}\label{Section_Performance_Analysis_Approximation_Integral_Double_Simplified_SNR}
\begin{split}
\gamma_{\rm{General}}&\approx\frac{pAe_{{a}}}{4\pi r^2\epsilon^2\sigma^2}\int_{-\frac{N_z}{2}\epsilon}^{\frac{N_z}{2}\epsilon}\int_{-\frac{N_x}{2}\epsilon}^{\frac{N_x}{2}\epsilon}\\
&\times{\frac{\Psi^3+\Psi(\Omega-z)^2}
{
(\Psi^2+(x-\Phi)^2+(z-\Omega)^2)^{5/2}}}{\rm d}x{\rm d}z.
\end{split}
\end{equation}
The integral in \eqref{Section_Performance_Analysis_Approximation_Integral_Double_Simplified_SNR} can be solved in closed form with the aid of the following identities:
\begin{align}
&\int\frac{{\rm{d}}x}{(x^2+a)^{3/2}}=\frac{x}{a\sqrt{x^2+a}}+C\label{Section_Performance_Analysis_Integral_Helper_1}\\
&\int\frac{{\rm{d}}x}{(x^2+a)^{5/2}}=\frac{x}{3a({x^2+a})^{3/2}}+\frac{2x}{3a^2\sqrt{x^2+a}}+C\label{Section_Performance_Analysis_Integral_Helper_2}\\
&\int\frac{{\rm{d}}x}{(x^2+a)\sqrt{x^2+a+b}}\nonumber\\
&\quad=\frac{1}{\sqrt{ab}}\arctan\left(\frac{\sqrt{b}x}{\sqrt{a}\sqrt{x^2+a+b}}\right)+C\label{Section_Performance_Analysis_Integral_Helper_3}
\end{align}
where $a$, $b$ are arbitrary scalars and $C$ is an arbitrary constant. Upon calculating the double integral, we obtain \eqref{Section_Performance_Analysis_SNR_Proposed_Vert_General_Approximation_Expression_Y_Polar}. The proof is thus completed.
\section{Proof of Theorem \ref{Section_Performance_Analysis_SNR_USW_Expression_CAP_Theorem}}\label{Section_Performance_Analysis_SNR_UPW_Expression_CAP_Theorem_Proof}
The effective power gain from the transmit CAP surface to the user satisfies
\begin{equation}
\begin{split}
\lvert h_{\rm{cap}}(r,\theta,\phi)\rvert^2&=\int_{{\mathcal{S}}}\left\lvert h({\mathbf r},{\mathbf{s}})\right\rvert^2\cdot e_a{\rm{d}}\mathbf{s}\\
&=\int_{-\frac{L_z}{2}}^{\frac{L_z}{2}}\!\int_{-\frac{L_x}{2}}^{\frac{L_x}{2}}
\lvert h([x,0,z]^{\mathsf{T}},{\mathbf{r}})\rvert^2 e_a{\rm{d}}x{\rm{d}}z,
\end{split}
\end{equation}
where ${\mathcal{S}}=\left[-\frac{L_x}{2},\frac{L_x}{2}\right]\times\left[-\frac{L_z}{2},\frac{L_z}{2}\right]$ denote the surface region of the whole transmit CAP surface. For USW, we have
\begin{equation}
\begin{split}
\left\lvert h({\mathbf r},{\mathbf{s}}_{m_x,m_z})\right\rvert^2=G_1({\mathbf{s}}_0,{\mathbf{r}})G_2({\mathbf{s}}_0,{\mathbf{r}})G_3({\mathbf{s}}_0,{\mathbf{r}}),
\end{split}
\end{equation}
where ${\mathbf{s}}_0$ denotes the location of the center of the transmit CAP surface. In the considered system, we have ${\mathbf{s}}_0={\mathbf{0}}$, from which \eqref{Section_Performance_Analysis_SNR_USW_Expression_CAP} follows directly. The proof is thus completed.
\section{Proof of Lemma \ref{Section_Performance_Analysis_OP_SNR_CDF_PDF_Lemma}}\label{Section_Performance_Analysis_OP_SNR_CDF_PDF_Lemma_Proof}
The channel gain $\lVert{\mathbf{h}}\rVert^2$ satisfies
\begin{align}
\lVert{\mathbf{h}}\rVert^2={\tilde{\mathbf{h}}}^{\mathsf{H}}\mathbf{R}\tilde{\mathbf{h}}.
\end{align}
To facilitate the derivation, we perform eigenvalue decomposition of $\mathbf{R}$ and obtain ${\mathbf{R}}={\mathbf{U}}^{\mathsf{H}}{\bm\Lambda}{\mathbf{U}}$. Matrix $\mathbf{U}$ is a unitary matrix with $\mathbf{U}{\mathbf{U}}^{\mathsf{H}}=\mathbf{I}$, and ${\bm\Lambda}={\mathsf{diag}}\{\lambda_1,\ldots,\lambda_{r_{\mathbf{R}}},0,\ldots,0\}$, where $\{\lambda_i>0\}_{i=1}^{r_{\mathbf{R}}}$ are the positive eigenvalues of $\mathbf{R}$. Consequently, we obtain
\begin{equation}
\lVert{\mathbf{h}}\rVert^2={\tilde{\mathbf{h}}}^{\mathsf{H}}{\mathbf{U}}^{\mathsf{H}}{\bm\Lambda}{\mathbf{U}}\tilde{\mathbf{h}}.
\end{equation}
Since $\mathbf{U}{\mathbf{U}}^{\mathsf{H}}=\mathbf{I}$, we have
\begin{align}
&{\mathbb{E}}\{{\mathbf{U}}\tilde{\mathbf{h}}\}={\mathbf{U}}{\mathbb{E}}\{\tilde{\mathbf{h}}\}=\mathbf{0},\\
&{\mathbb{E}}\{({\mathbf{U}}\tilde{\mathbf{h}})({\mathbf{U}}\tilde{\mathbf{h}})^{\mathsf{H}}\}=\mathbf{U}{\mathbb{E}}\{\tilde{\mathbf{h}}{\tilde{\mathbf{h}}}^{\mathsf{H}}\}{\mathbf{U}}^{\mathsf{H}}
=\mathbf{U}{\mathbf{U}}^{\mathsf{H}}=\mathbf{I},
\end{align}
which yields ${\mathbf{U}}\tilde{\mathbf{h}}=[{\tilde{h}}_1,\ldots,\tilde{h}_{N}]^{\mathsf{T}}\sim{\mathcal{CN}}(\mathbf{0},{\mathbf{I}})$. Then, we have
\begin{align}\label{Correlated_Rayleigh_Channel_Gain_Proof}
\lVert{\mathbf{h}}\rVert^2=\sum_{i=1}^{r_{\mathbf{R}}}\lambda_i|\tilde{h}_i|^2.
\end{align}
Since $\{{\tilde{h}}_i\}_{i=1}^{N}$ contains ${N}$ i.i.d. complex Gaussian distributed variables, $\{|x_i|^2\}_{i=1}^{M}$ contains $N$ i.i.d. exponentially distributed variables each with CDF $1-{e}^{-x}$, $x\geq0$. Then, by virtue of \cite{Moschopoulos1985}, the PDF of $\sum_{i=1}^{r_{\mathbf{R}}}\lambda_i|x_i|^2$ is given by \eqref{Section_Performance_Analysis_OP_SNR_PDF}. The CDF of $\lVert{\mathbf{h}}\rVert^2$ can be further calculated by
\begin{equation}
\begin{split}
F_{\lVert{\mathbf{h}}\rVert^2}\left(x\right)
=\int_{0}^{x}F_{\lVert{\mathbf{h}}\rVert^2}\left(y\right){\rm{d}}y.
\end{split}
\end{equation}
The proof is thus completed.
\section{Proof of Corollary \ref{Section_Performance_Analysis_OP_Asymptotic_Expression_Standard_Theorem}}\label{Section_Performance_Analysis_OP_Asymptotic_Expression_Standard_Theorem_Proof}
As $p\rightarrow\infty$, we have
\begin{equation}
\begin{split}
\lim_{p\rightarrow\infty}\frac{2^{\mathcal{R}}-1}{p/\sigma^2\lambda_{\min}}=0.
\end{split}
\end{equation}
Using the asymptotic property of the incomplete Gamma function \cite[Eq. (8.354.1)]{Ryzhik2007}:
\begin{equation}\label{Section_Performance_Analysis_Gamma_Function_Asymptotic_Expression}
\lim_{t\rightarrow0}\Upsilon\left(s,t\right)\simeq\frac{t^s}{s},
\end{equation}
we obtain
\begin{equation}\label{Section_Performance_Analysis_CDF_Asymptotic_Expression}
\begin{split}
&\lim_{p\rightarrow\infty}\Upsilon\left(k+r_{\mathbf{R}},\frac{2^{\mathcal{R}}-1}{p/\sigma^2\lambda_{\min}}\right)\\
&\simeq\frac{1}{{(p/\sigma^2)}^{k+r_{\mathbf{R}}}}\left(\frac{2^{\mathcal{R}}-1}{\lambda_{\min}}\right)^{k+r_{\mathbf{R}}}\frac{1}{k+r_{\mathbf{R}}}.
\end{split}
\end{equation}
Inserting \eqref{Section_Performance_Analysis_CDF_Asymptotic_Expression} into \eqref{Section_Performance_Analysis_OP_Explicit_Expression} yields
\begin{equation}\label{Section_Performance_Analysis_OP_Asymptotic_Proof_Basic_Expression}
\begin{split}
\mathcal{P}_{\rm{rayleigh}}&\simeq
\frac{\lambda_{\min}^{r_{\mathbf{R}}}}{\prod_{i=1}^{r_{\mathbf{R}}}{\lambda}_i}\sum_{k=0}^{\infty}\frac{\psi_k \frac{1}{{(p/\sigma^2)}^{k+r_{\mathbf{R}}}}\left(\frac{2^{\mathcal{R}}-1}{\lambda_{\min}}\right)^{k+r_{\mathbf{R}}}}{\Gamma\left(r_{\mathbf{R}}+k+1\right)}\\
&=\frac{(2^{\mathcal{R}}-1)^{r_{\mathbf{R}}}}{r_{\mathbf{R}}!\prod_{i=1}^{r_{\mathbf{R}}}{\lambda}_i}
\frac{1}{{(p/\sigma^2)}^{r_{\mathbf{R}}}}+o\left(p^{-r_{\mathbf{R}}}\right),
\end{split}
\end{equation}
where $o(\cdot)$ denotes higher order terms. By omitting the higher order terms, we obtain \eqref{Section_Performance_Analysis_OP_Asymptotic_Expression_Standard}. The proof is thus completed.
\section{Proof of Corollary \ref{Section_Performance_Analysis_ECC_Asymptotic_Expression_Standard_Theorem}}\label{Section_Performance_Analysis_ECC_Asymptotic_Expression_Standard_Theorem_Proof}
Using the fact that $\lim_{x\rightarrow\infty}\frac{\log_2(1+ax)}{\log_2(ax)}=1$ ($a>0$), we obtain
\begin{align}
\lim_{\bar\gamma\rightarrow\infty}\frac{{\mathbb{E}}\{\log_2(1+p/\sigma^2 \lVert{\mathbf{h}}\rVert^2)\}}{{\mathbb{E}}\{\log_2(p/\sigma^2 \lVert{\mathbf{h}}\rVert^2)\}}=1,
\end{align}
which yields
\begin{equation}
\begin{split}
&\lim_{\bar\gamma\rightarrow\infty}{{\mathbb{E}}\{\log_2(1+p/\sigma^2 \lVert{\mathbf{h}}\rVert^2)\}}\\
&\simeq\log_2(p)+{{\mathbb{E}}\{\log_2( \lVert{\mathbf{h}}\rVert^2/\sigma^2)\}}.
\end{split}
\end{equation}
By leveraging \cite[Eq. (4.352.1)]{Ryzhik2007} to calculate the expectation $\mathbb{E}\{\log_2({\lVert{\mathbf{h}}\rVert^2/\sigma^2})\}$, we obtain \eqref{Section_Performance_Analysis_ECC_Calculation_Asymptotic_MISO}. The proof is thus completed.

\section{Proof of Corollary \ref{Section_Performance_Analysis_EMI_Asymptotic_Expression_Standard_Theorem}}\label{Section_Performance_Analysis_EMI_Asymptotic_Expression_Standard_Theorem_Proof}
To facilitate the discussion, we first rewrite the EMI in \eqref{Section_Performance_Analysis_EMI_Definition_MISO} as follows:
\begin{equation}
\begin{split}
\bar{\mathcal{I}}_{\mathcal{X}}^{\rm{rayleigh}}&=\int_{0}^{\infty}I_{\mathcal X}(t){\rm{d}}F_{\lVert{\mathbf{h}}\rVert^2}\left(\frac{t}{p/\sigma^2 }\right)\\
&=\left.I_{\mathcal X}( t)F_{\lVert{\mathbf{h}}\rVert^2}\left(\frac{t\sigma^2}{p }\right)\right|_{0}^{\infty}\\
&-\int_{0}^{\infty}F_{\lVert{\mathbf{h}}\rVert^2}\left(\frac{t\sigma^2}{p }\right){\rm{d}}I_{\mathcal X}(t).
\end{split}
\end{equation}
Using the properties $\lim_{t\rightarrow\infty}I_{\mathcal X}(t)=H_{{\mathbf{p}}_{\mathcal{X}}}$ and $\lim_{t\rightarrow0}I_{\mathcal X}(t)=0$ \cite{Guo2005}, we obtain
\begin{align}
&\lim_{t\rightarrow\infty}I_{\mathcal X}(t)F_{\lVert{\mathbf{h}}\rVert^2}(t\sigma^2/p)=H_{{\mathbf{p}}_{\mathcal{X}}}\cdot1=H_{{\mathbf{p}}_{\mathcal{X}}},\\
&\lim_{t\rightarrow0}I_{\mathcal X}(t)F_{\lVert{\mathbf{h}}\rVert^2}(t\sigma^2/p)=0\cdot0=0.
\end{align}
Hence, we have $\left.I_{\mathcal X}( t)F_{\lVert{\mathbf{h}}\rVert^2}\left(\frac{t\sigma^2}{p }\right)\right|_{0}^{\infty}=H_{{\mathbf{p}}_{\mathcal{X}}}$, which, together with the results in \cite{Guo2005}, yields
\begin{equation}\label{Section_Performance_Analysis_EMI_Trans_for_Asym}
\begin{split}
\bar{\mathcal{I}}_{\mathcal{X}}^{\rm{rayleigh}}&=H_{{\mathbf{p}}_{\mathcal{X}}}-\int_{0}^{\infty}F_{\lVert{\mathbf{h}}\rVert^2}
\left(\frac{t}{p/\sigma^2 }\right)\frac{{\rm{MMSE}}_{\mathcal{X}}(t)}{\ln{2}}{\rm{d}}t,
\end{split}
\end{equation}
where ${\rm{MMSE}}_{\mathcal X}\left(t\right)$ denotes the minimum mean square error (MMSE) in estimating $X$ in \eqref{Section_Performance_Analysis_AWGN_Channel} from $Y$ when $\gamma=t$. Based on \eqref{Section_Performance_Analysis_OP_Calculation_MISO} and \eqref{Section_Performance_Analysis_OP_Asymptotic_Proof_Basic_Expression}, we have
\begin{align}\label{Section_Performance_Analysis_EMI_Asymptotic_CDF}
\lim_{p\rightarrow\infty}F_{\lVert{\mathbf{h}}\rVert^2}\left(\frac{t}{p/\sigma^2 }\right)\simeq
\frac{t^{r_{\mathbf{R}}}}{r_{\mathbf{R}}!\prod_{i=1}^{r_{\mathbf{R}}}{\lambda}_i}
\frac{1}{{(p/\sigma^2)}^{r_{\mathbf{R}}}}.
\end{align}
Inserting \eqref{Section_Performance_Analysis_EMI_Asymptotic_CDF} into \eqref{Section_Performance_Analysis_EMI_Trans_for_Asym} gives
\begin{align}\label{Section_Performance_Analysis_EMI_Standard_for_Asym}
\lim_{p\rightarrow\infty}\bar{\mathcal{I}}_{\mathcal{X}}^{\rm{rayleigh}}\simeq
H_{{\mathbf{p}}_{\mathcal{X}}}-\frac{1}{\ln{2}}\frac{{\mathcal M}\left[{{\rm{MMSE}}_{\mathcal{X}}(t)};r_{\mathbf{R}}+1\right]}{r_{\mathbf{R}}!\prod_{i=1}^{r_{\mathbf{R}}}{\lambda}_i{(p/\sigma^2)}^{r_{\mathbf{R}}} },
\end{align}
where ${\mathcal M}\left[\varrho\left(t\right);z\right]\triangleq\int_{0}^{\infty}t^{z-1}\varrho\left(t\right){\rm d}t$ denotes the Mellin transform of $\varrho\left(t\right)$ \cite{flajolet1995mellin}. We next introduce the following two lemmas to facilitate the subsequent discussion.

\begin{lemma}\label{Lemma_EMI_Asy_Aux1}
  Given the finite constellation ${\mathcal X}=\left\{\mathsf{x}_q\right\}_{q=1}^{Q}$, the MMSE function satisfies $\lim_{t\rightarrow\infty}{\rm{MMSE}}_{\mathcal X}\left(t\right)={\mathcal O}(t^{-\frac{1}{2}}e^{-\frac{t}{8}d_{{\mathcal X},{\min}}^2})$, where $d_{\mathcal X,\min}\triangleq\min_{q\neq q'}\left|{\mathsf{x}_q}-{\mathsf{x}_{q'}}\right|$ \cite{Wu2011}.
\end{lemma}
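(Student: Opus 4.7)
The plan is to bound the MMSE from above by the pairwise error contributions of the MAP/posterior mean estimator at high SNR, and then show that the slowest-decaying term is controlled by the nearest neighbor pair of $\mathcal{X}$. Throughout, recall that in the scalar Gaussian channel \eqref{Section_Performance_Analysis_AWGN_Channel} with SNR $t$, the conditional mean estimator $\widehat{X}(Y)=\mathbb{E}\{X\mid Y\}$ attains the MMSE, and
\begin{equation*}
{\rm{MMSE}}_{\mathcal X}(t)=\sum_{q=1}^{Q}p_q\,\mathbb{E}\bigl\{\lvert \mathsf{x}_q-\widehat{X}(Y)\rvert^2\,\big|\,X=\mathsf{x}_q\bigr\}.
\end{equation*}

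First, I would rewrite the posterior mean by conditioning on $X=\mathsf{x}_q$ and writing the posterior weights $w_{q'}(Y)=\Pr(X=\mathsf{x}_{q'}\mid Y)$ explicitly as normalized Gaussian kernels centered at $\sqrt{t}\mathsf{x}_{q'}$. A direct bound gives
\begin{equation*}
\lvert \mathsf{x}_q-\widehat{X}(Y)\rvert^2=\Bigl\lvert \sum_{q'\neq q}w_{q'}(Y)(\mathsf{x}_q-\mathsf{x}_{q'})\Bigr\rvert^2
\le D^2\sum_{q'\neq q}w_{q'}(Y),
\end{equation*}
where $D=\max_{q,q'}\lvert \mathsf{x}_q-\mathsf{x}_{q'}\rvert$ is a constant independent of $t$. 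Hence ${\rm{MMSE}}_{\mathcal X}(t)\le D^2\sum_{q}p_q\,\mathbb{E}\{\Pr(X\neq \mathsf{x}_q\mid Y)\mid X=\mathsf{x}_q\}$, so the MMSE is upper-bounded (up to a constant) by the average probability of error of the optimal detector.

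Next, I would bound the symbol error probability by the union bound over pairwise error events. For any two symbols $\mathsf{x}_q,\mathsf{x}_{q'}$, the pairwise error under noise $Z\sim\mathcal{CN}(0,1)$ has probability $Q\bigl(\tfrac{\sqrt{t}}{2}\lvert \mathsf{x}_q-\mathsf{x}_{q'}\rvert\bigr)$, and the classical asymptotic expansion $Q(x)\sim\tfrac{1}{\sqrt{2\pi}\,x}e^{-x^2/2}$ as $x\to\infty$ yields
\begin{equation*}
Q\Bigl(\tfrac{\sqrt{t}}{2}\lvert \mathsf{x}_q-\mathsf{x}_{q'}\rvert\Bigr)
=\mathcal{O}\!\left(t^{-1/2}\exp\!\Bigl(-\tfrac{t}{8}\lvert \mathsf{x}_q-\mathsf{x}_{q'}\rvert^2\Bigr)\right).
\end{equation*}
Summing over the finite collection of pairs, the pair(s) achieving the minimum distance $d_{\mathcal{X},\min}$ produce the slowest-decaying exponential, dominating all other terms (which contribute additional factors $\exp(-\tfrac{t}{8}(d^2-d_{\mathcal{X},\min}^2))\to 0$). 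This gives the desired bound on the error probability, and combining with the inequality from the previous step yields ${\rm{MMSE}}_{\mathcal X}(t)=\mathcal{O}\bigl(t^{-1/2}e^{-t\,d_{\mathcal{X},\min}^2/8}\bigr)$.

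The main obstacle is making the reduction from the posterior-mean squared error to the symbol error probability tight enough that no extra polynomial or exponential factors creep in; a loose application of the union bound is fine for the $\mathcal{O}(\cdot)$ statement, but care is needed to verify that (i) the constant $D^2$ can indeed absorb all geometric factors independently of $t$, and (ii) the cross-terms arising from $\lvert \sum_{q'}w_{q'}(Y)(\mathsf{x}_q-\mathsf{x}_{q'})\rvert^2$ do not interact to boost the exponential rate. Both can be handled by replacing the crude bound above with the more refined decomposition of \cite{Wu2011}, where the MMSE is written as $\sum_{q<q'}p_qp_{q'}\lvert\mathsf{x}_q-\mathsf{x}_{q'}\rvert^2\,\mathbb{E}\{w_q(Y)w_{q'}(Y)\}$ and each cross-posterior moment $\mathbb{E}\{w_q(Y)w_{q'}(Y)\}$ is evaluated via a saddle-point/Gaussian-tail argument to yield the Q-function asymptotic directly; selecting the dominant pair then gives the claimed rate, completing the proof.
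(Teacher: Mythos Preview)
The paper does not supply its own proof of this lemma; it simply quotes the result and attributes it to \cite{Wu2011}. Your proposal therefore cannot be compared to a ``paper proof'' in the usual sense---you are filling in what the paper outsources.

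Your argument is sound and is essentially the standard route to this estimate: reduce the MMSE to (a constant times) the posterior error probability, apply the union bound over symbol pairs, and identify the minimum-distance pair as the dominant term via the Q-function asymptotic $Q(x)\sim\frac{1}{x\sqrt{2\pi}}e^{-x^2/2}$. One small point worth tightening: the quantity $\sum_q p_q\,\mathbb{E}\{\Pr(X\neq\mathsf{x}_q\mid Y)\mid X=\mathsf{x}_q\}$ is not literally the MAP symbol error probability, but rather the expected posterior error; however, it is bounded above by twice the MAP error (and is itself bounded by the same union-of-Q-functions expression), so the $\mathcal{O}(\cdot)$ conclusion is unaffected. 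The alternative decomposition you mention at the end, ${\rm MMSE}_{\mathcal X}(t)=\sum_{q<q'}p_qp_{q'}\lvert\mathsf{x}_q-\mathsf{x}_{q'}\rvert^2\,\mathbb{E}\{w_q(Y)w_{q'}(Y)\}$, is exactly the device used in the cited reference and makes the dominant-pair extraction cleaner, but either path delivers the stated bound.
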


\begin{lemma}\label{Lemma_EMI_Asy_Aux2}
  If $\varrho\left(t\right)$ is ${\mathcal O}\left(t^a\right)$ as $t\rightarrow0^{+}$ and ${\mathcal O}\left(t^b\right)$ as $t\rightarrow+\infty$, then $\left|{\mathcal M}\left[\varrho\left(t\right);z\right]\right|<\infty$ when $-a<z<-b$ \cite{flajolet1995mellin}.
\end{lemma}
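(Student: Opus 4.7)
The plan is to establish the finiteness of the Mellin transform by splitting the integral at an arbitrary positive cutoff (conveniently $t=1$) and bounding each piece using the two asymptotic conditions on $\varrho$. Concretely, I would write
\begin{equation}
{\mathcal M}[\varrho(t); z] = \int_0^1 t^{z-1}\varrho(t)\,{\rm d}t + \int_1^\infty t^{z-1}\varrho(t)\,{\rm d}t,
\end{equation}
and show that the first integral converges absolutely in a right half-plane of the complex $z$-variable, while the second converges absolutely in a left half-plane; their overlap is precisely the claimed strip $-a < \operatorname{Re}(z) < -b$.

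For the small-$t$ piece, the hypothesis $\varrho(t) = {\mathcal O}(t^a)$ as $t\rightarrow0^+$ furnishes constants $C_1, t_0 > 0$ with $|\varrho(t)| \le C_1 t^a$ for $0 < t \le t_0$. Choosing (without loss of generality) $t_0 = 1$, since the residual interval $[t_0, 1]$ contributes a bounded amount whenever $\varrho$ is locally integrable, the estimate
\begin{equation}
\int_0^1 t^{\operatorname{Re}(z)-1}|\varrho(t)|\,{\rm d}t \;\le\; C_1 \int_0^1 t^{\operatorname{Re}(z)+a-1}\,{\rm d}t
\end{equation}
is finite exactly when $\operatorname{Re}(z) + a > 0$, i.e., $\operatorname{Re}(z) > -a$. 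A completely symmetric argument handles the tail: the condition $\varrho(t) = {\mathcal O}(t^b)$ as $t\rightarrow\infty$ gives $|\varrho(t)| \le C_2 t^b$ for $t$ large, and the resulting tail integral $\int_1^\infty t^{\operatorname{Re}(z)+b-1}\,{\rm d}t$ is finite iff $\operatorname{Re}(z) + b < 0$, i.e., $\operatorname{Re}(z) < -b$. Intersecting the two half-planes yields the fundamental strip and completes the argument.

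The result really amounts to a one-variable integrability check, so there is no genuine analytical obstacle; the main thing one has to be careful about is the implicit standing assumption that $\varrho$ is locally integrable on $(0,\infty)$, so that the middle interval $[t_0, T]$ contributes only a finite amount regardless of $z$. This is automatic for the functionals ${\rm{MMSE}}_{\mathcal{X}}(\cdot)$ invoked in \textbf{Corollary} \ref{Section_Performance_Analysis_EMI_Asymptotic_Expression_Standard_Theorem}, since the MMSE is continuous and bounded on any compact subset of $(0,\infty)$. A second mild subtlety is that the strip is non-empty only when $b < a$, which corresponds to the natural situation in which $\varrho$ decays faster at infinity than it grows at zero; this compatibility is also verified in \textbf{Lemma} \ref{Lemma_EMI_Asy_Aux1} for the MMSE, where $a=0$ and $b=-\infty$ effectively hold.
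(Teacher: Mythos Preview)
Your argument is correct and is the standard proof of the existence of the fundamental strip for the Mellin transform. The paper itself does not prove this lemma; it simply quotes it from \cite{flajolet1995mellin}, so your split-at-$t=1$ argument is exactly the elementary verification one finds in that reference.
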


Particularly, $\lim_{t\rightarrow0^{+}}{\rm{MMSE}}_{\mathcal X}\left(t\right)=1$ \cite{Guo2005}, which together with Lemma \ref{Lemma_EMI_Asy_Aux1}, suggests that ${\rm{MMSE}}_{\mathcal X}\left(t\right)$ is ${\mathcal O}\left(t^0\right)$ as $t\rightarrow0^{+}$ and ${\mathcal O}\left(t^{-\infty}\right)$ as $t\rightarrow\infty$. Using this fact and Lemma \ref{Lemma_EMI_Asy_Aux2}, we find that $\left|{\mathcal M}\left[{{\rm{MMSE}}_{\mathcal{X}}(t)};z\right]\right|<\infty$ holds for $0<z<\infty$, which in combination with the fact that ${\rm{MMSE}}_{\mathcal X}\left(t\right)>0$ ($t>0$), suggests that ${\mathcal M}\left[{{\rm{MMSE}}_{\mathcal{X}}(t)};z\right]\in\left(0,\infty\right)$ holds for $0<z<\infty$. Since $r_{\mathbf{R}}>0$, we conclude that ${\mathcal M}\left[{{\rm{MMSE}}_{\mathcal{X}}(t)};r_{\mathbf{R}}+1\right]$ is some positive constant. The proof is thus completed.
\section{Proof of Lemma \ref{Section_Performance_Analysis_Rician_OP_SNR_Trans_Lemma}}\label{Section_Performance_Analysis_Rician_OP_SNR_Trans_Lemma_Proof}
Using the fact that ${\mathbf{U}}{\mathbf{U}}^{\mathsf{H}}={\mathbf{I}}$, we obtain
\begin{align}\label{Section_Performance_Analysis_Rician_OP_SNR_Trans_Lemma_Proof_Trans1}
\lVert{\mathbf{h}}\rVert^2=\lVert{\mathbf{Uh}}\rVert^2=\lVert\mathbf{U}\overline{\mathbf{h}}+\mathbf{U}\mathbf{R}^{\frac{1}{2}}\tilde{\mathbf{h}}\rVert^2.
\end{align}
Plugging ${\mathbf{R}}^{\frac{1}{2}}={\mathbf{U}}^{\mathsf{H}}{\bm\Lambda}^{\frac{1}{2}}{\mathbf{U}}$ into \eqref{Section_Performance_Analysis_Rician_OP_SNR_Trans_Lemma_Proof_Trans1} yields
\begin{align}\label{Section_Performance_Analysis_Rician_OP_SNR_Trans_Lemma_Proof_Trans2}
\lVert{\mathbf{h}}\rVert^2=\lVert\mathbf{U}\overline{\mathbf{h}}+{\mathsf{diag}}\{\lambda_1^{\frac{1}{2}},\ldots,\lambda_{r_{\mathbf{R}}}^{\frac{1}{2}}
,0,\ldots,0\}{\mathbf{U}}\tilde{\mathbf{h}}\rVert^2.
\end{align}
Since ${\mathbf{U}}{\mathbf{U}}^{\mathsf{H}}={\mathbf{I}}$ and $\tilde{\mathbf{h}}\sim{\mathcal{CN}}(\mathbf{0},\mathbf{I})$, we have ${\mathbf{U}}\tilde{\mathbf{h}}\sim{\mathcal{CN}}(\mathbf{0},\mathbf{I})$. Let $\{\tilde{{h}}_i\}_{i=1}^{{r_{\mathbf{R}}}}$ denote the first ${r_{\mathbf{R}}}$ elements of vector ${\mathbf{U}}\tilde{\mathbf{h}}$. Then, we rewrite \eqref{Section_Performance_Analysis_Rician_OP_SNR_Trans_Lemma_Proof_Trans2} as follows
\begin{equation}\label{Section_Performance_Analysis_Rician_OP_SNR_Trans_Lemma_Proof_Trans3}
\begin{split}
\lVert{\mathbf{h}}\rVert^2&=\lVert[\overline{{h}}_1+\lambda_1^{\frac{1}{2}}\tilde{{h}}_1,\ldots,\overline{{h}}_{r_{\mathbf{R}}}+
\lambda_{r_{\mathbf{R}}}^{\frac{1}{2}}\tilde{{h}}_{r_{\mathbf{R}}}\\
&,\overline{{h}}_{1+r_{\mathbf{R}}},\ldots,\overline{{h}}_N\}]^{\mathsf{T}}\rVert^2\\
&=\sum\nolimits_{i=1}^{r_{\mathbf{R}}}\lambda_i\lVert\overline{{h}}_i\lambda_i^{-\frac{1}{2}}+\tilde{{h}}_i\rVert^2+
{\sum\nolimits_{i=r_{\mathbf{R}}+1}^{N}\lVert\overline{{h}}_i\rVert^2}.
\end{split}
\end{equation}
The proof is thus completed.
\section{Proof of Corollary \ref{Section_Performance_Analysis_OP_Asymptotic_Expression_Standard_Rician_Theorem}}
\label{Section_Performance_Analysis_OP_Asymptotic_Expression_Standard_Rician_Theorem_Proof}
According to \cite[Eq. (2.16)]{simon2001digital}, the PDF of $\lvert{\mathcal{CN}}(\overline{{h}}_i,\lambda_i)\rvert^2$ is given by
\begin{equation}
f_{\lvert{\mathcal{CN}}(\overline{{h}}_i,\lambda_i)\rvert^2}(x)=\frac{1}{\lambda_i}{e}^{-\frac{\lvert\overline{h}_i\rvert^2}{\lambda_i}}e^{-\frac{x}{\lambda_i}}I_0\left(2\sqrt{x\frac{\lvert\overline{h}_i\rvert^2}{\lambda_i^2}}\right),
\end{equation}
where $I_n(\cdot)$ is the $n$-th-order modified Bessel function of the first kind. By \cite[Eq. (10.25.2)]{olver2010nist}, we have
\begin{equation}\label{Section_Performance_Analysis_OP_Asymptotic_Expression_Standard_Rician_Theorem_Proof_Trans1}
I_0(z)=\sum_{k=0}^{\infty}\frac{1}{\Gamma(k+1)k!}\left(\frac{z}{2}\right)^{2k}.
\end{equation}
The Laplace transform of $f_{\lvert{\mathcal{CN}}(\overline{{h}}_i,\lambda_i)\rvert^2}(x)$ is derived as
\begin{equation}\label{Section_Performance_Analysis_OP_Asymptotic_Expression_Standard_Rician_Theorem_Proof_Trans2}
\begin{split}
{\mathcal{L}}_{f_{\lvert{\mathcal{CN}}(\overline{{h}}_i,\lambda_i)\rvert^2}}(s)&=\int_{0}^{\infty}\frac{1}{\lambda_i}
{e}^{-\frac{\lvert\overline{h}_i\rvert^2}{\lambda_i}}e^{-\frac{x}{\lambda_i}}\\
&\times I_0\left(\frac{2\lvert\overline{h}_i\rvert}{\lambda_i}\sqrt{x}\right)e^{-sx}{\rm d}x.
\end{split}
\end{equation}
By substituting \eqref{Section_Performance_Analysis_OP_Asymptotic_Expression_Standard_Rician_Theorem_Proof_Trans1} into \eqref{Section_Performance_Analysis_OP_Asymptotic_Expression_Standard_Rician_Theorem_Proof_Trans2} and calculating the resulting integral with \cite[Eq. (3.326.2)]{Ryzhik2007}, we have
\begin{equation}\label{Section_Performance_Analysis_OP_Asymptotic_Expression_Standard_Rician_Theorem_Proof_Trans3}
\begin{split}
{\mathcal{L}}_{f_{\lvert{\mathcal{CN}}(\overline{{h}}_i,\lambda_i)\rvert^2}}(s)=\sum_{k=0}^{\infty}\frac{1}{k!}\frac{\lvert\overline{h}_i\rvert^{2k}}{\lambda_i^{2k+1}}
\frac{{e}^{-\frac{\lvert\overline{h}_i\rvert^2}{\lambda_i}}}{\left(s+\frac{1}{\lambda_i}\right)^{k+1}}.
\end{split}
\end{equation}
When $s\rightarrow\infty$, we obtain
\begin{equation}\label{Section_Performance_Analysis_OP_Asymptotic_Expression_Standard_Rician_Theorem_Proof_Trans4}
\begin{split}
\lim\nolimits_{s\rightarrow\infty}{\mathcal{L}}_{f_{\lvert{\mathcal{CN}}(\overline{{h}}_i,\lambda_i)\rvert^2}}(s)\simeq\frac{1}{\lambda_i}
\frac{1}{s}{e}^{-\frac{\lvert\overline{h}_i\rvert^2}{\lambda_i}}.
\end{split}
\end{equation}
Since the $\{{\mathcal{CN}}(\overline{{h}}_i,\lambda_i)\}_{i=1}^{r_{\mathbf{R}}}$ are mutually independent, the Laplace transform of $\tilde{a}=\sum_{i=1}^{r_{\mathbf{R}}}\lvert{\mathcal{CN}}(\overline{{h}}_i,\lambda_i)\rvert^2$ satisfies
\begin{equation}\label{Section_Performance_Analysis_OP_Asymptotic_Expression_Standard_Rician_Theorem_Proof_Trans5}
\begin{split}
\lim\nolimits_{s\rightarrow\infty}{\mathcal{L}}_{f_{\tilde{a}}}(s)\simeq s^{-r_{\mathbf{R}}}\prod_{i=1}^{r_{\mathbf{R}}}\frac{1}{\lambda_i}
{{e}^{-{\lvert\overline{h}_i\rvert^2}/{\lambda_i}}}.
\end{split}
\end{equation}
Thus, by performing the inverse Laplace transform of \eqref{Section_Performance_Analysis_OP_Asymptotic_Expression_Standard_Rician_Theorem_Proof_Trans5} and referring to \cite{oppenheim1997signals}, the PDF of $\tilde{a}$, i.e., $f_{\tilde{a}}(x)$ for $x\rightarrow0^+$ can be obtained as
\begin{equation}\label{Section_Performance_Analysis_OP_Asymptotic_Expression_Standard_Rician_Theorem_Proof_Trans5}
\begin{split}
\lim\nolimits_{x\rightarrow0^+}f_{\tilde{a}}(x)\simeq \frac{1}{\Gamma(r_{\mathbf{R}})}x^{r_{\mathbf{R}}-1}\prod_{i=1}^{r_{\mathbf{R}}}\frac{1}{\lambda_i}
{{e}^{-{\lvert\overline{h}_i\rvert^2}/{\lambda_i}}}.
\end{split}
\end{equation}
It follows that the CDF of $\tilde{a}$ for $x\rightarrow0^+$ satisfies
\begin{equation}\label{Section_Performance_Analysis_OP_Asymptotic_Expression_Standard_Rician_Theorem_Proof_Trans6}
\begin{split}
\lim\nolimits_{x\rightarrow0^+}F_{\tilde{a}}(x)\simeq \frac{1}{r_{\mathbf{R}}!}x^{r_{\mathbf{R}}}\prod_{i=1}^{r_{\mathbf{R}}}\frac{1}{\lambda_i}
{{e}^{-{\lvert\overline{h}_i\rvert^2}/{\lambda_i}}}.
\end{split}
\end{equation}
When $p\rightarrow{\frac{(2^{\mathcal{R}}-1)\sigma^2}{\tilde{a}_0}}$, we have $\frac{2^{\mathcal{R}}-1}{p/\sigma^2}-\tilde{a}_0\rightarrow0^+$ and obtain \eqref{Section_Performance_Analysis_OP_Asymptotic_Expression_Standard_Rician}. The proof is thus completed.
\section{Proof of Corollary \ref{Section_Performance_Analysis_EMI_Asymptotic_Expression_Standard_Rician_Theorem}}
\label{Section_Performance_Analysis_EMI_Asymptotic_Expression_Standard_Rician_Theorem_Proof}
Following the same approach as for obtaining \eqref{Section_Performance_Analysis_EMI_Trans_for_Asym}, we rewrite \eqref{Section_Performance_Analysis_EMI_Definition_MISO_Rician} as follows
\begin{equation}\label{Section_Performance_Analysis_EMI_Trans_for_Asym_Rician_Trans1}
\begin{split}
\bar{\mathcal{I}}_{\mathcal{X}}&=H_{{\mathbf{p}}_{\mathcal{X}}}-\int_{0}^{\infty}F_{\tilde{a}}\left(\frac{t}{p/\sigma^2 }\right)\frac{{\rm{MMSE}}_{\mathcal{X}}(t+p/\sigma^2\tilde{a}_0)}{\ln{2}}{\rm{d}}t.
\end{split}
\end{equation}
When $p\rightarrow\infty$, we have $\frac{t}{p/\sigma^2 }\rightarrow0$ and $t+p/\sigma^2\tilde{a}_0\rightarrow\infty$. Based on \eqref{Section_Performance_Analysis_OP_Asymptotic_Expression_Standard_Rician_Theorem_Proof_Trans6} and Lemma \ref{Lemma_EMI_Asy_Aux1}, we obtain
\begin{align}
\lim_{p\rightarrow\infty}F_{\tilde{a}}\left(\frac{\sigma^2t}{p}\right)\simeq\frac{1}{r_{\mathbf{R}}!}\left(\frac{\sigma^2t}{p}\right)^{r_{\mathbf{R}}}\prod_{i=1}^{r_{\mathbf{R}}}\frac{1}{\lambda_i}
{{e}^{-{\lvert\overline{h}_i\rvert^2}/{\lambda_i}}}
\label{Section_Performance_Analysis_EMI_Trans_for_Asym_Rician_Trans2},
\end{align}
\begin{align}
\lim_{p\rightarrow\infty}{\rm{MMSE}}_{\mathcal{X}}\left(t+\frac{p}{\sigma^2}\tilde{a}_0\right)\simeq
{\mathcal O}\left(
\frac{e^{\frac{t+\frac{p}{\sigma^2}\tilde{a}_0}{-8d_{{\mathcal X},{\min}}^{-2}}}}
{\sqrt{\frac{p}{\sigma^2}\tilde{a}_0}}
\right)
\label{Section_Performance_Analysis_EMI_Trans_for_Asym_Rician_Trans3}.
\end{align}
By substituting \eqref{Section_Performance_Analysis_EMI_Trans_for_Asym_Rician_Trans2} and \eqref{Section_Performance_Analysis_EMI_Trans_for_Asym_Rician_Trans3} into \eqref{Section_Performance_Analysis_EMI_Trans_for_Asym_Rician_Trans1} and solving the resulting integral with the help of \cite[Eq. (3.326.2)]{Ryzhik2007}, we obtain \eqref{Section_Performance_Analysis_EMI_Asymptotic_Expression_Standard_Rician}. For an equiprobable square $M$-QAM constellation, we have $H_{{\mathbf{p}}_{\mathcal{X}}}=\log_2{M}$. Besides, based on \cite{Alvarado2014}, we have
\begin{align}
\lim_{x\rightarrow\infty}{\rm{MMSE}}_{\mathcal{X}}(x)\simeq \frac{\sqrt{\pi}d_{\mathcal{X},\min}}{2\sqrt{2}}\frac{2\sqrt{M}-1}{\sqrt{M}}\frac{1}{\sqrt{x}}e^{-\frac{d_{\mathcal{X},\min}^2x}{8}},
\end{align}
which yields
\begin{equation}\label{Section_Performance_Analysis_EMI_Trans_for_Asym_Rician_Trans4}
\begin{split}
\lim_{p\rightarrow\infty}{\rm{MMSE}}_{\mathcal{X}}\left(t+\frac{p}{\sigma^2}\tilde{a}_0\right)&\simeq
\frac{e^{\frac{t+\frac{p}{\sigma^2}\tilde{a}_0}{-8d_{{\mathcal X},{\min}}^{-2}}}}
{\sqrt{\frac{p}{\sigma^2}\tilde{a}_0}}\times\\
&\frac{\sqrt{\pi}d_{\mathcal{X},\min}}{2\sqrt{2}}\frac{2\sqrt{M}-1}{\sqrt{M}}\frac{1}{\sqrt{t}}
\end{split}.
\end{equation}
By substituting \eqref{Section_Performance_Analysis_EMI_Trans_for_Asym_Rician_Trans2} and \eqref{Section_Performance_Analysis_EMI_Trans_for_Asym_Rician_Trans4} into \eqref{Section_Performance_Analysis_EMI_Trans_for_Asym_Rician_Trans1} and solving the resulting integral with the aid of \cite[Eq. (3.326.2)]{Ryzhik2007}, we obtain \eqref{Section_Performance_Analysis_EMI_Asymptotic_Expression_Standard_Rician_MQAM}. The proof is thus completed.

\end{appendices}

\balance
\bibliographystyle{IEEEtran}
\bibliography{mybib}

\end{document}